\documentclass[12pt]{article}
\usepackage{amsmath,amssymb,amsfonts,bbm,bm,cases,verbatim,multirow,color,xcolor}
\usepackage[ruled,vlined]{algorithm2e}
\usepackage{epic,eepic,psfrag,epsfig}
\usepackage{graphicx}
\usepackage{amsthm}
\usepackage{natbib}
\usepackage{enumerate}
\usepackage{array}
\usepackage[a4paper,top=1in,bottom=1in,left=1in,right=1in]{geometry}
\RequirePackage[colorlinks,citecolor={blue!60!black},urlcolor={blue!70!black},linkcolor={red!60!black},breaklinks,hypertexnames=false]{hyperref}

\renewcommand{\epsilon}{\varepsilon}

\definecolor{ed}{RGB}{225,0,0}

\DeclareMathOperator*{\argmin}{argmin}
\DeclareMathOperator*{\argmax}{argmax}


\newtheorem{thm}{Theorem}
\newtheorem*{thm*}{Theorem}
\newtheorem{prop}[thm]{Proposition}
\newtheorem{lemma}[thm]{Lemma}
\newtheorem{cor}[thm]{Corollary}


\newcommand{\beq}{\begin{equation}}
\newcommand{\eeq}{\end{equation}}
\newcommand{\RR}{\mathbb{R}}


\DeclareMathOperator*{\sargmax}{sargmax}


\graphicspath{
{../code/Experiment 1 - plot EM obj/}
{../code/Experiment 2 - EM conv rate/}
}

\title{Sharp-SSL: Selective high-dimensional axis-aligned random projections for semi-supervised learning}

\author{Tengyao Wang$^*$, Edgar Dobriban$^\dagger$, Milana Gataric$^\ddagger$ \\
and Richard J. Samworth$^\ddagger$ \medskip \\{$^*$Department of Statistics, London School of Economics}\\{$^\dagger$Department of Statistics and Data Science, University of Pennsylvania}\\{$^\ddagger$Statistical Laboratory, University of Cambridge}}

\begin{document}
\maketitle 


\begin{abstract}
We propose a new method for high-dimensional semi-supervised learning problems based on the careful aggregation of the results of a low-dimensional procedure applied to many axis-aligned random projections of the data.  Our primary goal is to identify important variables for distinguishing between the classes; existing low-dimensional methods can then be applied for final class assignment.  Motivated by a generalized Rayleigh quotient, we score projections according to the traces of the estimated whitened between-class covariance matrices on the projected data.  This enables us to assign an importance weight to each variable for a given projection, and to select our signal variables by aggregating these weights over high-scoring projections.  Our theory shows that the resulting \texttt{Sharp-SSL} algorithm is able to recover the signal coordinates with high probability when we aggregate over sufficiently many random projections and when the base procedure estimates the whitened between-class covariance matrix sufficiently well.  The Gaussian EM algorithm is a natural choice as a base procedure, and we provide a new analysis of its performance in semi-supervised settings that controls the parameter estimation error in terms of the proportion of labeled data in the sample.  Numerical results on both simulated data and a real colon tumor dataset support the excellent empirical performance of the method.        
\end{abstract}

\section{Introduction}

Semi-supervised learning, where we attempt to assign observations to one of finitely many groups based on partially-labeled training data, 
represents a core modern statistical challenge.  It is sufficiently general to incorporate, at either extreme, the unsupervised case of no labeled training data (clustering) and the supervised setting of fully-labeled training data (classification).  Such tasks abound in many application areas, including genomics \citep[e.g.,][]{eisen1998cluster}, image processing \citep{jain1996image,cheplygina2019not}, natural language processing \citep{liang2005semi,turian2010word} and anomaly detection \citep{akcay2019ganomaly,wang2019semi}.  Entry points to the literature on semi-supervised learning include \citet{zhu2005semi}, \citet{zhu2009introduction}, \citet{chapelle2006semi} and \citet{van2020survey}.  For introductions to clustering, see \citet{xu2005survey}, \citet{kaufman2009finding} and \citet{xu2015comprehensive}, and for classification, see \citet{devroye2013probabilistic} and \citet{hastie2009elements}.

\sloppy A common feature of contemporary semi-supervised learning problems is high-dimensionality, since we may record many covariates having a possible association with the labels corresponding to different observations.  This represents a significant challenge, as can be seen by considering a simple two-class problem with more covariates than observations.  For any given assignment of class labels, if no subset of $n_0$ observations lies in an $(n_0-2)$-dimensional affine space, then we can find hyperplanes with orthogonal normal vectors, each of which achieves zero training error (in other words, they perfectly separate the classes).  Nevertheless, even in the simple setting where the true Bayes decision boundary is linear, many such hyperplanes may be little better than a random guess on test data.

An appealing approach to tackling high-dimensionality is via random projections into lower-dimensional spaces.  Such projections may almost preserve the pairwise distances between observations, as seen from the Johnson--Lindenstrauss lemma \citep{lindenstrauss1984extensions,dasgupta2003elementary}.  Moreover, in cases where we have reason to believe that only a relatively small proportion of the variables recorded are relevant for the learning task, we can choose our random projections to be axis-aligned in order to preserve this structure.  A third benefit is the possibility of aggregating results over multiple random projections, though this must be done with care so as to avoid noise accumulation.  These attractions have meant that random projections have now been employed in many high-dimensional statistical problems, including precision matrix estimation \citep{marzetta2011random}, two-sample mean testing \citep{lopes2011more}, classification \citep{durrant2015random,cannings2017random}, (sparse) principal component analysis \citep{yang2021reduce,GWS2020}, linear regression \citep{thanei2017random,slawski2018principal,dobriban2019asymptotics,ahfock2021statistical}, clustering \citep{dasgupta1999learning,fern2003random,han2015hidden,yellamraju2018clusterability,anderlucci2022high} and dimensionality reduction \citep{bingham2001random,reeve2022heterogeneous}.  See \citet{cannings2021random} for a review of recent developments in the area.

In this paper, we propose a new method, called \texttt{Sharp-SSL} (short for \textbf{S}elective \textbf{h}igh-dimensional, \textbf{a}xis-aligned \textbf{r}andom \textbf{p}rojections for \textbf{S}emi-\textbf{S}upervised \textbf{L}earning).  Our primary goal is to identify a small subset of variables that are particularly helpful for label assignment; existing low-dimensional methods can then be used to complete the learning task.  To this end, we generate a large number of axis-aligned random projections, and apply a base learning procedure such as a semi-supervised version of the Gaussian Expectation--Maximization (EM) algorithm to our projected data.  Motivated by the notion of a generalized Rayleigh quotient (see~\eqref{Eq:GRQ} below for a formal definition), and to avoid the noise accumulation issue mentioned above, we score the projections by computing the trace of the corresponding estimated whitened between-class covariance matrices.  This enables us to assign an importance weight to each variable for a given projection, and we select our signal variables by aggregating these importance weights over the high-scoring projections.  See Section~\ref{Sec:Methodology} for a more detailed description of our methodology.

Section~\ref{Sec:Theory} is devoted to a theoretical analysis of our \texttt{Sharp-SSL} algorithm.  We first show in Theorem~\ref{Thm:Meta} that provided the low-dimensional base learning procedure satisfies a guarantee on the proximity of the estimated whitened between-class covariance matrix to its population analogue, the corresponding high-dimensional semi-supervised learning algorithm can recover the signal coordinates with high probability when we aggregate over sufficiently many random projections.  It turns out that both Linear Discriminant Analysis and an EM algorithm are examples of low-dimensional learning procedures that satisfy this proximity guarantee, as we prove in Theorems~\ref{Thm:LDA} and~\ref{Thm:LowDimEM} respectively.  The latter is particularly challenging, and one of the main novel contributions of our analysis is to provide a guarantee on the performance of a $d$-dimensional Gaussian EM algorithm in a semi-supervised setting.  In particular, we control the parameter estimation error in terms of the proportion of labeled data in the sample, showing that with a sample size of $n$ it smoothly interpolates between the $(d/n)^{1/4}$ rate for unsupervised learning and the $(d/n)^{1/2}$ rate for fully-labeled data, up to logarithmic factors.  An advantage of the modular approach to our analysis is that it illustrates the way in which the \texttt{Sharp-SSL} algorithm can be combined with different base learning algorithms to adapt to different problem settings and reflect the preferences of the practitioner. 

In Section~\ref{Sec:Numerics}, we study the numerical performance of the \texttt{Sharp-SSL} algorithm.  Our first goal, in Section~\ref{SubSec:TuningParameters}, is to study the effect of the choices of input parameters to our method, which allows us to recommend sensible default choices for application in our subsequent comparisons.  Section~\ref{SubSec:Comparison} presents the results of a simulation study involving the \texttt{Sharp-SSL} method, as well as five alternative approaches, on high-dimensional clustering tasks (since not all of the competing methods are able to leverage partial label information).  We find that the \texttt{Sharp-SSL} algorithm is able to attain a misclustering rate very close to that of the optimal Bayes classifier, even with only around 50 observations per cluster, in settings where these alternative techniques may perform poorly.  In Section~\ref{SubSec:gammafraction}, we investigate the extent to which the different versions of the \texttt{Sharp-SSL} method are able to leverage partial label information.  The results here are consistent with the phase transition phenomenon articulated by our theory.  Finally, in Section~\ref{SubSec:EmpiricalData}, we apply the \texttt{Sharp-SSL} algorithm, as well as the other methods from our simulation study, on a colon tumor dataset, where we withhold the true labels from the algorithms in order to assess performance.  Our analysis supports the ability of the \texttt{Sharp-SSL} algorithm to identify signal coordinates (genes) that are useful for identifying patients with and without tumors.

In the broader literature on high-dimensional learning problems, a large number of methods have been developed to leverage sparse low-dimensional structures for both clustering \citep{WittenTibshirani2010,azizyan2013minimax,wasserman2014feature,azizyan2015efficient,JinWang2016,verzelen2017detection,loffler2020computationally,loffler2021optimality} and classification \citep{cai2011direct,witten2011penalized,mai2012direct,tony2019high}.  These methods are not designed for partially-labeled (semi-supervised) settings.  
Another common approach is to project the data into the span of the top few principal components, and run a standard low-dimensional method such as $k$-means clustering or the EM algorithm \citep{butler2018integrating}.  This approach can fail if the directions of largest variation in the data are not aligned with the directions separating the clusters.  Finally, recent developments in other aspects of semi-supervised learning include self-training \citep{oymak2020statistical}, mean estimation \citep{zhang2019semi}, choice of $k$ in $k$-nearest neighbour classification \citep{cannings2020local} and linear regression \citep{chakrabortty2018efficient}.

Proofs of all of our results are provided in Section~\ref{Sec:Proofs}, and we conclude this introduction with some notation used throughout the paper.  We write $\mathbb{S}^{d\times d}$ for the set of $d$-dimensional symmetric matrices, write $\mathbb{S}_+^{d \times d}$ for the subset that are invertible, and write $\mathbb{S}^{d\times d}_{K-1}$ the subset of matrices in $\mathbb{S}^{d\times d}$ of rank at most $K-1$. 
We write $\mathbb{R}^{d\times d}$ for the set of $d$-dimensional matrices.
For $p \geq d$, let $\mathbb{O}^{p\times d}$ denote the set of $p\times d$ matrices with orthonormal columns.  The Euclidean norm is denoted by $\|\cdot\|$, and the operator norm of a matrix is denoted by $\|\cdot\|_{\mathrm{op}}$, so that $\|A\|_{\mathrm{op}} := \sup_{\{x:\|x\| = 1\}} \|Ax\|$.  
Given two sequences $(a_n)$ and $(b_n)$, we write $a_n \lesssim b_n$ when there exists a universal constant $C > 0$ such that $a_n \leq Cb_n$, and, given an additional problem parameter $R$, we write $a_n \lesssim_R b_n$ when there exists $C > 0$, depending only on $R$, such that $a_n \leq Cb_n$. 

For any set $S \subseteq \mathbb{R}^d$ and $d\leq |S|$, we write $\binom{S}{d}:=\{A\subseteq S: |A|=d\}$.  If $S \subseteq \mathbb{R}^d$, we define $\sargmax S$ to be the smallest element in the $\argmax$ in the lexicographic order. 
For a positive integer $k$, we define $[k]:=\{1,\ldots, k\}$.
For a vector $v = (v_1,\ldots, v_k)^\top\in \mathbb{R}^k$, and $j\in [k]$, we define $v_{-j} = (v_1,\ldots, v_{j-1}, v_{j+1}, \ldots, v_k)^\top \in \mathbb{R}^{k-1}$.

\section{The \texttt{Sharp-SSL} algorithm}
\label{Sec:Methodology}

In this section, we describe in detail the \texttt{Sharp-SSL} algorithm for $K$-class semi-supervised learning, with $K \geq 2$.  We aim to provide a unified treatment of clustering, semi-supervised learning and classification.  To this end, we assume that for $i \in [n]$, the observation $x_i \in \mathbb{R}^p$ has a true label $y_i^* \in [K]$, but it may be the case that we do not observe $y_i^*$.  
Instead, we assume that our observed label $y_i$ takes values in $[K] \cup \{0\}$, where $y_i := y_i^*$ when the true class label is observed, and $y_i := 0$ otherwise.  
Thus, our data can be regarded as $(x_1,y_1),\ldots,(x_n,y_n) \in \mathbb{R}^p \times ([K] \cup \{0\})$, and our goal is to construct a \emph{data-dependent classifier}\footnote{It is convenient to use the term `classifier' here, even though some or all of the labels may be unobserved.}, i.e.~a Borel measurable function $C:\mathbb{R}^p \times \bigl(\mathbb{R}^p \times ([K] \cup \{0\})\bigr)^n \rightarrow [K]$, with the interpretation that $C\bigl(x;(x_1,y_1),\ldots,(x_n,y_n)\bigr)$ is the predicted class of $x \in \mathbb{R}^p$.  

To motivate our \texttt{Sharp-SSL} algorithm, it is instructive first to consider a canonical Gaussian classification problem, where our data can be regarded as $n$ independent realizations of a pair $(X,Y)$ taking values in $\mathbb{R}^p \times [K]$, with prior probability $\pi_k := \mathbb{P}(Y = k)$ for the $k$th class and $X \mid Y=k \sim \mathcal{N}_p(\nu_k,\Sigma_{\mathrm{w}})$, for class means $\nu_1,\ldots,\nu_K \in \mathbb{R}^p$ and \emph{within-class covariance matrix} $\Sigma_{\mathrm{w}} \in \mathbb{S}_+^{p \times p}$.  Let
$\nu := \sum_{k=1}^K \pi_k\nu_k \in \mathbb{R}^p$
denote the grand population mean, let 
\beq\label{sb}
\Sigma_{\mathrm{b}} := \sum_{k=1}^K \pi_k (\nu_k - \nu)(\nu_k - \nu)^\top \in \mathbb{S}_{K-1}^{p \times p}
\eeq
denote the \emph{between-class covariance matrix}, 
and consider $D \in \mathbb{O}^{p \times (K-1)}$ with a column space spanned by $\Sigma_{\mathrm{w}}^{-1}(\nu_1-\nu),\ldots,\Sigma_{\mathrm{w}}^{-1}(\nu_K-\nu)$. Observe that for $k \neq \ell$,
\[
\log\biggl\{\frac{\mathbb{P}(Y = k\mid X=x)}{\mathbb{P}(Y = \ell \mid X=x)} \biggr\} = \log \Bigl(\frac{\pi_k}{\pi_\ell}\Bigr) - \frac{1}{2}(\nu_k + \nu_\ell)^\top \Sigma_{\mathrm{w}}^{-1}(\nu_k - \nu_\ell) + x^\top \Sigma_{\mathrm{w}}^{-1} (\nu_k - \nu_\ell),
\]
from which we deduce that this likelihood ratio, and hence the Bayes classifier $x \mapsto \argmax_{k \in [K]} \mathbb{P}(Y = k\mid X=x)$, only depends on $x$ through $D^\top x$.  Thus, for the purposes of classification, no signal would be lost (and the noise would be reduced) if $X$ were replaced with $D^\top X$.  
 
In high-dimensional settings with $p \gg n$, the matrix $\Sigma_{\mathrm{w}}^{-1}$ is not consistently estimable in general, but we can nevertheless make progress if the vectors $\Sigma_{\mathrm{w}}^{-1}(\nu_1 - \nu),\ldots,\Sigma_{\mathrm{w}}^{-1}(\nu_K - \nu)$ are sparse. In other words, writing $S_0$ for the union of the set of coordinates for which these vectors are non-zero, we suppose that $|S_0| \ll p$; this is a very common assumption in high-dimensional LDA \citep[e.g.][]{cai2011direct,witten2011penalized,mai2012direct,tony2019high}.

In such a setting, the column space of $D$ has a sparse basis, so it is natural to consider projecting the data onto a small subset of its coordinates.  For $d \in [p]$, define the set of axis-aligned projection matrices $\mathcal{P}_d := \bigl\{P \in \{0,1\}^{d \times p}: PP^\top = I_d\bigr\}$, i.e.~the set of binary $d \times p$ matrices with orthonormal rows.  By the argument above, if $d \geq |S_0|$ then there exists $P^* \in \mathcal{P}_d$ such that the error of the Bayes classifier is unchanged by projecting the data along $P^*$. 
In practice, it would typically be computationally too expensive to enumerate through all $p(p-1)\cdots(p-d+1)$ 
axis-aligned projections. 
Instead, we consider a randomly chosen subset of projections within $\mathcal{P}_d$.  An axis-aligned projection chosen uniformly at random is unlikely to capture all the signal coordinates $S_0$, but by aggregating over a carefully-chosen subset of these random projections, we can nevertheless recover the set of signal coordinates under suitable conditions; see Theorem~\ref{Thm:Meta} below.  To describe our method for choosing good projections, for $V \in \mathbb{O}^{p\times d}$, we define the \emph{generalized Rayleigh quotient} along $V$ by
\begin{equation}
\label{Eq:GRQ}
J(V;\Sigma_{\mathrm{b}},\Sigma_{\mathrm{w}}) := \mathrm{tr}\{(V^\top\Sigma_{\mathrm{w}} V)^{-1}(V^\top \Sigma_{\mathrm{b}} V)\}.
\end{equation}
Proposition~\ref{Prop:Motivation} below motivates seeking to choose projections to maximize the generalized Rayleigh quotient by showing that the column span of any maximizer $J(V;\Sigma_{\mathrm{b}},\Sigma_{\mathrm{w}})$ over $V\in\mathbb{O}^{p\times d}$ must contain the column space of $D$.
\begin{prop}
\label{Prop:Motivation}
Let $K \geq 2$ and $d\geq K-1$. Assume that the convex hull of $\nu_1,\ldots,\nu_K$ is $(K-1)$-dimensional, and let $V^* \in \argmax_{V\in\mathbb{O}^{p\times d}} J(V;\Sigma_{\mathrm{b}},\Sigma_{\mathrm{w}})$.  Then the column space of $V^*$ contains the eigenspace corresponding to the $K-1$ non-zero eigenvalues\footnote{Even though $\Sigma_{\mathrm{w}}^{-1}\Sigma_{\mathrm{b}}$ is not guaranteed to be symmetric, it is similar (i.e.~conjugate) to the symmetric matrix $\Sigma_{\mathrm{w}}^{-1/2}\Sigma_{\mathrm{b}} \Sigma_{\mathrm{w}}^{-1/2}$, so has real eigenvalues and eigenvectors.} of $\Sigma_{\mathrm{w}}^{-1}\Sigma_{\mathrm{b}}$, which is equal to the space spanned by $\bigl(\Sigma_{\mathrm{w}}^{-1}(\nu_k-\nu):k \in [K]\bigr)$.
\end{prop}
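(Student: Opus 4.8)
The plan is to reduce the generalised Rayleigh quotient to a symmetric eigenvalue problem by whitening, and then invoke the Ky Fan extremal characterisation of the trace of an orthogonal projection. First I would observe that $J(V;\Sigma_{\mathrm{b}},\Sigma_{\mathrm{w}})$ depends on $V$ only through its column space. Since $\Sigma_{\mathrm{w}} \in \mathbb{S}_+^{p\times p}$, the matrix $M := \Sigma_{\mathrm{w}}^{1/2}V \in \mathbb{R}^{p\times d}$ has full column rank, so $V^\top\Sigma_{\mathrm{w}} V = M^\top M$ is invertible, and writing $\tilde{\Sigma}_{\mathrm{b}} := \Sigma_{\mathrm{w}}^{-1/2}\Sigma_{\mathrm{b}}\Sigma_{\mathrm{w}}^{-1/2} \in \mathbb{S}^{p\times p}$ a cyclic trace manipulation gives
\[
J(V;\Sigma_{\mathrm{b}},\Sigma_{\mathrm{w}}) = \mathrm{tr}\bigl\{(M^\top M)^{-1}M^\top\tilde{\Sigma}_{\mathrm{b}}M\bigr\} = \mathrm{tr}(\Pi_M\tilde{\Sigma}_{\mathrm{b}}),
\]
where $\Pi_M := M(M^\top M)^{-1}M^\top$ is the orthogonal projection onto $\mathrm{col}(M) = \Sigma_{\mathrm{w}}^{1/2}\,\mathrm{col}(V)$. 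As $V$ ranges over $\mathbb{O}^{p\times d}$, the subspace $\mathrm{col}(V)$ ranges over all $d$-dimensional subspaces of $\mathbb{R}^p$, and hence so does $\mathrm{col}(M)$ because $\Sigma_{\mathrm{w}}^{1/2}$ is a linear isomorphism; thus maximising $J$ is equivalent to maximising $\mathrm{tr}(\Pi\tilde{\Sigma}_{\mathrm{b}})$ over rank-$d$ orthogonal projections $\Pi$. (Existence of a maximiser follows from continuity of $J$ on the compact set $\mathbb{O}^{p\times d}$.)

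Next I would bring in the spectrum of $\tilde{\Sigma}_{\mathrm{b}}$. The matrix $\Sigma_{\mathrm{b}}$ is positive semidefinite of rank $K-1$ — the assumption that the convex hull of $\nu_1,\ldots,\nu_K$ is $(K-1)$-dimensional makes $\mathrm{span}(\nu_k - \nu : k \in [K])$ exactly $(K-1)$-dimensional — and hence so is $\tilde{\Sigma}_{\mathrm{b}}$; write its eigenvalues as $\mu_1 \geq \cdots \geq \mu_{K-1} > 0 = \mu_K = \cdots = \mu_p$ with a corresponding orthonormal eigenbasis $u_1,\ldots,u_p$ of $\mathbb{R}^p$. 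For any rank-$d$ orthogonal projection $\Pi$, set $c_i := u_i^\top\Pi u_i = \|\Pi u_i\|^2 \in [0,1]$; then $\sum_{i=1}^p c_i = \mathrm{tr}(\Pi) = d$ and $\mathrm{tr}(\Pi\tilde{\Sigma}_{\mathrm{b}}) = \sum_{i=1}^{K-1}\mu_i c_i$. Maximising $\sum_i \mu_i c_i$ subject to $0 \le c_i \le 1$ and $\sum_i c_i = d$ is a linear programme whose optimal value, using $d \geq K-1$, is $\sum_{i=1}^{K-1}\mu_i$; moreover, because $\mu_1,\ldots,\mu_{K-1}$ are strictly positive, the identity $\sum_{i=1}^{K-1}\mu_i(1-c_i) = 0$ at an optimum forces $c_i = 1$ for every $i \in [K-1]$. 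Since $\Pi$ is an orthogonal projection, $c_i = 1$ gives $\|(I-\Pi)u_i\|^2 = 1 - c_i = 0$, i.e.\ $u_i \in \mathrm{col}(\Pi)$; hence any maximising $\Pi$ satisfies $\mathrm{span}(u_1,\ldots,u_{K-1}) \subseteq \mathrm{col}(\Pi)$.

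Finally I would translate back through the whitening. Applying $\Sigma_{\mathrm{w}}^{-1/2}$ to the inclusion $\mathrm{span}(u_1,\ldots,u_{K-1}) \subseteq \mathrm{col}(\Sigma_{\mathrm{w}}^{1/2}V^*)$ yields $\mathrm{span}(\Sigma_{\mathrm{w}}^{-1/2}u_1,\ldots,\Sigma_{\mathrm{w}}^{-1/2}u_{K-1}) \subseteq \mathrm{col}(V^*)$. It then remains to identify the left-hand side: since $\Sigma_{\mathrm{w}}^{-1}\Sigma_{\mathrm{b}} = \Sigma_{\mathrm{w}}^{-1/2}\tilde{\Sigma}_{\mathrm{b}}\Sigma_{\mathrm{w}}^{1/2}$ is similar to $\tilde{\Sigma}_{\mathrm{b}}$, its eigenvalues are $\mu_1,\ldots,\mu_p$ with eigenvectors $\Sigma_{\mathrm{w}}^{-1/2}u_1,\ldots,\Sigma_{\mathrm{w}}^{-1/2}u_p$, so the span of those with nonzero eigenvalue is $\mathrm{span}(\Sigma_{\mathrm{w}}^{-1/2}u_i : i \in [K-1])$, and because $\Sigma_{\mathrm{w}}^{-1}\Sigma_{\mathrm{b}}$ is diagonalisable with exactly $K-1$ nonzero eigenvalues this span equals $\mathrm{col}(\Sigma_{\mathrm{w}}^{-1}\Sigma_{\mathrm{b}}) = \Sigma_{\mathrm{w}}^{-1}\,\mathrm{col}(\Sigma_{\mathrm{b}}) = \mathrm{span}(\Sigma_{\mathrm{w}}^{-1}(\nu_k - \nu) : k \in [K])$, using $\mathrm{col}(\Sigma_{\mathrm{b}}) = \mathrm{span}(\nu_k - \nu : k \in [K])$. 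This gives the stated conclusion.

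The reduction and the linear-programming bound are routine; the only point that requires genuine care is the equality characterisation — namely that optimality forces $c_i = 1$ for \emph{every} $i \in [K-1]$ (which crucially uses $\mu_i > 0$ there, and would fail among the degenerate directions), together with the elementary fact that, for an orthogonal projection, $c_i = 1$ is equivalent to $u_i$ lying in its range. One should also note the harmless subtlety when $d > K-1$: the maximiser then has genuine freedom in the remaining $d - (K-1)$ directions, but the proposition only asserts an inclusion of the eigenspace in $\mathrm{col}(V^*)$, not an equality.
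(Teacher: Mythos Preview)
Your proof is correct and follows essentially the same route as the paper's: both whiten to reduce $J(V;\Sigma_{\mathrm{b}},\Sigma_{\mathrm{w}})$ to a trace of the symmetric matrix $\tilde{\Sigma}_{\mathrm{b}} = \Sigma_{\mathrm{w}}^{-1/2}\Sigma_{\mathrm{b}}\Sigma_{\mathrm{w}}^{-1/2}$ restricted to the $d$-dimensional subspace $\Sigma_{\mathrm{w}}^{1/2}\,\mathrm{col}(V)$, invoke the Ky Fan characterisation to force that subspace to contain the nonzero eigenspace of $\tilde{\Sigma}_{\mathrm{b}}$, and then undo the whitening. The only cosmetic difference is that the paper writes the reduced quantity as $\mathrm{tr}(Q^\top\tilde{\Sigma}_{\mathrm{b}}Q)$ for an orthonormal $Q$ spanning $\mathrm{col}(\Sigma_{\mathrm{w}}^{1/2}V)$ and simply asserts the maximisation fact, whereas you write it as $\mathrm{tr}(\Pi_M\tilde{\Sigma}_{\mathrm{b}})$ and spell out the equality case via the coefficients $c_i$; your version is slightly more explicit but not substantively different.
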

Based on Proposition~\ref{Prop:Motivation}, a natural conceptual approach to maximizing the generalized Rayleigh quotient is to compute the leading $(K-1)$-dimensional eigenspace of $\Sigma_{\mathrm{w}}^{-1}\Sigma_{\mathrm{b}}$.  This strategy, however, runs into difficulties when we replace these population quantities with their sample versions in the setting of the opening paragraph of this section.  More precisely, writing $n_k:=\sum_{i=1}^n \mathbbm{1}_{\{ y_i = k \}}$ for $k \in [K]$, as well as 
\begin{align*}
\tilde\Sigma_{\mathrm{w}} &:= \frac{1}{n}\sum_{k=1}^K\sum_{i=1}^n (x_i-\hat{\nu}_k)(x_i-\hat{\nu}_k)^{\top}\mathbbm{1}_{\{y_i=k\}} \in \mathbb{R}^{p \times p} \\
\tilde\Sigma_{\mathrm{b}} &:= \sum_{k=1}^K \frac{n_k}{n} (\hat{\nu}_k - \hat{\nu})(\hat{\nu}_k - \hat{\nu})^{\top} \in \mathbb{R}^{p \times p},
\end{align*}
for the sample versions of the within-class and between-class covariance matrices respectively, the matrix $\tilde\Sigma_{\mathrm{w}}$ is not invertible whenever $p > n$.  Fortunately, though, this issue can be resolved by working with the projected data, as long as we choose $d \leq n-K$: the projected data $\{PX_i:i \in [n]\}$ has within-class covariance matrix $P\Sigma_{\mathrm{w}}P^\top \in \mathbb{R}^{d \times d}$ and between-class covariance matrix $P\Sigma_{\mathrm{b}}P^\top \in \mathbb{R}^{d \times d}$, so with probability one, the sample version $P\tilde{\Sigma}_{\mathrm{w}}P^\top$ is invertible.

Returning to the general setting of the opening paragraph of this section, then, we seek projections $P$ with large $J(P^\top;\tilde\Sigma_{\mathrm{b}},\tilde\Sigma_{\mathrm{w}}) = \mathrm{tr}\bigl\{(P\tilde \Sigma_{\mathrm{w}}P^\top)^{-1}(P\tilde \Sigma_{\mathrm{b}}P^\top)\bigr\}$. 
To this end, for fixed $A,B\in\mathbb{N}$, we sample a set of projections  $\{P^{a,b}: a \in [A], b \in [B]\}$ uniformly at random from $\mathcal{P}_d$.
For each $a$ and $b$, we apply a low-dimensional base algorithm $\psi:\bigl(\mathbb{R}^d \times ([K] \cup \{0\})\bigr)^n \rightarrow \mathbb{S}_+^{d\times d}$ to the projected data $(P^{a,b}x_1,y_1),\ldots, (P^{a,b}x_n,y_n)$ to obtain an estimator $\hat{Q}^{a,b}$ of $(P^{a,b} \Sigma_{\mathrm{w}} P^{a,b,\top})^{-1} (P^{a,b} \Sigma_{\mathrm{b}} P^{a,b,\top})$, the \emph{whitened between-class covariance matrix} of the projected data.  We assume throughout for convenience that $\psi$ is \emph{permutation equivariant} in the sense that $\psi\bigl( (\Pi z_1,y_1), \ldots, (\Pi z_n,y_n)\bigr) 
= \Pi \psi\bigl( (z_1,y_1), \ldots, (z_n,y_n)\bigr) \Pi^\top$ for every permutation matrix $\Pi \in \mathbb{R}^{d \times d}$. 
One choice for the base algorithm is to set $\hat{Q}^{a,b} = (\hat\Sigma^{a,b}_{\mathrm{w}})^{-1}\hat\Sigma^{a,b}_{\mathrm{b}}$, where $\hat\Sigma^{a,b}_\mathrm{w}$  and $\hat\Sigma^{a,b}_{\mathrm{b}}$ are estimated projected within- and between-class (or cluster) covariance matrices. 

To select projections, for each $a \in [A]$, we define
\[
b^*(a) := \sargmax_{b\in [B]}  \mathrm{tr}(\hat Q^{a,b}),
\]
and select $P^{a,b^*(a)}$.  The main rationale for dividing the projections into $A$ groups and selecting one within each group---as opposed to selecting the $A$ projections with the largest values of $\mathrm{tr}(\hat{Q}^{a,b})$---is that, conditional on the original data, the selected projections are independent and identically distributed.  This facilitates our theoretical analysis by enabling the application of concentration inequalities in the proof of Theorem~\ref{Thm:Meta}.

The diagonal entries of $\{\hat Q^{a,b^*(a)}:a \in [A]\}$ measure the importance of the projected variables for the semi-supervised learning task.  These can then be converted into importance scores for the original variables by `back-projecting' into the higher-dimensional space, i.e.~by forming the vector $\hat w = (\hat w_1,\ldots,\hat w_p)^\top \in \mathbb{R}^p$ given by
\begin{equation*}\label{eq:aggregation}
\hat w_{j} :=  \frac{1}{A} \sum_{a=1}^{A} \left[P^{a,b^*(a),\top}\hat Q^{a,b^*(a)} P^{a,b^*(a)}\right]_{j,j}, \quad j \in [p].
\end{equation*}
Finally, we rank the variables by their importance scores, and our estimate $\hat{S}$ of the set of signal coordinates is given by the largest $\ell$ entries in $\hat w$, breaking ties arbitrarily if necessary, where $\ell \in [p]$ is specified by the practitioner.  
Pseudocode for the \texttt{Sharp-SSL} procedure is given in Algorithm~\ref{Algo:Sharp-SSL}.

\begin{algorithm}
\SetAlgoLined
\DontPrintSemicolon
\KwIn{Data $(x_1,y_1), \ldots, (x_n,y_n)$ $\in$ $\mathbb{R}^p\times$ $([K]\cup\{0\})$ (where $y_i=0$ denotes a missing label); \newline
Projected dimension $d \in [\min(p,n-K)]$, number of selected signal coordinates $\ell \in [p]$; \newline
Number $A \in \mathbb{N}$ of groups of projections, number $B \in \mathbb{N}$ of projections in each group; \newline 
Permutation equivariant base algorithm $\psi: \bigl(\mathbb{R}^d\times ([K]\cup\{0\})\bigr)^n$ $\to \mathbb{S}_+^{d\times d}$.}

Generate axis-aligned random projections $\{P^{a,b}: a \in [A], b \in [B]\}$ independently and uniformly from $\mathcal{P}_d$.\;
\For{$a \in [A]$}{
      \For{$b \in [B]$}{
      	  Let $ \hat{Q}^{a,b} := \psi\bigl((P^{a,b}x_1,y_1),\ldots,(P^{a,b}x_n,y_n)\bigr)$.
      }   
      Set $b^*(a) := \sargmax_{b\in[B]}  \mathrm{tr}(\hat Q^{a,b})$.
} 
Let $\hat w=(\hat w_1,\ldots,\hat w_p)^\top$, where
$\hat w_j :=  \frac{1}{A} \sum_{a=1}^{A} [P^{a,b^*(a),\top}\hat Q^{a,b^*(a)} P^{a,b^*(a)}]_{j,j}$.\;
\KwOut{$\hat S \subseteq [p]$, defined as the index set of the $\ell$ largest components of $\hat w$, breaking ties randomly.}
\caption{\texttt{Sharp-SSL}: Clustering and semi-supervised learning via ensembles of axis-aligned random projections.}
\label{Algo:Sharp-SSL}
\end{algorithm}

After applying Algorithm~\ref{Algo:Sharp-SSL} to obtain an estimated set $\hat{S}$ of signal variables, we can then apply any existing semi-supervised learning method for low-dimensional data with input $(P_{\hat S} x_{i,}, y_i)_{i\in[n]}$, 
where $P_{\hat S}$ denotes the projection onto the coordinates in $\hat S$.  

\subsection{Base learning methods}

Algorithm~\ref{Algo:Sharp-SSL} relies on a  base learning method for low-dimensional data to estimate the projected whitened between-class covariance matrix from the projected data.  When all or almost all of the input data are labeled, we can use the procedure outlined in Algorithm~\ref{Algo:LDA}, which ignores any unlabeled data, for this purpose.  
On the other hand, when we have a substantial amount of unlabeled data, Algorithm~\ref{Algo:LDA} may be inaccurate.  
In such circumstances, it may be preferable to use Algorithm~\ref{Algo:EM}, which runs an \emph{Expectation--Maximization} (EM) procedure to predict the unobserved labels and subsequently estimate the whitened between-class covariance matrix.  More precisely, from $M$ random initializations of the cluster means and the within-class covariance matrix, Algorithm~\ref{Algo:EM} uses the EM algorithm to update these quantities, and thereby compute the whitened between-cluster sample covariance matrix estimators $\bigl\{\hat Q^{[m]}=(\hat\Sigma_{\mathrm{w}}^{[m]})^{-1}(\hat\Sigma_{\mathrm{b}}^{[m]}): m \in [M]\bigr\}$.  
We select $\hat{m} \in [M]$ such that $\hat{Q}^{[\hat{m}]}$ is in best agreement with results from the other EM runs; this is made precise in~\eqref{Eq:BestInitialization}.  

The algorithm also allows the practitioner to incorporate prior knowledge about the true cluster means and within-cluster covariance matrices, both through optimizing over a restricted constraint set $\mathcal{C}$ in the M step of the EM algorithm, and through the choice of a distribution supported on $\mathcal{C}$ for the initialization of these quantities.  
An alternative to the EM algorithm for unsupervised learning would be to apply $k$-means clustering as a base procedure.  Previous studies have suggested that these approaches have comparable empirical performance \citep[e.g.,][and references therein]{de2008clustering,rodriguez2019clustering}, but the EM algorithm is more amenable to theoretical analysis in our setting.
\begin{algorithm}
\SetAlgoLined
\IncMargin{1em}
\DontPrintSemicolon
\KwIn{$(z_1,y_1),\ldots,(z_n,y_n) \in \mathbb{R}^d \times ([K] \cup \{0\})$, where the convex hull of $(z_i)_{i:y_i=k}$ is $d$-dimensional for at least one $k \in [K]$.}

Compute $\hat{\mu}:= n^{-1}\sum_{i=1}^n z_i$\;

\For{$k\in[K]$}{
Set $n_k:=|\{i: y_i = k\}|$ and $n' := \sum_{k=1}^K n_k$\;
Compute $\hat{\mu}_k:=n_k^{-1}\sum_{i: y_i=k} z_i$ (with the convention that $\hat{\mu}_k:=0$ if $n_k=0$).\;
}

Compute the within-class and between-class covariance matrices as 
\begin{equation}
\label{Eq:withinbetween}
\hat\Sigma_{\mathrm{w}} := \frac{1}{n'}\sum_{i=1}^n (z_i - \hat{\mu}_{y_i})(z_i - \hat{\mu}_{y_i})^\top \quad\text{and}\quad \hat\Sigma_{\mathrm{b}} := \sum_{k=1}^K \frac{n_k}{n'}(\hat{\mu}_k - \hat{\mu})(\hat{\mu}_k - \hat{\mu})^\top.
\end{equation}

\KwOut{$\hat Q := \hat\Sigma_{\mathrm{w}}^{-1}\hat\Sigma_{\mathrm{b}}$.}
\caption{Base learning using only labeled data}
\label{Algo:LDA}
\end{algorithm}

\begin{algorithm}[htbp]
\SetAlgoLined
\IncMargin{1em}
\DontPrintSemicolon
\KwIn{Data $(z_1,y_1),\ldots,(z_n,y_n) \in \mathbb{R}^d\times ([K]\cup\{0\})$. A constraint set $\mathcal{C} \subseteq (\mathbb{R}^d)^K \times \mathbb{S}_+^{d\times d}$ and a probability distribution $\pi_{\mathcal{C}}$ supported on $\mathcal{C}$. Number of random initializations $M$. Number of iterations $T$.}

\For{$m \in [M]$}{

Randomly sample $(\hat{\mu}_{1},\ldots,\hat{\mu}_K,\hat\Sigma_{\mathrm{w}}) \sim \pi_{\mathcal{C}}$.\;

\For{$t \in [T]$}{
  (E step) Compute the soft-label matrix $(L_{i,k})_{i\in[n], k\in[K]}$
  \begin{equation}
\label{Eq:EStep}
L_{i,k} := \biggl(\frac{e^{-\frac{1}{2}(z_i-\hat{\mu}_k)^\top \hat\Sigma_{\mathrm{w}}^{-1}(z_i-\hat{\mu}_k)}}{\sum_{\ell=1}^K e^{-\frac{1}{2}(z_i-\hat{\mu}_\ell)^\top \hat\Sigma_{\mathrm{w}}^{-1}(z_i-\hat{\mu}_\ell)}}\biggr)\mathbbm{1}_{\{y_i=0\}} + \mathbbm{1}_{\{y_i = k\}}.
\end{equation}

(M step) Update parameter estimates by
\begin{align}
\label{Eq:MStep}
(\hat{\mu}_1&,\ldots,\hat{\mu}_K,\hat\Sigma_{\mathrm{w}}) \nonumber \\
&:= \argmin_{(\mu_1,\ldots,\mu_K, \Sigma)\in\mathcal{C}} \biggl\{\frac{1}{n}\sum_{i=1}^n\sum_{k=1}^K L_{i,k} (z_i - \mu_k)^\top \Sigma^{-1}(z_i - \mu_k) + \log\det\Sigma\biggr\}.
\end{align}
}
Compute $(L_{i,k})_{i\in[n],k\in[K]}$ using the final values of $(\hat\mu_1,\ldots,\hat\mu_K,\hat\Sigma_{\mathrm{w}})$ as in~\eqref{Eq:EStep}. 

Compute $\hat{\mu}_{\mathrm{tot}} := n^{-1}\sum_{i=1}^n\sum_{k=1}^K L_{i,k}\hat\mu_{k}$ and the between-class covariance matrix 
$$
\hat\Sigma_{\mathrm{b}} := \frac{1}{n}\sum_{i=1}^n \sum_{k=1}^K L_{i,k}(\hat{\mu}_k - \hat{\mu}_{\mathrm{tot}})(\hat{\mu}_k - \hat{\mu}_{\mathrm{tot}})^\top.
$$
Set $\hat Q^{[m]} := \hat\Sigma_{\mathrm{w}}^{-1}\hat\Sigma_{\mathrm{b}}$.
}
Set 
\begin{equation}
\label{Eq:BestInitialization}
\hat m\in \argmin_{m\in[M]} \mathrm{median}\bigl(\|\hat Q^{[m]} - \hat Q^{[m']}\|_{\mathrm{op}}:m'\in [M] \setminus \{m\}\bigr).
\end{equation}

\KwOut{$\hat Q := \hat Q^{[\hat m]}$.}
\caption{Base learning using partially labeled data via an EM algorithm}
\label{Algo:EM}
\end{algorithm}

\section{Theoretical guarantees}
\label{Sec:Theory}
\subsection{Results for the high-level algorithm}

In this subsection, we consider independent triples $(X_1,Y_1,Y_1^*),\ldots,(X_n,Y_n,Y_n^*)$ taking values in $\mathbb{R}^p \times ([K] \cup \{0\}) \times [K]$.  We recall that $Y_i^*$ denotes the true label of the $i$th observation, and that $Y_i := Y_i^*$ if the $i$th label is observed, and $Y_i := 0$ otherwise.  For $k \in [K]$, let $\pi_k:=\mathbb{P}(Y_1^*=k)$ and $ \nu_k^* := \mathbb{E}(X_1\mid Y_1^*=k)$ denote the prior probability and the cluster mean of the $k$th cluster respectively, let $\nu^* := \sum_{k=1}^K \pi_k\nu_k^*$ denote the weighted cluster mean and let $\Sigma_{\mathrm{w}} := \mathrm{Cov}(X_1 \mid Y_1^*=k)$ denote the common within-cluster covariance matrix. 
With the between-cluster covariance matrix $\Sigma_{\mathrm{b}}$ from~\eqref{sb},
our goal is to estimate the set of signal coordinates,
\[
S_0 := \bigl\{j\in[p]: (\Sigma_{\mathrm{w}}^{-1}\Sigma_{\mathrm{b}})_{j,j}\neq 0\bigr\},
\] 
and we write $s_0:=|S_0|$.

Our first main theoretical result shows that if the base algorithm is accurate on each low-dimensional projection and $A$ is large, then with high probability, all signal coordinates are selected.  
\begin{thm}
\label{Thm:Meta}
Define $\gamma_{\min} := \min_{j\in S_0} (\Sigma_{\mathrm{w}}^{-1}\Sigma_{\mathrm{b}})_{j,j}$ and $\gamma_{\max} := \max_{j\in S_0} (\Sigma_{\mathrm{w}}^{-1}\Sigma_{\mathrm{b}})_{j,j}$. Let $\hat{S}$ be the output of Algorithm~\ref{Algo:Sharp-SSL} with input $K$, $p$, $(X_1,Y_1),\ldots,(X_n,Y_n)$, $A$, $B$, $d\geq s_0$, $\ell\geq s_0$ and permutation equivariant base procedure $\psi$. Write 
\begin{equation}
\label{Eq:AssumptionUniformConsistency}
\epsilon:=\mathbb{P}\biggl(\max_{P\in\mathcal{P}_d}\bigl\|\psi\bigl((PX_i, Y_i)_{i\in[n]}\bigr) -
P\Sigma_{\mathrm{w}}^{-1}\Sigma_{\mathrm{b}}P^\top\bigr\|_{\mathrm{op}}\geq \frac{\gamma_{\min}}{4(K-1)} \biggr).
\end{equation}
Then
\[
\mathbb{P}(S_0\subseteq \hat{S}) \geq  1 - \epsilon - pe^{-A\gamma_{\min}^2/(50p^2\gamma_{\max}^2)}.
\]
\end{thm}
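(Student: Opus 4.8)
The plan is to condition on the data $(X_1,Y_1),\dots,(X_n,Y_n)$ throughout, and to work on the event
\[
\mathcal E_0 := \Bigl\{\max_{P\in\mathcal P_d}\bigl\|\psi\bigl((PX_i,Y_i)_{i\in[n]}\bigr) - P\Sigma_{\mathrm{w}}^{-1}\Sigma_{\mathrm{b}}P^\top\bigr\|_{\mathrm{op}} < \tfrac{\gamma_{\min}}{4(K-1)}\Bigr\},
\]
which is measurable with respect to the data and has $\mathbb P(\mathcal E_0)=1-\epsilon$. Writing $T(P)$ for the set of coordinates selected by $P\in\mathcal P_d$, the matrix $P^\top P$ is diagonal with $(j,j)$ entry $\mathbbm{1}_{\{j\in T(P)\}}$, so $[P^\top(P\Sigma_{\mathrm{w}}^{-1}\Sigma_{\mathrm{b}}P^\top)P]_{jj} = (\Sigma_{\mathrm{w}}^{-1}\Sigma_{\mathrm{b}})_{jj}\mathbbm{1}_{\{j\in T(P)\}}$. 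Since operator-norm proximity of two matrices forces proximity of each of their diagonal entries, on $\mathcal E_0$ we obtain, with $\delta := \gamma_{\min}/(4(K-1))$ and $\hat\rho_j := A^{-1}\sum_{a=1}^A\mathbbm{1}_{\{j\in T(P^{a,b^*(a)})\}}$,
\[
\bigl|\hat w_j - (\Sigma_{\mathrm{w}}^{-1}\Sigma_{\mathrm{b}})_{jj}\,\hat\rho_j\bigr| \le \delta\,\hat\rho_j \qquad\text{for every } j\in[p];
\]
in particular $\hat w_j \ge \tfrac34\gamma_{\min}\hat\rho_j$ for $j\in S_0$ and $|\hat w_j| \le \delta\hat\rho_j \le \tfrac14\gamma_{\min}\hat\rho_j$ for $j\notin S_0$.

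Next I would fix the threshold $\tau := \gamma_{\min}d/(2p)$. If every $j\in S_0$ had $\hat w_j > \tau$ and every $j\notin S_0$ had $\hat w_j \le \tau$, then every signal coordinate would strictly outrank every non-signal coordinate in $\hat w$, so, since $\ell \ge s_0$, the $\ell$ largest entries of $\hat w$ would contain $S_0$ irrespective of tie-breaking. Hence, on $\mathcal E_0$,
\[
\{S_0 \not\subseteq \hat S\} \ \subseteq\ \bigcup_{j\in S_0}\{\hat w_j \le \tau\}\ \cup\ \bigcup_{j\notin S_0}\{\hat w_j > \tau\},
\]
and it suffices to bound the conditional probabilities of these $p$ events. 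At this point the grouping of the projections is essential: conditionally on the data, the selected projections $P^{1,b^*(1)},\dots,P^{A,b^*(A)}$ are i.i.d., so $\hat w_j = A^{-1}\sum_{a=1}^A [P^{a,b^*(a),\top}\hat Q^{a,b^*(a)}P^{a,b^*(a)}]_{jj}$ is an average of $A$ i.i.d.\ random variables which, on $\mathcal E_0$, lie in $[0,\gamma_{\max}+\delta]$ when $j\in S_0$ and in $[-\delta,\delta]$ when $j\notin S_0$, and whose conditional mean lies in $[((\Sigma_{\mathrm{w}}^{-1}\Sigma_{\mathrm{b}})_{jj}-\delta)\rho_j,\ ((\Sigma_{\mathrm{w}}^{-1}\Sigma_{\mathrm{b}})_{jj}+\delta)\rho_j]$, where $\rho_j := \mathbb P(j\in T(P^{1,b^*(1)})\mid\text{data})$.

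The step I expect to be the main obstacle is the combinatorial estimate that, on $\mathcal E_0$, $\rho_j \ge d/p$ for $j\in S_0$ and $\rho_j \le d/p$ for $j\notin S_0$ --- i.e.\ that the trace-maximising selection does not discriminate against signal coordinates. The heuristic is that $\mathrm{tr}(\hat Q^{a,b})$ agrees with $\sum_{j'\in T(P^{a,b})}(\Sigma_{\mathrm{w}}^{-1}\Sigma_{\mathrm{b}})_{j'j'}$ up to an additive error of order $\gamma_{\min}$ (using that both $\hat Q^{a,b}$ and $P^{a,b}\Sigma_{\mathrm{w}}^{-1}\Sigma_{\mathrm{b}}P^{a,b,\top}$ have rank at most $K-1$, as do the outputs of the base procedures in Algorithms~\ref{Algo:LDA} and~\ref{Algo:EM}), so capturing an extra signal coordinate tends to raise $\mathrm{tr}(\hat Q^{a,b})$ and hence the chance of that projection being the $\sargmax$, whereas capturing a non-signal coordinate merely uses up one of the $d$ slots. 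Turning this into the stated inequalities calls for a coupling that swaps a captured signal coordinate for a non-signal one, together with an exchangeability argument over the coordinates outside $S_0$; pushing it through despite the $O(\gamma_{\min})$ slack in the trace approximation and the lexicographic tie-break in $\sargmax$ is the delicate part.

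Granting these bounds, the conditional means of $\{\hat w_j : j\in S_0\}$ are all at least $\tfrac34\gamma_{\min}(d/p) = \tfrac32\tau$, while those of $\{\hat w_j : j\notin S_0\}$ are all at most $\delta(d/p) \le \tfrac14\gamma_{\min}(d/p) = \tfrac12\tau$, so in either case the deviation from $\tau$ is at least $\tfrac12\tau = \gamma_{\min}d/(4p)$. Applying Hoeffding's inequality on each side --- with range $\gamma_{\max}+\delta \le \tfrac54\gamma_{\max}$ for $j\in S_0$ and range $2\delta \le \gamma_{\min}/2$ for $j\notin S_0$ --- gives $\mathbb P(\hat w_j \le \tau \mid \text{data}) \le \exp\!\bigl(-2A\gamma_{\min}^2 d^2/(25 p^2\gamma_{\max}^2)\bigr)$ for $j\in S_0$ and $\mathbb P(\hat w_j > \tau \mid \text{data}) \le \exp\!\bigl(-A d^2/(2p^2)\bigr)$ for $j\notin S_0$. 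Since $d\ge 1$ and $\gamma_{\min}\le\gamma_{\max}$, both bounds are at most $\exp\!\bigl(-A\gamma_{\min}^2/(50 p^2\gamma_{\max}^2)\bigr)$, so a union bound over the $p$ coordinates yields $\mathbb P(S_0\not\subseteq\hat S \mid \text{data}) \le p\,e^{-A\gamma_{\min}^2/(50 p^2\gamma_{\max}^2)}$ on $\mathcal E_0$. As this conditional bound holds on the event $\mathcal E_0$ of probability $1-\epsilon$ (and trivially $\mathbb P(\cdot\mid\text{data})\le 1$ otherwise), taking expectations over the data gives $\mathbb P(S_0\not\subseteq\hat S) \le \epsilon + p\,e^{-A\gamma_{\min}^2/(50 p^2\gamma_{\max}^2)}$, which is the claim.
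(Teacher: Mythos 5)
Your overall architecture matches the paper's: condition on the data, work on the event where the base procedure is uniformly accurate, exploit the conditional i.i.d.-ness of the selected projections $P^{1,b^*(1)},\ldots,P^{A,b^*(A)}$, and finish with Hoeffding plus a union bound (your constants even land on the stated exponent). But the step you yourself flag as the main obstacle is exactly where the argument is incomplete, and the two inequalities you propose to establish there are not the right ones. The swap-and-inject argument (replace a captured non-signal coordinate $j'$ by a signal coordinate $j$; by~\eqref{Eq:JComparison} this cannot decrease the trace, so it maps selection events for $j'$ injectively into selection events for $j$) yields only the \emph{ordering} $\rho_j\ge\rho_{j'}$ for $j\in S_0$, $j'\notin S_0$. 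Combined with $\sum_{j''}\rho_{j''}=d$ and $\rho_{j''}\le 1$, this gives $\rho_j\ge (d-s_0+1)/(p-s_0+1)\ge 1/p$, which is weaker than your claimed $\rho_j\ge d/p$ once $s_0\ge 2$. Worse, your companion claim $\rho_{j'}\le d/p$ for $j'\notin S_0$ is false in general: conditional on the data the non-signal coordinates are \emph{not} exchangeable, because the estimation error in $\hat Q^S$ can systematically inflate $\mathrm{tr}(\hat Q^S)$ for projections containing one particular non-signal coordinate (all non-signal traces merely lie in $(-\gamma_{\min}/4,\gamma_{\min}/4)$ on the good event, so any internal ordering is possible). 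For instance, with $d=s_0=1$, $K=2$, $B=2$ and $p$ large, if $\mathrm{tr}(\hat Q^{\{j'\}})$ happens to exceed $\mathrm{tr}(\hat Q^{\{j''\}})$ for every other non-signal $j''$, then $\rho_{j'}=(1-1/p)^2-(1-2/p)^2\approx 2/p>d/p$, and the conditional mean of $\hat w_{j'}$, which can be as large as $\delta\rho_{j'}\approx\gamma_{\min}/(2p)$, reaches your threshold $\tau=\gamma_{\min}d/(2p)$; the one-sided Hoeffding bound for non-signal coordinates then has no deviation left to work with.

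The paper's proof avoids both problems by abandoning the common threshold and comparing signal and non-signal coordinates \emph{pairwise}: on the good event, $\mathbb{E}(\hat w_j-\hat w_{j'}\mid\text{data})\ge \tfrac34\gamma_{\min}q_j-\tfrac14\gamma_{\min}q_{j'}\ge\tfrac12\gamma_{\min}q_j\ge\gamma_{\min}/(2p)$, using only the ordering $q_j\ge q_{j'}$ and the crude lower bound $q_j\ge 1/p$; Hoeffding is then applied to the deviation of each $\hat w_j$ from its \emph{own} conditional mean by $\gamma_{\min}/(4p)$, which separates every signal coordinate from every non-signal coordinate simultaneously. So you have correctly identified the hard step and the right tool (the injective swap, which sidesteps the tie-breaking and the $O(\gamma_{\min})$ trace slack because~\eqref{Eq:JComparison} gives a strict inequality), but the reduction from that tool to the concentration step must be reorganized along these lines; as stated, your intermediate inequalities cannot all be proved.
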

In fact, we can see from the proof of Theorem~\ref{Thm:Meta} that the following stronger conclusion holds: for any realization $(x_i,y_i)_{i\in[n]}$ of the data satisfying 
\begin{equation}
\label{Eq:Omega}
\max_{P\in\mathcal{P}_d}\bigl\|\psi\bigl((Px_i, y_i)_{i\in[n]}\bigr) - P\Sigma_{\mathrm{w}}^{-1}\Sigma_{\mathrm{b}}P^\top\bigr\|_{\mathrm{op}} < \frac{\gamma_{\min}}{4(K-1)},
\end{equation}
we have $\mathbb{P}\bigl(S_0\subseteq \hat{S} \mid (X_i,Y_i)_{i\in[n]} = (x_i,y_i)_{i\in[n]}\bigr) \geq 1 - pe^{-A\gamma_{\min}^2/(50p^2\gamma_{\max}^2)}$.  Note here that, after conditioning on the data, the probability is taken over the randomness in the projections.  An attraction of Theorem~\ref{Thm:Meta} is its generality, and in particular the fact that we do not impose strong distributional assumptions --- we simply require control of $\varepsilon$ in~\eqref{Eq:AssumptionUniformConsistency}.  The price we pay for this generality is that the probability bound may be loose in particular cases; for example, the bound holds even with $B=1$, though in practice we would expect it to improve as $B$ increases.


\subsection{Theory for base learning using labeled data}

In this subsection, we demonstrate how the high-level result in Theorem~\ref{Thm:Meta} can be used to derive performance guarantees for a high-dimensional classification algorithm that uses the \texttt{Sharp-SSL} procedure in Algorithm~\ref{Algo:Sharp-SSL} in conjunction with the low-dimensional base method described in Algorithm~\ref{Algo:LDA} for estimating the projected whitened between-class covariance matrix. The following theorem provides uniform control of the output of Algorithm~\ref{Algo:LDA} for all axis-aligned $d$-dimensional projected datasets.  
 \begin{thm}
\label{Thm:LDA}
Fix $\epsilon \in (0,1]$, $K \in \{2,3,\ldots,\}$, $p,d \in \mathbb{N}$ with $p \geq d$ and $n \in \mathbb{N}$ with $n \geq Kd+1$. Suppose that $(X_1,Y_1),\ldots,(X_n,Y_n)$ are independent and identically distributed pairs, with $\mathbb{P}(Y_1=k)=\pi_k$ and $X_1\mid Y_1 = k\sim \mathcal{N}_p(\nu_k^*,\Sigma_{\mathrm{w}})$ for $k\in[K]$,
and let $\psi\bigl((PX_i, Y_i)_{i\in[n]}\bigr)$ be the output of Algorithm~\ref{Algo:LDA} with input $(X_i,Y_i)_{i\in[n]}$, 
for $P\in\mathcal{P}_d$. 
Suppose that $\|\nu_k^* - \nu_\ell^*\|\leq R_1$ for all $k,\ell \in[K]$ and some $R_1 > 0$, and that $\Sigma_{\mathrm{w}}$ is diagonal and well-conditioned in the sense that $\max\{\|\Sigma_{\mathrm{w}}\|_{\mathrm{op}},\|\Sigma_{\mathrm{w}}^{-1}\|_{\mathrm{op}}\} \leq R_2$ for some $R_2 \geq 1$.  If 
\begin{equation}
    \label{Eq:SampleSize}
\frac{16R_2^2K}{n} \leq 1 \quad \text{and} \quad \frac{32R_2^2}{n^{1/2}}\log^{1/2}\biggl(\frac{8 \cdot 9^d \binom{p}{d}}{\epsilon}\biggr) \leq 1,  
\end{equation}
then with probability at least $1-\epsilon$, we have
\begin{align*}
\max_{P\in\mathcal{P}_d} \bigl\|\psi\bigl((PX_i,Y_i)_{i\in[n]}\bigr) - (P\Sigma_{\mathrm{w}}P^\top)^{-1}P\Sigma_{\mathrm{b}}P^\top & \bigr\|_{\mathrm{op}} \\
& \lesssim_{R_1,R_2} \frac{K}{n} +  \sqrt\frac{d\log(ep/d) + \log(1/\epsilon)}{n}.
\end{align*}
\end{thm}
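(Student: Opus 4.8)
The plan is to reduce to a union bound over $d$-element coordinate subsets, separately control the sample within-class and total covariance matrices of each projected dataset, and then combine these via a perturbation bound for $\hat\Sigma_{\mathrm{w}}^{-1}\hat\Sigma_{\mathrm{b}}$.

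\emph{Reduction and decomposition.} Since $\psi$ is permutation equivariant, the population map $P\mapsto (P\Sigma_{\mathrm{w}}P^\top)^{-1}P\Sigma_{\mathrm{b}}P^\top$ is equivariant under permutations of the selected coordinates, and $\|\cdot\|_{\mathrm{op}}$ is invariant under conjugation by permutation matrices, it suffices to take the maximum over coordinate subsets $S\in\binom{[p]}{d}$ rather than over all of $\mathcal{P}_d$. Fix such a projection $P$ and write $z_i:=PX_i$, $\Sigma_{\mathrm{w}}^P:=P\Sigma_{\mathrm{w}}P^\top$, $\Sigma_{\mathrm{b}}^P:=P\Sigma_{\mathrm{b}}P^\top$, $\Sigma_{\mathrm{tot}}^P:=\Sigma_{\mathrm{w}}^P+\Sigma_{\mathrm{b}}^P=\mathrm{Cov}(z_1)$; because $P$ selects $d$ coordinates, $\max\{\|\Sigma_{\mathrm{w}}^P\|_{\mathrm{op}},\|(\Sigma_{\mathrm{w}}^P)^{-1}\|_{\mathrm{op}}\}\le R_2$, and $\|\Sigma_{\mathrm{b}}^P\|_{\mathrm{op}}\le \|\Sigma_{\mathrm{b}}\|_{\mathrm{op}}\le R_1^2$ using $\|\nu_k^*-\nu^*\|\le R_1$. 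With $\hat\Sigma_{\mathrm{w}},\hat\Sigma_{\mathrm{b}}$ the outputs of Algorithm~\ref{Algo:LDA} and $\hat\Sigma_{\mathrm{tot}}:=n^{-1}\sum_{i=1}^n(z_i-\hat\mu)(z_i-\hat\mu)^\top=\hat\Sigma_{\mathrm{w}}+\hat\Sigma_{\mathrm{b}}$ (the standard scatter decomposition, valid since all labels are observed), I write
\[
\hat Q-(\Sigma_{\mathrm{w}}^P)^{-1}\Sigma_{\mathrm{b}}^P=\hat\Sigma_{\mathrm{w}}^{-1}(\hat\Sigma_{\mathrm{tot}}-\Sigma_{\mathrm{tot}}^P)-\hat\Sigma_{\mathrm{w}}^{-1}(\hat\Sigma_{\mathrm{w}}-\Sigma_{\mathrm{w}}^P)(\Sigma_{\mathrm{w}}^P)^{-1}\Sigma_{\mathrm{tot}}^P,
\]
so it is enough to bound $\|\hat\Sigma_{\mathrm{w}}-\Sigma_{\mathrm{w}}^P\|_{\mathrm{op}}$ and $\|\hat\Sigma_{\mathrm{tot}}-\Sigma_{\mathrm{tot}}^P\|_{\mathrm{op}}$ and to show $\|\hat\Sigma_{\mathrm{w}}^{-1}\|_{\mathrm{op}}\le 2R_2$.

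\emph{The within-class scatter (the key step).} Condition on the labels $(Y_i)_{i\in[n]}$; then the $z_i$ are independent with $z_i\sim\mathcal N_d(P\nu_{Y_i}^*,\Sigma_{\mathrm{w}}^P)$. Let $\mathcal K:=\{k:n_k\ge1\}$, so $|\mathcal K|\le K\le n/16$ by the first inequality in \eqref{Eq:SampleSize}. Centering each class at its empirical mean annihilates the deterministic part $P\nu_{Y_i}^*$, so $\hat\Sigma_{\mathrm{w}}=n^{-1}W^\top H W$, where $W$ has rows $z_i-P\nu_{Y_i}^*\sim\mathcal N_d(0,\Sigma_{\mathrm{w}}^P)$ and $H$ is the orthogonal projection annihilating the class-indicator vectors, of rank $r:=n-|\mathcal K|$; hence, conditionally, $\hat\Sigma_{\mathrm{w}}\stackrel{\mathrm d}{=}n^{-1}(\Sigma_{\mathrm{w}}^P)^{1/2}G^\top G(\Sigma_{\mathrm{w}}^P)^{1/2}$ with $G$ an $r\times d$ matrix of i.i.d.\ $\mathcal N(0,1)$ entries. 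For a fixed unit vector $v$, $v^\top\hat\Sigma_{\mathrm{w}}v\stackrel{\mathrm d}{=}n^{-1}(v^\top\Sigma_{\mathrm{w}}^Pv)\chi_r^2$, so a $\chi^2$ tail bound, a $1/4$-net of the sphere (of cardinality at most $9^d$, controlling the operator norm up to a factor $2$), and the deterministic estimate $\|(r/n)\Sigma_{\mathrm{w}}^P-\Sigma_{\mathrm{w}}^P\|_{\mathrm{op}}=(|\mathcal K|/n)\|\Sigma_{\mathrm{w}}^P\|_{\mathrm{op}}\le R_2K/n$ together yield
\[
\|\hat\Sigma_{\mathrm{w}}-\Sigma_{\mathrm{w}}^P\|_{\mathrm{op}}\lesssim R_2\Bigl(\tfrac Kn+\sqrt{\tfrac{d+t}{n}}+\tfrac{d+t}{n}\Bigr)
\]
on an event of conditional probability at least $1-2\cdot9^de^{-t}$. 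The crucial point is that only the deterministic count $|\mathcal K|\le K$ enters the bias, which is why the bound carries a $K/n$ term rather than a spurious $Kd/n$ term, and why no concentration of the $n_k$ is needed.

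\emph{The total scatter, invertibility, and assembly.} Working unconditionally, $\hat\Sigma_{\mathrm{tot}}=n^{-1}\sum_iU_iU_i^\top-(\hat\mu-P\nu^*)(\hat\mu-P\nu^*)^\top$ with $U_i:=z_i-P\nu^*$ i.i.d., $\mathbb E[U_iU_i^\top]=\Sigma_{\mathrm{tot}}^P$, and $U_i$ subgaussian with parameter $\lesssim_{R_1,R_2}1$ (a bounded mean shift of norm at most $R_1$ plus $\mathcal N_d(0,\Sigma_{\mathrm{w}}^P)$ noise); the same $1/4$-net argument with Bernstein tails for $n^{-1}\sum_i(v^\top U_i)^2$, plus a tail bound on $\|\hat\mu-P\nu^*\|$, gives $\|\hat\Sigma_{\mathrm{tot}}-\Sigma_{\mathrm{tot}}^P\|_{\mathrm{op}}\lesssim_{R_1,R_2}\sqrt{(d+t)/n}+(d+t)/n$ with probability at least $1-C9^de^{-t}$. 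Taking $t:=\log(8\cdot9^d\binom pd/\epsilon)$ and a union bound over the constituent tail events (accounting for the factor $8$), the $9^d$ net points, and the $\binom pd$ coordinate subsets makes all the displayed bounds hold simultaneously for every $P\in\mathcal P_d$ with probability at least $1-\epsilon$; here $t\asymp d\log(ep/d)+\log(1/\epsilon)$, and the second inequality in \eqref{Eq:SampleSize} forces $d+t\lesssim_{R_2}n$, so the $(d+t)/n$ terms are absorbed into $\sqrt{(d+t)/n}$ and, after increasing constants appropriately (this is where the precise constants $16$ and $32$ in \eqref{Eq:SampleSize} are used), $\|\hat\Sigma_{\mathrm{w}}-\Sigma_{\mathrm{w}}^P\|_{\mathrm{op}}<1/(2R_2)$, whence $\hat\Sigma_{\mathrm{w}}\succeq(2R_2)^{-1}I_d$ and $\|\hat\Sigma_{\mathrm{w}}^{-1}\|_{\mathrm{op}}\le2R_2$. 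Substituting into the decomposition of the first step, together with $\|\Sigma_{\mathrm{tot}}^P\|_{\mathrm{op}}\le R_2+R_1^2$ and $\|(\Sigma_{\mathrm{w}}^P)^{-1}\|_{\mathrm{op}}\le R_2$, gives the claimed bound.

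\emph{Main obstacle.} The delicate point is the treatment of $\hat\Sigma_{\mathrm{w}}$: expanding it naively as a pooled raw second-moment matrix minus $\sum_k(n_k/n)(\hat\mu_k-P\nu_k^*)(\hat\mu_k-P\nu_k^*)^\top$ and bounding the last term term-by-term produces a $Kd/n$ contribution that is \emph{not} dominated by the target rate. Recognising instead that $\hat\Sigma_{\mathrm{w}}$ is, conditionally on the labels, an exact $n^{-1}\times$ Wishart matrix with $n-|\mathcal K|$ degrees of freedom --- so that estimating the $K$ class means costs only a deterministic rescaling of the conditional mean by $r/n$ --- is what produces the correct $K/n$ term. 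The remaining work is the routine bookkeeping of the net/union bound and of the numerical constants needed to guarantee invertibility of $\hat\Sigma_{\mathrm{w}}$.
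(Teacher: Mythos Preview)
Your proof is correct, and the treatment of $\hat\Sigma_{\mathrm{w}}$ is essentially identical to the paper's: both condition on the labels, recognise the conditional Wishart structure with $n-|\mathcal K|$ (the paper writes $n-K$) degrees of freedom, and control the operator norm via a $\chi^2$ tail bound on a $1/4$-net of cardinality $9^d$, then take a union bound over the $\binom{p}{d}$ coordinate subsets with $\delta=\epsilon/\binom{p}{d}$.

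Where you genuinely diverge is in the second ingredient. The paper controls $\|\hat\Sigma_{\mathrm{b}}-\Sigma_{\mathrm{b}}^P\|_{\mathrm{op}}$ directly: it introduces intermediates $\tilde\Sigma_{\mathrm{b}}^{(1)},\tilde\Sigma_{\mathrm{b}}^{(2)}$ to separate the label-proportion fluctuations (handled by McDiarmid) from the conditional fluctuations of the class means, and then identifies $n\hat\Sigma_{\mathrm{b}}$ conditionally as a \emph{non-central} Wishart $\mathcal W_d(K-1,\Sigma_{\mathrm{w}}^P;n\tilde\Sigma_{\mathrm{b}}^{(2)})$ via a Cochran-type lemma, applying non-central $\chi^2$ tails. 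You instead exploit the scatter identity $\hat\Sigma_{\mathrm{b}}=\hat\Sigma_{\mathrm{tot}}-\hat\Sigma_{\mathrm{w}}$ and control $\|\hat\Sigma_{\mathrm{tot}}-\Sigma_{\mathrm{tot}}^P\|_{\mathrm{op}}$ as the sample covariance of i.i.d.\ subgaussian vectors, which needs only a standard Bernstein-plus-net argument. Your route is more elementary---it avoids the non-central Wishart machinery and McDiarmid entirely---at the cost of slightly less explicit constants (the subgaussian norm of the mixture depends on both $R_1$ and $R_2$, whereas the paper tracks these separately). The paper's route, on the other hand, isolates a standalone concentration result for $\hat\Sigma_{\mathrm{b}}$ that may be of independent interest. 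Both lead to the same final rate, and your algebraic decomposition through $\Sigma_{\mathrm{tot}}$ is a clean way to sidestep the direct analysis of $\hat\Sigma_{\mathrm{b}}$.
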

The sample size condition~\eqref{Eq:SampleSize} can be restated as $n\gtrsim_{R_1,R_2} d\log p + \log(1/\epsilon) + K$, so may be regarded as mild.  Regarding $K$ as a constant, Theorem~\ref{Thm:LDA} confirms that the uniform control of Algorithm~\ref{Algo:LDA} is at the parametric rate, up to a logarithmic factor.  The following corollary then follows immediately by combining Theorems~\ref{Thm:Meta} and~\ref{Thm:LDA}.
\begin{cor}
\label{Cor:LDA}
Fix $\epsilon \in (0,1]$.  Suppose that the conditions of Theorem~\ref{Thm:LDA} hold, 
and moreover that $\lambda_{\min}(\Sigma_{\mathrm{b}})\geq 1/R_3$ for some $R_3>0$. 
Then there exist $C_1, C_2 >0$, depending only on $R_1, R_2$ and $R_3$, such that if 
\[
C_1 \biggl(\frac{K}{n} + \sqrt\frac{d\log(ep/d) + \log(1/\epsilon)}{n}\biggr)  \leq \frac{1}{K},
\]
then the output $\hat S$ of Algorithm~\ref{Algo:Sharp-SSL} with input $K$, $p$, $d\geq s_0$, $\ell\geq s_0$, $(X_1,Y_1),\ldots,(X_n,Y_n)$, $A$, $B$, and base procedure $\psi$ from Algorithm~\ref{Algo:LDA} satisfies
\[
\mathbb{P}(S_0\subseteq \hat S)\geq 1 - \epsilon - p\exp\biggl(-\frac{A}{C_2p^2}\biggr).
\]
\end{cor}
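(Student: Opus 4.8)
The plan is to obtain the corollary by feeding the uniform error bound of Theorem~\ref{Thm:LDA} into the hypothesis~\eqref{Eq:AssumptionUniformConsistency} of the meta-result Theorem~\ref{Thm:Meta}, whose conclusion is precisely the desired probability bound once the quantity $\epsilon$ there and the exponent are matched up. The first point to record is that, since $\Sigma_{\mathrm{w}}$ is diagonal and every $P\in\mathcal{P}_d$ is a coordinate-selection matrix, we have $P\Sigma_{\mathrm{w}}^{-1}P^\top=(P\Sigma_{\mathrm{w}}P^\top)^{-1}$ and hence $P\Sigma_{\mathrm{w}}^{-1}\Sigma_{\mathrm{b}}P^\top=(P\Sigma_{\mathrm{w}}P^\top)^{-1}(P\Sigma_{\mathrm{b}}P^\top)$; thus the quantity controlled uniformly over $P$ by Theorem~\ref{Thm:LDA} is exactly the one appearing inside~\eqref{Eq:AssumptionUniformConsistency}. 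One also checks, routinely, that the map $\psi$ defined by Algorithm~\ref{Algo:LDA} is permutation equivariant, as Theorem~\ref{Thm:Meta} requires.

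Next I would translate the parameters $R_1,R_2,R_3$ into two-sided control of $\gamma_{\min}$ and $\gamma_{\max}$. Since $\Sigma_{\mathrm{w}}$ is diagonal, $(\Sigma_{\mathrm{w}}^{-1}\Sigma_{\mathrm{b}})_{j,j}=(\Sigma_{\mathrm{b}})_{j,j}/(\Sigma_{\mathrm{w}})_{j,j}$. Because $\nu^*=\sum_\ell\pi_\ell\nu_\ell^*$ is a convex combination of the $\nu_\ell^*$, the bound $\|\nu_k^*-\nu_\ell^*\|\le R_1$ gives $\|\nu_k^*-\nu^*\|\le R_1$, so $(\Sigma_{\mathrm{b}})_{j,j}=\sum_k\pi_k(\nu_{k,j}^*-\nu_j^*)^2\le R_1^2$; while the lower bound $\lambda_{\min}(\Sigma_{\mathrm{b}})\ge 1/R_3$ yields $(\Sigma_{\mathrm{b}})_{j,j}=e_j^\top\Sigma_{\mathrm{b}}e_j\ge 1/R_3$ for $j\in S_0$. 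Combined with $\max\{\|\Sigma_{\mathrm{w}}\|_{\mathrm{op}},\|\Sigma_{\mathrm{w}}^{-1}\|_{\mathrm{op}}\}\le R_2$, this gives $1/(R_2R_3)\le\gamma_{\min}\le\gamma_{\max}\le R_1^2R_2$, and in particular $\gamma_{\max}^2/\gamma_{\min}^2\le R_1^4R_2^4R_3^2$. Setting $C_2:=50R_1^4R_2^4R_3^2$ makes the exponent $A\gamma_{\min}^2/(50p^2\gamma_{\max}^2)$ in Theorem~\ref{Thm:Meta} at least $A/(C_2p^2)$.

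It then remains to bound the probability $\epsilon$ in~\eqref{Eq:AssumptionUniformConsistency} by the $\epsilon$ of the corollary. I would apply Theorem~\ref{Thm:LDA} with its failure probability set to $\epsilon$ --- its hypotheses, including the sample-size condition~\eqref{Eq:SampleSize}, being among ``the conditions of Theorem~\ref{Thm:LDA}'' assumed here --- to conclude that, with probability at least $1-\epsilon$, the left-hand side of~\eqref{Eq:AssumptionUniformConsistency} is at most $C(R_1,R_2)\bigl\{K/n+\sqrt{(d\log(ep/d)+\log(1/\epsilon))/n}\bigr\}$ for a constant $C(R_1,R_2)>0$ supplied by that theorem. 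The corollary's hypothesis $C_1\bigl\{K/n+\sqrt{\cdots}\bigr\}\le 1/K$ bounds this by $C(R_1,R_2)/(C_1K)$, which is at most $1/\{4(K-1)R_2R_3\}\le\gamma_{\min}/\{4(K-1)\}$ provided $C_1\ge 4R_2R_3\,C(R_1,R_2)$ (using $(K-1)/K\le 1$). Hence on this event the strict inequality~\eqref{Eq:Omega} holds, so the probability in~\eqref{Eq:AssumptionUniformConsistency} is at most $\epsilon$, and Theorem~\ref{Thm:Meta} gives $\mathbb{P}(S_0\subseteq\hat S)\ge 1-\epsilon-p\exp\bigl(-A\gamma_{\min}^2/(50p^2\gamma_{\max}^2)\bigr)\ge 1-\epsilon-p\exp\bigl(-A/(C_2p^2)\bigr)$, as claimed.

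The substance of the argument is thus the constant bookkeeping: one must verify that the factor $K-1$ in the threshold of~\eqref{Eq:AssumptionUniformConsistency} is exactly compensated by the factor $1/K$ on the right-hand side of the corollary's hypothesis, so that the resulting $C_1$ and $C_2$ depend only on $R_1,R_2,R_3$ and not on $K,d,p,n$ or $\epsilon$ --- the elementary inequality $(K-1)/K\le 1$ is what makes this go through cleanly. The matrix identity of the first step and the eigenvalue estimates of the second are routine, so I expect no genuine difficulty here, only care.
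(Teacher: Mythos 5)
Your proposal is correct and is exactly the route the paper intends: the paper gives no separate proof of Corollary~\ref{Cor:LDA}, stating only that it "follows immediately by combining Theorems~\ref{Thm:Meta} and~\ref{Thm:LDA}", and your argument supplies precisely the missing details (the diagonal-$\Sigma_{\mathrm{w}}$ identity reconciling the two theorems' target matrices, the bounds $1/(R_2R_3)\le\gamma_{\min}\le\gamma_{\max}\le R_1^2R_2$, and the $K$ versus $K-1$ bookkeeping). No gaps.
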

Thus, under the conditions of Corollary~\ref{Cor:LDA}, the \texttt{Sharp-SSL} algorithm can, with high probability, 
select the signal variables in the top $s_0$  output variables, provided that the number $A$ of groups of random projections is large by comparison with $p^2$.  
In other words, the algorithm reduces the problem to a low-dimensional one, for which standard learning techniques can be applied.  The guarantees for these methods 
\citep[e.g.][Theorem~6.6.1]{anderson2003introduction}
can then be combined on the high-probability event of Corollary~\ref{Cor:LDA} to establish theoretical results for the full procedure. 

\subsection{Theory for semi-supervised based learning}
\label{emguar}

When the proportion of labeled data is low,  Algorithm~\ref{Algo:LDA} may be inaccurate when used 
as the base procedure in Algorithm~\ref{Algo:Sharp-SSL}. 
The aim of this subsection, therefore, is to study the base procedure of Algorithm~\ref{Algo:EM}, which is able to leverage both the labeled and unlabeled data via an EM algorithm to estimate the whitened between-class covariance matrix for each projected data set.  
Our analysis builds on several recent breakthroughs in our understanding of the EM algorithm.  This line of work includes \citet{balakrishnan2017statistical}, \citet{daskalakis2017ten}, \citet{yan2017convergence}, \citet{dwivedi2020sharp}, \citet{dwivedi2020singularity}, \citet{davis2021clustering}, \citet{ho2020instability}, \citet{ndaoud2022sharp}, \citet{WuZhou2019} and \citet{doss2020optimal}, all of which focus on the unsupervised case.

For simplicity, we will focus on the setting where independent and identically distributed $(X_1,Y_1^*),\ldots,(X_n,Y_n^*)$ are generated from a mixture of two Gaussians with opposite means and identity covariance matrix:
\begin{equation}
Y_i^* \sim \mathrm{Unif}(\{1,2\}),
\, X_i\mid Y_i^* \sim \mathcal{N}_p\bigl((-1)^{Y_i^*} \nu^*, I_p\bigr), \text{ and } Y_i = Y_i^*\mathbbm{1}_{\{i\leq n_{\mathrm{L}}\}} \text{ for all $i\in[n]$}.\label{Eq:MixtureModel}
\end{equation}
\sloppy We assume that we observe $(X_1,Y_1),\ldots,(X_{n_{\mathrm{L}}},Y_{n_{\mathrm{L}}}), X_{n_{\mathrm{L}}+1},\ldots,X_n$ for some $n_{\mathrm{L}}\in \{0,\ldots,n\}$. In other words, we are given $n_{\mathrm{L}}$ labeled observations and  $n_{\mathrm{U}}:=n-n_{\mathrm{L}}$ unlabeled ones.  Thus, $n_{\mathrm{L}}=0$ corresponds to the fully unsupervised case, i.e.,~clustering, while $n_{\mathrm{L}} = n$ corresponds to the supervised case, i.e.,~classification.
We define $Y_i = Y_i^*$ for $i \in [n_{\mathrm{L}}]$, and  $Y_i = 0$ for $i \in \{n_{\mathrm{L}}+1, \ldots, n\}$. 

We first study the performance of the EM procedure after the covariates have been projected into a lower-dimensional space.  In other words, for some fixed $P \in \mathcal{P}_{d}$, define $Z_i:=PX_i$ for $i\in [n]$ and $\mu^*:=P\nu^* \in \mathbb{R}^d$, so that $Z_i\mid Y_i^*\sim \mathcal{N}_d((-1)^{Y_i^*}\mu^*, I_d)$. In this setting, we have a single unknown parameter $\mu^*$ to estimate, and this can be achieved by applying Algorithm~\ref{Algo:EM} to $(Z_i, Y_i)_{i\in[n]}$ with $K=2$ and the constraint set
\begin{equation}
\label{Eq:ConstraintSet}
\mathcal{C}:=\bigl\{(-\mu,\mu,I_d):\mu\in\mathbb{R}^d\bigr\}.
\end{equation}
After initializing the EM algorithm at some fixed $(-\hat{\mu}^{(0)},\hat{\mu}^{(0)},I_d) \in \mathcal{C}$, for $t \in \mathbb{N}$, the $t$th iterate of the EM iteration described in~\eqref{Eq:EStep} and~\eqref{Eq:MStep} is $(-\hat{\mu}^{(t)}, \hat{\mu}^{(t)}, I_d)$, where 
\begin{equation}
\label{Eq:EMUpdate}
\hat{\mu}^{(t)} := \frac{1}{n}\biggl\{\sum_{i:Y_i\neq 0} (-1)^{Y_i}Z_i  + \sum_{i:Y_i=0} Z_i\tanh\big\langle Z_i,\hat{\mu}^{(t-1)}\big\rangle\biggr\};
\end{equation}
see Lemma~\ref{Lemma:EMderiv}.  
Since we allow $n_{\mathrm{L}}=0$, where $\mu$ is only identifiable up to sign, and since the between-class sample covariance matrix $\hat\Sigma_{\mathrm{b}}$ computed in Algorithm~\ref{Algo:EM} is  equal to $\hat\Sigma_{\mathrm{b}} = \hat\mu_1\hat\mu_1^\top - \hat\mu_{\mathrm{tot}}\hat\mu_{\mathrm{tot}}^\top$, which is invariant to flipping the signs of $\hat\mu_1$ and $\hat\mu_2$ simultaneously, it is natural to consider the loss function $L:\RR^d\times \RR^d \to [0,\infty)$ given by
\[
  L(\mu,\mu') := \|\mu-\mu'\|\wedge \|\mu+\mu'\|.
\]

Proposition~\ref{Prop:OneGoodInitialization} below provides a theoretical guarantee for this semi-supervised EM algorithm.  For notational simplicity, 
we define $\gamma:= n_{\mathrm{L}}/n$,  
$\omega_0:=\sqrt{\{d \log n + \log(1/\delta)\}/{n_{\mathrm{U}}}}$ and $\zeta_0 :=\min\{\omega_0\gamma^{-1/2}, \omega_0^{1/2}\}$ throughout this section.  Thus, treating $d$ as a constant and ignoring polylogarithmic terms, $\omega_0$ is of order $n_{\mathrm{U}}^{-1/2}$ and $\zeta_0$ is of order $\min\{n_{\mathrm{L}}^{-1/2}, n_{\mathrm{U}}^{-1/4}\}$ when $\gamma < 1/2$.  We remark that $n_{\mathrm{L}}^{-1/2}$ is the critical $\ell_2$-testing radius for distinguishing the means of two labeled Gaussian distributions with identity covariance using $n_{\mathrm{L}}$ observations.  On the other hand, as we show in Lemma~\ref{Lemma:TestingRadius}, no test of the null hypothesis $H_0: \mathcal{N}_d(0, I_d)$ against the two-component mixture alternative $H_1: \frac{1}{2}\mathcal{N}_d(\mu^*,I_d) + \frac{1}{2}\mathcal{N}_d(-\mu^*,I_d)$ based on $n_{\mathrm{U}}$ observations can have large power unless the signal strength $\|\mu^*\|$ is at least of order $n_{\mathrm{U}}^{-1/4}$.
\begin{prop}
\label{Prop:OneGoodInitialization}
Fix $\delta\in(2e^{-n}, 1]$ and $r\geq 1$, and suppose that $\|\mu^*\|\leq r$ and $\gamma < 1/2$.  There exists $c>0$, depending only on $r$, such that if $\omega_0 \leq c$ and $n \geq 3$, then the following statements hold:
\begin{enumerate}[(i)]
    \item For any $\hat{\mu}^{(0)} \in \mathbb{R}^d$ with $\|\hat{\mu}^{(0)}\| \leq r+3$, we have with probability at least $1-2\delta$ that
\[
\limsup_{t\to\infty} L(\hat{\mu}^{(t)},\mu^*) \lesssim_r  \zeta_0 \vee \|\mu^*\|.
\]
\item There exists $C>0$, depending only on $r$, such that if $\|\mu^*\|\geq C \zeta_0 \sqrt{d\log n}$ and $\hat\mu^{(0)} = (\zeta_0\vee r\omega_0) \eta_0$ with $\eta_0\sim \mathrm{Unif}(\mathbb{S}^{d-1})$, then with probability at least $1-2\delta-\sqrt
{2/(\pi \log n_{\mathrm{U}})}$, we have
\[
\limsup_{t\to\infty} L(\hat\mu^{(t)}, \mu^*) \lesssim_r \frac{\omega_0}{\|\mu^*\|}\wedge \frac{\omega_0}{\gamma^{1/2}}.
\]
\end{enumerate}
\end{prop}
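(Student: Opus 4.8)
The first step is to recast the EM recursion~\eqref{Eq:EMUpdate} as a perturbed deterministic iteration. By Lemma~\ref{Lemma:EMderiv} we have $\hat\mu^{(t)}=M_n(\hat\mu^{(t-1)})$, where $M_n(\mu):=\gamma\bar V_{\mathrm L}+(1-\gamma)g_{n_{\mathrm U}}(\mu)$ with $\bar V_{\mathrm L}:=n_{\mathrm L}^{-1}\sum_{i\le n_{\mathrm L}}(-1)^{Y_i}Z_i$, an average of i.i.d.\ $\mathcal N_d(\mu^*,I_d)$ vectors (since $(-1)^{Y_i}Z_i\mid Y_i^*\sim\mathcal N_d(\mu^*,I_d)$ on the labelled part), and $g_{n_{\mathrm U}}(\mu):=n_{\mathrm U}^{-1}\sum_{i>n_{\mathrm L}}Z_i\tanh\langle Z_i,\mu\rangle$, the empirical analogue of the classical symmetric two-component population EM map $M_{\mathrm{unsup}}(\mu):=\mathbb E_{Z\sim\mathcal N_d(\mu^*,I_d)}[Z\tanh\langle Z,\mu\rangle]$. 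Thus the population operator is $M(\mu):=\gamma\mu^*+(1-\gamma)M_{\mathrm{unsup}}(\mu)$, for which $\mu^*$ is a fixed point because $M_{\mathrm{unsup}}(\mu^*)=\mu^*$. Since $\mathcal N_d(\mu^*,I_d)$ is invariant under reflections fixing $\mu^*$, I would write each iterate as $\hat\mu^{(t)}=p_t\bar\mu^*+v_t$, with $\bar\mu^*:=\mu^*/\|\mu^*\|$ and $v_t\perp\mu^*$, and control $p_t$ and $\|v_t\|$ separately, using $L(\hat\mu^{(t)},\mu^*)\le\bigl|\,|p_t|-\|\mu^*\|\,\bigr|+\|v_t\|$. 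The proof then rests on (a) a deterministic description of the $(p,\|v\|)$-dynamics of $M$, and (b) a high-probability uniform bound on $\|M_n(\mu)-M(\mu)\|$ over a ball.

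For (a), I would adapt the now-standard population analysis of the symmetric 2-GMM EM operator \citep{daskalakis2017ten,balakrishnan2017statistical,dwivedi2020singularity}. The perpendicular component always contracts: $\|v_{t+1}\|\le(1-c)\|v_t\|$ on bounded sets, with $c$ bounded away from zero once $\gamma>0$ or $\|v_t\|$ is not tiny. The parallel component $|p_t|$ is driven monotonically towards $\|\mu^*\|$, and the labelled term $\gamma\mu^*$ only helps: it adds a direct pull of size $\gamma(\|\mu^*\|-p_t)$ towards $+\mu^*$, so near $\mu^*$ the effective contraction rate is $1-\Omega(\gamma+\|\mu^*\|^2)$, and, because $L$ folds signs, the sign of $\langle\hat\mu^{(0)},\mu^*\rangle$ is immaterial. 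The delicate case is $\|\mu^*\|$ small, where the pull of $p_t$ towards $\|\mu^*\|$ is only quadratic/sublinear -- the scalar recursion behaves like $p_{t+1}\approx(1-p_t^2)p_t+\gamma(\|\mu^*\|-p_t)+(\text{noise})$ near the origin -- and there I would invoke the singularity-type analysis of \citet{dwivedi2020singularity}, quantifying how the $\gamma$-term speeds up escape from the origin, to conclude that the iterates enter and stay within a neighbourhood of $\pm\mu^*$ of $L$-radius $\lesssim_r\|\mu^*\|\vee\zeta_0$.

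For (b), I would work on an event of probability at least $1-2\delta$ carrying two estimates. A Gaussian mean tail bound gives $\|\bar V_{\mathrm L}-\mu^*\|\lesssim\sqrt{\{d+\log(1/\delta)\}/n_{\mathrm L}}$, whence $\gamma\|\bar V_{\mathrm L}-\mu^*\|\lesssim\gamma^{1/2}\omega_0$ (using $\gamma<1/2$, so $n_{\mathrm U}\asymp n$). A covering/chaining argument over $\{\|\mu\|\le r+4\}$ -- each summand $Z_i\tanh\langle Z_i,\mu\rangle$ being sub-exponential and Lipschitz in $\mu$, as $\tanh$ is bounded and $1$-Lipschitz and $Z_i$ is sub-Gaussian -- gives $\sup_{\|\mu\|\le r+4}\|g_{n_{\mathrm U}}(\mu)-M_{\mathrm{unsup}}(\mu)\|\lesssim_r\omega_0$; I would additionally record the sharper local bound $\|g_{n_{\mathrm U}}(\mu)-M_{\mathrm{unsup}}(\mu)\|\lesssim_r(\|\mu\|\vee\|\mu^*\|)\omega_0$ for small $\|\mu\|$, since the variance of $Z_i\tanh\langle Z_i,\mu^*\rangle$ is $O(\|\mu^*\|^2)$ near the fixed point. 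This local scaling is exactly what turns the $\omega_0/(\gamma+\|\mu^*\|^2)$ that a crude linear fixed-point estimate produces into $\zeta_0=\min\{\omega_0\gamma^{-1/2},\omega_0^{1/2}\}$, after also using $\|\mu^*\|\le r$.

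Finally I would assemble the pieces. On the good event, $L(\hat\mu^{(t)},\mu^*)\le L(M(\hat\mu^{(t-1)}),\mu^*)+\|M_n(\hat\mu^{(t-1)})-M(\hat\mu^{(t-1)})\|$, and a short induction (using the near-linearity of $\tanh$ to bound $\|g_{n_{\mathrm U}}\|$, together with $\omega_0\le c$) shows the iterates never leave $B(0,r+4)$. Feeding (a) and (b) into this perturbed recursion gives Part~(i): a global-to-local phase drives $\|v_t\|$ and the distance of $|p_t|$ to $\|\mu^*\|$ down, and then the perturbed-fixed-point estimate caps $\limsup_{t\to\infty}L(\hat\mu^{(t)},\mu^*)$ at $\lesssim_r\zeta_0\vee\|\mu^*\|$. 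For Part~(ii) the extra ingredient is an escape argument from the random cold start $\hat\mu^{(0)}=(\zeta_0\vee r\omega_0)\eta_0$, $\eta_0\sim\mathrm{Unif}(\mathbb S^{d-1})$: anti-concentration of $\langle\eta_0,\bar\mu^*\rangle$ gives $|\langle\hat\mu^{(0)},\bar\mu^*\rangle|\gtrsim\|\hat\mu^{(0)}\|/\sqrt{d\log n_{\mathrm U}}$ off an event of probability $\lesssim\sqrt{2/(\pi\log n_{\mathrm U})}$ -- which is the origin of that term in the statement -- and then the hypothesis $\|\mu^*\|\ge C\zeta_0\sqrt{d\log n}$ forces the parallel component to grow geometrically past the noise floor until $\|\hat\mu^{(t)}\|\asymp\|\mu^*\|$, after which the linear-contraction phase at rate $1-\Omega(\gamma+\|\mu^*\|^2)$ and the local deviation bound yield $\limsup_{t\to\infty}L(\hat\mu^{(t)},\mu^*)\lesssim_r(\omega_0/\|\mu^*\|)\wedge(\omega_0/\gamma^{1/2})$. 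I expect the main obstacle to be precisely the near-singular population analysis -- controlling the scalar recursion for $p_t$ near the origin for small $\|\mu^*\|$ while tracking the effect of $\gamma$ -- together with making the variance and Lipschitz constants in the uniform deviation bound scale sharply enough in $\|\mu\|\vee\|\mu^*\|$ to yield $\zeta_0$ rather than a weaker exponent.
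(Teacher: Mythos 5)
Your plan is correct and follows essentially the same route as the paper's proof: the same parallel/perpendicular decomposition of the iterates about $\mu^*$, the same two high-probability events (concentration of the labeled mean at rate $\omega_0\gamma^{-1/2}$ and a uniform \emph{relative} deviation bound $\|f_{n_{\mathrm{U}}}(v)-f(v)\|\lesssim_r\omega_0\|v\|$ on a ball, which is indeed the ingredient that upgrades the crude fixed-point rate to $\zeta_0$), the same population dynamics with $\gamma$ accelerating escape from the origin, and the same anti-concentration-plus-geometric-growth argument for part (ii), including the correct identification of the $\sqrt{2/(\pi\log n_{\mathrm{U}})}$ term's origin. The only cosmetic difference is that you lean on the Daskalakis--Tzamos--Zampetakis and Dwivedi et al.\ population analyses where the paper instead imports the corresponding lemmas from Wu and Zhou (2019).
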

In order to interpret Proposition~\ref{Prop:OneGoodInitialization}\emph{(i)}, consider the regime where $\|\mu^*\| \leq \zeta_0$.  In this case, as discussed above, the two mixture components are essentially indistinguishable, and the bound reveals that the EM algorithm performs no worse than the trivial zero estimator, up to constant factors.  
On the other hand, part~\emph{(ii)} studies the more interesting regime where the two mixture components are distinguishable, and we establish a faster convergence rate for the EM algorithm in this strong signal regime.  

The following theorem combines the two convergence regimes in Proposition~\ref{Prop:OneGoodInitialization} to derive a convergence guarantee for the estimated whitened between-class covariance matrix output by Algorithm~\ref{Algo:EM}. To state the result, recall the definition of $\mathcal{C}$ from~\eqref{Eq:ConstraintSet}.  For any $\zeta > 0$, we write $U(\zeta)$ for the pushforward measure on $\mathcal{C}$ induced by $\mathrm{Unif}(\zeta\mathbb{S}^{d-1})$ under the map $\mu\mapsto(-\mu,\mu, I_d)$.
\begin{thm}
\label{Thm:LowDimEM}
Fix $\delta\in(2e^{-n}, 1]$, and $r\geq 1$ and suppose that $\|\mu^*\|\leq r$ and $\gamma < 1/2$.  There exists $c>0$, depending only on $r$, such that if $\omega_0 \leq \min\{c, (d\log n)^{-3}\}$ and $n\geq 108$, then the 
sequence of outputs $(\hat Q^{(T)})_{T\in \mathbb{N}}$ of Algorithm~\ref{Algo:EM} with inputs $(Z_1,Y_1),\ldots,(Z_{n},Y_{n})$, $\mathcal{C}$, $\pi_{\mathcal{C}} = U(\zeta_0\vee r\omega_0)$, $M \in \mathbb{N}$  and $T \in \mathbb{N}$ satisfies with probability at least $1-3\delta-e^{-M/50}$ that
\[
\limsup_{T\to\infty}\|\hat Q^{(T)} - \mu^*\mu^{*\top}\|_{\mathrm{op}}\lesssim_r \frac{\omega_0}{\|\mu^*\|} \wedge \zeta_0.
\]
\end{thm}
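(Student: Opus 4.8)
The plan is to express the algorithm's output in terms of the EM mean iterates, for which Proposition~\ref{Prop:OneGoodInitialization} already gives everything we need. Since the constraint set $\mathcal{C}$ forces $\hat\Sigma_{\mathrm{w}}=I_d$ in every M step, the output of the $m$th run with $T$ iterations is $\hat Q^{[m](T)}=\hat\Sigma_{\mathrm{b}}^{[m](T)}$; and by Lemma~\ref{Lemma:EMderiv} its mean iterates obey the recursion~\eqref{Eq:EMUpdate}, to which the Proposition applies. Writing $\hat\mu$ for the final mean iterate (so $\hat\mu_1=-\hat\mu$, $\hat\mu_2=\hat\mu$) and $\bar w:=n^{-1}\sum_{i=1}^n(L_{i,2}-L_{i,1})$ for the associated imbalance (a Rademacher-type term on labeled coordinates, equal to $\tanh\langle Z_i,\hat\mu\rangle$ on unlabeled ones by~\eqref{Eq:EStep} with $\hat\Sigma_{\mathrm{w}}=I_d$), a short computation gives $\hat\mu_{\mathrm{tot}}=\bar w\,\hat\mu$ and hence $\hat Q^{[m](T)}=(1-\bar w^2)\hat\mu\hat\mu^\top$. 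Expanding $\hat\mu=s\mu^*+e$ with $s\in\{\pm1\}$ and $\|e\|=L(\hat\mu,\mu^*)$, I would record the deterministic bound $\|\hat Q^{[m](T)}-\mu^*\mu^{*\top}\|_{\mathrm{op}}\le 2\|\mu^*\|L(\hat\mu,\mu^*)+L(\hat\mu,\mu^*)^2+\bar w^2\|\hat\mu\|^2$, which reduces the theorem to controlling $L(\hat\mu^{[m](T)},\mu^*)$ (by the Proposition), $\|\hat\mu^{[m](T)}\|$, and $\bar w$.

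\textbf{The norm and imbalance terms.} By the Proposition, $\limsup_T\|\hat\mu^{[m](T)}\|\le\|\mu^*\|+\limsup_TL(\hat\mu^{[m](T)},\mu^*)\lesssim_r1$. For $\bar w$, I would split off the labeled part, a centered sum of $n_{\mathrm{L}}$ independent signs, which is $\lesssim(\gamma\log(1/\delta)/n)^{1/2}\le\omega_0$ with probability at least $1-\delta/2$ by Hoeffding; and bound the unlabeled part by $\sup_{\|\mu\|\lesssim_r1}\bigl|n_{\mathrm{U}}^{-1}\sum_{i>n_{\mathrm{L}}}\tanh\langle Z_i,\mu\rangle\bigr|$, which, since $z\mapsto\tanh\langle z,\mu\rangle$ is bounded, Lipschitz and has mean $0$ under the sign-symmetric mixture, is $\lesssim_r\omega_0$ with probability at least $1-\delta/2$ by a covering argument over a ball in $\mathbb{R}^d$. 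So, outside an event of probability at most $\delta$, $\limsup_T\bar w^2\|\hat\mu\|^2\lesssim_r\omega_0^2$ for all runs simultaneously, and $\omega_0^2\lesssim_r(\omega_0/\|\mu^*\|)\wedge\zeta_0$ because $\|\mu^*\|\le r$, $\omega_0\le c$ and $\zeta_0\ge\omega_0$; thus this term is already dominated by the target.

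\textbf{Case split via Proposition~\ref{Prop:OneGoodInitialization}.} If $\|\mu^*\|<C\zeta_0(d\log n)^{1/2}$, I would invoke part~(i): its conclusion holds on one data event of probability at least $1-2\delta$, uniformly over all initializations of norm at most $r+3$, which includes all $M$ runs since $\|\hat\mu^{[m](0)}\|=\zeta_0\vee r\omega_0\le r+3$. This gives $\limsup_TL(\hat\mu^{[m](T)},\mu^*)\lesssim_r\zeta_0\vee\|\mu^*\|$ for every $m$, so $\limsup_T\|\hat Q^{[m](T)}-\mu^*\mu^{*\top}\|_{\mathrm{op}}\lesssim_r(\zeta_0\vee\|\mu^*\|)^2+\omega_0^2$. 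Here the hypothesis $\omega_0\le(d\log n)^{-3}$ forces $\zeta_0\le\omega_0^{1/2}\le(d\log n)^{-3/2}$, hence $\zeta_0\,d\log n\le1$ and $(d\log n)^{3/2}\le\omega_0^{-1/2}$; combined with $\|\mu^*\|<C\zeta_0(d\log n)^{1/2}$ and $\zeta_0^2\le\omega_0$, these yield $\|\mu^*\|^2\lesssim\zeta_0$ and $\|\mu^*\|^3\lesssim\omega_0$, whence $(\zeta_0\vee\|\mu^*\|)^2\lesssim_r\zeta_0\wedge(\omega_0/\|\mu^*\|)$. So every run, and therefore the output $\hat Q^{(T)}$ (one of them for each $T$), satisfies the claimed bound with total probability at least $1-3\delta$. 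If instead $\|\mu^*\|\ge C\zeta_0(d\log n)^{1/2}$, I would apply part~(ii): the initializations $\hat\mu^{[m](0)}=(\zeta_0\vee r\omega_0)\eta_0^{[m]}$, with $\eta_0^{[m]}$ i.i.d.\ $\mathrm{Unif}(\mathbb{S}^{d-1})$, are exactly its hypothesis, so on a data event of probability at least $1-2\delta$ each run independently satisfies $\limsup_TL(\hat\mu^{[m](T)},\mu^*)\lesssim_r(\omega_0/\|\mu^*\|)\wedge(\omega_0/\gamma^{1/2})$ with conditional probability at least $1-(2/(\pi\log n_{\mathrm{U}}))^{1/2}$. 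Since $n\ge108$ and $\gamma<1/2$ give $n_{\mathrm{U}}\ge55$, this conditional probability exceeds $0.6$, so by a Hoeffding bound, with probability at least $1-e^{-M/50}$ strictly more than half of the runs are ``good'', meaning (via the reduction and $\|\mu^*\|\bigl((\omega_0/\|\mu^*\|)\wedge(\omega_0/\gamma^{1/2})\bigr)\le\omega_0\lesssim_r\zeta_0\wedge(\omega_0/\|\mu^*\|)$) that $\limsup_T\|\hat Q^{[m](T)}-\mu^*\mu^{*\top}\|_{\mathrm{op}}\lesssim_r\zeta_0\wedge(\omega_0/\|\mu^*\|)$. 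A standard analysis of the median-of-pairwise-distances rule~\eqref{Eq:BestInitialization} then shows that, for each sufficiently large $T$, the selected $\hat m=\hat m(T)$ lies within twice the majority radius of some good run, so by the triangle inequality $\limsup_T\|\hat Q^{(T)}-\mu^*\mu^{*\top}\|_{\mathrm{op}}\lesssim_r\zeta_0\wedge(\omega_0/\|\mu^*\|)$. A union bound over the two probability-$\delta$ events (the Proposition's data event and the $\bar w$ event) and the probability-$e^{-M/50}$ event gives $1-3\delta-e^{-M/50}$.

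\textbf{Main obstacle.} With Proposition~\ref{Prop:OneGoodInitialization} granted, the proof is mostly assembly, and two steps demand the most care. The first is the weak-regime algebra $(\zeta_0\vee\|\mu^*\|)^2\lesssim_r\zeta_0\wedge(\omega_0/\|\mu^*\|)$: this is precisely where $\omega_0\le(d\log n)^{-3}$ is consumed, and a stray factor of $d\log n$ is easy to introduce if one is careless about which of $\omega_0\gamma^{-1/2}$ and $\omega_0^{1/2}$ realizes $\zeta_0$. The second, and fussier, is interleaving the median-of-$M$ selection with the outer $\limsup_{T\to\infty}$: because $\hat m$ may depend on $T$, one must argue that for all sufficiently large $T$ a common majority of the outputs $\hat Q^{[m](T)}$ is simultaneously within $\lesssim_r$ the target of $\mu^*\mu^{*\top}$, so that the robust rule succeeds uniformly in $T$. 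The uniform-in-$\mu$ concentration of the $\tanh$ average controlling $\bar w$ is routine but needs to be stated carefully.
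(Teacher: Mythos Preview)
Your proposal is correct and follows essentially the same route as the paper's proof: reduce $\hat Q$ to $\hat\mu\hat\mu^\top$ up to the imbalance term $\bar w$, control $\bar w$ uniformly in $\mu$ over a bounded ball (the paper packages this as a separate Proposition~\ref{Prop:LossTransfer}, you sketch it inline), then split on the size of $\|\mu^*\|$ and invoke the two parts of Proposition~\ref{Prop:OneGoodInitialization}, with the median-of-distances rule handling the random initializations in the strong-signal case via a Hoeffding majority bound.

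The only notable difference is the case-split threshold. The paper splits at $\|\mu^*\|=\omega_0^{1/3}\wedge\zeta_0^{1/2}$ and then spends a line checking that $\omega_0^{1/3}\wedge\zeta_0^{1/2}\ge\zeta_0\sqrt{d\log n}$ under $\omega_0\le(d\log n)^{-3}$, so that part~(ii) applies in its Case~2. You split directly at $C\zeta_0\sqrt{d\log n}$, the exact hypothesis of part~(ii), and instead push the $(d\log n)^{-3}$ condition into the weak-regime algebra $(\zeta_0\vee\|\mu^*\|)^2\lesssim_r\zeta_0\wedge(\omega_0/\|\mu^*\|)$. Both choices work; yours is arguably more transparent about where the mysterious cube in $(d\log n)^{-3}$ is actually consumed. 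Your two flagged obstacles (the weak-regime algebra and the interleaving of $\limsup_T$ with the $T$-dependent selection $\hat m(T)$) are real but are handled exactly as you outline; the paper resolves the second by noting that the set $\mathcal{M}_0$ of well-initialized runs is determined by the initializations alone, so on the good data event the same majority is simultaneously close to $\mu^*\mu^{*\top}$ for all large $T$.
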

Finally in this section, we study the implications of Theorem~\ref{Thm:LowDimEM} for the recovery of the signal coordinates in the semi-supervised learning setting.  We write $\psi^{(M, T)}$ for the base procedure that takes $(z_i,y_i)_{i \in [n]} \in\mathbb{R}^{d}\times ([K]\cup\{0\})$ as input and returns the output of Algorithm~\ref{Algo:EM} when run with these inputs together with $\mathcal{C}$, $\pi_{\mathcal{C}}$, $M$ and $T$.  Let $S_0$ denote the set of coordinates where $\nu^* \in \mathbb{R}^p$ is non-zero, and let $s_0 := |S_0|$.
\begin{cor}
\label{Cor:MultipleInitialisaiton}
Fix $\epsilon \in (8e^{-n/2},1]$, $r \geq 1$, and suppose that $\|\mu^*\|\leq r$, $M \geq 50\log(4/\epsilon) + 50d\log p$ and $\gamma < 1/2$. Let  $\nu^*_{\max}:=\|\nu^*\|_\infty$ and let $\nu_{\min}^*$ denote the minimum absolute value of a non-zero component of $\nu^*$. There exist $C_1,C_2>0$, depending only on $r$, such that if $n \geq C_1(d \log p)^6\{d\log p  + \log(1/\epsilon)\}$, and 
\[
C_2\min\biggl[\biggl\{\frac{d\log (p \vee n) + \log(1/\epsilon)}{n}\biggr\}^{1/4},  \sqrt\frac{d\log (p \vee n) + \log(1/\epsilon)}{n_{\mathrm{L}}}\biggr] \leq \frac{(\nu^*_{\min})^2}{4},
\]
then the 
sequence of outputs $(\hat S^{(T)})_{T\ge 1}$ of
Algorithm~\ref{Algo:Sharp-SSL} with inputs $K=2$, $p$, $d \geq s_0$, $\ell \geq s_0$, $(X_i,Y_i)_{i\in[n]}$, $A$, $B$ and base procedure $\psi^{(M,T)}$ satisfies 
\[
\liminf_{T\to\infty}\mathbb{P}(S_0\subseteq \hat{S}^{(T)}) \geq 1-\epsilon - pe^{-A(\nu^*_{\min})^4/(50p^2(\nu^*_{\max})^4)}.
\]
\end{cor}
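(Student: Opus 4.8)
The plan is to control, uniformly over the EM iteration count $T$ as $T\to\infty$, the quantity $\epsilon$ appearing in~\eqref{Eq:AssumptionUniformConsistency} for the base procedure $\psi^{(M,T)}$, and then to combine Theorems~\ref{Thm:Meta} and~\ref{Thm:LowDimEM} through a reverse Fatou argument. In the model~\eqref{Eq:MixtureModel} we have $\Sigma_{\mathrm{w}}=I_p$, and since the two cluster means $-\nu^*$ and $\nu^*$ carry equal prior probability so that $\sum_{k=1}^{2}\pi_k\nu_k^*=0$, we get $\Sigma_{\mathrm{b}}=\nu^*\nu^{*\top}$ and hence $\Sigma_{\mathrm{w}}^{-1}\Sigma_{\mathrm{b}}=\nu^*\nu^{*\top}$. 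Consequently $(\Sigma_{\mathrm{w}}^{-1}\Sigma_{\mathrm{b}})_{j,j}=(\nu_j^*)^2$, so $S_0$ coincides with the set in the corollary, $\gamma_{\min}=(\nu^*_{\min})^2$ and $\gamma_{\max}=(\nu^*_{\max})^2$; in particular the exponential term in Theorem~\ref{Thm:Meta} becomes exactly $p\exp\{-A(\nu^*_{\min})^4/(50p^2(\nu^*_{\max})^4)\}$. Moreover, for every $P\in\mathcal{P}_d$ the projected data $(PX_i,Y_i)_{i\in[n]}$ follow precisely the two-component model of Section~\ref{emguar} with signal vector $\mu^*=P\nu^*$, with $\|\mu^*\|=\|P\nu^*\|\le\|\nu^*\|\le r$, and $P\Sigma_{\mathrm{w}}^{-1}\Sigma_{\mathrm{b}}P^\top=(P\nu^*)(P\nu^*)^\top=\mu^*\mu^{*\top}$.

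Set $\delta:=\epsilon/(6p^d)$, and take $\pi_{\mathcal{C}}=U(\zeta_0\vee r\omega_0)$ with $\omega_0,\zeta_0$ as in Theorem~\ref{Thm:LowDimEM} but formed with this value of $\delta$. Using $\epsilon>8e^{-n/2}$ and $n_{\mathrm{U}}=(1-\gamma)n>n/2$, the sample-size hypothesis $n\ge C_1(d\log p)^6\{d\log p+\log(1/\epsilon)\}$ guarantees that $\delta\in(2e^{-n},1]$, that $n\ge108$, and that $\omega_0\le\min\{c,(d\log n)^{-3}\}$: after noting $\log(1/\delta)\lesssim d\log p+\log(1/\epsilon)$, the sixth-power factor together with the bracket $\{d\log p+\log(1/\epsilon)\}$ is exactly what forces $(d\log n)^6\{d\log n+\log(1/\delta)\}\le n_{\mathrm{U}}$. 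Thus Theorem~\ref{Thm:LowDimEM} applies to each fixed $P\in\mathcal{P}_d$ and gives, with probability at least $1-3\delta-e^{-M/50}$, that $\limsup_{T\to\infty}\bigl\|\psi^{(M,T)}\bigl((PX_i,Y_i)_{i\in[n]}\bigr)-\mu^*\mu^{*\top}\bigr\|_{\mathrm{op}}\lesssim_r\zeta_0$ (bounding the minimum in that theorem's conclusion by $\zeta_0$). Since $|\mathcal{P}_d|\le p^d$, a union bound, using $M\ge50\log(4/\epsilon)+50d\log p$, shows that on an event $E$ with $\mathbb{P}(E)\ge1-p^d(3\delta+e^{-M/50})\ge1-\epsilon$, this bound holds simultaneously for all $P\in\mathcal{P}_d$. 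Because $\mathcal{P}_d$ is finite, the limsup in $T$ of the maximum over $P$ equals the maximum over $P$ of the limsups in $T$, so on $E$ we obtain $\limsup_{T\to\infty}\Delta^{(T)}\lesssim_r\zeta_0$, where $\Delta^{(T)}:=\max_{P\in\mathcal{P}_d}\bigl\|\psi^{(M,T)}\bigl((PX_i,Y_i)_{i\in[n]}\bigr)-P\Sigma_{\mathrm{w}}^{-1}\Sigma_{\mathrm{b}}P^\top\bigr\|_{\mathrm{op}}$ is the quantity in~\eqref{Eq:AssumptionUniformConsistency} for the base procedure $\psi^{(M,T)}$.

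To conclude, bounding $\log(1/\delta)\lesssim d\log p+\log(1/\epsilon)$ and $n_{\mathrm{U}}>n/2$ gives $\omega_0\lesssim\{(d\log(p\vee n)+\log(1/\epsilon))/n\}^{1/2}$, whence $\zeta_0=\min\{\omega_0\gamma^{-1/2},\omega_0^{1/2}\}\lesssim\min\bigl[\{(d\log(p\vee n)+\log(1/\epsilon))/n\}^{1/4},\{(d\log(p\vee n)+\log(1/\epsilon))/n_{\mathrm{L}}\}^{1/2}\bigr]$. Choosing $C_2$ to be, say, twice the product of the two implied constants, the corollary's signal-strength hypothesis forces $\limsup_{T\to\infty}\Delta^{(T)}<(\nu^*_{\min})^2/4=\gamma_{\min}/\{4(K-1)\}$ on $E$ (recall $K=2$). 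For each fixed $T$, Theorem~\ref{Thm:Meta}, applied with base procedure $\psi^{(M,T)}$, gives $\mathbb{P}(S_0\subseteq\hat{S}^{(T)})\ge1-\epsilon^{(T)}-p\exp\{-A(\nu^*_{\min})^4/(50p^2(\nu^*_{\max})^4)\}$, where $\epsilon^{(T)}:=\mathbb{P}(\Delta^{(T)}\ge(\nu^*_{\min})^2/4)$. On $E$ we have $\mathbbm{1}_{\{\Delta^{(T)}\ge(\nu^*_{\min})^2/4\}}\to0$ as $T\to\infty$, and this indicator is bounded by $1$, so the reverse Fatou lemma yields $\limsup_{T\to\infty}\epsilon^{(T)}\le\mathbb{P}(\setcomp{E})\le\epsilon$. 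Taking $\liminf_{T\to\infty}$ in the previous display then gives $\liminf_{T\to\infty}\mathbb{P}(S_0\subseteq\hat{S}^{(T)})\ge1-\epsilon-p\exp\{-A(\nu^*_{\min})^4/(50p^2(\nu^*_{\max})^4)\}$, as required.

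The step I expect to require the most care is the verification, in the second paragraph, that the corollary's sample-size condition---with its $(d\log p)^6$ factor---is genuinely strong enough to force $\omega_0\le(d\log n)^{-3}$ once the per-projection failure probability has been deflated to $\delta=\epsilon/(6p^d)$, which injects a $d\log p$ term inside $\omega_0$; this needs a short case analysis according to whether $n$ is polynomially bounded in $p$ or far larger, together with careful tracking of the constants produced by Theorem~\ref{Thm:LowDimEM} so that $\limsup_{T\to\infty}\Delta^{(T)}$ lands strictly below $\gamma_{\min}/\{4(K-1)\}$. By contrast, the reverse Fatou passage and the interchange of $\limsup_T$ with the finite maximum over $\mathcal{P}_d$ are routine, but they must be carried out, since Theorem~\ref{Thm:LowDimEM} controls only the limiting behaviour of the EM iterates rather than any fixed iterate.
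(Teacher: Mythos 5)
Your proposal is correct and follows essentially the same route as the paper's proof: apply Theorem~\ref{Thm:LowDimEM} to each fixed projection with a deflated failure probability $\delta\asymp\epsilon/p^d$, union bound over $\mathcal{P}_d$ using the lower bound on $M$, pass the $T\to\infty$ limit through the probability via (reverse) Fatou, and finish with Theorem~\ref{Thm:Meta} after identifying $\gamma_{\min}=(\nu^*_{\min})^2$ and $\gamma_{\max}=(\nu^*_{\max})^2$. If anything you are more explicit than the paper on the two points it compresses---the verification that the sample-size hypothesis forces $\omega_0\leq\min\{c,(d\log n)^{-3}\}$ (where your case split on whether $n$ is polynomial in $p$, together with $d\leq p$, is exactly what is needed) and the interchange of $\limsup_T$ with the probability---so no gap remains.
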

Corollary~\ref{Cor:MultipleInitialisaiton} reveals in particular that, treating $\nu^*_{\max}$ and $\nu^*_{\min}$ as constants and under the stated sample size conditions, 
we again recover all of the signal coordinates in the top $s_0$ output entries, provided that $A$ is large by comparison with $p^2$.  Thus, in this sense, we can achieve a similar guarantee to that provided by Corollary~\ref{Cor:LDA}, though the number of groups of projections required for a high probability guarantee in Corollary~\ref{Cor:MultipleInitialisaiton} may be significantly larger in settings where the ratio $\nu_{\max}^*/\nu_{\min}^*$ is large.  

\section{Numerical studies}
\label{Sec:Numerics}

Throughout this section, unless otherwise stated, data $(X_i, Y_i, Y_i^*)_{i\in[n]}$ are sampled from an equal-probability normal mixture as follows: $\mathbb{P}(Y_i^* = k) = 1/K$ for $k\in[K]$, $\mathbb{P}(Y_i = Y_i^*) = 1-\mathbb{P}(Y_i = 0) = \gamma$ and $X_i\mid Y_i^* \sim N_p(\mu_{Y_i^*}, \Sigma_{\mathrm{w}})$. The cluster means $(\mu_k)_{k\in[K]}$ are chosen to be $s_0$-sparse and we define the signal-to-noise ratio of the problem to be\footnote{In some of our simulations, $\Sigma_{\mathrm{w}}$ was generated randomly for convenience.  In such settings, we replaced $\mathrm{tr}(\Sigma_{\mathrm{w}})/p$ in the denominator of~\eqref{Eq:SNR} with $\mathbb{E}\{\mathrm{tr}(\Sigma_{\mathrm{w}})\}/p$.}
\begin{equation}
\mathrm{SNR}:=\frac{\min_{k,k'\in[K],k\neq k'}\|\mu_k-\mu_{k'}\|}{\sqrt{\mathrm{tr}(\Sigma_{\mathrm{w}})/p}}.
\label{Eq:SNR}    
\end{equation}
In our numerical studies, we slightly modify Algorithm~\ref{Algo:EM} so that instead of randomly initializing the cluster means and the covariance matrix, we use the output of hierarchical clustering to initialize the EM algorithm as implemented in the \texttt{mclust} \texttt{R} package \citep{fraley1998mclust}. This allow us to run Algorithm~\ref{Algo:EM} with $M=1$.

\subsection{Choice of tuning parameters}
\label{SubSec:TuningParameters}

The purpose of this subsection is to investigate the effect of the various input parameters $A$, $B$, $d$ and $\ell$ in Algorithm~\ref{Algo:Sharp-SSL}, and to recommend sensible default choices.  In Figure~\ref{Fig:Tuning}, we plot the misclustering rate with Algorithm~\ref{Algo:EM} as a base procedure in our Gaussian semi-supervised learning setting as each of these parameters varies, for four different $\mathrm{SNR}$ levels.  After applying Algorithm~\ref{Algo:Sharp-SSL}, we obtain our final estimated cluster labels by using Algorithm~\ref{Algo:EM} again on the data projected onto the selected coordinates with a single hierarchical clustering initialization.  
We then output the predicted labels, computed as $\hat{y}_i := \sargmax_{k \in [K]} L_{i,k}$ for $i\in[n]$, instead of $\hat {Q}$.  

The panels in Figure~\ref{Fig:Tuning} reveal that the misclustering rate is quite robust to the choices of $A$, $B$ and $d$, and that it is less serious (and may even help) to choose $\ell$ larger---rather than smaller---than $s_0$.  In particular, it seems that $A = 150$ suffices for almost optimal performance (though there appears to be some penalty for choosing it to be as small as $50$), and $B=75$ appears adequate.  There is no clear trend on performance with the choice of $d$, so for simplicity we took $d=s_0$ in our remaining simulations below.  Finally, the misclustering rate appears to decrease as $\ell$ increases, with an elbow in the curve visible at the highest value of the $\mathrm{SNR}$ when $\ell$ is set to the true sparsity level $s_0$.  Of course, if $\ell$ is chosen to be very large, then we will include many noise variables, and the misclustering rate will eventually deteriorate.  Nevertheless, the bottom-right panel of Figure~\ref{Fig:Tuning} indicates that the gain in increasing the probability of including all signal variables may outweigh the penalty of also including more noise variables---as expected, this effect is larger when the $\mathrm{SNR}$ is larger.  For simplicity we choose $\ell=s_0$ in our remaining simulations, though we recommend practitioners err on the side of choosing larger $\ell$.    

\begin{figure}
    \centering
    \begin{tabular}{cc}
    \includegraphics[width=0.45\textwidth]{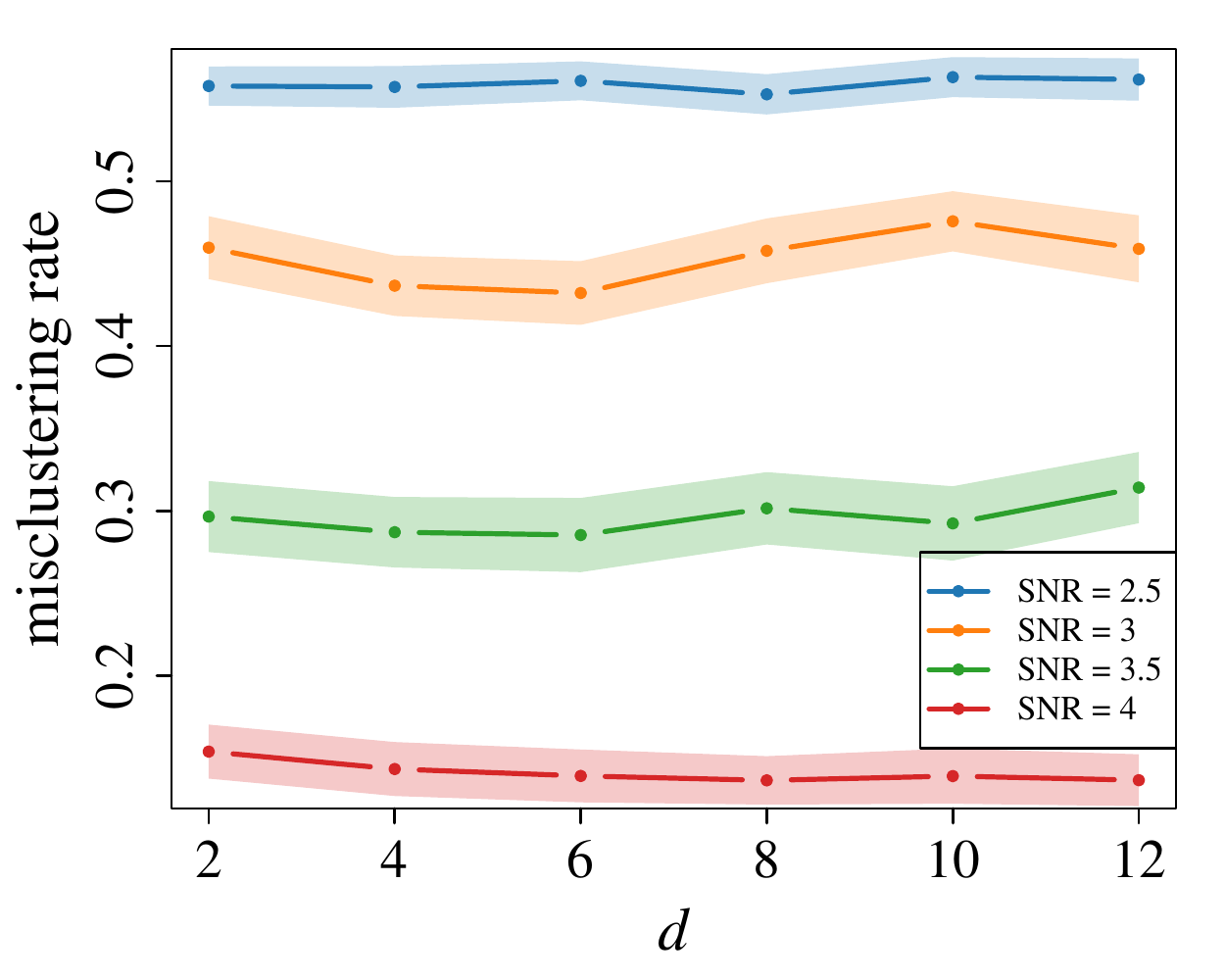} &
    \includegraphics[width=0.45\textwidth]{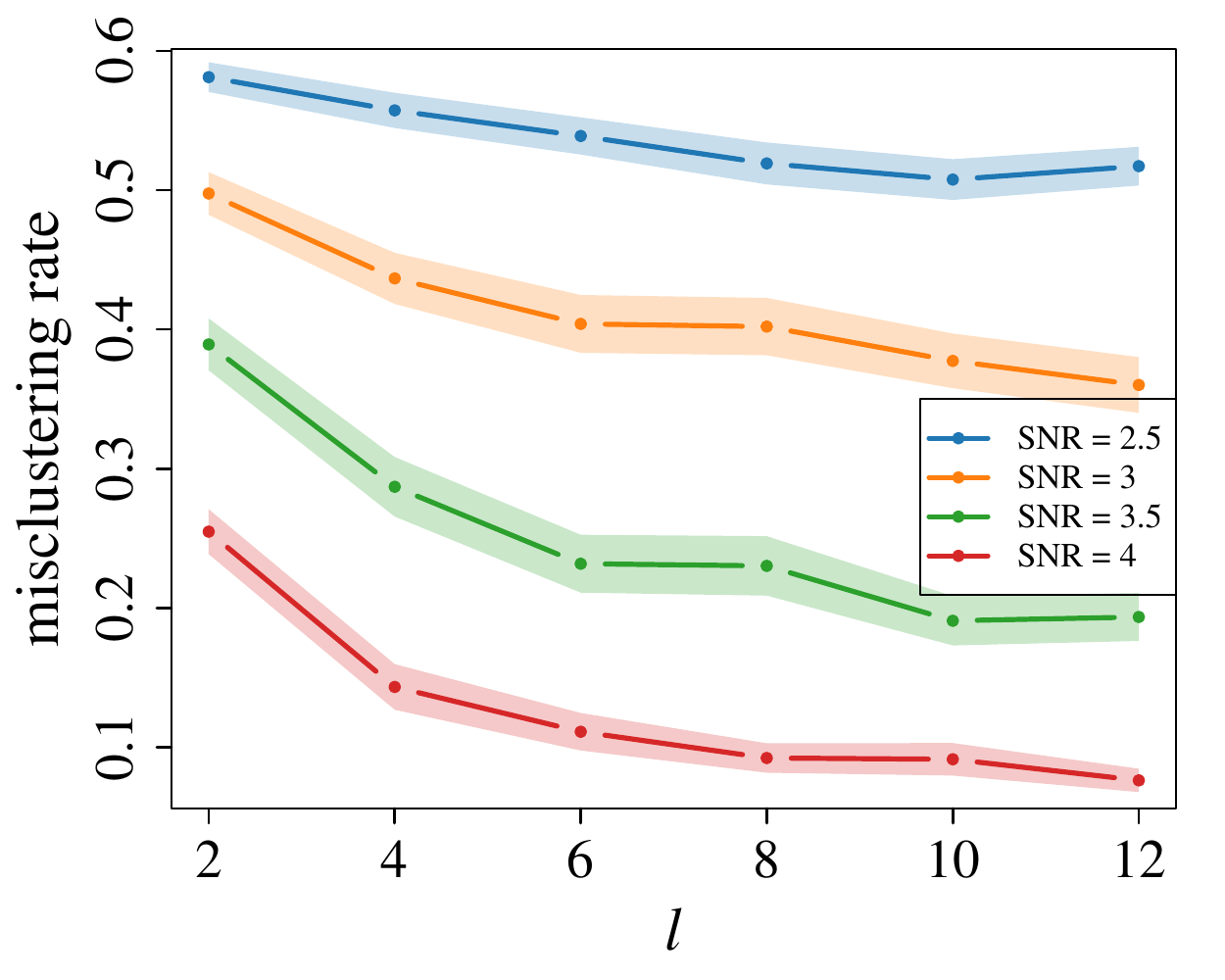} \\
    \includegraphics[width=0.45\textwidth]{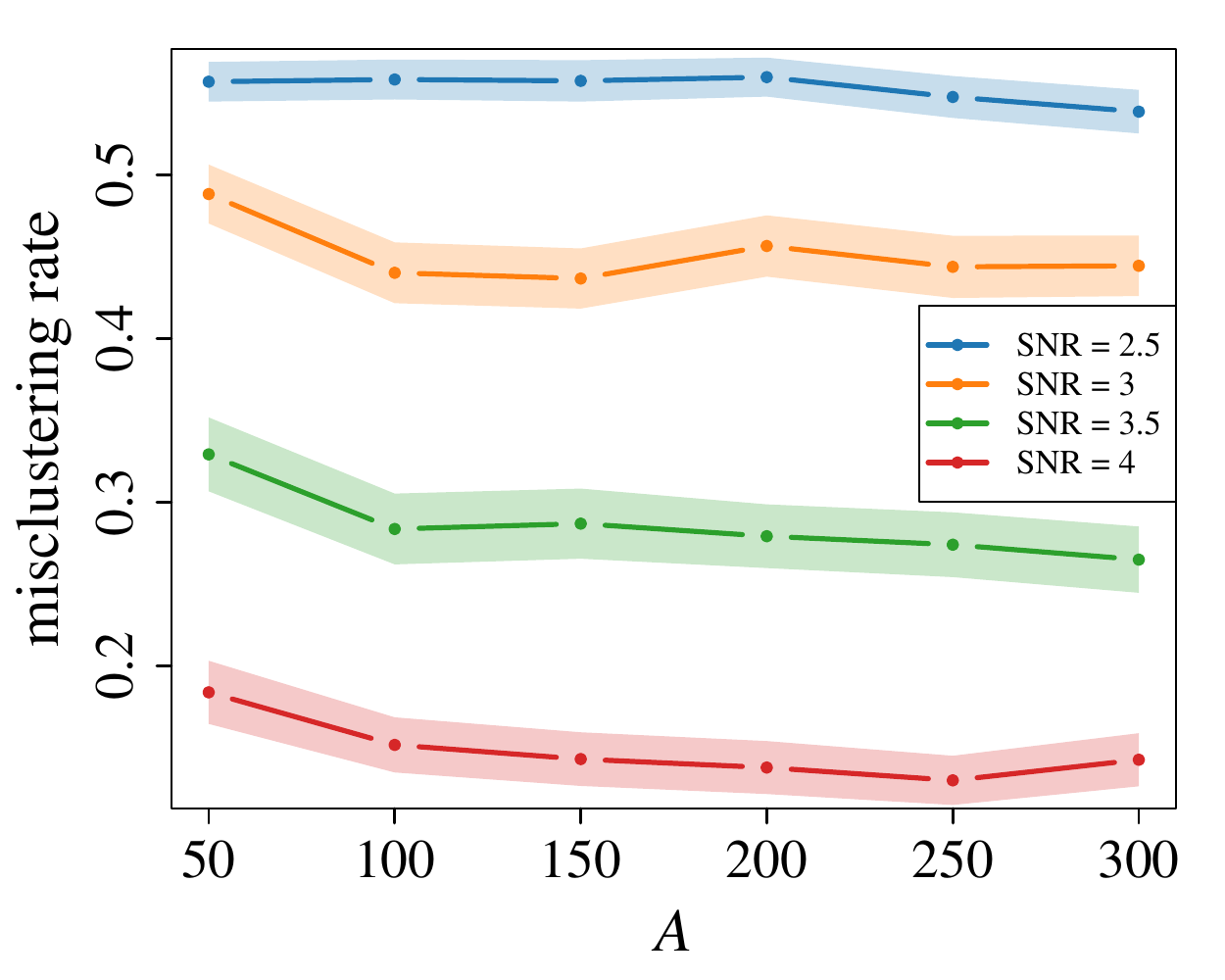} &
    \includegraphics[width=0.45\textwidth]{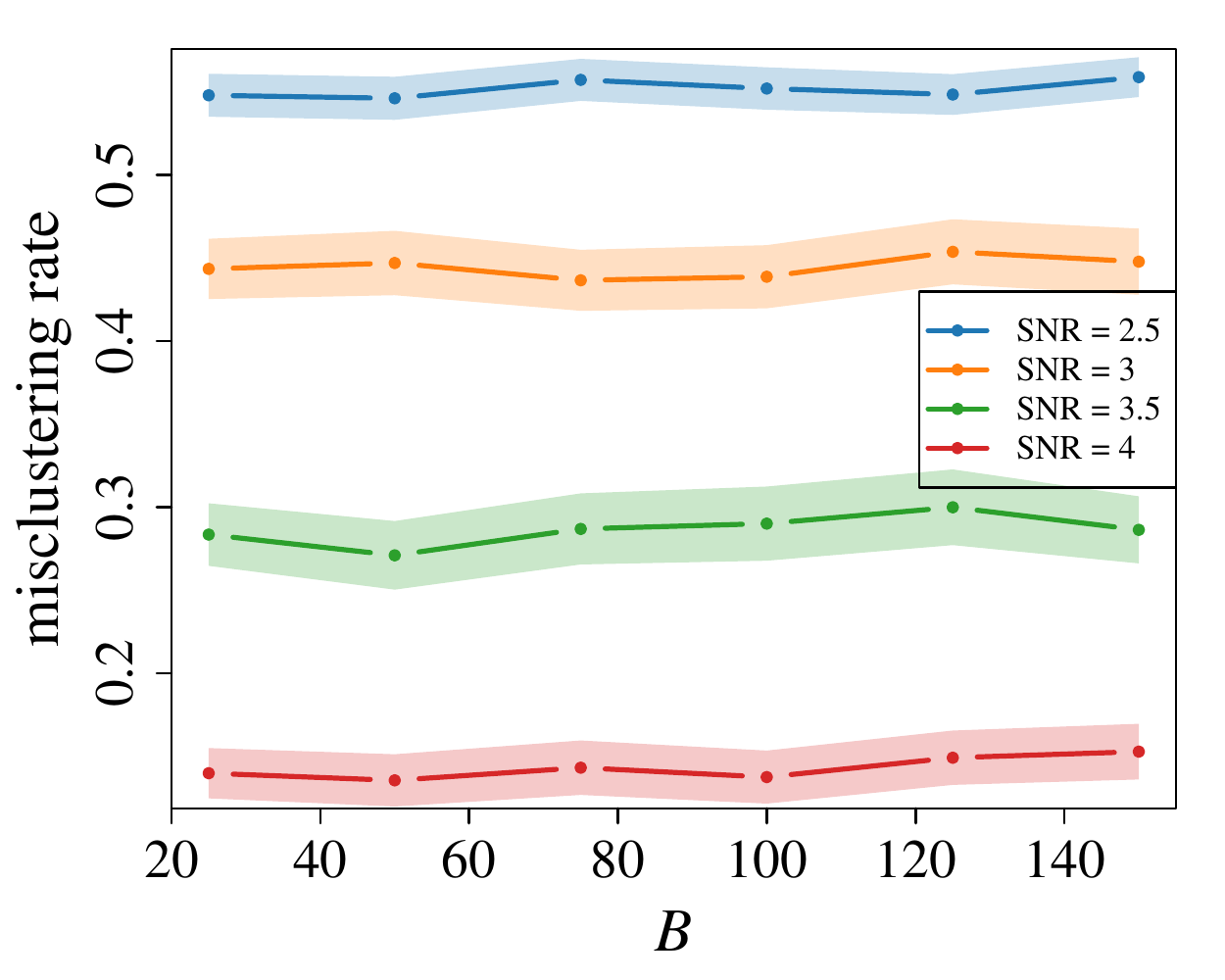} 
    \end{tabular}
    \caption{\label{Fig:Tuning}Average misclustering rate over 200 repetitions in our anisotropic Gaussian semi-supervised learning problem with $n=250$, $p=600$, $s=4$, $K=3$, $\gamma=0.05$, $\mathrm{SNR} \in \{2.5,3,3.5,4\}$ and $\Sigma_{\mathrm{w}} = V\Lambda V^\top$, where $\Lambda \in \mathbb{R}^{p \times p}$ is diagonal with independent $\mathrm{Unif}[0,2]$ diagonal entries and $V$ is independent of $\Lambda$, and generated according to the Haar measure on $\mathbb{O}^{p \times p}$.  For each of the four panels, we fix three of $d=4$, $\ell=4$, $A=150$, $B=75$, and vary the remaining one. The shaded regions represent interpolated 95\% confidence intervals at each of the~points.}
\end{figure}

\subsection{Comparison with existing methods}
\label{SubSec:Comparison}

In this subsection, we compare the empirical performance of the \texttt{Sharp-SSL} algorithm in high-dimensional clustering tasks with several existing approaches.  We apply the \texttt{Sharp-SSL} algorithm using the EM algorithm of Algorithm~\ref{Algo:EM} as a base procedure, with input parameters $A=150$, $B=75$, $d=\ell=s_0$ as discussed in Section~\ref{SubSec:TuningParameters}, and our final estimated cluster labels are then obtained as described there.

\begin{figure}
 \centering
\begin{tabular}{cc}
$n=250$, $p=200$, isotropic & $n=250$, $p=600$, anisotropic\\
 \includegraphics[width=0.45\textwidth]{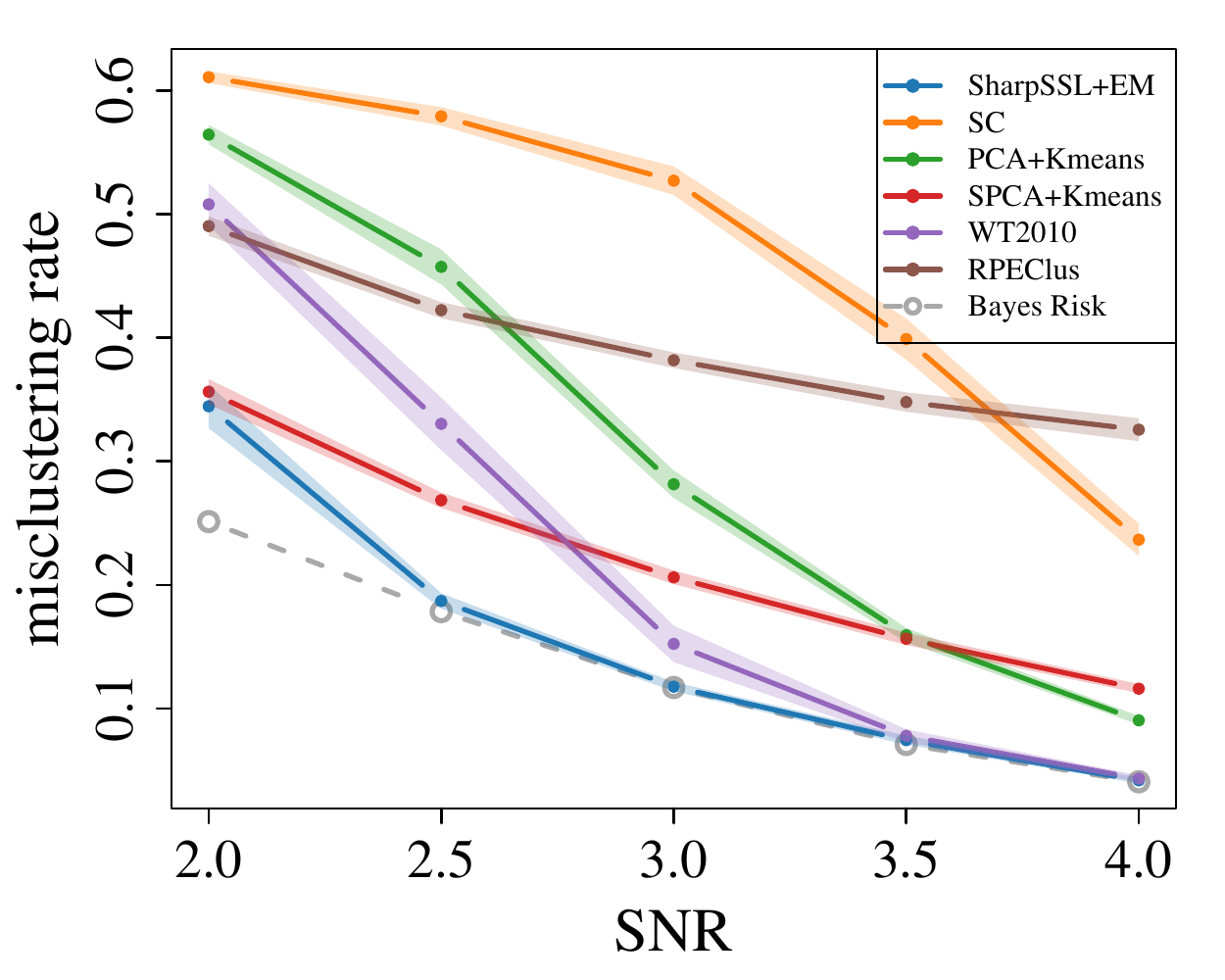} & \includegraphics[width=0.45\textwidth]{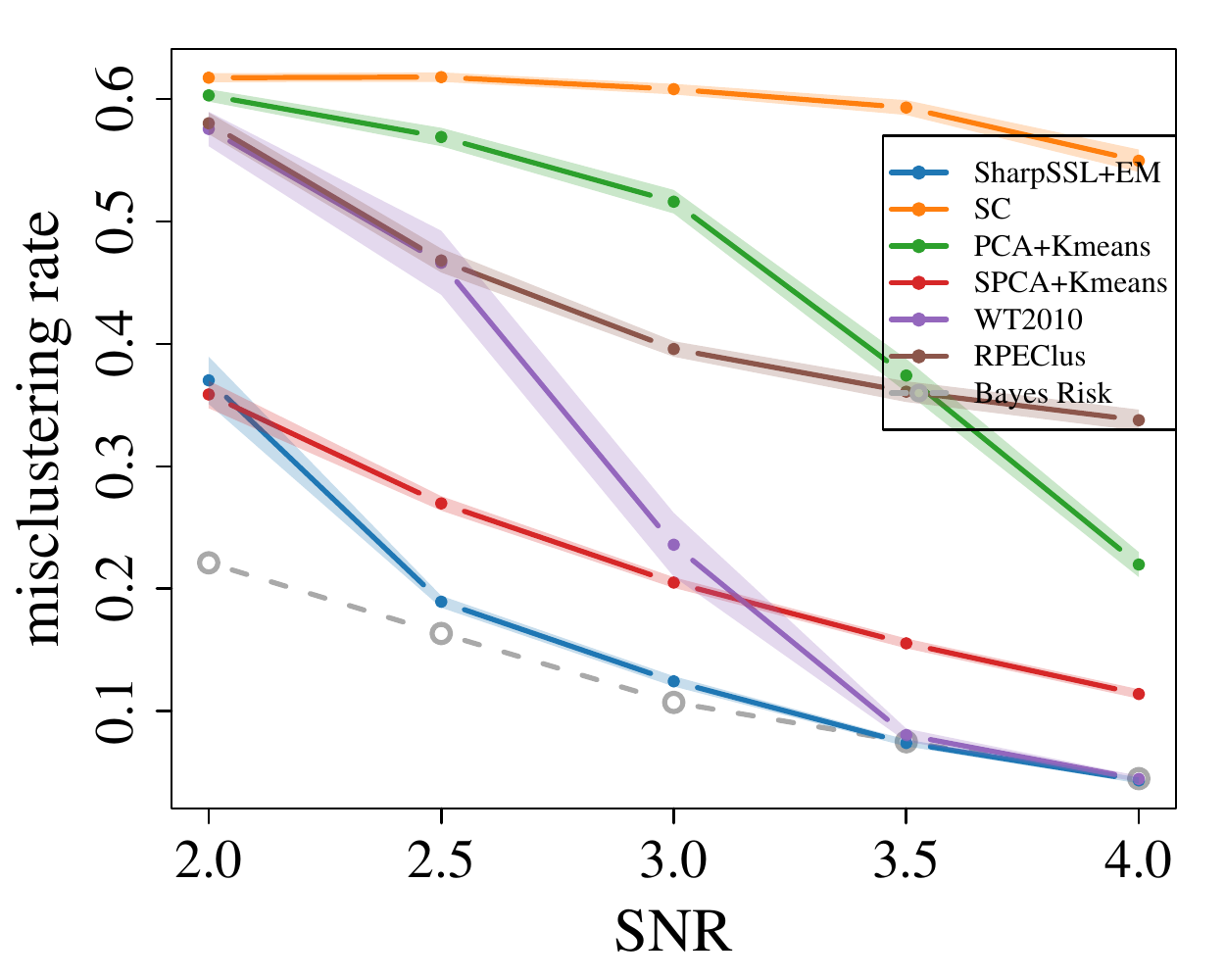} \\
$\mathrm{SNR}=3$, $p=200$, isotropic & $\mathrm{SNR}=3$, $p=600$, anisotropic\\
\includegraphics[width=0.45\textwidth]{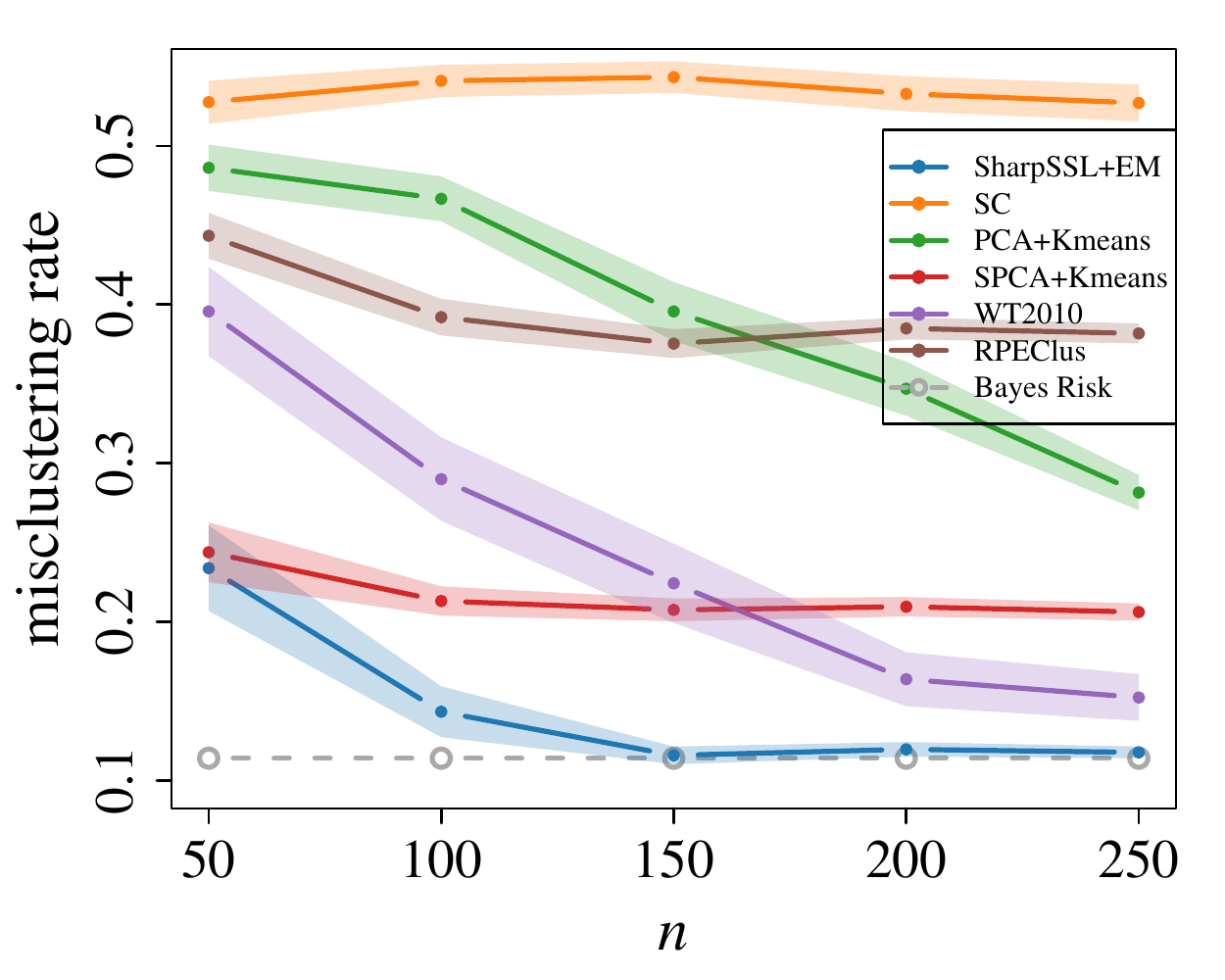} & \includegraphics[width=0.45\textwidth]{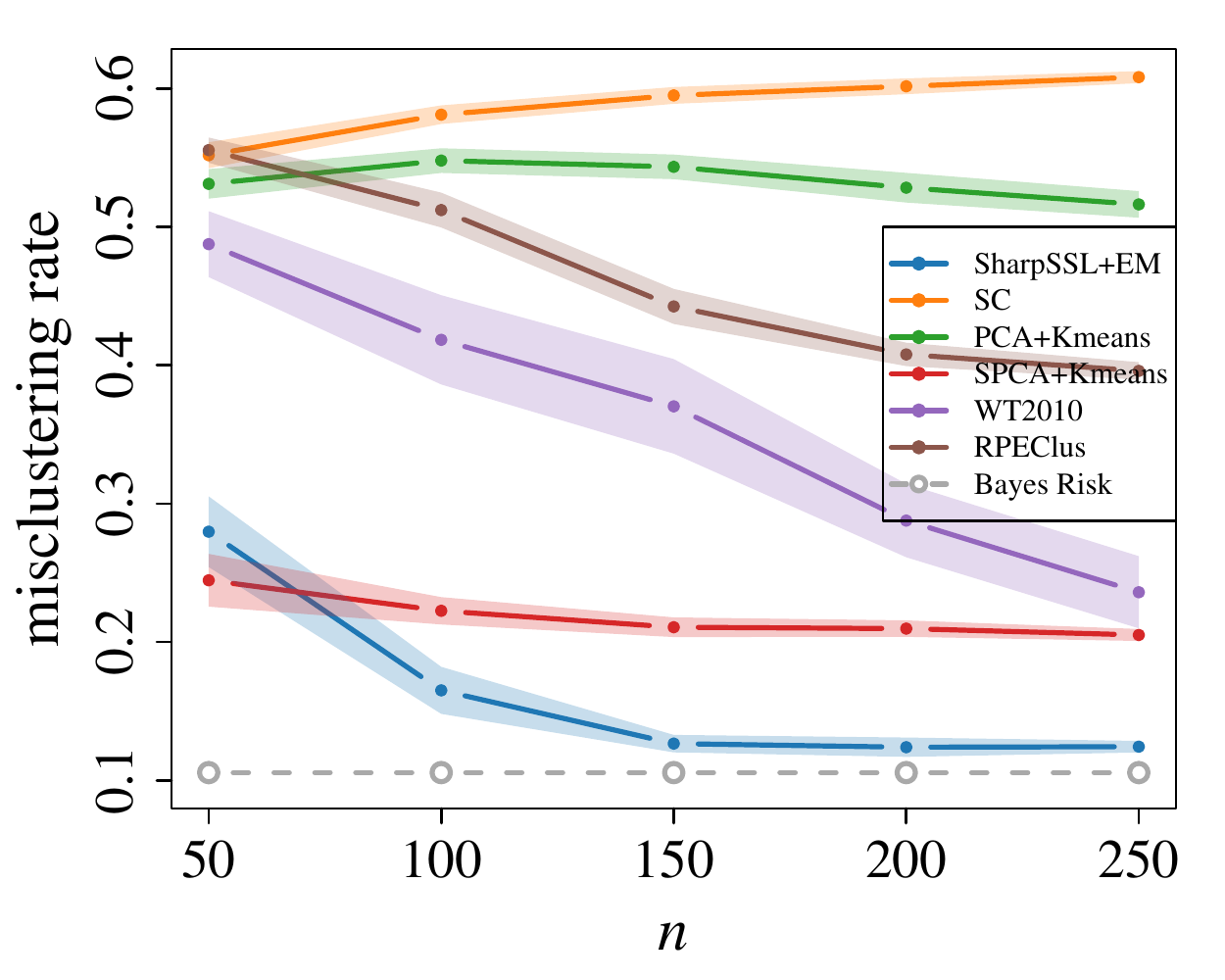}
\end{tabular}
 \caption{\label{fig:aniso_iso_results} Average misclustering rate over 100 repetitions using \texttt{Sharp-SSL} followed by the EM algorithm, as well as using the other methods from Section~\ref{SubSec:Comparison}.  Data are generated from the normal mixture distribution described at the beginning of Section~\ref{Sec:Numerics} with $K=3$ and $p=200$ (left) as well as $p=600$ (right).  The three cluster means are given by $\mu_1 = a(1,1,0,\boldsymbol{0}_{p-3})$, $\mu_2 = a(-1,0,1,\boldsymbol{0}_{p-3})$ and $\mu_3 = a(0,-1,-1,\boldsymbol{0}_{p-3})$, where the scale $a$ is chosen such that their pairwise distances are all equal to $\mathrm{SNR}$. For isotropic settings (left), $\Sigma_{\mathrm{w}} = I_p$; for anisotropic settings (right), $\Sigma_{\mathrm{w}} = V\Lambda V^\top$, where $\Lambda \in \mathbb{R}^{p \times p}$ is diagonal with independent $\mathrm{Unif}[0,2]$ diagonal entries and $V$ is independent of $\Lambda$, and sampled from the Haar measure on $\mathbb{O}^{p \times p}$.  The Bayes risk is shown as the gray dashed line.  In the top panels, $n = 250$ and the $\mathrm{SNR}$ varies; in the bottom panels, $\mathrm{SNR} = 3$ and $n$ varies.  The shaded regions represent interpolated 95\% confidence intervals at each of the~points.}
\end{figure}

We compare the \texttt{Sharp-SSL} algorithm with five alternative high-dimensional clustering methods: spectral clustering \citep[e.g.][]{von2007tutorial}, the $\ell_1$-penalized approach of \citet{WittenTibshirani2010} and the \texttt{RPEClus} algorithm of \citet{anderlucci2022high} as well as a pair of methods that, like \texttt{Sharp-SSL}, apply dimension reduction prior to a low-dimensional clustering algorithm.  

In more detail, the spectral clustering approach first constructs a $J$-nearest neighbour graph adjacency matrix $A = (A_{i,i'})_{i,i'\in[n]} \in \{0,1\}^{n \times n}$, where $A_{i,i'} := 1$ if either $X_i$ is one of the $J=10$ nearest neighbours of $X_{i'}$ in Euclidean distance or vice versa, and $A_{i,i'} := 0$ otherwise. It then computes an $n \times K$ matrix of eigenvectors associated with the $K$ smallest nonzero eigenvalues of the Laplacian matrix $L := D - A$, where $D \in \mathbb{R}^{n \times n}$ is a diagonal matrix with diagonal entries $D_{i,i} := \sum_{i'\in[n]} A_{i,i'}$. 
The final step is to apply the $K$-means clustering algorithm \citep{lloyd1982least}, as implemented in the \texttt{kmeans} base \texttt{R} function with 100 random initializations, to the rows of $L$ with the oracle choice of $K$.   

The \citet{WittenTibshirani2010} method, which is implemented in the \texttt{sparcl} \texttt{R} package, determines the estimated cluster memberships by maximizing a coordinatewise-weighted between-cluster sum of squares criterion, subject to an $\ell_1$ constraint on the weights.  A permutation approach is used to select the $\ell_1$ tuning parameter.  

In the \texttt{RPEClus} algorithm of \citet{anderlucci2022high}, we generate~$B$ random orthogonal projections and incorporate the $d$-dimensional projected data as covariates for a linear regression with the orthogonal complement of the projected data as the response.  We then use the Bayesian Information Criteria (BIC) from both an application of the EM algorithm to the projected data and the aforementioned regression to identify good projections, and aggregate using the consensus clustering technique of \citet{dimitriadou2002combination} over the best $B^*$ projections chosen according to the sum of the BIC scores.  Following the recommendation of \citet{anderlucci2022high}, we took $B=1000$ and $B^*=100$ as well as $d = s_0$.  
It turned out that this approach had a misclustering rate almost identical to that of a random guess, likely because it did not leverage the sparsity of the signal.  
We therefore modified this method by generating random axis-aligned projections instead of orthogonal ones, and report this version in our comparison.  

The first of the two-stage approaches applies principal component analysis (PCA) to project the data into the oracle choice of $K-1$ dimensions (the dimension of the space spanned by the $K$ cluster means); the second uses sparse principal component analysis (SPCA), as implemented in the  \texttt{SPCAvRP} algorithm \citep{GWS2020} with inputs $A = 600$, $B = 200$, and the oracle choices $d = \ell = s_0$, to project into $s_0$ dimensions.  Thereafter, both algorithms apply $K$-means to the projected data as above.  We also explored the option of replacing the $K$-means steps in these latter algorithms with the EM algorithm, but observed very little difference, so do not report these results here.

Given true labels $y_1,\ldots,y_n \in [K]$ and estimated labels $\hat{y}_1,\ldots,\hat{y}_n \in [K]$ from a clustering algorithm, we measure the performance of the algorithm via its \emph{misclustering rate}, defined as\footnote{Here, the minimum over permutations is taken because it is only the cluster groupings, and not the labels themselves, that are important.}
\[
L(\{y_1,\ldots,y_n\},\{\hat{y}_1,\ldots,\hat{y}_n\}) := \min_{\sigma \in \mathcal{S}_K} \frac{1}{n} \sum_{i=1}^n \mathbbm{1}_{\{ \sigma(\hat{y}_i) \neq y_i \}},
\]
where $S_K$ is the group of all permutations of $[K]$.  In particular, Figure~\ref{fig:aniso_iso_results} presents the average misclustering rates over 100 Monte Carlo repetitions of the different high-dimensional clustering algorithms described above.  Across two different dimensions $p \in \{200,600\}$, isotropic and anisotropic settings, and for different values of $n$ and $\mathrm{SNR}$, we see a consistent picture of the \texttt{Sharp-SSL} algorithm combined with EM producing the lowest misclustering rates, often by a large margin.  
Indeed, for all but the smallest sample sizes or values of $\mathrm{SNR}$, the \texttt{Sharp-SSL+EM} algorithm nearly attains the Bayes risk in all of the problems considered here.

\subsection{Effect of observed fraction on misclustering rate}
\label{SubSec:gammafraction}

One of the key attractions of our procedure is that it offers a unified framework to perform classification or clustering with an arbitrary fraction of labeled observations.  In this subsection, we explore the performance of the algorithm as we vary the proportion of observed labels. 

\begin{figure}
\begin{center}
\begin{tabular}{cc}
$\mathrm{SNR}=3.5$, $p=200$, $s=3$ & $\mathrm{SNR}=3.5$, $p=600$, $s=3$\\ 
 \includegraphics[width=0.45\textwidth]{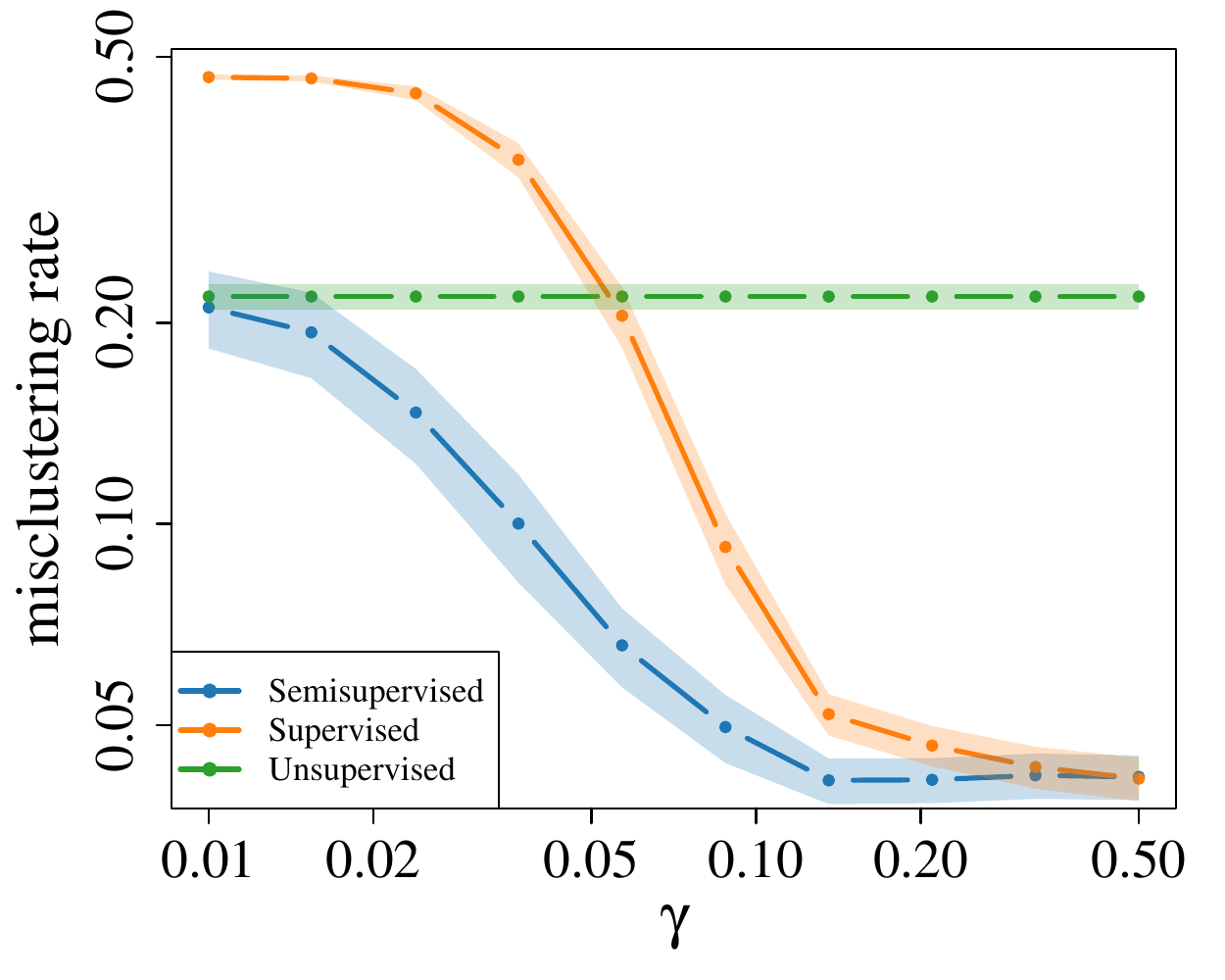} & \includegraphics[width=0.45\textwidth]{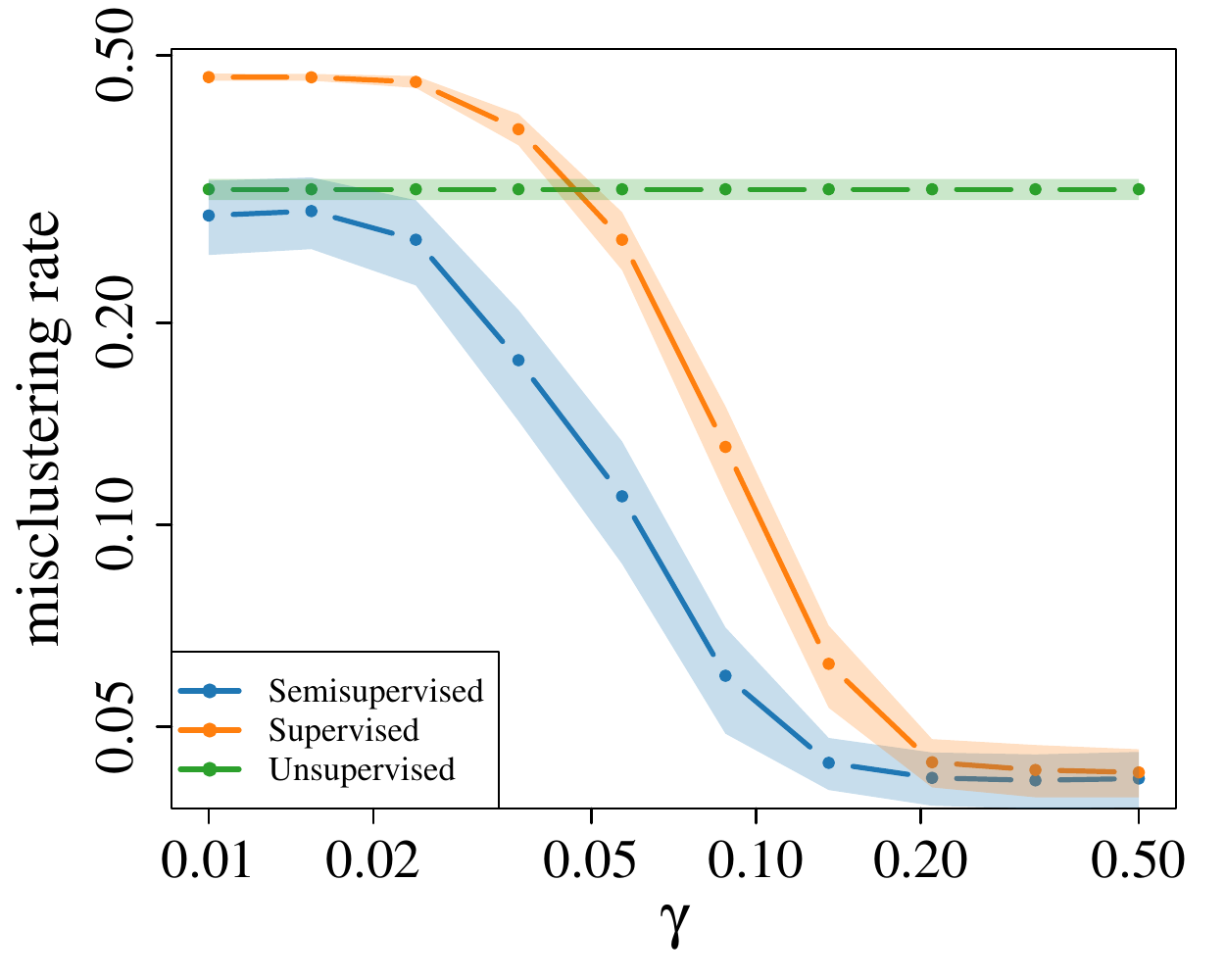} \\
 $\mathrm{SNR}=1.5$, $p=s=3$& $\mathrm{SNR}=2$, $p=s=3$\\
\includegraphics[width=0.45\textwidth]{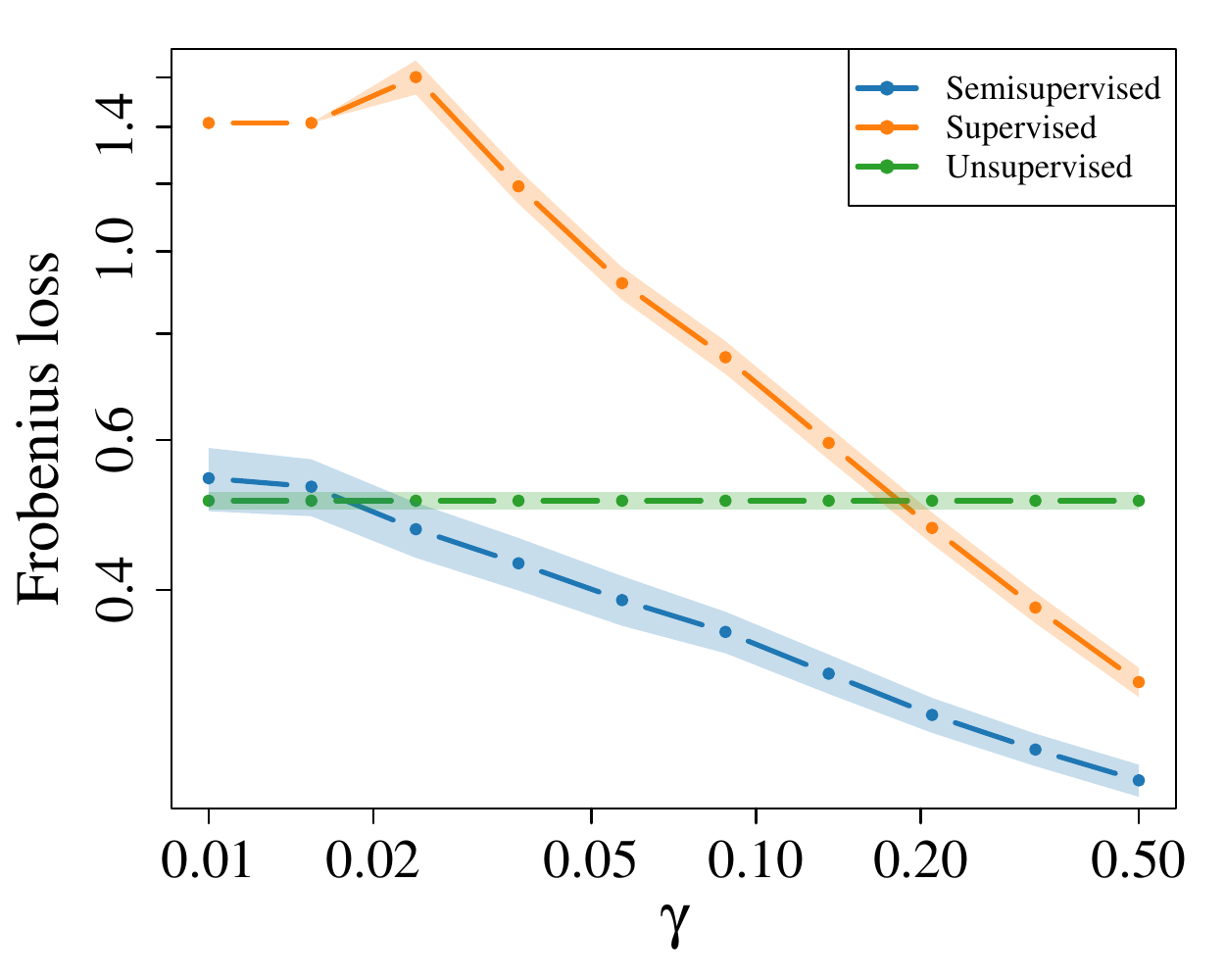} & \includegraphics[width=0.45\textwidth]{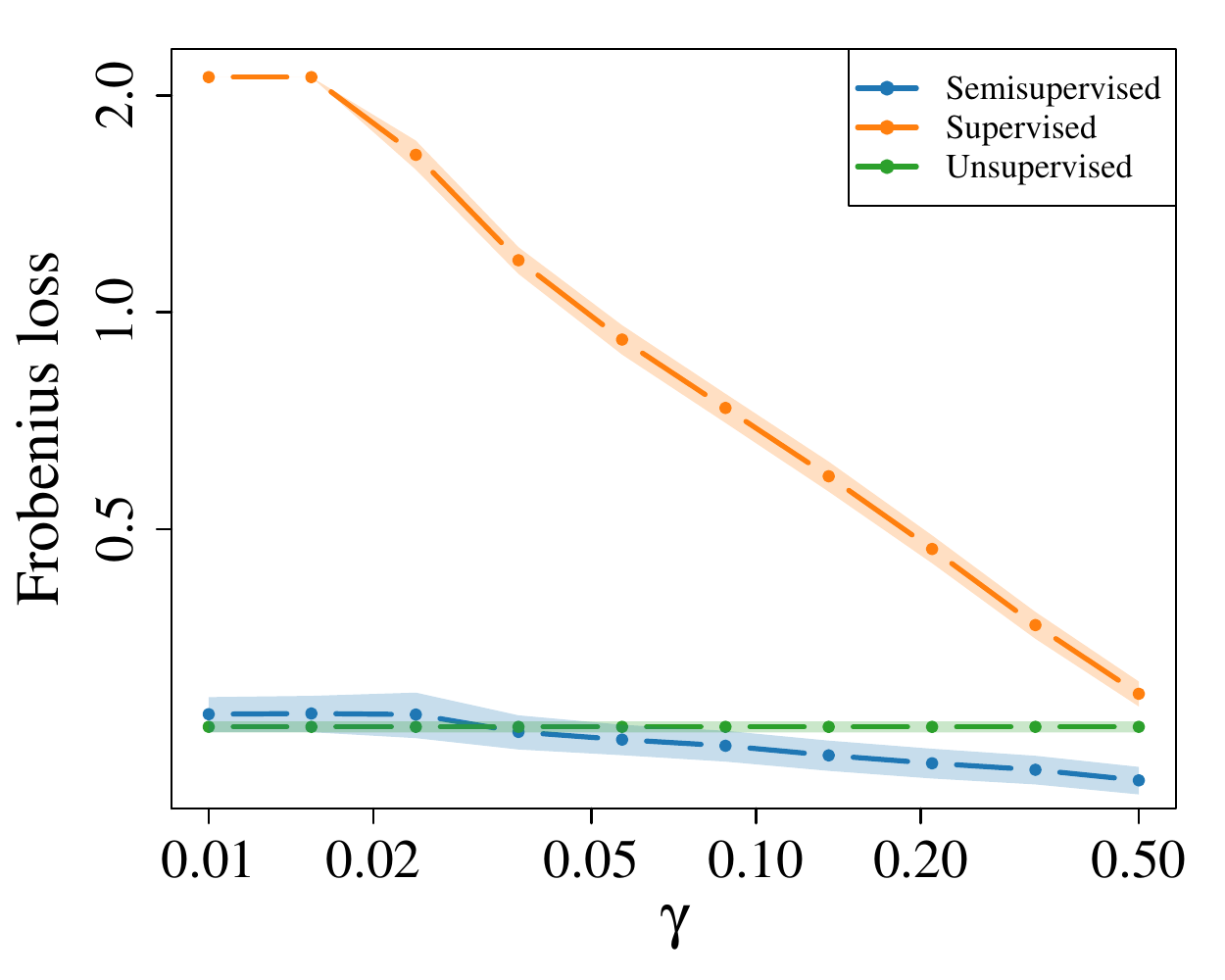}
\end{tabular}
    \caption{\label{Fig:EMSemisupervised} Effect of label fraction on performance of supervised, semi-supervised and unsupervised \texttt{Sharp-SSL} learning methods. Data are generated from the normal mixture distribution described at the beginning of Section~\ref{Sec:Numerics} with $K=2$ and $\Sigma_{\mathrm{w}} = I_p$, $\mu_1 = -\mu_2 = a(\boldsymbol{1}_s, \boldsymbol{0}_{p-s})^\top \in \mathbb{R}^p$, where $a$ is chosen such that $\|\mu_1-\mu_2\| = \mathrm{SNR}$.  Bottom: average Frobenius loss of estimating the $(\mu_1,\mu_2) \in \mathbb{R}^{p \times 2}$ over 100 repetitions via the semi-supervised approach (Algorithm~\ref{Algo:EM}), supervised approach (Algorithm~\ref{Algo:LDA}) and unsupervised approach (Algorithm~\ref{Algo:EM} without using the labels).  Top: average misclustering rate over 100 repetitions from applying the above three methods as base algorithms in Algorithm~\ref{Algo:Sharp-SSL}.  The shaded regions represent interpolated 95\% confidence intervals at each of the~points.}
    \end{center}
\end{figure}

Recall that we have two different options for the way in which we implement the \texttt{Sharp-SSL} algorithm to estimate the set of signal coordinates: we can either use only the labeled data, as in the supervised learning approach of Algorithm~\ref{Algo:LDA}, or we can try to leverage in addition the unlabeled data via the semi-supervised EM approach of Algorithm~\ref{Algo:EM}.  In the extreme case of this latter version, we have no labeled data, so the algorithm is unsupervised.  In Figure~\ref{Fig:EMSemisupervised} we compare the performance of these three methods in both high- and low-dimensional versions of the normal mixture distribution data generation mechanism described at the beginning of Section~\ref{Sec:Numerics} as the proportion $\gamma$ of observed labels varies.  

More precisely, for the semi-supervised and unsupervised algorithms, we adopt the same implementation of \texttt{Sharp-SSL} as described at the beginning of Section~\ref{SubSec:Comparison}.  The supervised algorithm is very similar, but applies Algorithm~\ref{Algo:LDA} in place of Algorithm~\ref{Algo:EM} to select coordinates, and obtains final predicted labels by applying LDA again on the projected labeled data.  In cases where the proportion of labeled data was so small that the convex hull of the projected labeled data was less than full-dimensional for every class, we forced Algorithm~\ref{Algo:LDA} to return a zero matrix (this only happened when $\gamma$ was very small).

The top panels of Figure~\ref{Fig:EMSemisupervised} present the results in high-dimensional settings with $p \in \{200,600\}$.  Since the unsupervised approach has no access to the labels, it has constant misclustering rate.  The performance of the semi-supervised approach is always at least as good as that of the unsupervised algorithm, and improves as $\gamma$ increases. 
In other words, it effectively leverages  the additional information provided by the class labels.  When $\gamma$ is very small, the supervised algorithm---which ignores the unlabeled data---is inaccurate, as it has very little data to work with. 
On the other hand, its performance also improves as $\gamma$ increases, and once around 5$\%$ of our data are labeled, it outperforms the unsupervised algorithm. 
Further, it essentially matches the semi-supervised approach when about a third of the data are labeled.  We truncate the plot at $\gamma = 1/2$ to ensure that we have enough test data on which to compute the misclustering rate.

In the bottom panels of Figure~\ref{Fig:EMSemisupervised}, we explore the performance of the three algorithms above in two low-dimensional settings with different values of \texttt{SNR}, in order to provide further insight into the phenomena described in the previous paragraph.  Here, we take $K=2$ and report the average Frobenius norm loss
\[
\mathcal{L}\bigl((\hat{\mu}_1,\hat{\mu}_2),(\mu_1,\mu_2)\bigr) := \min\bigl\{\|(\hat{\mu}_1,\hat{\mu}_2) - (\mu_1,\mu_2)\|_{\mathrm{F}},\|(\hat{\mu}_2,\hat{\mu}_1) - (\mu_1,\mu_2)\|_{\mathrm{F}}\bigr\}
\]
of the estimated means, over 100 repetitions.  If there are insufficient labeled data to run Algorithm~\ref{Algo:LDA}, then we output $\hat{\mu}_1 = \hat{\mu}_2 = \boldsymbol{0}_p$.  We see that, already in these low-dimensional problems, a similar picture emerges: if the proportion of labeled data is small, then the unsupervised algorithm outperforms the supervised one, but this situation may be reversed when $\gamma$ is larger.  The semi-supervised algorithm is able to leverage both the unlabeled and labeled data to obtain the best of both worlds.  These empirical observations agree with our theory from Section~\ref{Sec:Theory}, in particular in the way in which Theorem~\ref{Thm:LowDimEM} bounds the accuracy of mean estimation for the semi-supervised algorithm by a minimum of a term that does not depend on $\gamma$ and one that decreases as $\gamma$ increases.  
It appears that the switch in the minimum occurs around $\gamma=0.02$ in these examples.

\subsection{Empirical data analysis}
\label{SubSec:EmpiricalData}
In this subsection we apply the \texttt{Sharp-SSL} algorithm, as well as several competing methods, to the gene expression data set from \citet{alon1999broad}, which contains observations on 62 patients.  A preprocessed version of the data can be downloaded from the \textsf{R} package `datamicroarray' \citep{ramey2016collection}, with a total of $2000$ features (genes) measured on $40$ patients with colon tumors and $22$ without tumors.  We first exclude 9 genes to remove perfect collinearity and then standardize each of the remaining $p=1991$ columns of the dataset to have unit variance. 

We apply the \texttt{Sharp-SSL} algorithm using EM (Algorithm~\ref{Algo:EM}) as the base procedure, with input parameters $A=150$, $B=75$, $d=\ell=5$. In addition to our approach (\texttt{Sharp-SSL+EM}), we also compare the performance of the spectral clustering (\texttt{SC}) method, the \citet{WittenTibshirani2010} method (\texttt{WT2010}, as well as four two-stage methods (\texttt{PCA+Kmeans}, \texttt{PCA+EM}, \texttt{SPCA+Kmeans}, \texttt{SPCA+EM}), where we first reduce dimension of the data to a $5$-dimensional subspace using either PCA or SPCA and then apply either the EM algorithm or $K$-means clustering on the low-dimensional data. For SPCA, we use the \texttt{SPCAvRP} algorithm \citep{GWS2020} with inputs $A=600$, $B=200$ and $d=\ell=5$. The true labels are hidden to all algorithms and are only used to evaluate the final misclustering rate.  

Over 100 Monte Carlo repetitions of the randomized algorithms, the \texttt{Sharp-SSL+EM} method had an average misclustering rate of 28.8\%, whereas all other competitors had a misclustering rate above 40\%, as can be seen from the right-hand data points in Figure~\ref{Fig:Colontumor}.  To investigate this performance further, we applied each method to a subset of the features.
These were constructed from the top $\ell=5$ genes identified through \texttt{Sharp-SSL}, together with $m=0, 10, 50, 200$ and $600$ randomly chosen genes from the remaining $1986$.  The results are presented as the other data points in Figure~\ref{Fig:Colontumor}.  We see that the improved performance of the \texttt{Sharp-SSL+EM} relative to the other methods persists, even when only a small number of potentially non-discriminative covariates are present.  When $m=0$, \texttt{Sharp-SSL+EM} has a slight disadvantage as other algorithms benefit from the ensemble effect of combining two different learning methods; nevertheless it remains competitive.  This reinforces the point that the primary contribution of the \texttt{Sharp-SSL} algorithm is to identify signal coordinates that are helpful for semi-supervised learning, and once this task has been accomplished, a variety of low-dimensional procedures are available to the practitioner. 

\begin{figure}
    \centering
    \includegraphics[width=0.6\textwidth]{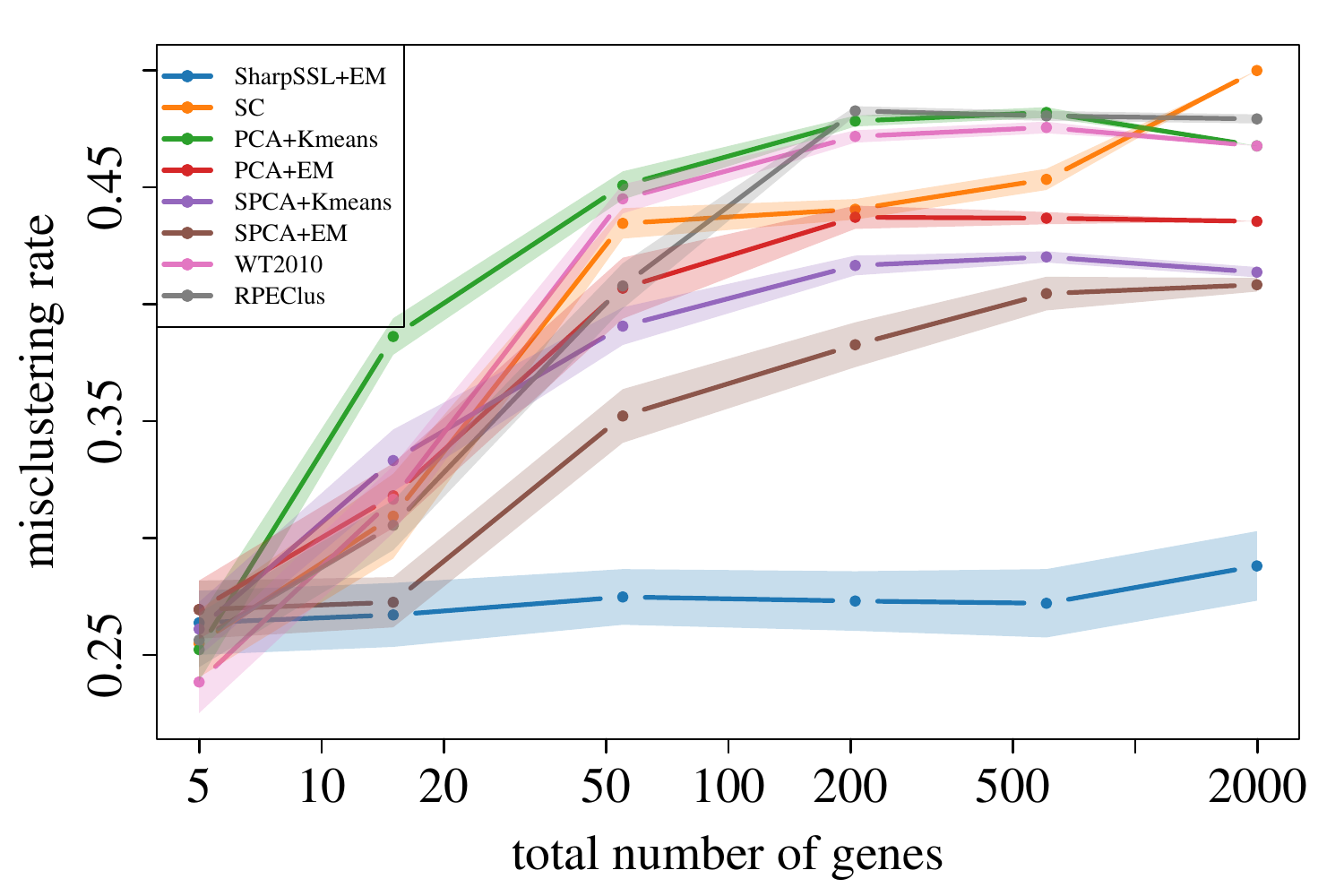}
    \caption{Average misclustering rate (over 100 repetitions for randomized algorithms) for the colon tumor data, using \texttt{Sharp-SSL} followed by the EM algorithm, as well as the other methods described in Section~\ref{SubSec:EmpiricalData}.  The right-hand data points plot the average misclustering rate on the full data set.  The other points were obtained by applying each method to a subset of genes formed from the top five genes identified by \texttt{Sharp-SSL} together with randomly sampled genes.  The shaded regions represent interpolated 95\% confidence intervals at each of the points. }
    \label{Fig:Colontumor}
\end{figure}

\section{Proof of the main results}
\label{Sec:Proofs}
\subsection{Proof of Proposition~\ref{Prop:Motivation}}
\label{pf:Prop:Motivation}

The assumption that the convex hull of $\nu_1,\ldots,\nu_K$ is $(K-1)$-dimensional implies that $\Sigma_{\mathrm{b}}$ is of rank $K-1$.  Define $A:= \Sigma_{\mathrm{w}}^{-1/2}\Sigma_{\mathrm{b}}\Sigma_{\mathrm{w}}^{-1/2} \in \mathbb{R}^{p \times p}$, which has rank $K-1$.  Given $V\in\mathbb{O}^{p\times d}$, we can find $Q \in \mathbb{O}^{p\times d}$ with the same column span as that of $\Sigma_{\mathrm{w}}^{1/2}V$ and let $R:=Q^\top \Sigma_{\mathrm{w}}^{1/2}V \in \mathbb{R}^{d\times d}$, so that $R$ is invertible, and $\Sigma_{\mathrm{w}}^{1/2}V = QR$. We observe that 
\[
\mathrm{tr}\{(V^\top\Sigma_{\mathrm{w}} V)^{-1}(V^\top \Sigma_{\mathrm{b}} V)\} = \mathrm{tr}\{(R^\top R)^{-1}(R^\top Q^\top  A Q R)\} = \mathrm{tr}(Q^\top A Q).
\]
Thus, $J(V;\Sigma_{\mathrm{b}}, \Sigma_{\mathrm{w}})$ depends on $V$ only through the column space of $\Sigma_{\mathrm{w}}^{1/2}V$. Moreover, $\mathrm{tr}(Q^\top A Q)$ is maximized when $Q$, or equivalently $\Sigma_{\mathrm{w}}^{1/2}V$, spans a $d$-dimensional space that contains the $(K-1)$-dimensional eigenspace corresponding to the non-zero eigenvalues of $A$. Note that if for some $v \in \mathbb{R}^p \setminus \{0\}$ and $\lambda \geq 0$, we have $Av = \lambda v$, then $\Sigma_{\mathrm{w}}^{-1}\Sigma_{\mathrm{b}}\Sigma_{\mathrm{w}}^{-1/2}v = \Sigma_{\mathrm{w}}^{-1/2}Av= \lambda \Sigma_{\mathrm{w}}^{-1/2}v$, so $\Sigma_{\mathrm{w}}^{-1/2}v$ is an eigenvector of $\Sigma_{\mathrm{w}}^{-1}\Sigma_{\mathrm{b}}$ with eigenvalue $\lambda$. 
Hence $V$ maximizes $J(V;\Sigma_{\mathrm{b}}, \Sigma_{\mathrm{w}})$ over $\mathbb{O}^{p \times d}$ if and only if
$V$ spans
a $d$-dimensional space that contains the
$(K-1)$-dimensional eigenspace corresponding to the $K-1$ non-zero eigenvalues of $\Sigma_{\mathrm{w}}^{-1}\Sigma_{\mathrm{b}}$. Finally, for any $v \in \mathbb{R}^p \setminus \{0\}$,
\[
v^\top \Sigma_{\mathrm{w}}^{-1/2}\Sigma_{\mathrm{b}}\Sigma_{\mathrm{w}}^{-1/2}v = \sum_{k=1}^K \pi_k v^\top \Sigma_{\mathrm{w}}^{-1/2}(\nu_k-\nu)(\nu_k-\nu)^\top \Sigma_{\mathrm{w}}^{-1/2}v \neq 0
\]
if and only if $v^\top \Sigma_{\mathrm{w}}^{-1/2}(\nu_k-\nu)\neq 0$ for some $k \in [K]$. Thus, the eigenspace corresponding to the non-zero eigenvalues of $A$ is spanned by $\bigl(\Sigma_{\mathrm{w}}^{-1/2}(\nu_k-\nu): k\in[K]\bigr)$, and so the eigenspace corresponding to non-zero eigenvalues of $\Sigma_{\mathrm{w}}^{-1}\Sigma_{\mathrm{b}}$ is spanned by $\bigl(\Sigma_{\mathrm{w}}^{-1}(\nu_k-\nu):k \in [K]\bigr)$.

\subsection{Proof of Theorem~\ref{Thm:Meta}}
\label{pfThm:Meta}

We write $S^{a,b}:=\{j\in[p]: (P^{a,b,\top}P^{a,b})_{j,j} = 1\}$. For any $S = \{j_1,\ldots,j_d\}^\top \in\binom{[p]}{d}$, 
we identify the set $S$ with the sequence 
$j_{i_1}< \ldots < j_{i_d}$ sorted in increasing order; and with a slight abuse of notation, we will use $S$ to refer to either object, which will always be clear depending on the context.
We define $P^S \in \mathcal{P}_d$ by $(P^S)_{\ell,j} := \mathbbm{1}_{\{j = j_\ell\}}$, 
so that $P^{S,\top}P^S = \mathrm{diag}\bigl((\mathbbm{1}_{\{j\in S\}})_{j\in[p]}\bigr)$. 
Define $Q^S := (P^S\Sigma_{\mathrm{w}}P^{S,\top})^{-1}P^S\Sigma_{\mathrm{b}}P^{S,\top} \in \mathbb{R}^{d\times d}$ and $\hat Q^S:=\psi\bigl((P^SX_i,Y_i)_{i\in[n]}\bigr) \in \mathbb{R}^{d\times d}$.
Note that $\hat Q^{a,b} = \hat Q^{S^{a,b}}$ in this notation, and we will similarly denote $Q^{a,b} := Q^{S^{a,b}}$ for simplicity.  Recalling the definition of $\Omega$ from~\eqref{Eq:Omega}, in our new notation, 
and recalling that $\psi$ is permutation-equivariant,
we can write
\begin{align*}
  \Omega = \biggl\{\max_{S \in \binom{[p]}{d}}\bigl\|\hat Q^S - Q^S\bigr\|_{\mathrm{op}} < \frac{\gamma_{\min}}{4(K-1)}\biggr\},
\end{align*}
and have $\mathbb{P}(\Omega)\geq 1-\epsilon$ by~\eqref{Eq:AssumptionUniformConsistency}. We will work on the event $\Omega$ throughout the remainder of the proof, and assume also that $\gamma_{\min} > 0$, because otherwise the conclusion is trivial.

By Weyl's inequality \citep[Corollary IV.4.9]{Weyl1912, StewartSun1990}, we have on $\Omega$ that 
\begin{align*}
|\mathrm{tr}(\hat Q^S) - \mathrm{tr}(Q^S)|\leq (K-1)\|\hat Q^S - Q^S\|_{\mathrm{op}} \leq \frac{\gamma_{\min}}{4}.
\end{align*}
On the other hand, we note that $\mathrm{tr}(Q^S) = \sum_{j\in S\cap S_0} (\Sigma_{\mathrm{w}}^{-1}\Sigma_{\mathrm{b}})_{j,j}$. Therefore, by the triangle inequality, for any $S,S'\in\binom{[p]}{d}$ such that $S\cap S_0$ is a proper subset of $S'\cap S_0$, we have on $\Omega$ that 
\begin{equation}
\label{Eq:JComparison}
\mathrm{tr}(\hat Q^S) - \mathrm{tr}(\hat Q^{S'}) \leq \frac{\gamma_{\min}}{2} - \!\!\sum_{j\in  (S'\setminus S)\cap S_0} (\Sigma_{\mathrm{w}}^{-1}\Sigma_{\mathrm{b}})_{j,j} < 0.
\end{equation}

Fix $a\in[A]$, and for any $\tilde j \in [p]$, define $q_{\tilde j}:=\mathbb{P}\bigl(\tilde j \in S^{a,b^*(a)} \mid (X_i,Y_i)_{i\in[n]}\bigr)$. Now fix some $j\in S_0$ and $j'\in [p]\setminus S_0$. We claim that
\begin{equation}
\label{Eq:ClaimOrdering}
(q_j - q_{j'})\mathbbm{1}_{\Omega} \geq 0.
\end{equation}
To verify this claim, define for $\tilde j\in\{j,j'\}$ and $b\in[B]$ the sets
\[
\mathcal{S}_{b,\tilde j}:=\bigl\{(S^{a,1},\ldots,S^{a,B}): b^*(a)=b, \tilde j \in S^{a,b}\bigr\}\quad \text{and}\quad \mathcal{S}_b :=\bigl\{(S^{a,1},\ldots,S^{a,B}): b^*(a)=b\bigr\}.
\]
Let $f:\binom{[p]}{d}\to\binom{[p]}{d}$ be a map defined by
\[
f(S):=\begin{cases}
(S\setminus \{j'\})\cup\{j\} &  \text{if $j\notin S$ and $j'\in S$}\\
S &\text{otherwise}.
\end{cases}
\]
If $j \notin S^{a,b}$ and $j' \in S^{a,b}$, then $f(S^{a,b}) \cap S_0 = (S^{a,b} \cup \{j\}) \cap S_0 = (S^{a,b} \cap S_0) \cup \{j\}$, so $S^{a,b} \cap S_0$ is a proper subset of $f(S^{a,b}) \cap S_0$; on the other hand, if either $j \in S^{a,b}$ or $j,j' \notin S^{a,b}$, then $f(S^{a,b}) = S^{a,b}$.  It follows by~\eqref{Eq:JComparison} that on $\Omega$ we have
\begin{equation}
  \label{Eq:trace}
  \mathrm{tr}(\hat Q^{S^{a,b}}) \leq \mathrm{tr}(\hat Q^{f(S^{a,b})}).
\end{equation}
Now let $F: \mathcal{S}_{b,j'}\to\mathcal{S}_{b,j}$ be defined as
\[
F(S^{a,1},\ldots,S^{a,B}) := \bigl(S^{a,1},\ldots,S^{a,b-1}, f(S^{a,b}), S^{a,b+1},\ldots,S^{a,B}\bigr).
\]
We claim that $F$ is both well-defined and injective on $\Omega$.  For the first of these claims, we note that since $j' \in S^{a,b}$, we must have $j \in f(S^{a,b})$.  Moreover, if $(S^{a,1},\ldots,S^{a,B}) \in \mathcal{S}_{b,j'}$, then $b^*(a) = b$.  But~\eqref{Eq:trace} holds on $\Omega$, so $\bigl(S^{a,1},\ldots,S^{a,b-1}, f(S^{a,b}), S^{a,b+1},\ldots,S^{a,B}\bigr) \in \mathcal{S}_{b,j}$.  Hence~$F$ is well-defined.  For the second claim, suppose that $S_1,S_2 \in \binom{[p]}{d}$ are such that $j'\in S_1 \cap S_2$ and $f(S_1) = f(S_2)$.  If $j \in S_1 \cap S_2$, then $S_1 = f(S_1) = f(S_2) = S_2$; if $j \in S_1$ but $j \notin S_2$, then $j \in f(S_2) \setminus f(S_1)$, a contradiction.  Similarly, we cannot have $j \notin S_1$ but $j \in S_2$.  Finally, if $j \in S_1^{\mathrm{c}} \cap S_2^{\mathrm{c}}$, then $S_1 = \bigl(f(S_1) \setminus \{j\}\bigr) \cup \{j'\} = \bigl(f(S_2) \setminus \{j\}\bigr) \cup \{j'\} = S_2$.  We deduce that~$f$ is injective on $\{S: j'\in S\}$. Since $j' \in S^{a,b}$ for $(S^{a,1},\ldots,S^{a,B}) \in \mathcal{S}_{b,j'}$, this establishes the injectivity of $F$.  In particular, $|\mathcal{S}_{b,j'}|\leq |\mathcal{S}_{b,j}|$. Consequently, on $\Omega$, we have for all $b\in[B]$ that
\begin{align*}
\mathbb{P}\bigl(j\in S^{a,b^*(a)}\mid (X_i,Y_i)_{i\in[n]}, b^*(a)=b\bigr) 
&= \frac{\mathbb{P}\bigl(j\in S^{a,b^*(a)}, b^*(a)=b\mid (X_i,Y_i)_{i\in[n]}\bigr)}{\mathbb{P}\bigl(b^*(a)=b\mid (X_i,Y_i)_{i\in[n]}\bigr)} 
= \frac{|\mathcal{S}_{b,j}|}{|\mathcal{S}_b|}\\
&\geq \frac{|\mathcal{S}_{b,j'}|}{|\mathcal{S}_b|} = \frac{\mathbb{P}\bigl(j'\in S^{a,b^*(a)}, b^*(a)=b\mid (X_i,Y_i)_{i\in[n]}\bigr)}{\mathbb{P}\bigl(b^*(a)=b\mid (X_i,Y_i)_{i\in[n]}\bigr)} \\
&= \mathbb{P}\bigl(j'\in S^{a,b^*(a)}\mid (X_i,Y_i)_{i\in[n]}, b^*(a)=b\bigr),
\end{align*}
which implies Claim~\eqref{Eq:ClaimOrdering}. We remark that one consequence of~\eqref{Eq:ClaimOrdering} is that, since $d \geq s_0$, we have on $\Omega$ that
\begin{equation}
\label{Eq:SignalProbLowerBound}
q_j\geq \frac{\sum_{\tilde j\in([p]\setminus S_0)\cup\{j\}} q_{\tilde j}}{p-s_0+1} = \frac{d - \sum_{\tilde j\in S_0\setminus \{j\}} q_{\tilde j}}{p-s_0+1}\geq \frac{d-s_0+1}{p-s_0+1} \geq \frac{1}{p}.
\end{equation}

Again fixing $j\in S_0$ and $j'\notin S_0$, we observe on $\Omega\cap\{j\in S^{a,b^*(a)}\}$ that
\begin{align*}
 \frac{3}{4}\gamma_{\min} &\leq \bigl[P^{a,b^*(a),\top} Q^{a,b^*(a)} P^{a,b^*(a)}\bigr]_{j,j} - \bigl\|\hat Q^{a,b^*(a)} \!-\! Q^{a,b^*(a)}\bigr\|_{\mathrm{op}} \leq \bigl[P^{a,b^*(a),\top}\hat Q^{a,b^*(a)} P^{a,b^*(a)}\bigr]_{j,j} \\
&\leq [P^{a,b^*(a),\top} Q^{a,b^*(a)} P^{a,b^*(a)}]_{j,j} + \bigl\|\hat Q^{a,b^*(a)} - Q^{a,b^*(a)}\bigr\|_{\mathrm{op}} \leq \frac{5}{4}\gamma_{\max},
\end{align*}
and similarly on $\Omega\cap\{j'\in S^{a,b^*(a)}\}$ that $\bigl|\bigl[P^{a,b^*(a),\top}\hat Q^{a,b^*(a)}P^{a,b^*(a)}\bigr]_{j',j'}\bigr| \leq \gamma_{\min}/4$. Recall also that $\bigl[P^{a,b^*(a),\top}\hat Q^{a,b^*(a)}P^{a,b^*(a)}\bigr]_{{\tilde j,\tilde j}} = 0$ for all $\tilde j\notin S^{a,b^*(a)}$. Combining the above bounds on the diagonal entries of $\hat Q^{a,b^*(a)}$ with~\eqref{Eq:ClaimOrdering} and~\eqref{Eq:SignalProbLowerBound}, we have on $\Omega$ that
\begin{align}
  \mathbb{E}\bigl(&\bigl[P^{a,b^*(a),\top}\hat Q^{a,b^*(a)} P^{a,b^*(a)}\bigr]_{j,j} - \bigl[P^{a,b^*(a),\top}\hat Q^{a,b^*(a)} P^{a,b^*(a)}\bigr]_{j',j'} \mid (X_i,Y_i)_{i\in[n]}\bigr) \nonumber \\
            &=\mathbb{E}\bigl(\bigl[P^{a,b^*(a),\top}\hat Q^{a,b^*(a)} P^{a,b^*(a)}\bigr]_{j,j}\mathbbm{1}_{\{j\in S^{a,b^*(a)}\}} \nonumber \\
  &\hspace{2cm}- \bigl[P^{a,b^*(a),\top}\hat Q^{a,b^*(a)} P^{a,b^*(a)}\bigr]_{j',j'}\mathbbm{1}_{\{j'\in S^{a,b^*(a)}\}} \mid (X_i,Y_i)_{i \in [n]}\bigr)\nonumber\\
&\geq \frac{q_j\gamma_{\min}}{2} \geq \frac{\gamma_{\min}}{2p}.\label{Eq:MeanDiff}
\end{align}
Now, let $a,j,j'$ be freely varying again. Since $\hat w_j = A^{-1}\sum_{a\in[A]} [P^{a,b^*(a),\top}\hat Q^{a,b^*(a)} P^{a,b^*(a)}]_{j,j}$, on $\Omega$ we have for any $j\in S_0$ and $j'\notin S_0$ that $\mathbb{E}\bigl(\hat w_j - \hat w_{j'} \mid (X_i, Y_i)_{i\in[n]}\bigr)\geq \gamma_{\min}/(2p)$ from~\eqref{Eq:MeanDiff}. Since $\ell\geq s_0$, we have by Hoeffding's inequality that on $\Omega$,
\begin{align*}
\mathbb{P}\bigl(S_0\not\subseteq \hat S\mid (X_i,Y_i)_{i\in[n]}\bigr)&\leq \mathbb{P}\Bigl(\min_{j\in S_0}\hat w_j \leq \max_{j'\notin S_0} \hat w_{j'}\Bigm| (X_i,Y_i)_{i\in[n]}\Bigr)\\
 & \leq \sum_{j\in S_0} \mathbb{P}\biggl\{\hat w_j - \mathbb{E}(\hat w_j\mid (X_i,Y_i)_{i\in[n]}) \leq -\frac{\gamma_{\min}}{4p}\biggm| (X_i,Y_i)_{i\in[n]}\biggr\} \\
  &\quad + \sum_{j\notin S_0} \mathbb{P}\biggl\{\hat w_j - \mathbb{E}(\hat w_j\mid (X_i,Y_i)_{i\in[n]}) \geq \frac{\gamma_{\min}}{4p}\biggm| (X_i,Y_i)_{i\in[n]}\biggr\} \\
&\leq p\exp\biggl\{-\frac{A}{2}\biggl(\frac{\gamma_{\min}}{4p}\biggr)^2\biggm/ \biggl(\frac{5\gamma_{\max}}{4}\biggr)^2\biggr\}\leq pe^{-A\gamma_{\min}^2/(50p^2\gamma_{\max}^2)},
\end{align*}
as desired.

\subsection{Proof of Theorem~\ref{Thm:LDA}}
\label{pf:Thm:LDA}

The main ingredient of the proof of Theorem~\ref{Thm:LDA} is the following proposition, which controls the rate of convergence of the sample between- and within-class covariance matrices to their respective population versions in a classification problem. 
\begin{prop}[Rate of convergence for LDA]
\label{Prop:LowDimConcentration}
Suppose that $(Z_1,Y_1),\ldots,(Z_n,Y_n)\in\mathbb{R}^{d}\times [K]$ are independent and identically distributed data-label pairs, such that $\mathbb{P}(Y_1 = k) = \pi_k$ and $Z_1\mid Y_1 = k \sim \mathcal{N}_d(\mu_k, \Sigma_{\mathrm{w}})$ for $k\in[K]$. Write $\mu := \sum_{k\in[K]} \pi_k\mu_k$ and $\Sigma_{\mathrm{b}}:= \sum_{k\in[K]}\pi_k(\mu_k-\mu)(\mu_k-\mu)^\top$ and let $\hat\Sigma_{\mathrm{w}}$ and $\hat\Sigma_{\mathrm{b}}$ be computed as in~\eqref{Eq:withinbetween} applied to $(Z_1,Y_1),\ldots,(Z_n,Y_n)$. If $\|\mu_k-\mu\|\leq R_1$ for all $k\in[K]$ and $\|\Sigma_{\mathrm{w}}\|_{\mathrm{op}}\leq R_2$ for some $R_1,R_2>0$, then for every $\delta \in (0,1/4]$, we have with probability at least $1-\delta$ that 
\begin{align*}
    \|\hat \Sigma_{\mathrm{b}} - \Sigma_{\mathrm{b}}\|_{\mathrm{op}} & \leq \frac{12R_2\{K + \log(8\cdot 9^d/\delta)\}}{n} + \bigl(4R_1\sqrt{R_2} + R_1^2 + R_1\bigr)\sqrt\frac{2\log(8\cdot 9^d/\delta)}{n}, \\
    \|\hat\Sigma_{\mathrm{w}} - \Sigma_{\mathrm{w}}\|_{\mathrm{op}} &\leq \frac{4R_2\{K + \log(8\cdot 9^d/\delta)\}}{n} + 4R_2\sqrt\frac{\log(8\cdot 9^d/\delta)}{n}.
\end{align*}
\end{prop}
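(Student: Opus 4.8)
The overall plan is to reduce both operator norms to scalar quadratic forms via an $\varepsilon$-net over the sphere, and then apply standard scalar concentration. Fix a $1/4$-net $\mathcal{V}$ of $\mathbb{S}^{d-1}$ with $|\mathcal{V}|\le 9^d$; then $\|A\|_{\mathrm{op}}\le 2\max_{v\in\mathcal{V}}|v^\top A v|$ for every $A\in\mathbb{S}^{d\times d}$, so it suffices to bound $v^\top(\hat\Sigma_{\mathrm{w}}-\Sigma_{\mathrm{w}})v$ and $v^\top(\hat\Sigma_{\mathrm{b}}-\Sigma_{\mathrm{b}})v$ for each fixed $v\in\mathcal{V}$ and take a union bound, the size $9^d$ entering the log and the constant $8$ in $8\cdot 9^d$ absorbing the two-sidedness of the scalar tails and the handful of separate events combined. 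A useful reparametrisation: set $\varepsilon_i:=Z_i-\mu_{Y_i}$, so the $\varepsilon_i$ are i.i.d.\ $\mathcal{N}_d(0,\Sigma_{\mathrm{w}})$ and independent of the labels $(Y_i)_{i\in[n]}$; we condition on the labels throughout, noting that the bounds obtained depend on the class counts $n_k$ only through the fact that at most $K$ of them are positive, hence hold unconditionally. Write $\bar\varepsilon_k:=n_k^{-1}\sum_{i:Y_i=k}\varepsilon_i$ (with $\bar\varepsilon_k:=0$ when $n_k=0$), $\bar\varepsilon:=n^{-1}\sum_i\varepsilon_i=\sum_k(n_k/n)\bar\varepsilon_k$, and $\tilde\mu:=\sum_k(n_k/n)\mu_k$; conditionally, the vectors $\{\sqrt{n_k}\,\bar\varepsilon_k:n_k>0\}$ are independent $\mathcal{N}_d(0,\Sigma_{\mathrm{w}})$ and $\bar\varepsilon\sim\mathcal{N}_d(0,\Sigma_{\mathrm{w}}/n)$.

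For the within-class matrix, expanding $Z_i-\hat\mu_{Y_i}=\varepsilon_i-\bar\varepsilon_{Y_i}$ gives the identity $\hat\Sigma_{\mathrm{w}}=\frac1n\sum_i\varepsilon_i\varepsilon_i^\top-\frac1n\sum_k n_k\bar\varepsilon_k\bar\varepsilon_k^\top$. For a fixed $v$, $v^\top\bigl(\frac1n\sum_i\varepsilon_i\varepsilon_i^\top\bigr)v$ equals $\frac{v^\top\Sigma_{\mathrm{w}}v}{n}\chi^2_n$ in distribution, so a Laurent--Massart $\chi^2$ tail bound controls $\bigl|v^\top(\frac1n\sum_i\varepsilon_i\varepsilon_i^\top-\Sigma_{\mathrm{w}})v\bigr|$ by a multiple of $R_2(\sqrt{t/n}+t/n)$ with probability $1-2e^{-t}$; and $v^\top\bigl(\frac1n\sum_k n_k\bar\varepsilon_k\bar\varepsilon_k^\top\bigr)v$ equals $\frac{v^\top\Sigma_{\mathrm{w}}v}{n}\chi^2_{K'}$ with $K'\le K$, bounded by a multiple of $R_2(K+t)/n$. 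Taking $t=\log(8\cdot 9^d/\delta)$ and a union bound over $\mathcal{V}$ yields the stated bound on $\|\hat\Sigma_{\mathrm{w}}-\Sigma_{\mathrm{w}}\|_{\mathrm{op}}$ after tidying constants.

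For the between-class matrix, since $\mu$ and the $\mu_k$ need not be bounded we centre by $\mu$: put $a_k:=\mu_k-\mu$ (so $\|a_k\|\le R_1$) and $\delta_k:=(\bar\varepsilon_k-\bar\varepsilon)-(\tilde\mu-\mu)$, so that $\hat\mu_k-\hat\mu=a_k+\delta_k$ and, subtracting $\Sigma_{\mathrm{b}}=\sum_k\pi_k a_ka_k^\top$ from $\hat\Sigma_{\mathrm{b}}=\sum_k(n_k/n)(a_k+\delta_k)(a_k+\delta_k)^\top$, we obtain three terms. For $A:=\sum_k(n_k/n-\pi_k)a_ka_k^\top$ we have $v^\top A v=\frac1n\sum_i\{(v^\top a_{Y_i})^2-\mathbb{E}(v^\top a_{Y_1})^2\}$, a centred average of variables in $[0,R_1^2]$, so Hoeffding's inequality gives $|v^\top A v|\lesssim R_1^2\sqrt{t/n}$. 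For $C:=\sum_k(n_k/n)\delta_k\delta_k^\top=\frac1n\sum_k n_k\bar\varepsilon_k\bar\varepsilon_k^\top-\bar\varepsilon\bar\varepsilon^\top-(\tilde\mu-\mu)(\tilde\mu-\mu)^\top$, the first summand was bounded in the previous paragraph, $\|\bar\varepsilon\|^2\lesssim R_2(d+t)/n$ by a Gaussian norm bound, and $\|\tilde\mu-\mu\|^2\lesssim R_1^2 t/n$ by the same Hoeffding argument as for $A$ applied along $\mathcal{V}$, so $C$ only contributes at the $1/n$ scale. The cross term $B:=\sum_k(n_k/n)(a_k\delta_k^\top+\delta_k a_k^\top)$ is the delicate one: $v^\top B v=2\sum_k(n_k/n)(v^\top a_k)(v^\top\bar\varepsilon_k)-2(v^\top\bar\varepsilon)\,v^\top(\tilde\mu-\mu)-2\bigl(v^\top(\tilde\mu-\mu)\bigr)^2$, where the last two pieces are products of quantities already shown to be $O(\sqrt{t/n})$, hence $O(t/n)$, while the first piece is, conditionally on the labels, a centred Gaussian of variance $\frac{4v^\top\Sigma_{\mathrm{w}}v}{n^2}\sum_k n_k(v^\top a_k)^2\le 4R_1^2 R_2/n$, hence $\lesssim R_1\sqrt{R_2}\sqrt{t/n}$ with probability $1-2e^{-t}$. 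Collecting $A$, $B$ and $C$, taking $t=\log(8\cdot 9^d/\delta)$ and union bounding over $\mathcal{V}$ gives a bound of the claimed shape, the coefficients $R_1\sqrt{R_2}$, $R_1^2$ and $R_1$ arising respectively from the mean--noise cross term, the fluctuation of the class proportions against the bounded means, and the residual interaction terms.

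I expect the main obstacle to be the bookkeeping in the between-class bound: one must keep everything centred by $\mu$ since no absolute bound on $\mu$ is available; correctly identify the conditional Gaussianity and variance of the leading part of $v^\top B v$ so that the $n_k$ in the numerator telescopes against the $1/n_k$ in $\mathrm{Var}(v^\top\bar\varepsilon_k)$, giving the geometric-mean rate $R_1\sqrt{R_2}\sqrt{(d+\log(1/\delta))/n}$ rather than something coarser; handle the coupling among $\bar\varepsilon$, the $\bar\varepsilon_k$ and $\tilde\mu-\mu$, along with degenerate configurations where some $n_k=0$; and keep the many two-sided scalar tails within a total probability budget $\delta$, which is what forces the constant $8\cdot 9^d$. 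The within-class bound is comparatively routine once the identity $\hat\Sigma_{\mathrm{w}}=\frac1n\sum_i\varepsilon_i\varepsilon_i^\top-\frac1n\sum_k n_k\bar\varepsilon_k\bar\varepsilon_k^\top$ and the two $\chi^2$ representations are in hand.
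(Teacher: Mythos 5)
Your argument is correct and would deliver bounds of the claimed shape, but it takes a genuinely different route from the paper's for the between-class matrix. The paper writes $\hat\Sigma_{\mathrm{b}}-\Sigma_{\mathrm{b}}=(\tilde\Sigma_{\mathrm{b}}^{(1)}-\Sigma_{\mathrm{b}})+(\tilde\Sigma_{\mathrm{b}}^{(2)}-\tilde\Sigma_{\mathrm{b}}^{(1)})+(\hat\Sigma_{\mathrm{b}}-\tilde\Sigma_{\mathrm{b}}^{(2)})$, where $\tilde\Sigma_{\mathrm{b}}^{(1)}$ and $\tilde\Sigma_{\mathrm{b}}^{(2)}$ are the population between-class matrices reweighted by the empirical class proportions and centred at $\mu$ and $\tilde\mu$ respectively; the first two pieces are controlled by McDiarmid's inequality, and the third by proving a Cochran-type lemma showing that $n\hat\Sigma_{\mathrm{b}}\mid Y_1,\ldots,Y_n\sim\mathcal{W}_d\bigl(K-1,\Sigma_{\mathrm{w}};n\tilde\Sigma_{\mathrm{b}}^{(2)}\bigr)$, so that each quadratic form $u^\top\hat\Sigma_{\mathrm{b}}u$ is a scaled \emph{noncentral} $\chi^2_{K-1}$ to which Birg\'e's tail bound applies — the noncentrality parameter automatically generates the $R_1\sqrt{R_2}\sqrt{t/n}$ cross term. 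You instead expand $\hat\mu_k-\hat\mu=a_k+\delta_k$ and treat the bias, cross and quadratic terms separately, identifying the leading part of the cross term as a conditionally centred scalar Gaussian with variance at most $4R_1^2R_2/n$; this reproduces the same rate by a more elementary route that avoids the Wishart representation and the noncentral $\chi^2$ bound, at the cost of more events to track within the probability budget. The same comparison applies to $\hat\Sigma_{\mathrm{w}}$: the paper identifies $n\hat\Sigma_{\mathrm{w}}\mid Y_1,\ldots,Y_n\sim\mathcal{W}_d(n-K,\Sigma_{\mathrm{w}})$ exactly, giving a single central $\chi^2_{n-K}$ per direction, whereas you split into a $\chi^2_n$ piece and a subtracted $\chi^2_{K'}$ piece and bound each; the two are equivalent up to constants. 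One small algebraic slip: in your identity for $C$ the last term should be $+(\tilde\mu-\mu)(\tilde\mu-\mu)^\top$ rather than $-(\tilde\mu-\mu)(\tilde\mu-\mu)^\top$, which is harmless since you bound that term in norm anyway.
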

\begin{proof}

We first control the rate of convergence of $\hat\Sigma_{\mathrm{b}}$. For $n_k := \sum_{i=1}^n \mathbbm{1}_{\{Y_i=k\}}$ and $\tilde \mu :=\sum_{k\in[K]} (n_k/n)\mu_k$, we define  $\tilde\Sigma_{\mathrm{b}}^{(1)}:=\sum_{k=1}^K(n_k/n)(\mu_k-\mu)(\mu_k-\mu)^\top$ and $\tilde\Sigma_{\mathrm{b}}^{(2)}:=\sum_{k=1}^K(n_k/n)(\mu_k-\tilde \mu)(\mu_k-\tilde \mu)^\top$. We have the following decomposition
\begin{equation}
\label{Eq:Decomp14}
\|\hat\Sigma_{\mathrm{b}} - \Sigma_{\mathrm{b}}\|_{\mathrm{op}} \leq \|\tilde\Sigma_{\mathrm{b}}^{(1)} - \Sigma_{\mathrm{b}}\|_{\mathrm{op}} + \|\tilde\Sigma_{\mathrm{b}}^{(2)} - \tilde\Sigma_{\mathrm{b}}^{(1)}\|_{\mathrm{op}}  + \|\hat\Sigma_{\mathrm{b}} - \tilde\Sigma_{\mathrm{b}}^{(2)}\|_{\mathrm{op}}.
\end{equation}
We will control the three terms on the right-hand side above separately. For the first term, we note that $\|\mu_k -\mu\|\leq R_1$ for all $k\in[K]$. Since $n_1,\ldots, n_K$ are functions of $Y_1,\ldots,Y_n$, we have by McDiarmid's inequality \citep[see, e.g.][Theorem~6.2]{boucheron2013concentration} that with probability at least $1-\delta/4$,
\begin{equation}
\label{Eq:Sigmab1}
\|\tilde \Sigma_{\mathrm{b}}^{(1)} - \Sigma_{\mathrm{b}}\|_{\mathrm{op}} \leq R_1^2\sum_{k=1}^K\biggl|\frac{n_k}{n}-\pi_k\biggr| \leq R_1^2\sqrt\frac{2\log(4/\delta)}{n}.
\end{equation}
For the second term, we first apply McDiarmid's inequality again to see that with probability at least $1-\delta/4$, we have
\[
\|\tilde \mu- \mu\| \leq \sum_{k=1}^K \biggl|\frac{n_k}{n}-\pi_k\biggr|\|\mu_k-\mu\| \leq R_1\sqrt\frac{2\log(4/\delta)}{n}.
\]
Thus, we have with probability at least $1-\delta/4$ that
\begin{align}
\|\tilde\Sigma_{\mathrm{b}}^{(2)} - \tilde \Sigma_{\mathrm{b}}^{(1)}\|_{\mathrm{op}} &\leq 2\sum_{k=1}^K \frac{n_k}{n}\|(\mu_k-\mu)(\mu-\tilde\mu)^\top\|_{\mathrm{op}} + \sum_{k=1}^K \frac{n_k}{n} \|(\mu-\tilde \mu)(\mu-\tilde\mu)^\top\|_{\mathrm{op}}\nonumber\\
&\leq 2R_1\|\mu-\tilde\mu\| + \|\mu-\tilde\mu\|^2 \leq 3R_1\|\mu-\tilde\mu\|\leq 3R_1^2\sqrt\frac{2\log(4/\delta)}{n}.\label{Eq:Sigmab3}
\end{align}
Finally, for the third term, we write $\hat{\mu}_k := \sum_{i:Y_i=k} Z_i$ and note that $V_k := n_k^{1/2}\hat{\mu}_k$ satisfies $V_k\mid Y_1,\ldots,Y_n \sim \mathcal{N}_d(n_k^{1/2}\mu_k, \Sigma_{\mathrm{w}})$.  Defining $N := (n_1^{1/2},\ldots,n_K^{1/2})^\top$ and $P := NN^\top/n \in \mathbb{R}^{K \times K}$, we may write
\[
  n\hat\Sigma_{\mathrm{b}} = \sum_{k=1}^K n_k\hat{\mu}_k\hat{\mu}_k^\top - n\hat{\mu}\hat{\mu}^\top = V^\top (I_K-P)V,
\]
where $V := (V_1,\ldots,V_K)^\top$, and where $\hat{\mu} := n^{-1}\sum_{i=1}^n Z_i = \sum_{k=1}^K (n_k/n)\hat{\mu}_k$.  By Lemma~\ref{Lem:CochranVariant}, we deduce that $n\hat\Sigma_{\mathrm{b}}$ conditional on $Y_1,\ldots,Y_n$ has a $d$-dimensional non-central Wishart distribution with $K-1$ degrees of freedom, covariance matrix $\Sigma_{\mathrm{w}}$ and non-centrality matrix $n\tilde \Sigma_{\mathrm{b}}^{(2)}$, which we denote as 
\[
n\hat\Sigma_{\mathrm{b}}\mid Y_1,\ldots,Y_n \sim \mathcal{W}_d(K-1, \Sigma_{\mathrm{w}}; n\tilde \Sigma_{\mathrm{b}}^{(2)});
\]
a formal definition is given just before Lemma~\ref{Lem:CochranVariant}. For any fixed $u\in\mathbb{S}^{d-1}$, we have by \citet[Theorem~10.3.6]{muirhead2009aspects} that
\[
u^\top \hat\Sigma_{\mathrm{b}} u \mid Y_1,\ldots,Y_n \sim \frac{u^\top \Sigma_{\mathrm{w}} u}{n}\chi^2_{K-1}\biggl(\frac{nu^\top \tilde\Sigma_{\mathrm{b}}^{(2)}u}{u^\top\Sigma_{\mathrm{w}}u}\biggr),
\]
where $\chi^2_r(\lambda)$ denotes a non-central chi-squared distribution with $r$ degrees of freedom and non-centrality parameter $\lambda$.
By \citet[Lemma~8.1]{Birge2001}, for every $\delta'\in (0,1/2]$, we have with probability at least $1-2\delta'$ conditional on $Y_1,\ldots,Y_n$, that
\begin{align*}
|u^\top(\hat\Sigma_{\mathrm{b}} - \tilde\Sigma_{\mathrm{b}}^{(2)}) u| &\leq \frac{u^\top \Sigma_{\mathrm{w}} u}{n}\biggl\{K+2\sqrt{\biggl(K+\frac{2nu^\top\tilde \Sigma_{\mathrm{b}}^{(2)}u}{u^\top\Sigma_{\mathrm{w}}u}\biggr)\log(1/\delta')} + 2\log(1/\delta')\biggr\}\\
&\leq  \frac{u^\top \Sigma_{\mathrm{w}} u}{n}\bigl\{2K + 3\log(1/\delta')\bigr\} + \sqrt\frac{8u^\top\Sigma_{\mathrm{w}} uu^\top\tilde\Sigma_{\mathrm{b}}^{(2)} u\log(1/\delta')}{n}\\
&\leq \frac{3\bigl(K+\log(1/\delta')\bigr)\|\Sigma_{\mathrm{w}}\|_{\mathrm{op}}}{n} + \sqrt\frac{8\|\Sigma_{\mathrm{w}}\|_{\mathrm{op}}\|\tilde\Sigma_{\mathrm{b}}^{(2)}\|_{\mathrm{op}}\log(1/\delta')}{n}.
\end{align*}
Let $\mathcal{N}$ be a 1/4-net of the sphere $\mathbb{S}^{d-1}$, which can be chosen to have cardinality at most $9^d$ \citep[Lemma~5.2]{Vershynin2012}. Hence, through a union bound, and taking $\delta' := \delta/(8\cdot 9^d)$, we have with probability at least $1-\delta/4$  conditional on $Y_1,\ldots,Y_n$ that
\begin{align}
\|\hat\Sigma_{\mathrm{b}} - \tilde\Sigma_{\mathrm{b}}^{(2)}\|_{\mathrm{op}} &\leq 2\max_{u\in\mathcal{N}} |u^\top(\hat\Sigma_{\mathrm{b}} - \tilde\Sigma_{\mathrm{b}}^{(2)}) u|\nonumber\\
&\leq \frac{6R_2\bigl(K+\log(8\cdot 9^d/\delta)\bigr)}{n} + \sqrt\frac{32 R_1^2R_2\log(8\cdot 9^d/\delta)}{n}.\label{Eq:Sigmab2}
\end{align}
Combining~\eqref{Eq:Decomp14}, \eqref{Eq:Sigmab1}, \eqref{Eq:Sigmab3} and~\eqref{Eq:Sigmab2}, we have that  the desired bound on $\|\hat\Sigma_{\mathrm{b}}-\Sigma_{\mathrm{b}}\|_{\mathrm{op}}$ occurs on an event with probability at least $1-3\delta/4$.

We now turn to control $\|\hat\Sigma_{\mathrm{w}} - \Sigma_{\mathrm{w}}\|_{\mathrm{op}}$. Let $Q := \sum_{k \in [K]} (n_k^{-1}\mathbbm{1}_{\{Y_i=k,Y_{i'}=k\}})_{i,i'=1}^n \in \mathbb{R}^{n \times n}$, so that
  \[
    n\hat\Sigma_{\mathrm{w}} = \sum_{i=1}^n Z_iZ_i^\top - \sum_{k=1}^K n_k \hat{\mu}_k \hat{\mu}_k^\top = Z^\top(I-Q)Z,
  \]
where $Z := (Z_1,\ldots,Z_n)^\top$. It therefore follows again by Lemma~\ref{Lem:CochranVariant} that $\hat\Sigma_{\mathrm{w}} \mid Y_1,\ldots,Y_n \sim n^{-1}\mathcal{W}_d(n-K, \Sigma_{\mathrm{w}})$. Another application of \citet[Theorem~10.3.6]{muirhead2009aspects} then yields for any $u\in\mathbb{S}^{d-1}$ that 
\[
u^\top \hat\Sigma_{\mathrm{w}} u \mid Y_1,\ldots,Y_n \sim \frac{u^\top \Sigma_{\mathrm{w}} u}{n}\chi^2_{n-K}.
\]
By \citet[Lemma~1]{LaurentMassart2000},  we have with probability at least $1-2\delta' = 1-\delta/(4\cdot 9^d)$ that 
\begin{align*}
|u^\top(\hat\Sigma_{\mathrm{w}} - \Sigma_{\mathrm{w}}) u| & \leq \frac{R_2}n \bigl\{K+2\sqrt{n\log(1/\delta')} + 2\log(1/\delta')\bigr\}.
\end{align*}
Again, taking a union bound over the $1/4$-net $\mathcal{N}$, we conclude that with probability at least $1-\delta/4$, we have
\[
\|\hat\Sigma_{\mathrm{w}} - \Sigma_{\mathrm{w}}\|_{\mathrm{op}} \leq 2\max_{u\in\mathcal{N}} |u^\top(\hat\Sigma_{\mathrm{w}} - \Sigma_{\mathrm{w}}) u| \leq \frac{4R_2}{n}\Bigl\{K+\sqrt{n\log(8\cdot 9^d/\delta)} + \log(8\cdot 9^d/\delta)\Bigr\},
\]
as desired.
\end{proof}

\begin{proof}[Proof of Theorem~\ref{Thm:LDA}]
Define $\delta:=\binom{p}{d}^{-1}\epsilon$. 
Since Algorithm \ref{Algo:LDA} is permutation-equivariant,
by a union bound, it suffices to show that for every $P\in\mathcal{P}_d$, with probability at least $1-\delta$, the desired upper bound holds for $\bigl\|\psi\bigl((PX_i,Y_i)_{i\in[n]}\bigr) - (P\Sigma_{\mathrm{w}}P^\top)^{-1}P\Sigma_{\mathrm{b}}P^\top \bigr\|_{\mathrm{op}}$. Recall that $n_k:=\sum_{i=1}^n \mathbbm{1}_{\{Y_i=k\}}$. Write $\Sigma_{\mathrm{w}, P}:= P\Sigma_{\mathrm{w}}P^\top$ and $\Sigma_{\mathrm{b}, P}:= P\Sigma_{\mathrm{b}}P^\top$, $\hat\mu_{k,P}:=n_k^{-1}\sum_{i:y_i=k}PX_i$, $\hat\mu_{P}:=n^{-1}\sum_{i=1}^n PX_i$ and
\[
\hat\Sigma_{\mathrm{w}, P} := \frac{1}{n}\sum_{i=1}^n(PX_i - \hat\mu_{Y_i, P})(PX_i - \hat\mu_{Y_i, P})^\top \quad \text{and} \quad \hat\Sigma_{\mathrm{b}, P} := \sum_{k=1}^K \frac{n_k}{n}(\hat\mu_{k,P}- \hat\mu_{P})(\hat\mu_{k,P} - \hat\mu_{P})^\top.
\]
Observe that since $n \geq Kd+1$, we have $\max_{k \in [K]}n_k\geq d+1$, so $\hat{\Sigma}_{\mathrm{w},P}$ is positive definite with probability 1.  Thus, by the triangle inequality, for each $P\in\mathcal{P}_d$, we have 
\begin{align}
\label{Eq:Triangle}
\bigl\|\psi\bigl((PX_i,Y_i)_{i\in[n]}\bigr) &- (P\Sigma_{\mathrm{w}}P^\top)^{-1}P\Sigma_{\mathrm{b}}P^\top\bigr\|_{\mathrm{op}} = \bigl\|\hat\Sigma_{\mathrm{w},P}^{-1}\hat\Sigma_{\mathrm{b},P} - \Sigma_{\mathrm{w},P}^{-1}\Sigma_{\mathrm{b},P}\bigr\|_{\mathrm{op}} \nonumber \\
&\leq \|\hat\Sigma_{\mathrm{w},P}^{-1}\hat\Sigma_{\mathrm{b},P}  - \Sigma_{\mathrm{w}, P}^{-1} \hat\Sigma_{\mathrm{b},P}\|_{\mathrm{op}} + \|\Sigma_{\mathrm{w}, P}^{-1} \hat\Sigma_{\mathrm{b},P} - \Sigma_{\mathrm{w}, P}^{-1} \Sigma_{\mathrm{b},P}\|_{\mathrm{op}}.
\end{align}
By Proposition~\ref{Prop:LowDimConcentration} and our hypothesis, there is an event $\Omega_P$ with probability at least $1-\delta$, on which
\begin{equation}
\label{Eq:ApplyProp1}
\begin{aligned}
  \|\hat\Sigma_{\mathrm{w}, P} - \Sigma_{\mathrm{w}, P}\|_{\mathrm{op}} &\leq \frac{4R_2\{K+\log(8\cdot 9^d/\delta)\}}{n} + 4R_2\sqrt\frac{\log(8\cdot 9^d/\delta)}{n} \leq \frac{1}{2R_2}.\\
  \|\hat\Sigma_{\mathrm{b}, P} - \Sigma_{\mathrm{b}, P}\|_{\mathrm{op}} &\lesssim_{R_1,R_2} \frac{K + d+\log(1/\delta)}{n} +  \sqrt\frac{d+\log(1/\delta)}{n} \lesssim_{R_2} 1.
\end{aligned}
\end{equation}
Thus, for the first term in~\eqref{Eq:Triangle}, by Weyl's inequality, on $\Omega_P$, we have 
\begin{align}
\|\hat\Sigma_{\mathrm{w},P}^{-1}\hat\Sigma_{\mathrm{b},P}- \Sigma_{\mathrm{w}, P}^{-1} \hat\Sigma_{\mathrm{b},P}\|_{\mathrm{op}} &\leq \|\Sigma_{\mathrm{w}, P}^{-1} - \hat\Sigma_{\mathrm{w}, P}^{-1}\|_{\mathrm{op}} \|\hat\Sigma_{\mathrm{b},P} \|_{\mathrm{op}}\nonumber\\
&\leq \|\Sigma_{\mathrm{w}, P}^{-1}\|_{\mathrm{op}} \|\hat\Sigma_{\mathrm{w}, P}^{-1}\|_{\mathrm{op}}\|\hat\Sigma_{\mathrm{b},P} \|_{\mathrm{op}} \|\hat\Sigma_{\mathrm{w}, P} - \Sigma_{\mathrm{w}, P}\|_{\mathrm{op}}\nonumber\\
&\leq \frac{\bigl(\|\Sigma_{\mathrm{b},P} \|_{\mathrm{op}} + \|\hat{\Sigma}_{\mathrm{b},P} - \Sigma_{\mathrm{b},P} \|_{\mathrm{op}}\bigr)\|\hat\Sigma_{\mathrm{w}, P} - \Sigma_{\mathrm{w}, P}\|_{\mathrm{op}}}{\lambda_{\min}(\Sigma_{\mathrm{w},P})\bigl(\lambda_{\min}(\Sigma_{\mathrm{w},P}) -  \|\hat\Sigma_{\mathrm{w}, P} - \Sigma_{\mathrm{w}, P}\|_{\mathrm{op}}\bigr)}\nonumber \\
  &\leq \frac{R_2\bigl(R_1^2 + \|\hat{\Sigma}_{\mathrm{b},P} - \Sigma_{\mathrm{b},P} \|_{\mathrm{op}}\bigr)\|\hat\Sigma_{\mathrm{w}, P} - \Sigma_{\mathrm{w}, P}\|_{\mathrm{op}}}{\bigl(1/R_2 -  \|\hat\Sigma_{\mathrm{w}, P} - \Sigma_{\mathrm{w}, P}\|_{\mathrm{op}}\bigr)}\nonumber\\
  &\lesssim_{R_1,R_2}\|\hat\Sigma_{\mathrm{w}, P} - \Sigma_{\mathrm{w}, P}\|_{\mathrm{op}} \lesssim_{R_2} \frac{K}{n} +  \sqrt\frac{d+\log(1/\delta)}{n},\label{Eq:LDATerm2}
\end{align}
where we used~\eqref{Eq:ApplyProp1} in the penultimate inequality.
For the second term in~\eqref{Eq:Triangle}, we also have on $\Omega_P$ that 
\begin{equation}
\label{Eq:LDATerm1}
\bigl\|\Sigma_{\mathrm{w}, P}^{-1} \hat\Sigma_{\mathrm{b},P} - \Sigma_{\mathrm{w}, P}^{-1} \Sigma_{\mathrm{b},P}\bigr\|_{\mathrm{op}} \leq \|\Sigma_{\mathrm{w}, P}^{-1}\|_{\mathrm{op}}\|\hat\Sigma_{\mathrm{b},P} - \Sigma_{\mathrm{b},P}\|_{\mathrm{op}} \lesssim_{R_1,R_2} \frac{K}{n} +  \sqrt\frac{d+\log(1/\delta)}{n}.
\end{equation}
The desired result follows by combining~\eqref{Eq:LDATerm2} and~\eqref{Eq:LDATerm1}, and using the fact that $\log(1/\delta) \leq d\log(ep/d)+\log(1/\epsilon)$. 
\end{proof}

\subsection{Proofs of Proposition~\ref{Prop:OneGoodInitialization} and Theorem~\ref{Thm:LowDimEM}}
\label{pfProp:OneGoodInitialization}

In the proof of Proposition~\ref{Prop:OneGoodInitialization}, we show the convergence of the EM iterates $\hat{\mu}^{(t)}$ by analyzing their components parallel and orthogonal to $\mu^*$ separately.  Writing $\eta := \mu^*/\|\mu^*\|$, let $\alpha_t \in \mathbb{R}$, $\beta_t\geq 0$ be defined by   
\begin{equation}
\label{Eq:OrthognoalDecomposition}
\hat{\mu}^{(t)} = \alpha_t\eta + \beta_t \xi_t,
\end{equation}
where $\xi_t \in \mathbb{S}^{d-1}$ is orthogonal to $\eta$.
Our proof will combine several propositions that control $\alpha_t$ and $\beta_t$ under different conditions.  We begin by laying some groundwork and defining some quantities that will be used throughout this subsection.  

First, it will be convenient to relabel the two classes as $\{-1,1\}$ instead of $\{1,2\}$.  By the rotational symmetry of the problem, we may assume without loss of generality that $\mu^* = (s,0,\ldots,0)^\top \in \mathbb{R}^d$ for some $s \geq 0$, and that the first $n_{\mathrm{L}}$ observations are labeled (i.e.,\ $Y_i\neq 0$ for $i \in [n_{\mathrm{L}}]$).  We assume throughout this section that $s\leq r$ and $r \geq 1$.  Let $\hat{\mu}_{n_\mathrm{L}}:=n_{\mathrm{L}}^{-1}\sum_{i=1}^{n_{\mathrm{L}}}Z_iY_i$, with the convention that $\hat{\mu}_{n_\mathrm{L}}:=0$ if $n_{\mathrm{L}}=0$, and define the function $f_{n_{\mathrm{U}}}:\mathbb{R}^d \rightarrow \mathbb{R}^d$ by
\begin{equation}\label{eq:fnu}
f_{n_{\mathrm{U}}}(v) := \frac{1}{n_{\mathrm{U}}}\sum_{i=n_{\mathrm{L}}+1}^n Z_i \tanh\langle Z_i, v\rangle,
\end{equation}
with $f_{n_{\mathrm{U}}}:=0$ if $n_{\mathrm{U}}=0$. Throughout, and without further comment, we assume that $n = n_{\mathrm{L}} + n_{\mathrm{U}} \geq 2$.  In this notation, the EM update \eqref{Eq:EMUpdate} can be rewritten, defining the function $g_n:\mathbb{R}^d \rightarrow \mathbb{R}^d$, as
\[
\hat{\mu}^{(t)} 
= g_n(\hat{\mu}^{(t-1)})
:= \gamma \hat{\mu}_{n_\mathrm{L}} + (1-\gamma)f_{n_{\mathrm{U}}}(\hat{\mu}^{(t-1)}) .
\]
The corresponding population quantities are
\[
f(v) := \mathbb{E}Z_1\tanh\langle v, Z_1\rangle \quad \text{and} \quad g(v) := \gamma\mu^*+(1-\gamma)f(v).
\]
Writing $\Delta_{n_{\mathrm{U}}} := f_{n_{\mathrm{U}}} - f$, we have
\begin{equation}
\label{Eq:EMIteration}
g_n(v) = g(v) + (1-\gamma)\Delta_{n_{\mathrm{U}}}(v) + \gamma (\hat{\mu}_{n_\mathrm{L}}-\mu^*).
\end{equation}
For $\omega, \phi > 0$ and $r \geq 1$, we define the following two events that control the terms in the EM iteration involving the unlabeled and labeled data respectively:
\begin{equation}
\begin{aligned}
\label{Eq:Omega1Omega2}
    \Omega_1(\omega)&:=\biggl\{\sup_{v\in\mathbb{R}^d} \|g_n(v)\|\leq 2(r+\sqrt{d})\biggr\}\cap \biggl\{\sup_{\substack{\|v\|\leq 2(r+\sqrt{d})\\ v\neq 0}}\frac{\|\Delta_{n_\mathrm{U}}(v)\|}{\|v\|} \leq \omega\biggr\}\\
    \Omega_2(\phi)&:=\bigl\{\|\hat\mu_{n_{\mathrm{L}}} - \mu^*\|\leq \phi\bigr\}.
    \end{aligned}
\end{equation}
\begin{prop}
\label{Prop:ExceptionalEvents}
There exists $C_r > 0$, depending only on $r$,  such that for any $\delta \in (2e^{-n}, 1]$ and $\omega = C_r\sqrt{\frac{d\log n+\log(1/\delta)}{n_{\mathrm{U}}}}$,  we have 
$\mathbb{P}\bigl(\Omega_1(\omega)^{\mathrm{c}}\bigr) \leq \delta$.
Moreover, for any $\delta \in (0,1]$ and for $\phi = \sqrt{\frac{2d+3\log(1/\delta)}{n_{\mathrm{L}}}}$, we have $\mathbb{P}\bigl(\Omega_2(\phi)^{\mathrm{c}}\bigr) \leq \delta$.
\end{prop}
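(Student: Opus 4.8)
\emph{Overview.} The two assertions are unrelated, so I would prove them separately; each reduces to a tail bound for an explicit random variable or process. Throughout I use the relabelling $Y_i^*\in\{-1,1\}$, so that $Z_i\mid Y_i^*\sim\mathcal N_d(Y_i^*\mu^*,I_d)$ and $Z_iY_i\sim\mathcal N_d(\mu^*,I_d)$ for labelled $i$.

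\emph{The bound $\mathbb{P}(\Omega_2(\phi)^{\mathrm c})\le\delta$.} If $n_{\mathrm L}=0$ then $\hat\mu_{n_{\mathrm L}}=0$ and $\phi=\infty$, so $\Omega_2(\phi)$ holds almost surely; assume $n_{\mathrm L}\ge1$. Then $\hat\mu_{n_{\mathrm L}}-\mu^*\sim\mathcal N_d(0,n_{\mathrm L}^{-1}I_d)$, so $n_{\mathrm L}\|\hat\mu_{n_{\mathrm L}}-\mu^*\|^2\sim\chi^2_d$, and the chi-squared deviation bound \citep[Lemma~1]{LaurentMassart2000} gives $\mathbb P(n_{\mathrm L}\|\hat\mu_{n_{\mathrm L}}-\mu^*\|^2\ge d+2\sqrt{d\log(1/\delta)}+2\log(1/\delta))\le\delta$; since $2\sqrt{d\log(1/\delta)}\le d+\log(1/\delta)$, the event on the left contains $\{\|\hat\mu_{n_{\mathrm L}}-\mu^*\|\ge\phi\}$, which is the claim.

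\emph{The first event in $\Omega_1(\omega)$.} Since $|\tanh|\le1$, $\|f_{n_{\mathrm U}}(v)\|\le n_{\mathrm U}^{-1}\sum_{i>n_{\mathrm L}}\|Z_i\|$ for every $v$, and $\|\hat\mu_{n_{\mathrm L}}\|\le n_{\mathrm L}^{-1}\sum_{i\le n_{\mathrm L}}\|Z_i\|$; as $g_n$ is the corresponding convex combination, $\sup_v\|g_n(v)\|\le n^{-1}\sum_{i=1}^n\|Z_i\|\le r+n^{-1}\sum_{i=1}^n\|G_i\|$ with $G_i:=Z_i-Y_i^*\mu^*\sim\mathcal N_d(0,I_d)$. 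The map $(g_1,\dots,g_n)\mapsto n^{-1}\sum_i\|g_i\|$ is $n^{-1/2}$-Lipschitz, so Gaussian concentration and $\mathbb E\|G_1\|\le\sqrt d$ give, with probability at least $1-\delta$, $n^{-1}\sum_i\|G_i\|\le\sqrt d+\sqrt{2\log(1/\delta)/n}\le\sqrt d+\sqrt2\le r+2\sqrt d$, where I used $\delta>2e^{-n}$ to get $\log(1/\delta)<n$ and $r\ge1$; hence $\sup_v\|g_n(v)\|\le2(r+\sqrt d)$ on this event.

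\emph{The ratio event in $\Omega_1(\omega)$: the main step.} The device is to factor out $\|v\|$: writing $v=su$ with $s=\|v\|\ge0$, $u\in\mathbb S^{d-1}$, and extending $\tanh(0\cdot y)/0:=y$, one has $\Delta_{n_{\mathrm U}}(su)/s=n_{\mathrm U}^{-1}\sum_{i>n_{\mathrm L}}h_{s,u}(Z_i)-\mathbb E\,h_{s,u}(Z_1)$ with $h_{s,u}(z):=\frac{\tanh(s\langle z,u\rangle)}{s}z$. Because $|\tanh(x)|\le|x|$, for every $a\in\mathbb S^{d-1}$ we have $|\langle a,h_{s,u}(z)\rangle|\le|\langle z,u\rangle|\,|\langle a,z\rangle|$, and since $\langle Z_1,u\rangle,\langle Z_1,a\rangle$ are sub-Gaussian with variance proxy $O_r(1)$, the scalar $\langle a,h_{s,u}(Z_1)\rangle$ is sub-exponential with $\psi_1$-norm $O_r(1)$ \emph{uniformly} in $(s,u,a)$. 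I would then (i) bound $\|\Delta_{n_{\mathrm U}}(v)\|/\|v\|$ by $2\max_{a\in\mathcal N}|\langle a,\Delta_{n_{\mathrm U}}(v)\rangle|/\|v\|$ over a $\tfrac14$-net $\mathcal N$ of $\mathbb S^{d-1}$ with $|\mathcal N|\le9^d$ \citep[Lemma~5.2]{Vershynin2012}; (ii) cover $[0,2(r+\sqrt d)]\times\mathbb S^{d-1}$ in $(s,u)$, using the Lipschitz estimates $|\tanh(sy)/s-\tanh(s'y)/s'|\le y^2|s-s'|$ (from $|t\,\mathrm{sech}^2t-\tanh t|\le t^2$) and $|\tanh(sy)-\tanh(sy')|\le s|y-y'|$, so that on the event $\{\max_{i>n_{\mathrm L}}\|Z_i\|\lesssim_r\sqrt{d+\log(n/\delta)}\}$ (a noncentral $\chi^2$ tail and a union bound over $n_{\mathrm U}$ terms) the discretisation error is at most $\omega$ for a net of log-cardinality $O_r(d\log n)$; and (iii) apply Bernstein's inequality to each of the i.i.d.\ sub-exponential averages at the net points, with failure probability $\delta$ divided by the net size, then union bound. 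This gives $\sup_{0<\|v\|\le2(r+\sqrt d)}\|\Delta_{n_{\mathrm U}}(v)\|/\|v\|\lesssim_r\sqrt{\{d\log n+\log(1/\delta)\}/n_{\mathrm U}}$ with probability at least $1-\delta$, so $C_r$ large enough makes the ratio event hold; a final union bound over the three exceptional events (with $\delta$ replaced by $\delta/3$) finishes the proof.

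\emph{Main obstacle.} Everything except part (iii) of the last paragraph is routine; the crux is parts (i)--(ii), namely getting the empirical-process fluctuation down to the $\sqrt{(d\log n)/n_{\mathrm U}}$ scale rather than the $\sqrt{d/n_{\mathrm U}}\cdot\sqrt d$ scale that a crude operator-norm bound on $n_{\mathrm U}^{-1}\sum Z_iZ_i^\top$ would produce. This relies on (a) the rescaling by $\|v\|$, which replaces the $O(d)$-sub-exponential summand $ZZ^\top$ by the $O_r(1)$-sub-exponential scalars $\langle a,h_{s,u}(Z)\rangle$, and (b) the metric entropy of the class $\{h_{s,u}:s\le2(r+\sqrt d),\,u\in\mathbb S^{d-1}\}$ being only $O_r(d\log n)$; together these yield a bound that is multiplicative in $\|v\|$ near the origin, which is exactly the form the subsequent EM contraction analysis needs. (In the degenerate regime where $n_{\mathrm U}$ is so small that $\omega\gtrsim_r1$, and the trivial case $n_{\mathrm U}=0$, the ratio event is handled separately via the crude bound $\|\Delta_{n_{\mathrm U}}(v)\|/\|v\|\le\|n_{\mathrm U}^{-1}\sum_{i>n_{\mathrm L}} Z_iZ_i^\top\|_{\mathrm{op}}+\|\mathbb E Z_1Z_1^\top\|_{\mathrm{op}}$.)
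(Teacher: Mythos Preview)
Your proposal is correct and follows essentially the same route as the paper. For $\Omega_2(\phi)$ you use the identical Laurent--Massart argument; for the first event in $\Omega_1(\omega)$ the paper bounds $n^{-1}\sum_i\|Z_i\|^2$ via a noncentral $\chi^2$ tail (Birg\'e's lemma) rather than your Gaussian Lipschitz concentration on $n^{-1}\sum_i\|G_i\|$, but both give the required $2(r+\sqrt d)$ bound using $\delta>2e^{-n}$; and for the ratio event the paper simply cites \citet[Theorem~4]{WuZhou2019}, whose proof is precisely the rescaling-plus-covering-plus-Bernstein empirical process argument you sketch in (i)--(iii).
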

\begin{proof}
For any $v \in \mathbb{R}^d$,
\begin{align*}
\|g_n(v)\| = \|(1-\gamma)f_{n_\mathrm{U}}(v) + \gamma \hat{\mu}_{n_\mathrm{L}}\| &\leq (1-\gamma)\cdot \frac{1}{n_{\mathrm{U}}}\sum_{i=n_\mathrm{L}+1}^n \|Z_i\| + \frac{\gamma}{n_{\mathrm{L}}} \sum_{i=1}^{n_{\mathrm{L}}} \|Z_i\| \\
&= \frac{1}{n}\sum_{i=1}^{n} \|Z_i\| \leq \biggl(\frac{1}{n}\sum_{i=1}^{n} \|Z_i\|^2\biggr)^{1/2}.
\end{align*}
Since $\sum_{i=1}^{n}\|Z_i\|^2 \sim \chi^2_{nd}(ns^2)$, by  \citet[Lemma~8.1]{Birge2001}, we have with probability at least $1-\delta/2$ that 
\begin{align}
\label{Eq:BirgeBound}
\sup_{v\in\mathbb{R}^d} \|g_n(v)\|^2 
&\leq d + s^2 + 2\sqrt\frac{(d+2s^2)\log(2/\delta)}{n} + \frac{2\log(2/\delta)}{n} \nonumber\\
&\leq 2d+3s^2+\frac{3\log(2/\delta)}{n} \leq 4(r+\sqrt{d})^2.
\end{align}
Also, by a very similar argument as in the proof of~\citet[Theorem~4]{WuZhou2019}, we have with probability at least $1-\delta/2$ that 
\begin{equation}
\label{Eq:WuZhouBound}
    \sup_{\substack{\|v\|\leq 2(r+\sqrt{d})\\ v\neq 0}}\frac{\|\Delta_{n_\mathrm{U}}(v)\|}{\|v\|} \leq C_r\sqrt\frac{d\log n+\log(1/\delta)}{n_{\mathrm{U}}},
\end{equation}
for some $C_r>0$ depending only on $r$. The first claim follows by combining~\eqref{Eq:BirgeBound} and~\eqref{Eq:WuZhouBound}. 

For the second claim, we have $\hat\mu_{n_{\mathrm{L}}} \sim N_d(\mu^*, n_{\mathrm{L}}^{-1} I_d)$. Hence, by \citet[][Lemma~1]{LaurentMassart2000}, we have 
\begin{align*}
\mathbb{P}\bigl(\Omega_2(\phi)^{\mathrm{c}}\bigr) 
&= \mathbb{P}\bigl(n_{\mathrm{L}}\|\hat\mu_{n_{\mathrm{L}}} - \mu^*\|^2 > n_{\mathrm{L}}\phi^2\bigr) \\
&\leq \mathbb{P}\bigl(n_{\mathrm{L}}\|\hat\mu_{n_{\mathrm{L}}} - \mu^*\|^2 >  d+2\sqrt{d\log(1/\delta)} + 2\log(1/\delta) \bigr)\leq \delta,
\end{align*}
as required.
\end{proof}

For any $a\in\mathbb{R}$, $b\in[0,\infty)$ and $\xi\in\mathbb{S}^{d-1}$ that is orthogonal to $\eta$, we define $F(a,b) := \eta^\top f(a\eta + b\xi)$ and $G(a,b):=\|(I_d - \eta\eta^\top)f(a\eta + b\xi)\|$. Note that the distribution of $Z_1$ is orthogonally invariant along the axis $\mu^*$; in other words, if $P \in \mathbb{R}^{d \times d}$ is orthogonal and has $\mu^*$ as an eigenvector with eigenvalue 1, then $PZ_1 \stackrel{\mathrm{d}}{=} Z_1$.  It follows that $f(a\eta + b\xi)$, and hence $F(a,b)$ and $G(a,b)$, do not depend on $\xi$. We remark that
\[
f(\alpha_t\eta+\beta_t\xi_t) = F(\alpha_t,\beta_t)\eta + G(\alpha_t,\beta_t)\xi_{t+1}'
\]
for some $\xi_{t+1}' \in \mathbb{S}^{d-1}$ that is  orthogonal to $\eta$.

Proposition~\ref{p1} controls the magnitude of the component $\beta_t$ of the EM algorithm iterates that is orthogonal to the signal direction $\eta$.  We define $\zeta := \omega\gamma^{-1/2} \wedge \omega^{1/2}$.
\begin{prop} \label{p1}
Assume that $\phi\gamma^{1/2}\leq \omega \leq \min\{1/12, 1/(r+3)\}$ and that $\|\hat{\mu}^{(0)}\| \leq r+3$.  On the event $\Omega_1(\omega)\cap\Omega_2(\phi)$, we have
\[
\limsup_{t\to\infty}\beta_t \leq  60 (\zeta \vee  r\omega).
\]
Moreover, on the same event, if $\beta_{t_0} \leq 60(\zeta\vee r\omega)$ for some $t_0 \in \mathbb{N}_0$, then $\beta_t\leq 60(\zeta\vee r\omega)$ for all $t \geq t_0$.
\end{prop}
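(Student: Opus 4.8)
The plan is to regard Proposition~\ref{p1} as a deterministic statement on the event $\Omega_1(\omega)\cap\Omega_2(\phi)$ (the probabilistic content being separated out in Proposition~\ref{Prop:ExceptionalEvents}) and to reduce everything to a one-dimensional recursion for $\beta_t$. Projecting the EM update $\hat{\mu}^{(t)}=g_n(\hat{\mu}^{(t-1)})$ onto the orthogonal complement of $\eta$, using the decomposition $g_n=g+(1-\gamma)\Delta_{n_{\mathrm{U}}}+\gamma(\hat{\mu}_{n_{\mathrm{L}}}-\mu^*)$ from~\eqref{Eq:EMIteration}, the fact that $g(v)=\gamma\mu^*+(1-\gamma)f(v)$ has $\mu^*$ parallel to $\eta$, and the identity recorded just before the proposition, $(I_d-\eta\eta^\top)f(\alpha_t\eta+\beta_t\xi_t)=G(\alpha_t,\beta_t)\xi_{t+1}'$, one obtains on $\Omega_1(\omega)\cap\Omega_2(\phi)$ the recursion
\[
\beta_{t+1}\le(1-\gamma)G(\alpha_t,\beta_t)+(1-\gamma)\omega\|\hat{\mu}^{(t)}\|+\gamma\phi,
\]
where the middle term uses $\|\Delta_{n_{\mathrm{U}}}(\hat{\mu}^{(t)})\|\le\omega\|\hat{\mu}^{(t)}\|$ (legitimate since $\|\hat{\mu}^{(t)}\|\le2(r+\sqrt d)$ on $\Omega_1(\omega)$) and the last uses $\|\hat{\mu}_{n_{\mathrm{L}}}-\mu^*\|\le\phi$ on $\Omega_2(\phi)$. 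I would then record the needed facts about the population map $f$: by Stein's identity $f(v)=c_1(v)\mu^*+c_2(v)v$ with $c_2(v)=\mathbb{E}[\mathrm{sech}^2\langle Z_1,v\rangle]\in(0,1]$, so $G(\alpha_t,\beta_t)=c_2(\hat{\mu}^{(t)})\beta_t$; the quantitative contraction $1-c_2(v)=\mathbb{E}[\tanh^2\langle Z_1,v\rangle]\gtrsim\|v\|^2\wedge1$, obtained from $\tanh^2(x)\gtrsim\min(x^2,1)$ and the fact that $\langle Z_1,v\rangle$ has Gaussian scale $\|v\|$; and the monotonicity bound $\|f(v)\|\lesssim\|\mu^*\|\vee\|v\|$. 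Together these give $\beta_{t+1}\le(1-\gamma)\bigl(1-\kappa(\|\hat{\mu}^{(t)}\|^2\wedge1)\bigr)\beta_t+(1-\gamma)\omega\|\hat{\mu}^{(t)}\|+\gamma\phi$ for a universal $\kappa>0$.

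Next I would establish an a priori bound $\|\hat{\mu}^{(t)}\|\le B_r$ for all $t$, where $B_r$ depends only on $r$; the hypothesis supplies the base case $\|\hat{\mu}^{(0)}\|\le r+3$, and the inductive step follows from $\|\hat{\mu}^{(t+1)}\|=\|g_n(\hat{\mu}^{(t)})\|\le(1-\gamma)\bigl(\|f(\hat{\mu}^{(t)})\|+\omega\|\hat{\mu}^{(t)}\|\bigr)+\gamma\|\hat{\mu}_{n_{\mathrm{L}}}\|$, using $\|f(v)\|\lesssim\|\mu^*\|\vee\|v\|$, $\|\hat{\mu}_{n_{\mathrm{L}}}\|\le\|\mu^*\|+\phi$, and the smallness hypotheses $\omega\le1/(r+3)$ and $\phi\gamma^{1/2}\le\omega$ to control the perturbation terms. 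Along the way I would note the elementary consequences $\gamma\phi\le\gamma^{1/2}\omega\le\zeta$ and $(1-\gamma)\omega B_r\lesssim_r\omega$, so the additive noise in the recursion is always $\lesssim_r\zeta\vee r\omega$; the precise calibration of constants here is what yields the explicit factor $60$ and the exact form $\zeta\vee r\omega$.

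With the recursion and the a priori bound in hand, the endgame is a monotonicity ``trap'' argument. I would show (a) if $\beta_t\le60(\zeta\vee r\omega)$ then $\beta_{t+1}\le60(\zeta\vee r\omega)$, by substituting into the recursion and checking that at the threshold the contraction deficit $(1-\gamma)\kappa(\|\hat{\mu}^{(t)}\|^2\wedge1)+\gamma$ dominates the noise divided by $\beta_t$ --- this is where the $\|\hat{\mu}^{(t)}\|^2$-dependence of the contraction is made to balance the $\omega\|\hat{\mu}^{(t)}\|$-dependence of the noise, while the factor $1-\gamma$ together with the $\gamma$-term absorbs $\gamma\phi$; and (b) if $\beta_t>60(\zeta\vee r\omega)$ then $\beta_{t+1}<\beta_t$ with a definite decrease. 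Combining these, $\beta_t$ strictly decreases until it first enters $[0,60(\zeta\vee r\omega)]$, after which (a) keeps it there; this gives both $\limsup_{t\to\infty}\beta_t\le60(\zeta\vee r\omega)$ and the ``Moreover'' statement, which is precisely (a) iterated from $t_0$.

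I expect the main obstacle to be controlling the additive noise term $(1-\gamma)\omega\|\hat{\mu}^{(t)}\|$ sharply enough. Bounding $\|\hat{\mu}^{(t)}\|$ only by the crude constant $B_r$ makes this an essentially constant (in $\beta_t$) noise floor, which would force $\beta_t\gtrsim(\omega B_r)^{1/3}$, weaker than the claimed rate. To recover the stated bound one must track $\hat{\mu}^{(t)}$ through the decomposition $\|\hat{\mu}^{(t)}\|^2=\alpha_t^2+\beta_t^2$: when the signal is strong, the parallel component $\alpha_t$ is bounded, so $\|\hat{\mu}^{(t)}\|\lesssim_r1$ and the contraction is correspondingly strong; when the signal is weak, $\alpha_t$ is small, so $\|\hat{\mu}^{(t)}\|$ is comparable to $\beta_t$ and the noise is effectively multiplicative. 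Handling this coupling --- which in practice means running the present estimate in tandem with the analysis of the parallel component $\alpha_t$ --- and carrying the constants carefully enough to land on the explicit threshold $60(\zeta\vee r\omega)$ is the delicate part of the argument.
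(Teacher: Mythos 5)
Your setup — the orthogonal projection of the EM update, the contraction $G(\alpha_t,\beta_t)\leq\beta_t\bigl(1-\kappa(\|\hat{\mu}^{(t)}\|^2\wedge 1)\bigr)$, the a priori bound $\|\hat{\mu}^{(t)}\|\leq r+3$ by induction, and the trap argument at the end — all match the paper's proof in structure. But the step you yourself flag as ``the delicate part'' is a genuine gap, and your proposed resolution would not work as described. The paper's bound on the parallel component (Proposition~\ref{p2}) is proved \emph{using} Proposition~\ref{p1}, so ``running the present estimate in tandem with the analysis of $\alpha_t$'' is circular unless you set up a joint induction; moreover, even granting $\limsup_t|\alpha_t|\lesssim_r\zeta\vee s$, the additive noise $\omega|\alpha_t|$ can be of order $r\omega$ when $s\asymp r$, and balancing that against the cubic contraction alone again yields only a $(r\omega)^{1/3}$-type bound unless you further case-split on the size of $s$.

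The paper's resolution is purely algebraic and needs no information about $\alpha_t$ beyond $|\alpha_t|\leq r+3$. One first splits the noise as $\omega(|\alpha_t|+\beta_t)$ and absorbs the $\omega\beta_t$ part into the multiplicative factor, leaving the recursion $\beta_{t+1}\leq\beta_t(1-\gamma)\bigl(1+\omega-\gamma-\tfrac{\alpha_t^2+\beta_t^2}{12}\bigr)+\gamma\phi+\omega|\alpha_t|$ (in the regime $\alpha_t^2+\beta_t^2\leq1$, $\gamma\leq1/2$; the other regime gives geometric decay down to $O(r\omega)$). The key trick is then to complete the square: from $0\leq\frac{3}{\beta_\infty}\bigl(\omega-|\alpha_t|\beta_\infty/6\bigr)^2$ one gets $\omega|\alpha_t|\leq 3\omega^2/\beta_\infty+\alpha_t^2\beta_\infty/12$, so the problematic linear-in-$|\alpha_t|$ noise is traded for a term absorbed by the $-\alpha_t^2\beta_\infty/12$ contraction plus the residual $3\omega^2/\beta_\infty$. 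This closes the recursion in $\beta_\infty$ alone, giving $\beta_\infty(\gamma-\omega+\beta_\infty^2/12)\leq\gamma\phi+3\omega^2/\beta_\infty$, from which the two cases $\gamma<2\omega$ and $\gamma\geq2\omega$ yield $\beta_\infty\lesssim\omega^{1/2}$ and $\beta_\infty\lesssim\omega\gamma^{-1/2}$ respectively — i.e.\ exactly $\zeta$. The same completion-of-the-square is what makes the ``Moreover'' induction go through when $\beta_t>36(\zeta\vee r\omega)$. Without this (or an equivalent device), your argument does not reach the stated rate.
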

\begin{proof}
We first claim that on the event  $\Omega_1(\omega)\cap\Omega_2(\phi)$, we have $\|\hat{\mu}^{(t)}\|\leq r+3$ for all $t\in\mathbb{N}_0$. The case $t=0$ is true by the assumption on the initializer $\hat{\mu}^{(0)}$, and if the claim holds for $t\in\mathbb{N}_0$, then since $2(r+\sqrt{d}) \geq 2(r+1) \geq r+3$, we have on $\Omega_1(\omega)\cap\Omega_2(\phi)$ that
\begin{align*}
\|\hat{\mu}^{(t+1)}\| &\leq (1-\gamma)\bigl\{|F(\alpha_{t},\beta_{t})| + G(\alpha_{t},\beta_{t}) + \|\Delta_{n_{\mathrm{U}}}(\hat{\mu}^{(t)})\|\bigr\} + \gamma(s+\phi)\\
&\leq s+2\sqrt{2/\pi}+\omega\|\hat{\mu}^{(t)}\|+\gamma \phi \leq r + 2 + \frac{\|\hat{\mu}^{(t)}\|}{r+3} \leq r+3,
\end{align*}
where the second inequality uses \citet[Lemma~5(5)]{WuZhou2019}.  
Moreover, from~\eqref{Eq:EMIteration}, we have on $\Omega_1(\omega)\cap\Omega_2(\phi)$ that for $t \in \mathbb{N}$,
\begin{align}
\beta_{t+1} &= \bigl\|(I_d-\eta\eta^\top)\bigl\{(1-\gamma)\bigl(f(\hat{\mu}^{(t)})+\Delta_{n_{\mathrm{U}}}(\hat{\mu}^{(t)})\bigr)+\gamma\hat{\mu}_{n_{\mathrm{L}}}\bigr\}\bigr\|\nonumber\\
&\leq (1-\gamma)\bigl\{G(\alpha_t,\beta_t) + \omega(|\alpha_t|+\beta_t)\bigr\}+\gamma\phi\nonumber\\
&\leq \beta_t(1-\gamma)\biggl\{1+\omega-\frac{(\alpha_t^2+\beta_t^2)\wedge 1}{6}\biggr\} + \gamma\phi + \omega|\alpha_t|,\label{Eq:BoundonBeta}
\end{align}
where the final bound uses \citet[Lemma~5(8)]{WuZhou2019}. 
If $\alpha_t^2+\beta_t^2> 1$ or $\gamma>1/2$, then using the fact that $\omega\leq 1/12$, we have from \eqref{Eq:BoundonBeta} that
\begin{equation}
\label{Eq:BetaRecursion1}
\beta_{t+1}\leq \frac{11}{12}\beta_t+\gamma\phi + (r+3)\omega \leq  \frac{11}{12}\beta_t + (r+4)\omega. 
\end{equation}
On the other hand, if $\alpha_t^2+\beta_t^2\leq 1$ and $\gamma\leq 1/2$, then 
\begin{align}
\label{Eq:BetaRecursion2}
\beta_{t+1} & \leq \beta_t\biggl(1+\omega-\gamma - \frac{\alpha_t^2+\beta_t^2}{12}\biggr) + \gamma\phi+\omega|\alpha_t|.
\end{align}
Note that the right-hand side of~\eqref{Eq:BetaRecursion1} is increasing in $\beta_t$ and the right-hand side of~\eqref{Eq:BetaRecursion2} is increasing in $\beta_t$ for $\alpha_t^2+\beta_t^2\leq 1$ and $\gamma\leq 1/2$.  Combining~\eqref{Eq:BetaRecursion1} and~\eqref{Eq:BetaRecursion2}, denoting $\beta_\infty:=\limsup_{t\to\infty}\beta_t$ and using the fact that $0 \leq \frac{3}{\beta_\infty}(\omega - |\alpha_t|\beta_\infty/6)^2 = 3\omega^2/\beta_\infty - \omega|\alpha_t|+\alpha_t^2\beta_\infty/12$, we have
\begin{equation}
\label{Eq:BetaInf}
\beta_\infty \leq \max\biggl\{\frac{11}{12}\beta_\infty+(r+4)\omega,\, \beta_\infty \biggl(1+\omega-\gamma-\frac{\beta_\infty^2}{12}\biggr)+\gamma\phi+\frac{3\omega^2}{\beta_\infty}\biggr\}.
\end{equation}
From the first term in the maximum in~\eqref{Eq:BetaInf}, we obtain
\begin{equation}
\label{Eq:BetaBound0}
\beta_\infty \leq (1r+38)\omega \leq 60r\omega.
\end{equation}
From the second term in the maximum in~\eqref{Eq:BetaInf}, we obtain
\begin{equation}
\label{Eq:BetaInfSecondTerm}
\beta_\infty \biggl(\gamma-\omega+\frac{\beta_\infty^2}{12}\biggr) \leq \gamma\phi + \frac{3\omega^2}{\beta_\infty}.
\end{equation}
If $\gamma < 2\omega$, then from~\eqref{Eq:BetaInfSecondTerm},
\begin{equation}
\label{Eq:BetaBound1}
\beta_\infty\leq 5\omega^{1/2} \leq 5\sqrt{2}\zeta,
\end{equation}
since otherwise we would have that the left-hand side would be at least $(-5+125/12)\omega^{3/2}$ and the right-hand side would at most $(\sqrt{2}+3/5)\omega^{3/2}$, contradicting the inequality.  On the other hand, if $\gamma \geq 2\omega$, then we derive from~\eqref{Eq:BetaInfSecondTerm} that 
\[
\beta_\infty^2 - 2\phi \beta_\infty - \frac{6\omega^2}{\gamma} \leq 0.
\]
Solving this inequality, we find that
\begin{equation}
\label{Eq:BetaBound2}
\beta_\infty \leq \phi + \sqrt{\phi^2+6\omega^2/\gamma} \leq 4\omega\gamma^{-1/2} = 4\zeta.
\end{equation}
The first claim of the proposition follows by combining~\eqref{Eq:BetaInf},~\eqref{Eq:BetaBound0}, \eqref{Eq:BetaBound1} and~\eqref{Eq:BetaBound2}. We now prove the second claim by induction on $t$.
The base case $t=t_0$ is true by assumption, so we assume that $\beta_t\leq 60(\zeta\vee r\omega)$ for some $t \geq t_0$.  
Again we consider two cases.  If $\beta_t \leq 36(\zeta\vee r\omega)$, then from~\eqref{Eq:BoundonBeta} and using that $|\alpha_t| \leq \|\hat{\mu}^{(t)}\| \leq r+3$ for $t \geq 2$,
\[
\beta_{t+1}\leq \beta_t(1+\omega) + \gamma\phi+\omega|\alpha_t| \leq 44(\zeta\vee r\omega),
\]
as desired.  On the other hand, if $\beta_t > 36(\zeta\vee r\omega)$, then combining~\eqref{Eq:BetaRecursion1} and~\eqref{Eq:BetaRecursion2}, we obtain that
\begin{align*}
\beta_{t+1} \leq \max\biggl\{60(\zeta\vee r\omega),\, \beta_t \biggl(1+\omega-\gamma-\frac{\beta_t^2}{12}\biggr)+\gamma\phi+\frac{3\omega^2}{\beta_t}\biggr\}.
\end{align*}
It suffices to show the second term in the maximum is no larger than $\beta_t$. To this end, if $\gamma\leq 2\omega$, then $\zeta^2\leq \omega\leq 2\zeta^2$, and so
\begin{align*}
\beta_t \biggl(1&+\omega-\gamma-\frac{\beta_t^2}{12}\biggr)+\gamma\phi+\frac{3\omega^2}{\beta_t} \leq \beta_t + 2\zeta^2\beta_t - \frac{\beta_t^3}{12} + \gamma^{1/2}\omega + \frac{3\omega^2}{\beta_t}\\
&\leq \beta_t + 120(\zeta\vee r\omega)\zeta^2 - \frac{36^3(\zeta\vee r\omega)^3}{12} + 4\zeta^3 + \frac{\zeta^3}{3} \leq \beta_t.
\end{align*}
On the other hand, if $\gamma > 2\omega$, then $\zeta = \omega\gamma^{-1/2} \geq \phi$, and so
\begin{align*}
\beta_t\biggl(1+\omega-\gamma-\frac{\beta_t^2}{12}\biggr)+\gamma\phi+\frac{3\omega^2}{\beta_t} &\leq \beta_t - \frac{\gamma\beta_t}{2} + \gamma\phi +  \frac{3\omega^2}{\beta_t} \\
&\leq \beta_t - 18\gamma\zeta + \gamma \zeta + \frac{\omega^2}{12\zeta} \leq \beta_t,
\end{align*}
as desired, which completes the induction.
\end{proof}

The following result bounds the magnitude of the signal component, $\alpha_t$, of the EM iterates.

\begin{prop}\label{p2}
Assume that $\phi\gamma^{1/2}\leq \omega\leq \min\{1/12, 1/(r+3)\}$ and that $\|\hat{\mu}^{(0)}\|\leq r+3$.  
Then there exists $C_r > 0$, depending only on $r$, such that on the event $\Omega_1(\omega)\cap \Omega_2(\phi)$, we have
\[
\limsup_{t\to\infty}|\alpha_t| \leq C_r(\zeta \vee s).
\]
\end{prop}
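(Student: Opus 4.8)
The plan is to derive, and then solve, a one-dimensional self-consistent inequality for $\alpha_\infty := \limsup_{t\to\infty}|\alpha_t|$, working throughout on the event $\Omega_1(\omega)\cap\Omega_2(\phi)$ and following the pattern of the proof of Proposition~\ref{p1}. Two facts are available for free on this event: the bound $\|\hat\mu^{(t)}\|\le r+3$ for all $t\in\mathbb{N}_0$ established inside the proof of Proposition~\ref{p1} (so $|\alpha_t|\le r+3$ for every $t$), and Proposition~\ref{p1} itself, which supplies a $t_0\in\mathbb{N}$ with $\beta_t\le 60(\zeta\vee r\omega)=:\bar\beta\le 60r\zeta$ for all $t\ge t_0$. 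It therefore suffices to study the recursion for $|\alpha_t|$ with $t\ge t_0$. Projecting the EM iteration~\eqref{Eq:EMIteration} onto $\eta$ and using $f(\alpha_t\eta+\beta_t\xi_t)=F(\alpha_t,\beta_t)\eta+G(\alpha_t,\beta_t)\xi_{t+1}'$ gives
\[
\alpha_{t+1} = \gamma s + (1-\gamma)F(\alpha_t,\beta_t) + (1-\gamma)\,\eta^\top\Delta_{n_{\mathrm{U}}}(\hat\mu^{(t)}) + \gamma\,\eta^\top(\hat\mu_{n_{\mathrm{L}}}-\mu^*),
\]
and on the event the last two terms are bounded in modulus by $\omega(|\alpha_t|+\beta_t)$ and $\phi$ respectively. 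It is important to retain these in this sharper, multiplicative form (rather than crudely bounding $\omega\bar\beta$ and $\gamma\phi$ by $\zeta$) when passing to the limit, since this is what forces the final bound to come out at scale $s\vee\zeta$.

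The heart of the proof is to control $|F(\alpha_t,\beta_t)|$ via the sharp properties of the signal-direction population EM map recorded in~\citet[Lemma~5]{WuZhou2019}: $F(a,b)$ shares the sign of $a$, $F(\cdot,0)$ is non-decreasing on $[0,\infty)$ with $F(\pm s,0)=\pm s$ (so that $\mu^*$ is a fixed point of the noiseless, fully unlabelled EM map), and $F$ obeys one-step bounds that yield a genuine contraction (factor $\le 11/12$) once $\alpha_t^2+\beta_t^2>1$ and, for $\alpha_t^2+\beta_t^2\le 1$, a bound in which the map expands by at most the factor $1+s^2$ near the origin while the cubic term of $\tanh$ already contributes a negative decrement of order $|\alpha_t|^3\wedge|\alpha_t|$. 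Feeding these into the displayed recursion, splitting — exactly as in the proof of Proposition~\ref{p1} — into the case $\alpha_t^2+\beta_t^2>1$ and the case $\alpha_t^2+\beta_t^2\le 1$ with sub-cases $\gamma<2\omega$ (where $\zeta^2\asymp\omega$) and $\gamma\ge2\omega$ (where $\zeta\ge\phi$ and $\omega/\gamma\le 1/2$), and passing to the $\limsup$ while using monotonicity of $F(\cdot,0)$ to replace $|F(\alpha_t,\beta_t)|$ by $F(\alpha_\infty,\bar\beta)$ up to a vanishing error, reduces everything to a self-consistent inequality of the schematic form
\[
\alpha_\infty \le \max\Bigl\{\tfrac{11}{12}\alpha_\infty + C_r(s\vee\zeta),\ \ \alpha_\infty(1+\omega+s^2-\gamma) - \kappa\alpha_\infty^3 + C_r\bigl(\gamma s + \omega(\alpha_\infty\vee\bar\beta)+\gamma\phi\bigr)\Bigr\}.
\]
The first branch gives $\alpha_\infty\lesssim_r s\vee\zeta$ at once; the second is handled as in~\eqref{Eq:BetaInf}--\eqref{Eq:BetaBound2}, noting that the cubic term overcomes the $s^2\alpha_\infty$ expansion at scale $\alpha_\infty\sim s$, and using $\gamma\ge2\omega$ to turn $(\gamma-\omega)\alpha_\infty$ into a useful linear decrement, or $\gamma<2\omega$ together with $\omega\asymp\zeta^2$ to absorb the (then multiplicatively small) additive error into the cubic term. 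Either way one obtains $\alpha_\infty\le C_r(s\vee\zeta)$.

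The main obstacle is that, in contrast with the orthogonal component $\beta_t$ treated in Proposition~\ref{p1}, the unlabelled population EM map does \emph{not} contract in the signal direction near the origin when $s$ is small — its slope there is $1+s^2>1$ — so there is no linear contraction to lean on, and a naive argument pitting only the cubic saturation against an additive error of size $\zeta$ would yield the too-weak bound $(s\vee\zeta)^{1/3}$. Avoiding this forces one to exploit simultaneously the cubic saturation of $\tanh$, the smallness of $\beta_t$ supplied by Proposition~\ref{p1}, the $-\gamma$ drift from the labelled-data term, and the fact that the $\eta$-direction error is only of size $\omega\|\hat\mu^{(t)}\|+\gamma\phi$ — hence multiplicatively small once $\hat\mu^{(t)}$ is itself small — rather than of the fixed size $\zeta$. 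Verifying that each of the contributions $\gamma s$, $\omega|\alpha_t|$, $\omega\beta_t$ and $\gamma\phi$ is genuinely $\lesssim_r s\vee\zeta$ across all regimes of $(\gamma,s,\omega)$, and that the case analysis closes, is the delicate bookkeeping step.
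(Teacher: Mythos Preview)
Your strategy is essentially the paper's: project~\eqref{Eq:EMIteration} onto $\eta$, bound the noise terms by $\omega(|\alpha_t|+\beta_t)+\gamma\phi$, invoke Proposition~\ref{p1} for $\beta_\infty\lesssim_r\zeta$, pass to a self-consistent inequality for $\alpha_\infty$, and resolve it by the $\gamma\lessgtr\text{const}\cdot\omega$ case split. Two points of execution differ and are worth flagging.

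First, the two-branch split $\alpha_t^2+\beta_t^2>1$ versus $\le 1$ with an ``$11/12$ contraction'' in the former is lifted from the $G$-analysis in Proposition~\ref{p1}; there is no such statement for $F$ in \citet[Lemma~5]{WuZhou2019}, and the paper does not use this split. What you actually need is an \emph{upper} bound $F(\alpha,0)\le\alpha(1+s^2)-\kappa\alpha^3$ uniformly on $[0,r+3]$; this is not in Lemma~5 but follows from \citet[Lemma~3(4)]{WuZhou2019} (equivalently the paper's Lemma~\ref{Lemma:q}). With that bound in hand the first branch is superfluous.

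Second, and more substantively, the paper organises the limiting inequality differently: it divides~\eqref{Eq:alphainfinity} by $\alpha_\infty$ and works with $q(\alpha):=F(\alpha,0)/\alpha$, using the fixed-point identity $q(s)=1$ and $q'(\alpha)\le -c_r\alpha$ to integrate \emph{from $s$} rather than from $0$. This yields $q(\alpha_\infty)\le 1-\tfrac{c_r}{2}(\alpha_\infty^2-s^2)$, so the $s^2$ expansion and the cubic decrement are already packaged together as the single term $\alpha_\infty^2-s^2$; one then argues by contradiction that $\alpha_\infty\le 4s+C_r\zeta$, splitting only on $\gamma\le 4\omega$ versus $\gamma>4\omega$. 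Your route (expanding $q$ from $0$ and then balancing the $+s^2\alpha_\infty$ term against $-\kappa\alpha_\infty^3$) reaches the same conclusion but requires the extra bookkeeping you describe; the $q(s)=1$ trick is what lets the paper avoid that balancing step entirely.
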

\begin{proof}
By definition of $\alpha_{t+1}$ and~\eqref{Eq:EMIteration}, we have for every $t \in \mathbb{N}_0$ that
\[
\alpha_{t+1} = \eta^{\top} \bigl\{(1-\gamma)\bigl(f(\hat\mu^{(t)}) + \Delta_{n_{\mathrm{U}}}(\hat\mu^{(t)})\bigr)+\gamma \hat\mu_{n_{\mathrm{L}}} \bigr\}.
\]
Thus, by the first claim in the proof of Proposition~\ref{p1}, we have on the event $\Omega_1(\omega)\cap\Omega_2(\phi)$ that
\begin{equation}
\label{Eq:alphatplusone}
\bigl|\alpha_{t+1} - (1-\gamma) F(\alpha_t,\beta_t)-\gamma s\bigr| \leq  (1-\gamma)\omega(|\alpha_t|+\beta_t) + \gamma\phi
\end{equation}
for every $t \in \mathbb{N}_0$.  From~\citet[Lemma~5(1) and Lemma~5(7)]{WuZhou2019}, $\alpha \mapsto F(\alpha,\beta)$ is an increasing and odd function satisfying $|F(\alpha,\beta) - F(\alpha,0)|\leq (1+s^2)|\alpha|\beta^2$ for every $\alpha, \beta \in \mathbb{R}$.  
Hence, by~\eqref{Eq:alphatplusone}, we have on $\Omega_1(\omega)\cap\Omega_2(\phi)$ that
\begin{equation}
\label{Eq:AlphaUpdate}
|\alpha_{t+1}| \leq (1-\gamma)\Bigl\{F(|\alpha_t|,0) + (1+s^2)|\alpha_t|\beta_t^2 + \omega(|\alpha_t|+\beta_t)\Bigr\} + \gamma(s+\phi).
\end{equation}
Note the right-hand side of \eqref{Eq:AlphaUpdate} is increasing in $|\alpha_t|$. Define $\alpha_\infty := \limsup_{t\to\infty} |\alpha_t|$, so that $\alpha_\infty \leq r+3$ on $\Omega_1(\omega)\cap\Omega_2(\phi)$, again by the first claim in the proof of Proposition~\ref{p1}.  We may also assume that $\alpha_\infty > s$, because otherwise the result is clear. Since  $\alpha \mapsto F(\alpha,0)$ is continuous, we have from~\eqref{Eq:AlphaUpdate} that on $\Omega_1(\omega)\cap\Omega_2(\phi)$,
\begin{equation}
\label{Eq:alphainfinity}
\alpha_\infty \leq (1-\gamma)\bigl\{F(\alpha_\infty,0) + (1+r^2)\alpha_\infty\beta_\infty^2 + \omega(\alpha_\infty+\beta_\infty)\bigr\} + \gamma(s+\phi),
\end{equation}
where we recall that $\beta_\infty := \limsup_{t \rightarrow \infty} \beta_t$.  Define $q:[0,\infty) \rightarrow \mathbb{R}$ by
\begin{equation}
\label{Eq:q}
q(\alpha):= \left\{ \begin{array}{ll} F(\alpha,0)/\alpha & \mbox{if $\alpha \neq 0$} \\
1 + s^2 & \mbox{if $\alpha = 0$.} \end{array} \right.
\end{equation}
By Lemma~\ref{Lemma:q}, we have $q(s)=1$ (which confirms that $\mu_*$ is a fixed point of the population EM iteration), and that $q'(\alpha) \leq -c_r \alpha$ for all $\alpha \in (0,r]$, where $c_r\in(0,1]$ depends only on $r$.  Thus, dividing both sides of~\eqref{Eq:alphainfinity} by $\alpha_\infty$, we have
\begin{align}
\label{Eq:WaitingForContradiction}
1 &\leq (1-\gamma)\biggl\{q(s) + \int_s^{\alpha_\infty} q'(\alpha) \, d\alpha  + (1+r^2)\beta_\infty^2 + \omega\biggl(1+\frac{\beta_\infty}{\alpha_\infty}\biggr)\biggr\} + \frac{\gamma(s+\phi)}{\alpha_\infty} \nonumber \\
&\leq (1-\gamma)\biggl\{1-\frac{c_r}{2}(\alpha_\infty^2-s^2) + (1+r^2)\beta_\infty^2 + \omega\biggl(1+\frac{\beta_\infty}{\alpha_\infty}\biggr)\biggr\} + \frac{\gamma(s+\phi)}{\alpha_\infty}.
\end{align}
Now $\beta_\infty \leq 60(1+r)\zeta$ by Proposition~\ref{p1}.  We now claim that $\alpha_\infty\leq 4s+120c_r^{-1/2}(1+r)^2\zeta$.  Indeed, assuming the contrary, we would have $c_r\alpha_\infty^2/4 > c_r s^2/2+(1+r^2)\beta_\infty^2$ and $\beta_\infty/\alpha_\infty < 1$. Hence from~\eqref{Eq:WaitingForContradiction}, we have
\begin{align*}
1 &\leq (1-\gamma)\biggl(1-\frac{c_r\alpha_\infty^2}{4}\biggr) + \frac{\gamma s}{\alpha_\infty} + \biggl(2+\frac{\gamma^{1/2}}{\alpha_\infty}\biggr)\omega.
\end{align*}
We consider two cases. First, if $\gamma \leq 4\omega$, then $2\zeta \geq \omega^{1/2} \geq \gamma^{1/2}/2$ and hence
\begin{align*}
&(1-\gamma)\biggl(1-\frac{c_r\alpha_\infty^2}{4}\biggr) + \frac{\gamma s}{\alpha_\infty} + \biggl(2+\frac{\gamma^{1/2}}{\alpha_\infty}\biggr)\omega \\
& \leq \max\biggl\{1-\frac{c_r\alpha_\infty^2}{4}, 0\biggr\} + \frac{\gamma}{4} + 3\omega \leq \max\biggl\{1-\zeta^2, \frac{\gamma}{4}+3\omega\biggr\} < 1,
\end{align*}
a contradiction. Second, if $\gamma > 4\omega$, then $\zeta = \omega\gamma^{-1/2}$ and
\[
(1-\gamma)\biggl(1-\frac{c_r\alpha_\infty^2}{4}\biggr) + \frac{\gamma s}{\alpha_\infty} + \biggl(2+\frac{\gamma^{1/2}}{\alpha_\infty}\biggr)\omega \leq 1-\gamma + \frac{\gamma}{4} +2\omega + \frac{\gamma\zeta}{\alpha_\infty} < 1,
\]
again a contradiction.  This establishes the claimed upper bound on $\alpha_\infty$.
\end{proof}

Recall the definitions $L(\mu,\mu^*) = \|\mu-\mu^*\|\wedge \|\mu+\mu^*\|$.  Our next result shows that if $\alpha_t$ ever becomes sufficiently large, then improved bounds can be derived on the limiting behaviour of $\alpha_t$, $\beta_t$ and $L(\hat{\mu}^{(t)},\mu^*)$.

\begin{prop}\label{p3}
Assume that $\gamma \in [0, 1/2)$.  Given any $c>0$, there exists $C,c_1 > 0$, depending only on $r$ and $c$, such that if $|\alpha_{t_0}|\geq c s$, $\beta_{t_0}\leq 60(\zeta\vee r\omega)$ for some iteration $t_0$, and $\phi\gamma^{1/2}\leq \omega\leq c_1$ and $s\geq C\zeta$,
then on the event $\Omega_1(\omega)\cap \Omega_2(\phi)$, we  have
\begin{align}
\limsup_{t\to\infty}|\alpha_t -s | &\lesssim_{r,c} \frac{\omega}{s}\wedge \frac{\omega}{\gamma^{1/2}},\label{abo} \\
\limsup_{t\to\infty}\beta_t &\lesssim_{r,c}  \frac{\omega}{s}\wedge \frac{\omega}{\gamma^{1/2}},\label{bbo} \\
\limsup_{t\to\infty}L(\hat{\mu}^{(t)},\mu^*) &\lesssim_{r,c} \frac{\omega}{s}\wedge \frac{\omega}{\gamma^{1/2}}.\label{cbo}
\end{align}
\end{prop}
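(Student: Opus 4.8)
The plan is to treat the EM update as a perturbed contraction toward the fixed point $\mu^*$, using the decomposition $\hat\mu^{(t)} = \alpha_t\eta + \beta_t\xi_t$ from~\eqref{Eq:OrthognoalDecomposition} and the recursions~\eqref{Eq:alphatplusone} and~\eqref{Eq:BoundonBeta}. By the oddness of $\alpha \mapsto F(\alpha,\beta)$ and the sign-invariance of $L$, we may take $\alpha_{t_0} > 0$. Writing $h(\alpha) := (1-\gamma)F(\alpha,0) + \gamma s$ for the population one-step map of the signal coordinate when the orthogonal component vanishes, \eqref{Eq:alphatplusone} and $|F(\alpha,\beta) - F(\alpha,0)| \le (1+s^2)|\alpha|\beta^2$ (from~\citet[Lemma~5(7)]{WuZhou2019}) give $\alpha_{t+1} = h(\alpha_t) + e_t$ with $|e_t| \le E_t := (1-\gamma)(1+s^2)|\alpha_t|\beta_t^2 + (1-\gamma)\omega(|\alpha_t| + \beta_t) + \gamma\phi$, while~\eqref{Eq:BoundonBeta} is a companion recursion for $\beta_t$. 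The structural facts about $h$ that I would use: $h(s) = s$, since $q(s) = 1$ by Lemma~\ref{Lemma:q}; $h(\alpha) \le s + 2\sqrt{2/\pi} \le r+2$ for all $\alpha \ge 0$ by~\citet[Lemma~5(5)]{WuZhou2019}; for $\alpha \in (0,s)$ one has $q(\alpha) \ge 1$, so $h$ moves $\alpha$ strictly upward toward $s$ without overshoot; and for $\alpha \ge s$ one has $F_\alpha(u,0) = q(u) + uq'(u) \in [0,\,1-c_rs^2]$ for $u \in [s,\alpha]$ by Lemma~\ref{Lemma:q}, so $0 \le h(\alpha) - s \le (1-\gamma)(1-c_rs^2)(\alpha-s)$, a contraction with gap $\rho := 1 - (1-\gamma)(1-c_rs^2) \gtrsim_r \gamma\vee s^2$.

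I would then confine the iterates to an invariant region on $\Omega_1(\omega)\cap\Omega_2(\phi)$. A joint induction on $t \ge t_0$, mirroring the second claim of Proposition~\ref{p1} and using $\beta_{t_0} \le 60(\zeta\vee r\omega)$, keeps $\beta_t \le 60(\zeta\vee r\omega)$, $\|\hat\mu^{(t)}\| \le r+3$, and (from $h \le r+2$ and $E_t \lesssim_r \omega$) $\alpha_t \le r+2$. The harder half is the lower bound $\alpha_t \ge c's$ for some $c' = c'(r,c) > 0$: starting from $\alpha_{t_0} \ge cs$, the only way out the bottom is to pass through the corridor $\alpha \in [c's, s/2]$, but there $h(\alpha) - \alpha \gtrsim_{r,c} s^3 \vee \gamma s$ while $E_t \lesssim_r s\omega + \omega\gamma^{1/2}$, and the hypothesis $s \ge C\zeta$ forces the restoring drift to dominate $E_t$ for $C$ large; above $s/2$ one uses $\alpha_{t+1} \ge s/2 - E_t \ge c's$ since $E_t \lesssim_r \omega \le s/C$. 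I expect this to be the main obstacle: it is where $s \ge C\zeta$, $\phi\gamma^{1/2}\le\omega$ and $\omega\le c_1$ must interlock across the regimes $\gamma \lessgtr \omega$ and $\gamma \lessgtr s^2$, fixing how large $C$ and how small $c_1$ must be.

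With the iterates confined to $[c's, r+2]$, I would iterate the perturbed contraction. Using $|\alpha_t| \le |\alpha_t - s| + s$ inside $E_t$ gives $|\alpha_{t+1}-s| \le \bigl(1-\rho+(1+r^2)\beta_t^2+\omega\bigr)|\alpha_t-s| + \bigl((1+r^2)s\beta_t^2 + \omega s + \omega\beta_t + \gamma\phi\bigr)$; since $\beta_t^2 \lesssim_r \zeta^2 \le \omega$ and $\omega \le \rho/2$ (both from $s \ge C\zeta$ with $C$ large, verified per regime), the coefficient is at most $1-\rho/2$, and passing to the limit yields $\limsup_t|\alpha_t-s| \lesssim_r \rho^{-1}\bigl(s\beta_\infty^2 + \omega s + \omega\beta_\infty + \gamma\phi\bigr)$ with $\beta_\infty := \limsup_t\beta_t \le 60(\zeta\vee r\omega)$ by Proposition~\ref{p1}. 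Since $\beta_\infty^2 \lesssim_r \omega$, $\omega\beta_\infty \lesssim_r \omega s/C$ and $\gamma\phi \le \omega\gamma^{1/2}$, a short case split on whether $\gamma\vee s^2$ is $\gamma$ or $s^2$ gives $\limsup_t|\alpha_t-s| \lesssim_r \frac{\omega}{s}\wedge\frac{\omega}{\gamma^{1/2}}$, which is~\eqref{abo}; in particular it is $\lesssim_r \zeta \le s/C$, so $\alpha_t \ge s/2$ for large $t$. Feeding $\alpha_t^2 \ge s^2/4$ into~\eqref{Eq:BoundonBeta} gives $(\alpha_t^2+\beta_t^2)\wedge 1 \gtrsim_r s^2$ and hence $\beta_{t+1} \le (1-\rho')\beta_t + \gamma\phi + \omega|\alpha_t|$ with $\rho' \gtrsim_r \gamma\vee s^2$ (the contraction coming from the factor $1-\gamma$ when $\gamma$ dominates and from the term $\tfrac{(\alpha_t^2+\beta_t^2)\wedge1}{6}$ when $s^2$ dominates, using $\omega \ll s^2$ there); the same bookkeeping gives $\limsup_t\beta_t \lesssim_r \rho'^{-1}(\gamma\phi + \omega s) \lesssim_r \frac{\omega}{s}\wedge\frac{\omega}{\gamma^{1/2}}$, which is~\eqref{bbo}. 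Finally~\eqref{cbo} follows immediately: since $\alpha_t > 0$ eventually, $L(\hat\mu^{(t)},\mu^*) \le \|\hat\mu^{(t)}-\mu^*\| = \{(\alpha_t-s)^2+\beta_t^2\}^{1/2} \le |\alpha_t-s| + \beta_t$, so $\limsup_t L(\hat\mu^{(t)},\mu^*) \le \limsup_t|\alpha_t-s| + \limsup_t\beta_t \lesssim_r \frac{\omega}{s}\wedge\frac{\omega}{\gamma^{1/2}}$.
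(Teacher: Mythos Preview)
Your strategy is correct and leads to the stated bounds.  The paper takes a related but distinct route for~\eqref{abo}: rather than treat $|\alpha_t-s|$ directly as a perturbed contraction, it sandwiches $\alpha_t$ between two auxiliary monotone sequences $\alpha_t^-\le\alpha_t\le\alpha_t^+$ driven by the one-sided maps $H(\alpha)=(1-\gamma)\{F(\alpha,0)+c'\omega\alpha\}+\gamma(s+\phi)$ and $L(\alpha)=(1-\gamma)\{F(\alpha,0)-c'\omega\alpha\}+\gamma\max(s-\phi,s/2)$, invokes a fixed-point lemma for increasing concave maps (Lemma~\ref{Lemma:H}) to locate the limits $\alpha^\pm$, and then bounds $|\alpha^\pm-s|$ by solving the fixed-point equations through $q$.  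This envelope device sidesteps the need for a two-sided contraction estimate on $h$, at the cost of introducing Lemma~\ref{Lemma:H}.  Your direct contraction argument is more elementary and shorter, but requires confining the iterates first and doing the $\gamma\lessgtr s^2$ bookkeeping explicitly at the end; the paper's treatment of~\eqref{bbo} and~\eqref{cbo} matches yours.

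One point that needs tightening: you establish the quantitative contraction $0\le h(\alpha)-s\le(1-\gamma)(1-c_rs^2)(\alpha-s)$ only for $\alpha\ge s$, yet in the next paragraph you apply $|\alpha_{t+1}-s|\le(1-\rho+\cdots)|\alpha_t-s|+\cdots$ on the entire confined interval $[c's,r+2]$.  The qualitative statement that ``$h$ moves $\alpha$ strictly upward toward $s$ without overshoot'' for $\alpha<s$ does not by itself deliver a contraction with rate $\rho$.  You need the matching lower-side estimate: for $\alpha\in[c's,s)$, write $s-F(\alpha,0)=(s-\alpha)-\alpha\bigl(q(\alpha)-1\bigr)$ and use Lemma~\ref{Lemma:q} to get $q(\alpha)-1=\int_\alpha^s(-q'(u))\,du\ge c_r(s^2-\alpha^2)/2$, so that $\alpha\bigl(q(\alpha)-1\bigr)\ge(c_rc'/2)s^2(s-\alpha)$ once $\alpha\ge c's$, whence $s-h(\alpha)\le(1-\gamma)\bigl(1-(c_rc'/2)s^2\bigr)(s-\alpha)$.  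With this in hand your $\rho\gtrsim_{r,c}\gamma\vee s^2$ holds two-sidedly and the rest of the argument goes through as written.
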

\begin{proof}
By flipping the sign of $\mu^*$ if necessary, we may assume without loss of generality that $\alpha_{t_0} \geq 0$ and that 
$c_1\leq \min\{1/12, 1/(r+3)\}$.  From~\eqref{Eq:alphatplusone} and the argument immediately below it, we have 
\begin{align}
\label{Eq:AlphaUpdate2}
(1-\gamma)\Bigl\{F(\alpha_{t},0)& - (1+s^2)\alpha_{t}\beta_{t}^2 - \omega(\alpha_{t}+\beta_{t})\Bigr\} + \gamma(s-\phi) \leq \alpha_{t+1} \nonumber\\
&\leq (1-\gamma)\Bigl\{F(\alpha_{t},0) + (1+s^2)\alpha_{t}\beta_{t}^2 + \omega(\alpha_{t}+\beta_{t})\Bigr\} + \gamma(s+\phi).
\end{align}
For any $t$ such that $\alpha_t \geq c s$, since $\beta_t \leq 60r\omega^{1/2}$ by Proposition~\ref{p1}, we have that  
\begin{equation}
\label{Eq:comegaalpha}
(1+s^2)\alpha_t\beta_t^2 + \omega(\alpha_t+\beta_t) \leq \biggl\{(1+r^2)60^2r^2\omega + \omega\biggl(1 + \frac{60s\zeta}{c s}\biggr)\biggr\}\alpha_t\leq c''\omega\alpha
\end{equation}
where $c'' := 60^2(1+r^2)r^2 + (1 + 60r c^{-1}C^{-1})$.  Moreover, if $\alpha_t\geq cs$, then
\begin{equation}
\label{Eq:gammaphi}
\gamma\phi \leq \omega\gamma^{1/2} \leq \begin{cases}
\gamma\zeta \leq C^{-1}\gamma s & \text{if $\gamma\geq \omega$}\\
\omega^{3/2} = \omega\zeta\leq c^{-1}C^{-1}\omega\alpha_t & \text{otherwise}.
\end{cases}
\end{equation}
Let $c':=c''+2c^{-1}C^{-1}$ and define functions $H, L: [0,\infty) \to \mathbb{R}$ by
 \begin{align*}
 H(\alpha)&:= (1-\gamma)\bigl\{F(\alpha,0) + c'\omega\alpha \bigr\} + \gamma (s+\phi), \\
 L(\alpha)&:= (1-\gamma)\bigl\{F(\alpha,0) -  c' \omega\alpha\bigr\} + \gamma \max(s-\phi, s/2),
 \end{align*}
From~\eqref{Eq:AlphaUpdate2}, \eqref{Eq:comegaalpha} and~\eqref{Eq:gammaphi}, we obtain that for $\alpha_t\geq c s$ and $C\geq 2$,  
\begin{equation}
\label{Eq:Sandwich}
L(\alpha_t)\leq \alpha_{t+1}\leq H(\alpha_t).
\end{equation}
Define auxiliary sequences $(\alpha_{t}^+)_{t \geq t_0}$ and $(\alpha_{t}^-)_{t \geq 0}$ by $\alpha_{t_0}^+ := \alpha_{t_0} =: \alpha_{t_0}^-$ and for $t \geq t_0$,
\[
\alpha_{t+1}^+ := H(\alpha_t^+) \quad \text{and}\quad 
\alpha_{t+1}^- := L(\alpha_t^-).
\]
We first derive some properties of the two recursion maps $H$ and $L$. For the former, we have by \citet[Lemma~3]{WuZhou2019} that $F$, and hence $H$, is increasing and concave on $[0,\infty)$ with $H(0)>0$ when $\gamma > 0$ and $H'(0) > \partial_1 F(0,0)>1$ when $\gamma = 0$. Moreover, since $F$ is bounded, we can choose $c_1 > 0$, depending only on $r$ and $c$, such that $\lim_{\alpha\to\infty} H'(\alpha) = (1-\gamma) c'\omega \leq (1-\gamma) c'c_1 < 1/2$.  On the other hand, we have $L(0)>0$ when $\gamma>0$. When $\gamma = 0$, we have $\omega^{1/2}=\zeta\leq s/C$, which means that after increasing $C \equiv C(r,c) > 0$ if necessary, $L'(0) = \partial_1 F(0,0) - c'\omega \geq 1+s^2-c's^2/C^2 > 1$. By \citet[Lemma~3]{WuZhou2019}, $\alpha \mapsto F(\alpha,0)$ is differentiable, increasing and concave for $\alpha \in[0,\infty)$. Reducing $c_1 \equiv c_1(r,c) > 0$ if necessary to ensure that $c_1\leq \partial_1 F(r+3,0)/c'$, we have for $\alpha \in [0,r+3]$ that
\[
L'(\alpha) = \partial_1 F(\alpha,0) - c' \omega \geq \partial_1 F(r+3,0) - c' c_1 \geq 0.
\]
In other words, $L$ is increasing on $[0,r+3]$, and moreover, similarly to $H$, it is also concave on this interval.  Finally, we claim that for $\tilde c := \min\bigl\{c, 32(3+r^4)/3\}^{-1/2}\bigr\}$, and $\tilde\alpha := \tilde c s$, we have $L(\tilde \alpha) \geq \tilde \alpha$. To verify this, we note by \citet[Lemma~3(3) and Equation~(98)]{WuZhou2019}, we have  
\begin{equation}
\label{Eq:Lalphatilde}
L(\tilde\alpha) \geq (1-\gamma)\tilde\alpha \biggl\{1 + s^2 - \frac{8}{3}(3+r^4)\tilde\alpha^2-c'\omega\biggr\} + 2\gamma \tilde\alpha \geq (1-\gamma)\tilde\alpha\biggl(1+\frac{3s^2}{4}-c'\omega\biggr) + 2\gamma\tilde\alpha.
\end{equation}
To control the right-hand side of~\eqref{Eq:Lalphatilde}, if $\gamma \leq c'\omega$, we have $\zeta = \omega^{1/2}\wedge \omega\gamma^{-1/2} \geq (\omega/c')^{1/2}$.  Hence, from the condition $s\geq C\zeta$, if we choose $C>2c'$ (which is possible because $c'$ is a decreasing function of $C$), we have $c'\omega \leq (c's/C)^2 \leq s^2/4$ and consequently the right-hand side of~\eqref{Eq:Lalphatilde} is at least $\tilde\alpha$.  If $\gamma > c'\omega$, then we have $L(\tilde\alpha) \geq (1-\gamma)^2\tilde\alpha + 2\gamma\tilde\alpha \geq \tilde \alpha$ as desired. This establishes the claim.

We now show by induction that for all $t\geq t_0$, 
\begin{equation}
\label{ab}
\tilde\alpha \leq \alpha_{t}^- \leq \alpha_{t} \leq \alpha_{t}^+. 
\end{equation}
The base case is clear by the definition of  $\alpha_{t_0}^-$ and $\alpha_{t_0}^+$ above.  Now suppose that~\eqref{ab} holds for some iteration $t \geq 0$, so in particular, \eqref{Eq:Sandwich} applies.

Using the monotonicity of $H$ and \eqref{Eq:Sandwich}, we have $\alpha_{t+1} \leq  H(\alpha_t) \leq H(\alpha_t^+) =  \alpha_{t+1}^+$.  
Observe that $\alpha_t\leq \|\hat\mu^{(t)}\|\leq r+3$ by the proof of Proposition~\ref{p1}. 
Using the monotonicity of $L$ on $[0,r+3]$, we find that  $\alpha_{t+1} \geq L(\alpha_t)\geq L(\alpha_t^-) =\alpha_{t+1}^-$. Moreover, 
\[
\alpha_{t+1}^- =  L(\alpha_t^-) \geq L(\tilde\alpha) \geq \tilde\alpha,
\]
which completes the induction.

To prove~\eqref{abo}, we will analyze the sequences $(\alpha_t^+)_{t \geq t_0}$ and $(\alpha_t^-)_{t \geq t_0}$, which sandwich $(\alpha_t)_{t\geq t_0}$.  We start by considering the behaviour of $(\alpha_t^+)_{t\geq t_0}$. The properties of $H$ derived above mean that we can apply Lemma~\ref{Lemma:H}  to obtain that $\alpha_{t}^+$ converges to a limit, denoted $\alpha^+$, satisfying $\alpha^+=H(\alpha^+)$. By Lemma~\ref{Lemma:q}, we have $F(s,0)=s$ and so $H(s) = (1-\gamma)(s+c'\omega s)+\gamma(s+\phi) > s$. Hence from Lemma~\ref{Lemma:H} again, we have $\alpha^+ > s$. On the other hand, since $F(\alpha,0) \leq \mathbb{E}|Z_{1,1}|\leq (\mathbb{E}Z_{1,1}^2)^{1/2}\leq (1+s^2)^{1/2}\leq 1+r$, we have $\alpha^+=H(\alpha^+) \leq (1-\gamma)(1+r)+\alpha^+/2 + \gamma r + 1 \leq r+2+\alpha^+/2$, so $\alpha^+ \leq 2r+4$. Recalling the definition of $q$ from~\eqref{Eq:q}, by Lemma~\ref{Lemma:q} again, we have 
\[
q(\alpha^+) = q(s)+\int_s^{\alpha^+} q'(\alpha)\,d\alpha \leq 1 - c_2  \bigl((\alpha^+)^2-s^2\bigr)
\]
for some $c_2>0$ depending only on $r$. Consequently, 
\begin{align*}
\alpha^+ = H(\alpha^+) &= (1-\gamma)\alpha^+\bigl\{q(\alpha^+)+c'\omega\bigr\}+\gamma(s+\phi) \\
&\leq(1-\gamma)\alpha^+\bigl\{1-c_2\bigl((\alpha^+)^2-s^2\bigr)+c'\omega\bigr\}+\gamma(s+\omega\gamma^{-1/2}),
\end{align*}
so 
\begin{equation}
\label{Eq:intermediate}
(\alpha^+)^2-s^2 \leq \frac{c'\omega}{c_2} - \frac{\gamma}{(1-\gamma)c_2}\frac{\alpha^+ - s-\omega\gamma^{-1/2}}{\alpha^+}.
\end{equation}
We now prove that 
\begin{equation}
\label{Eq:alphaplus}
\alpha^+-s \lesssim_{r,c} \frac{\omega}{s} \wedge \frac{\omega}{\gamma^{1/2}}
\end{equation}
by considering two cases. If $\alpha^+ \leq 2s$, then from~\eqref{Eq:intermediate}, we have
\[
\alpha^+-s \leq \frac{(1-\gamma)c'\omega + \gamma^{1/2}\omega/\alpha^+}{(1-\gamma)c_2(\alpha^++s)+\gamma/\alpha^+}\lesssim_{r,c} \frac{\omega}{s} \cdot \frac{1+\gamma^{1/2}/s}{1+\gamma/s^2} \lesssim \frac{\omega}{s}\wedge \frac{\omega}{\gamma^{1/2}}.
\]
On the other hand, if $\alpha^+ > 2s$, then we have from~\eqref{Eq:intermediate} again that
\begin{equation}
\label{Eq:intermediate2}
\frac{3(\alpha^+)^2}{4} +\frac{\gamma}{2c_2} \leq (\alpha^+)^2-s^2+\frac{\gamma(\alpha^+-s)}{(1-\gamma)c_2\alpha^+}\leq \frac{(c'+2\gamma^{1/2}/\alpha^+)\omega}{c_2}.
\end{equation}
In particular, $\gamma/2 \leq (c' + \gamma^{1/2}/s)\omega \leq c'\omega +C^{-1}(\gamma^{1/2}\omega^{1/2} \vee \gamma)$, so $\gamma\lesssim_{r,c} \omega$ and $\alpha^+\gtrsim_{r,c}\zeta\gtrsim_{r,c}\gamma^{1/2}$.  Consequently from~\eqref{Eq:intermediate2}, 
\[
\alpha^+-s \leq \alpha^+ \lesssim_{r,c} \omega^{1/2} \lesssim_{r,c} \frac{\omega}{\alpha^+} \wedge \frac{\omega}{\gamma^{1/2}} \lesssim \frac{\omega}{s} \wedge \frac{\omega}{\gamma^{1/2}},
\]
which establishes~\eqref{Eq:alphaplus}.  We now consider $(\alpha_t^-)_{t\geq t_0}$.  Define $\tilde L:[0,\infty)\to [0,\infty)$ by $\tilde L(\alpha) := L\bigl(\alpha \wedge (r+3)\bigr)$. Since $\alpha_t^- \leq \alpha_t \leq r+3$ for all $t \geq 0$, we have $\alpha_{t+1}^- = \tilde L(\alpha_t^-)$ for all $t \geq t_0$. From the properties of $L$ derived above, we see that $\tilde L$ satisfies the conditions of Lemma~\ref{Lemma:H}, and hence $\alpha_t^-$ converges to a limit, denoted $\alpha^-$, satisfying $\alpha^- = \tilde L (\alpha^-)  = L(\alpha^-)$. By Lemma~\ref{Lemma:q}, $F(s,0)=s$, so we have $\tilde L(s)= L(s) \leq (1-\gamma)(s-c'\omega s) + \gamma s < s$, so by Lemma~\ref{Lemma:H}, we must have $\alpha^- < s$. By Lemma~\ref{Lemma:q} again, we have
\[
q(\alpha^-) = q(s)-\int^s_{\alpha^-} q'(\alpha)\,d\alpha \geq 1 + c_2' \bigl(s^2-(\alpha^-)^2\bigr),
\]
where $c_2' > 0$ depends only on $r$. Consequently, we have
\[
\alpha^- = L(\alpha^-) \geq (1-\gamma)\alpha^-\bigl\{1+c_2' \bigl(s^2-(\alpha^-)^2\bigr) - c'\omega\bigr\} + \gamma (s-\omega\gamma^{-1/2}),
\]
which after rearranging and using the fact that $\alpha^- \geq \tilde\alpha \gtrsim_r s$ leads to 
\begin{equation}
\label{Eq:alphaminus}
s-\alpha^- \leq \frac{(1-\gamma)c'\omega + \gamma^{1/2}\omega/\alpha^-}{(1-\gamma)c_2'(s+\alpha^-)+\gamma /\alpha^-} \lesssim_{r,c} \frac{\omega}{s} \cdot \frac{1+\gamma^{1/2}/s}{1+\gamma/s^2} \lesssim \frac{\omega}{s}\wedge \frac{\omega}{\gamma^{1/2}}.
\end{equation}
Combining~\eqref{ab}, \eqref{Eq:alphaplus} and \eqref{Eq:alphaminus}, we have established~\eqref{abo}.

We now turn to prove \eqref{bbo}. By increasing $C$ if necessary, we have for all sufficiently large $t$ that $s/2\leq\alpha_{t}\leq 2s$. Consequently, we have by~\eqref{Eq:BoundonBeta} that for all large $t$,
\begin{equation}
\beta_{t+1} \leq \beta_{t}(1-\gamma)\biggl(1+\omega-\frac{s^2/4\wedge 1}{6}\biggr)+ \gamma\phi +  2s\omega\leq \beta_{t}(1-\gamma)(1+\omega-c_3s^2)+ (\gamma^{1/2} +  2s)\omega,\label{Eq:BoundonBeta2}
\end{equation}
with $c_3:=1/(6r^2+24)$. Denote $\beta_\infty :=\limsup_{t\to\infty}\beta_{t}$.  If $\gamma \leq 2\omega$, then $\zeta \in [(\omega/2)^{1/2},\omega^{1/2}]$ so $\omega \leq 2s^2/C^2 \leq c_3s^2/2$ for $C$ sufficiently large. Hence, from~\eqref{Eq:BoundonBeta2},
and the fact that $\gamma^{1/2}\leq 2\zeta\leq s$,
\[
\beta_\infty \leq \frac{(\gamma^{1/2}+ 2s)\omega}{c_3s^2/2} \lesssim_r \frac{\omega}{s}= \frac{\omega}{s}\wedge \frac{\omega}{\gamma^{1/2}}.
\] 
On the other hand, if $\gamma > 2\omega$, then $(1-\gamma)(1+\omega -c_3s^2)\leq 1-\gamma/2-c_3 s^2/2$ and from~\eqref{Eq:BoundonBeta2}, we obtain
\[
\beta_\infty \leq \frac{(\gamma^{1/2}+2s)\omega}{(\gamma+c_3 s^2)/2} \lesssim_r \frac{\omega}{s} \cdot \frac{1+\gamma^{1/2}/s}{1+\gamma/s^2} \lesssim \frac{\omega}{s}\wedge\frac{\omega}{\gamma^{1/2}}.
\]
Combining the above two cases establishes~\eqref{bbo}.

Finally, recalling the decomposition of $\hat\mu^{(t)}$ in~\eqref{Eq:OrthognoalDecomposition}, we see that \eqref{cbo} follows immediately from~\eqref{abo} and~\eqref{bbo}.
\end{proof}

Next, we show that provided the initialization is not too uncorrelated with the true parameter, $|\alpha_t|$ reaches a level that makes Proposition~\ref{p3} applicable after a sufficient number of iterations. 
\begin{prop}\label{p4}
Assume that $n\geq 3$, that $\phi\gamma^{1/2}  \leq \omega\leq \min\{1/12, 1/(r+3)\}$ and that $\gamma \in [0, 1/2)$.  Suppose that $\hat\mu^{(0)}$ is chosen such that $c'(\zeta \vee r\omega)\leq \|\hat\mu^{(0)}\| \leq 60(\zeta\vee r\omega)$ for some $c'\in(0,1)$ and that $|\langle \hat\mu^{(0)} / \|\hat\mu^{(0)}\|, \eta\rangle| \geq \sqrt{1/(d\log n_{\mathrm{U}})}$.   Then there exist $c_4, c_5 > 0$, depending only on $r$ and $c'$, such that if 
$s \geq c_4 \zeta \sqrt{d\log n_{\mathrm{U}}}$, then on $\Omega_1(\omega)\cap \Omega_2(\phi)$, we have  
\begin{align*}
|\alpha_t| \geq c_5 s
\end{align*}
for some $t > 0$. 
\end{prop}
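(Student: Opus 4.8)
The plan is to reduce the analysis to the scalar recursion for $\alpha_t := \langle\hat\mu^{(t)},\eta\rangle$ and exhibit a finite time at which $|\alpha_t|$ has grown to order $s$, working throughout on $\Omega_1(\omega)\cap\Omega_2(\phi)$. The inputs are: the one-step bound \eqref{Eq:alphatplusone}, $\bigl|\alpha_{t+1}-\gamma s-(1-\gamma)F(\alpha_t,\beta_t)\bigr|\le(1-\gamma)\omega(|\alpha_t|+\beta_t)+\gamma\phi$; Proposition~\ref{p1} (second claim), which, since $\beta_0\le\|\hat\mu^{(0)}\|\le60(\zeta\vee r\omega)$, gives $\beta_t\le60(\zeta\vee r\omega)$ for all $t$ (and, via its proof, $\|\hat\mu^{(t)}\|\le r+3$, so $|\alpha_t|\le r+3$, for all $t$); the local lower bound $F(\alpha,0)=\alpha q(\alpha)\ge\alpha\bigl(1+\tfrac{c_r}{2}(s^2-\alpha^2)\bigr)$ for $\alpha\in[0,s]$, read off from $q(s)=1$ and $q'(\alpha)\le-c_r\alpha$ in Lemma~\ref{Lemma:q}; and $|F(\alpha,\beta)-F(\alpha,0)|\le(1+s^2)|\alpha|\beta^2$ with $F(\cdot,\beta)$ odd (\citet[Lemma~5(7)]{WuZhou2019}), so that $|F(\alpha_t,\beta_t)|=F(|\alpha_t|,\beta_t)$. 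A standing observation is that, since $\zeta\ge\omega$ and $s\le r$, the hypothesis $s\ge c_4\zeta\sqrt{d\log n_{\mathrm U}}$ forces $\zeta\vee r\omega\le r^2/c_4$; hence, taking $c_4$ large (depending only on $r$ and $c'$), the initialization, every $\beta_t$, and the parameters $\omega,\phi$ are all small, and $\omega$ is small relative to $s$ — more precisely $\omega\le s^2/c_4^2$ when $\gamma\le\omega$ (so $\zeta=\omega^{1/2}$) and $\omega\le s\gamma^{1/2}/c_4$ when $\gamma>\omega$ (so $\zeta=\omega\gamma^{-1/2}$).

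Next I would split on $\gamma$. If $\gamma\ge\gamma_0$ for a suitable constant $\gamma_0=\gamma_0(r)\in(0,1/2)$, one EM step suffices: $\alpha_1\ge\gamma s-\gamma\phi+(1-\gamma)F(\alpha_0,\beta_0)-(1-\gamma)\omega(|\alpha_0|+\beta_0)$, and since $|F(\alpha_0,\beta_0)|\le2(1+r^2)|\alpha_0|$ and all of $|\alpha_0|$, $\beta_0$, $\gamma\phi$ (via $\gamma\phi\le\omega\gamma^{1/2}$) are $\lesssim_r s/c_4$, we get $\alpha_1\ge(\gamma_0/2)s$ for $c_4$ large. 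For $\gamma<\gamma_0$ I analyse the scalar dynamics of $\alpha\mapsto\gamma s+(1-\gamma)F(\alpha,\beta_t)$. Since $F(\cdot,\beta_t)$ is odd, increasing and concave on $[0,\infty)$, this map has a single attracting fixed point of size $\gtrsim_r s$ on the positive axis; and when $\gamma$ is small there is, to the left of the origin, a repelling fixed point of magnitude $\lesssim_r\gamma/s$ followed by an attracting one of magnitude $\gtrsim_r s$, while for $\gamma\gtrsim s^2$ there is no fixed point on $(-\infty,0]$ at all (this structure uses $q(s)=1$, $q'\le-c_r\,\mathrm{id}$, and $(1-\gamma)(1+s^2)-1\ge\tfrac{c_r}{16}s^2$ valid when $\gamma\le\tfrac{c_r}{16}s^2$). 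On the event under consideration the additive perturbation to this map is at most $N:=(1-\gamma)\omega(|\alpha_t|+\beta_t)+\gamma\phi$, which is $\lesssim_r\omega(\zeta\vee r\omega)+\omega\gamma^{1/2}$ as long as $|\alpha_t|\lesssim s$, and the repelling fixed point has expansion factor bounded away from $1$; thus every iterate whose magnitude exceeds the "trap radius" $\lesssim_r\gamma/s+N/s^2$ is swept toward an attracting fixed point, whence $|\alpha_t|\ge\tfrac12 s$ for some $t$ — and when no left fixed point exists, $\alpha_t$ becomes nonnegative after finitely many steps and then the drift $\gamma s$ accumulates, giving $\alpha_t\ge\tfrac12\gamma s/\{1-(1-C_r\gamma)\}=s/(2C_r)$ eventually, where $C_r$ is the constant controlling $(1+s^2)\beta_t^2+\omega\lesssim_r\gamma$. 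The remaining step is to check that $|\alpha_0|\ge c'(\zeta\vee r\omega)/\sqrt{d\log n_{\mathrm U}}$ exceeds $\lesssim_r\gamma/s+N/s^2$: this is a short computation in the sub-cases $\gamma\le\omega$ and $\omega<\gamma\le\gamma_0$, using $s^2\ge c_4^2\zeta^2\,d\log n_{\mathrm U}$, $\zeta^2=\min\{\omega^2/\gamma,\omega\}$, $\phi\gamma^{1/2}\le\omega$ and $d\log n_{\mathrm U}\ge1$, and holds once $c_4$ is large enough. Taking $c_5:=\min\{\gamma_0/2,\ 1/2,\ 1/(4C_r)\}$ finishes the proof, with $\gamma_0$ fixed so that the constant-factor demands of both cases are met.

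I expect the main obstacle to be the uncontrolled sign of $\alpha_0$: the labeled drift $\gamma s$ breaks the $\pm\mu^*$ symmetry, so one cannot reduce to $\alpha_0\ge0$, and for small $\gamma$ the scalar map genuinely acquires a spurious attracting fixed point of magnitude only $\lesssim_r\gamma/s$ just to the left of the origin; the argument must therefore pin $\alpha_0$ outside the small "trap" around the intervening repelling fixed point and then show $|\alpha_t|$ flows to a bona fide fixed point of size $\gtrsim_r s$. The delicate point is that "escaping the trap" has to be verified uniformly across every parameter regime at once — in particular the relation of $\gamma$ to $\omega$ (which fixes $\zeta$, hence the initialization scale) and of $\gamma$ to $s^2$ (which decides whether the trap exists and how strongly the origin repels) — and it is exactly here that the sample-size hypothesis $s\ge c_4\zeta\sqrt{d\log n_{\mathrm U}}$, with $c_4$ taken large, is invoked repeatedly to make $\omega$, the trap radius and $|\alpha_0|$ line up.
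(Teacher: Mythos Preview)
Your approach differs markedly from the paper's. The paper does not analyze fixed points: it argues by direct induction that the two invariants $\alpha_t/\beta_t\ge c'/(60\sqrt{d\log n_{\mathrm U}})$ and $\alpha_{t+1}\ge(1+\omega\sqrt{d\log n_{\mathrm U}})\alpha_t$ persist, the latter via a bespoke inequality (Lemma~\ref{tanh-lem}) giving $e_1^\top f(\hat\mu^{(t)})\ge\alpha_t(1+s^2)(1-\beta_t^2)-\alpha_t^3(1+2s^2+s^4/3)$ directly, rather than going through $F(\alpha,0)=\alpha q(\alpha)$ and Lemma~\ref{Lemma:q} as you do. The paper opens with ``by flipping the sign of $\mu^*$ if necessary, we may assume without loss of generality that $\alpha_0\ge0$'', so it never confronts the sign issue you raise. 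You are right that this reduction is delicate when $\gamma>0$: the labeled drift is anchored to the true $\mu^*$ via $\Omega_2(\phi)$, so replacing $\eta$ by $-\eta$ turns the $+\gamma(s-\phi)$ term in the paper's key lower bound into $-\gamma(s+\phi)$, and the induction as written does not literally go through for $\alpha_0<0$. Your fixed-point picture is the natural way to understand what happens in that case.

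That said, your own verification that $\alpha_0$ avoids the trap has a gap. In the sub-regime $\omega<\gamma\lesssim s^2$ (so $\zeta=\omega\gamma^{-1/2}$), the repelling fixed point sits at $\approx-\gamma/s$, while the hypothesis only gives $|\alpha_0|\ge c'\omega\gamma^{-1/2}/\sqrt{d\log n_{\mathrm U}}$. Your claimed comparison $|\alpha_0|\gtrsim\gamma/s$ would force $c'c_4\,\omega^2\gtrsim\gamma^2$, which is \emph{not} implied by $s\ge c_4\zeta\sqrt{d\log n_{\mathrm U}}$ and fails badly when $\omega\ll\gamma$ (for instance $\gamma=10^{-2}$, $\omega=10^{-4}$, $s=1/2$: then $\zeta=10^{-3}$, $|\alpha_0|$ is of order $10^{-3}$, but $\gamma/s=2\times10^{-2}$). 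In such configurations $-\gamma/s$ lies inside the admissible range $[-60r\zeta,\,-c'\zeta/\sqrt{d\log n_{\mathrm U}}]$ for $\alpha_0$, so an initializer with $\alpha_0\approx-\gamma/s$ satisfies all the hypotheses yet is not covered by either your ``$|\alpha_0|$ exceeds trap radius'' branch or your ``no left fixed point'' branch. Note also that the correct escape condition is $|\alpha_0+\gamma/s|\gtrsim N/s^2$, not $|\alpha_0|\gtrsim\gamma/s+N/s^2$; but even the weaker condition is not delivered by the assumptions. So the obstacle you flag is genuine, and your proposed resolution does not close it.
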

\begin{proof}
By flipping the sign of $\mu^*$ if necessary, we may assume without loss of generality that $\alpha_0 \geq 0$. Assuming that the desired result is not true, we will prove by induction that on $\Omega_1(\omega)\cap\Omega_2(\phi)$, (a) $\alpha_t/\beta_t\geq c'/(60\sqrt{d\log n_{\mathrm{U}}})$ and (b) $\alpha_{t+1}\geq (1+\omega\sqrt{d\log n_{\mathrm{U}}})\alpha_t$ for all $t\geq 0$. We show this by first verifying the base case of (a), then proving that (a) implies (b) for each $t$, and finally proving that $\alpha_{t+1}/\beta_{t+1} \geq 1/(60\sqrt{d\log n_{\mathrm{U}}})$ once~(b) holds for a given $t$.  

For the base case, from the assumption on $\hat\mu^{(0)}$, we have 
\[
\frac{\alpha_0}{\beta_0} \geq \frac{\alpha_0}{\|\hat\mu^{(0)}\|} = \bigg\langle \frac{\hat\mu^{(0)}}{\|\hat\mu^{(0)}\|}, \eta\biggr\rangle \geq \frac{1}{\sqrt{d \log n_{\mathrm{U}}}} \geq \frac{c'}{60\sqrt{d \log n_{\mathrm{U}}}}
\]
since $c'\leq 60$.

Now assume that $\alpha_t/\beta_t\geq c'/(60\sqrt{d\log n_{\mathrm{U}}})$ and that $\alpha_0 \leq \alpha_t < c_5s$ for some $t\geq 0$. We aim to show that (b) holds for the same $t$, and start by controlling $e_1^\top f(\hat\mu^{(t)})$. Let $W = (W_1,\ldots,W_d)^\top$ be an independent copy of $Z_1$, independent of all other randomness in the problem, and define $u_t := \bigl\{\tanh(W_{1}\alpha_t + W_{-1}^\top \hat\mu^{(t)}_{-1}) - \tanh(W_{1}\alpha_t-W_{-1}^\top \hat\mu^{(t)}_{-1})\bigr\}/2$.  Then 
by applying the second part of Lemma~\ref{tanh-lem} with 
$a = W_{1} \alpha_t$ and $b=W_{-1}^\top \hat{\mu}^{(t)}_{-1}$ (so that $a + b = W^\top \hat{\mu}^{(t)}$), we have 
\begin{align*}
e_1^\top f(\hat \mu^{(t)})
&= \mathbb{E} \bigl\{W_{1} \tanh( W_{1}\alpha_t + W_{-1}^\top \hat{\mu}^{(t)}_{-1})\bigr\} \nonumber\\
&\geq \mathbb{E}\biggl\{\alpha_t W_1^2 - \frac{\alpha_t^3 W_1^4}{3} - \alpha_tW_1^2(W_{-1}^\top \hat\mu^{(t)}_{-1})^2\biggr\} +\mathbb{E} (W_1 u_t)\nonumber\\
&=  \alpha_t(1+s^2)(1-\beta_t^2) - \alpha_t^3 (1+2s^2+s^4/3),
\end{align*}
where in the final step we have used the fact that $u_t$ is an odd function of $W_{-1}^\top \hat\mu^{(t)}_{-1}$, which has a symmetric distribution about 0, conditional on $(W_1,\hat\mu^{(t)}_{-1})$, and hence $\mathbb{E}(W_1u_t) = \mathbb{E}\bigl\{\mathbb{E}(W_1u_t \mid W_1,\hat{\mu}_{-1}^{(t)})\bigr\} = 0$. 
From the assumption $s \geq c_4 \zeta $, and 
using $\beta_t\leq 60(1+r)\zeta$ from
Proposition~\ref{p1}, we have for sufficiently large $c_4$ that 
$\beta_t^2(1+s^2) \leq \{60(1+r)\}^2(1+r^2)s^2/c_4^2 \leq s^2/4$.
By choosing $c_5 > 0$, depending only on $r$, sufficiently small, we may assume that $\alpha_t^2(1+2s^2+s^4/3) < c_5^2(1+2r^2+r^4/3) s^2 \leq s^2/4$.  Recall the definition of $f_{n_{\mathrm{U}}}$ from~\eqref{eq:fnu}.  Since on the event $\Omega_1(\omega) \cap \Omega_2(\phi)$, we have $\|\hat{\mu}^{(t)}\| \leq r+3 \leq 2(r + \sqrt{d})$ as in the first line of the proof of Proposition~\ref{p1}, 
we have on the event $\Omega_1(\omega) \cap \Omega_2(\phi)$ that 
\begin{align*}
\alpha_{t+1}  &=(1-\gamma)e_1^\top f_{n_{\mathrm{U}}}(\hat\mu^{(t)}) + \gamma e_1^\top\hat\mu_{n_{\mathrm{L}}}\\
&\geq (1-\gamma)\bigl\{e_1^\top f(\hat\mu^{(t)}) - \omega\|\hat\mu^{(t)}\|\bigr\} + \gamma(s-\phi)\\
&\geq (1-\gamma)\bigl\{\alpha_t(1+s^2/2) - \omega(\alpha_t+\beta_t)\bigr\} + \gamma(s-\phi).
\end{align*}
If $\gamma \geq \omega$, then $\phi\leq \omega\gamma^{-1/2} = \zeta \leq s/2$ (assuming $c_4 \geq 2$). If $\gamma < \omega$, then 
\[
\gamma\phi\leq \omega\gamma^{1/2} < \omega^{3/2} = \omega\zeta \leq \frac{\alpha_0\omega}{c'}\sqrt{d\log n_{\mathrm{U}}}\leq \frac{2(1-\gamma)\alpha_t\omega}{c'}\sqrt{d\log n_{\mathrm{U}}}.
\]
Hence, in either case, we have on $\Omega_1(\omega) \cap \Omega_2(\phi)$ that
\begin{align*}
\alpha_{t+1}  &\geq (1-\gamma)\alpha_t\bigl\{1+s^2/2 - \omega(1+62/c')\sqrt{d\log n_{\mathrm{U}}})\bigr\} + \frac{\gamma s}{2}\nonumber\\
&\geq (1-\gamma)\alpha_t\bigl\{1+s^2/2+(1+62/c')(2\gamma - \omega\sqrt{d\log n_{\mathrm{U}}})\bigr\},
\end{align*}
where the final bound holds provided we  reduce $c_5$ to be at most $1/(2+124/c')$ if necessary.

Now, when $\gamma \leq \omega \sqrt{d\log n_{\mathrm{U}}}$,  we have $\zeta = \omega^{1/2}\wedge \omega\gamma^{-1/2} \geq \omega^{1/2}(d\log n_{\mathrm{U}})^{-1/4}$ and hence by the condition on $s$ in the proposition,  we have $s^2\geq c_4^2\zeta^2 d\log n_{\mathrm{U}} \geq c_4^2\omega \sqrt{d\log n_{\mathrm{U}}}$. Thus, by increasing $c_4$ to be at least $\sqrt{4+248/c'}$ if necessary, we have $(1+62/c')\omega\sqrt{d\log n_{\mathrm{U}}}\leq s^2/4$.
Hence, in this case, and on the event $\Omega_1(\omega) \cap \Omega_2(\phi)$,
\[
\alpha_{t+1} \geq (1-\gamma)(1+s^2/4+(2+124/c')\gamma)\alpha_t \geq \biggl(1+\frac{s^2}{8}\biggr)\alpha_t \geq (1+\omega\sqrt{d\log n_{\mathrm{U}}})\alpha_t.
\]
On the other hand, when $\gamma > \omega \sqrt{d\log n_{\mathrm{U}}}$, we have
\[
\alpha_{t+1}\geq (1-\gamma)(1+(1+62/c')\gamma) \alpha_t \geq (1+\gamma)\alpha_t \geq (1+\omega\sqrt{d\log n_{\mathrm{U}}})\alpha_t.
\]
Combining the two bounds above proves (b) for this given $t$. 

It remains to verify (a) for $t+1$, assuming that (a) and (b) hold up to and including $t$.  Since $\beta_{0}\leq \|\hat\mu^{(0)}\| \leq 60(\zeta\vee r\omega)$, we have 
by Proposition~\ref{p1} that $\beta_{t+1}\leq 60(\zeta\vee r\omega) \leq 60(c')^{-1}\|\hat\mu^{(0)}\|$. Thus,
\[
\frac{\alpha_{t+1}}{\beta_{t+1} } \geq \frac{\alpha_0}{60\|\hat\mu^{(0)}\|/c'} \geq \frac{c'}{60\sqrt{d\log{n_{\mathrm{U}}}}},
\]
which completes the induction. In particular, the geometric growth of $\alpha_t$ implied by (b) means that $\alpha_t$ will exceed $c_5 s$ for sufficiently large $t > 0$.  This establishes our desired contradiction, and hence proves the result.
\end{proof}

\begin{proof}[Proof of Proposition~\ref{Prop:OneGoodInitialization}]
Define $\phi_0:=\omega_0\gamma^{-1/2}$, and recall the definitions of $\Omega_1(\omega)$ and $\Omega_2(\phi)$ from~\eqref{Eq:Omega1Omega2}.    By Proposition~\ref{Prop:ExceptionalEvents}, there exists $C_r \geq 1$, depending only on $r$, such that for $\omega=C_r\omega_0$ and $\phi=C_r\phi_0$, we have  $\mathbb{P}\bigl(\Omega_1(\omega)\cap\Omega_2(\phi)\bigr)\geq 1-2\delta$. 

\emph{(i)} By the definition of $\omega$ and $\phi$, we have $\omega=\phi\gamma^{1/2}$. If we choose $c$ such that $c\leq C_r^{-1}\min\{1/12, 1/(r+3)\}$, then $\omega\leq \min\{1/12,1/(r+3)\}$. Thus, we may apply Propositions~\ref{p1} and~\ref{p2} to obtain that on $\Omega_1(\omega)\cap\Omega_2(\phi)$, we have 
\[
\limsup_{t\to\infty} \|\hat\mu^{(t)} - \mu^*\| \leq \limsup_{t\to\infty}(|\alpha_t| + \beta_t + \|\mu^*\|) \lesssim_r \zeta \vee \|\mu^*\|.
\]
The first claim follows.

\medskip

\emph{(ii)} From Lemma~\ref{Lemma:BetterThanWZLemma6} and by considering the case $d=1$ separately, for the chosen $\eta_0$, we have 
\[
\mathbb{P}\bigl(|e_1^\top \eta_0| \leq 1/\sqrt{d\log n_{\mathrm{U}}}\bigr) \leq \sqrt\frac{2}{\pi\log n_{\mathrm{U}}}. 
\]
Again, if we choose $c\leq C_r^{-1}\min\{1/12, 1/(r+3)\}$, then $\phi\gamma^{1/2}=\omega\leq \min\{1/12,1/(r+3)\}$. Also, $\|\hat\mu^{(0)}\| = \zeta_0\vee r\omega_0 \in [C_r^{-1}(\zeta\vee r\omega), \zeta\vee r\omega]$. Thus, applying Proposition~\ref{p4} with $c'=1/C_r$, there exists $c>0$, depending only on $r$, and $t_0\in\mathbb{N}$ such that on $\Omega_1(\omega)\cap\Omega_2(\phi)\cap \{|e_1^\top \eta_0| > 1/\sqrt{d\log n_{\mathrm{U}}}\}$, we have  $|\alpha_{t_0}| \geq c s$. 

Since $\beta_0 \leq \|\hat\mu^{(0)}\| \leq \zeta\vee r\omega$, we can apply Proposition~\ref{p1} to obtain that $\beta_t \leq 60(\zeta\vee r\omega)$ for all $t\geq 0$. Hence all conditions of Proposition~\ref{p3} are satisfied, and the desired result then follows from~\eqref{cbo}.
\end{proof}

To prove Theorem~\ref{Thm:LowDimEM}, we need the following proposition, which relates the loss of estimating $\mu^*$ to the operator norm loss of estimating $\mu^*\mu^{*\top}$.
\begin{prop}
\label{Prop:LossTransfer}
Assume that $n \geq 3$ and that $(Z_1,Y_1,Y_1^*),\ldots,(Z_n,Y_n,Y_n^*)$ are independent with
\[
Y_i^*\sim \mathrm{Unif}\{-1,1\},\; Z_i\mid Y_i^*\sim \mathcal{N}_d(Y_i^*\mu^*, I_d), \; Y_i = Y_i^*\mathbbm{1}_{\{i\leq n_{\mathrm{L}}\}}\quad \text{for $i\in[n]$.}
\]
For $\mu\in\mathbb{R}^d$ and $i \in [n]$, let
$L_{i}(\mu) := Y_i\mathbbm{1}_{\{Y_i\neq 0\}} + \tanh\langle Z_i, \mu\rangle\mathbbm{1}_{\{Y_i=0\}}$, $\mu_{\mathrm{tot}}(\mu) := \mu n^{-1}\sum_{i=1}^n L_i(\mu)$ and $\Sigma_{\mathrm{b}}(\mu) := \mu\mu^\top - \mu_{\mathrm{tot}}(\mu)\mu_{\mathrm{tot}}(\mu)^\top$.
For any $\delta\in(0,1)$ and $B > 0$, we have with probability at least $1-\delta$ that
\[
\sup_{\mu:\|\mu\|\leq B}
\bigl\{
\|\Sigma_{\mathrm{b}}(\mu) - \mu^*\mu^{*\top}\|_{\mathrm{op}} - (B+s)L(\mu,\mu^*)\bigr\}
\lesssim 
B^2(s^2\vee 1)\biggl(\frac{d\log(2Bn+e)+\log(1/\delta)}{n}\biggr).
\]
\end{prop}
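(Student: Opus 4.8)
The plan is to decompose $\Sigma_{\mathrm{b}}(\mu) - \mu^*\mu^{*\top}$ into a "parameter error" piece that is linear in $L(\mu,\mu^*)$ and a "centering error" piece governed by $\mu_{\mathrm{tot}}(\mu)$, and to control the latter via a uniform (over $\mu$ in the ball) concentration argument. Since $\Sigma_{\mathrm{b}}(\mu) = \mu\mu^\top - \mu_{\mathrm{tot}}(\mu)\mu_{\mathrm{tot}}(\mu)^\top$, I would write
\[
\Sigma_{\mathrm{b}}(\mu) - \mu^*\mu^{*\top} = (\mu\mu^\top - \mu^*\mu^{*\top}) - \mu_{\mathrm{tot}}(\mu)\mu_{\mathrm{tot}}(\mu)^\top,
\]
and bound the first bracket by $\|\mu\mu^\top - \mu^*\mu^{*\top}\|_{\mathrm{op}} \le (\|\mu\| + \|\mu^*\|)\|\mu - \mu^*\| \le (B+s)\,\|\mu-\mu^*\|$, with the analogous bound using $\mu+\mu^*$; taking the sign of $\mu$ that realizes $L(\mu,\mu^*)$ (and noting that $\Sigma_{\mathrm b}(\mu)=\Sigma_{\mathrm b}(-\mu)$, since flipping $\mu$ flips every $L_i(\mu)$ and hence flips $\mu_{\mathrm{tot}}(\mu)$, leaving $\mu_{\mathrm{tot}}(\mu)\mu_{\mathrm{tot}}(\mu)^\top$ unchanged) gives the $(B+s)L(\mu,\mu^*)$ term. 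It therefore remains to show that $\sup_{\|\mu\|\le B}\|\mu_{\mathrm{tot}}(\mu)\|^2 = \sup_{\|\mu\|\le B} \|\mu\|^2\bigl(n^{-1}\sum_i L_i(\mu)\bigr)^2$ is, with probability $1-\delta$, at most of order $B^2(s^2\vee 1)\{d\log(2Bn+e)+\log(1/\delta)\}/n$.

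The key step is thus a uniform bound on $\bigl|n^{-1}\sum_{i=1}^n L_i(\mu)\bigr|$ over $\|\mu\|\le B$. I would split $n^{-1}\sum_i L_i(\mu) = \gamma\,\bar Y_{\mathrm L} + (1-\gamma)\,n_{\mathrm U}^{-1}\sum_{i>n_{\mathrm L}}\tanh\langle Z_i,\mu\rangle$, where $\bar Y_{\mathrm L} = n_{\mathrm L}^{-1}\sum_{i\le n_{\mathrm L}}Y_i^*$ (no $\mu$-dependence). For the labeled part, $\bar Y_{\mathrm L}$ is an average of i.i.d.\ $\mathrm{Unif}\{-1,1\}$ variables, so $|\bar Y_{\mathrm L}| \lesssim \sqrt{\log(1/\delta)/n_{\mathrm L}} \le \sqrt{\log(1/\delta)/(\gamma n)}$ with probability $1-\delta/2$ by Hoeffding; multiplied by $\gamma$ this contributes $\lesssim \sqrt{\gamma\log(1/\delta)/n}$, which is absorbed into the claimed bound. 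For the unlabeled part, the population mean is $\mathbb E\tanh\langle Z_1,\mu\rangle$, and since $Z_1$ is symmetric conditional on no class information — more precisely, writing $Z_1 = Y_1^*\mu^* + W$ with $W\sim\mathcal N_d(0,I_d)$, the variable $\langle Z_1,\mu\rangle$ has a distribution symmetric about $0$ once we average over $Y_1^*\sim\mathrm{Unif}\{-1,1\}$, so $\mathbb E\tanh\langle Z_1,\mu\rangle = 0$. Hence I need a uniform deviation bound for $\sup_{\|\mu\|\le B} |n_{\mathrm U}^{-1}\sum_{i>n_{\mathrm L}}\tanh\langle Z_i,\mu\rangle|$. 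Because $\tanh$ is $1$-Lipschitz and bounded by $1$, I would either (a) bound this via a one-step discretization: put a $\rho$-net $\mathcal N_\rho$ on the ball $\{\|\mu\|\le B\}$ of cardinality $(1+2B/\rho)^d$, use a Bernstein or bounded-differences bound at each net point with variance proxy $\mathbb E\tanh^2\langle Z_1,\mu\rangle \lesssim \mathbb E\min\{\langle Z_1,\mu\rangle^2,1\} \lesssim \|\mu\|^2(s^2\vee 1) \wedge 1 \lesssim B^2(s^2\vee 1)$ (after accounting for the possibility that $B$ is small, giving the quadratic-in-$B$ rate), and control the fluctuation between a point and the nearest net point using $|\tanh\langle Z_i,\mu\rangle - \tanh\langle Z_i,\mu'\rangle|\le \|Z_i\|\,\rho$ together with a high-probability bound on $\max_i\|Z_i\|$ via a $\chi^2$ tail; or (b) invoke a symmetrization plus Dudley / contraction argument, yielding the same $\sqrt{d/n}$ rate up to the log factor. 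Choosing $\rho$ of order $1/(Bn)$ produces the $\log(2Bn+e)$ term, and the variance proxy produces the $B^2(s^2\vee 1)$ prefactor; squaring the resulting $O(\sqrt{B^2(s^2\vee1)\{d\log(\cdot)+\log(1/\delta)\}/n})$ bound on $|n^{-1}\sum L_i(\mu)|$ and multiplying by $\|\mu\|^2\le B^2$ gives exactly the stated right-hand side (after noting the linear-in-$\sqrt{1/n}$ labeled contribution is dominated).

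The main obstacle I anticipate is getting the $B^2(s^2\vee 1)$ prefactor sharp across both regimes $B$ large and $B$ small simultaneously: a crude bound $|\tanh|\le 1$ gives variance proxy $1$ and hence an $O(B^2\{d\log(\cdot)/n\})$ bound with no $s^2$ and no decay in $B$, which is too weak when $B$ is small; conversely the Lipschitz bound $\tanh^2\langle Z,\mu\rangle \le \langle Z,\mu\rangle^2$ gives variance proxy $\|\mu\|^2(1+s^2)$ which is too weak when $B$ is large. The fix is to use $\tanh^2 t \le \min\{t^2,1\}$ throughout, so that $\mathbb E\tanh^2\langle Z_1,\mu\rangle \lesssim \min\{\|\mu\|^2(s^2\vee 1),1\} \le B^2(s^2\vee 1)$ in all cases, and — critically — to carry this variance proxy (not a uniform $L^\infty$ bound) through a Bernstein-type bound at each net point so the leading term scales with the variance. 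A secondary technical nuisance is that $Z_i$ is unbounded, so the bounded-differences step for the net-point deviations and the Lipschitz interpolation step both require conditioning on the high-probability event $\{\max_{i}\|Z_i\| \lesssim \sqrt{d}+s+\sqrt{\log(n/\delta)}\}$; this costs another $\delta/2$ in the union bound and an extra $\log n$ inside the logarithm, which is already accommodated by the $\log(2Bn+e)$ term. Apart from that, everything is a standard $\epsilon$-net computation.
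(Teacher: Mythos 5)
Your proposal is correct and follows essentially the same route as the paper: the same decomposition $\|\Sigma_{\mathrm{b}}(\mu)-\mu^*\mu^{*\top}\|_{\mathrm{op}}\leq\|\mu\mu^\top-\mu^*\mu^{*\top}\|_{\mathrm{op}}+\|\mu_{\mathrm{tot}}(\mu)\|^2$, the same labeled/unlabeled split of $n^{-1}\sum_iL_i(\mu)$ with Hoeffding on the labeled average, and the same $\epsilon$-net (the paper takes $\epsilon=1/n$) with a Lipschitz interpolation term controlled by a $\chi^2$ bound on $\sum_i\|Z_i\|^2$. The one correction: the ``main obstacle'' you identify is illusory, since the $B^2$ in the target bound comes entirely from $\|\mu_{\mathrm{tot}}(\mu)\|^2\leq\|\mu\|^2\bigl(n^{-1}\sum_iL_i(\mu)\bigr)^2\leq B^2\bigl(n^{-1}\sum_iL_i(\mu)\bigr)^2$ and the factor $s^2\vee1\geq1$ is free, so plain Hoeffding with $|\tanh|\leq1$ at each net point---exactly what the paper does---already gives $\sup_{\|\mu\|\leq B}\bigl|n^{-1}\sum_iL_i(\mu)\bigr|\lesssim\sqrt{(s^2\vee1)\{d\log(2Bn+e)+\log(1/\delta)\}/n}$, which suffices without any Bernstein-type variance proxy.
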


\begin{proof}

For any $\mu \in \mathbb{R}^d$ with $\|\mu\| \leq B$, we have
\begin{align*}
\|\Sigma_{\mathrm{b}}(\mu) - \mu^*\mu^{*\top}\|_{\mathrm{op}} &\leq 
\|\mu\mu^\top - \mu^*\mu^{*\top}\|_{\mathrm{op}} + \|\mu_{\mathrm{tot}}(\mu)
\mu_{\mathrm{tot}}(\mu)^\top\|_{\mathrm{op}} \\
&\leq (B+s)L(\mu, \mu^*) + B^2\biggl(\frac{1}{n}\sum_{i=1}^n L_i(\mu)\biggr)^2.
\end{align*}
Thus, it is enough to show that $\sup_{\mu:\|\mu\|\leq B}$ $ n^{-1}\sum_{i=1}^n$ $L_i(\mu)$ $\lesssim \sqrt\frac{(s^2\vee 1)\{d\log(2Bn+e)+\log(1/\delta)\}}{n}$
with probability at least $1-\delta$. To this end, we have
\begin{align}
\sup_{\mu:\|\mu\|\leq B} \frac{1}{n}\sum_{i=1}^n L_i(\mu) &= \frac{1}{n}\sum_{i=1}^{n_{\mathrm{L}}} Y_i + \frac{1}{n}\sup_{\mu:\|\mu\|\leq B} \sum_{i=n_{\mathrm{L}}+1}^n \tanh\langle Z_i, \mu\rangle. \label{Eq:MeanLi}
\end{align}
For the first term on the right-hand side of~\eqref{Eq:MeanLi}, by Hoeffding's inequality, we have 
\begin{equation}
    \label{Eq:FirstDelta}
    \mathbb{P}\biggl(\frac{1}{n}\sum_{i=1}^{n_{\mathrm{L}}} Y_i > \frac{\sqrt{2n_{\mathrm{L}}\log(3/\delta)}}{n}\biggr)\leq \frac{\delta}{3}.
\end{equation}
For the second term on the right-hand side of~\eqref{Eq:MeanLi}, let $\mathcal{N}$ be a $\epsilon$-net of $\{v:\|v\|\leq B\}$ with respect to the Euclidean distance, for some $\epsilon \in (0,1/2]$ to be specified later. Since a maximal $\epsilon$-packing set is an $\epsilon$-net, we may assume that $|\mathcal{N}| \leq (B+\epsilon/2)^d / (\epsilon/2)^d= (2B/\epsilon + 1)^d$. Using the fact that $x\mapsto \tanh x$ is $1$-Lipschitz, together with the Cauchy--Schwarz inequality, we have  
\begin{align*}
\sup_{v:\|v\|\leq B}\sum_{i=n_{\mathrm{L}}+1}^n &\tanh\langle Z_i, v\rangle \\
&\leq \sup_{v\in\mathcal{N}}\sum_{i=n_{\mathrm{L}}+1}^n \tanh\langle Z_i, v\rangle + \sup_{u,v: \|u-v\|\leq \epsilon} \sum_{i=n_{\mathrm{L}}+1}^n (\tanh\langle Z_i, u\rangle - \tanh\langle Z_i, v\rangle) \\
&\leq \sup_{v\in\mathcal{N}}\sum_{i=n_{\mathrm{L}}+1}^n \tanh\langle Z_i, v\rangle + \epsilon \sum_{i=n_{\mathrm{L}}+1}^n \|Z_i\|.
\end{align*}
Hence taking $\epsilon = 1/n$, and defining $\tau := \frac{\log(3/\delta)}{d\log(2Bn+e)} > 0$, we have 
\begin{align}
    &\mathbb{P}\biggl(\frac{1}{n}\sup_{v:\|v\|\leq B}\sum_{i=n_{\mathrm{L}}+1}^n \tanh\langle Z_i, v\rangle > 2\sqrt\frac{2(s^2\vee 1)(1+\tau)d\log(2Bn+e)}{n}\biggr)\nonumber\\ 
    &\leq \mathbb{P}\biggl(\frac{1}{n}\sup_{v\in\mathcal{N}}\sum_{i=n_{\mathrm{L}}+1}^n \tanh\langle Z_i, v\rangle > \sqrt\frac{2(s^2\vee 1)(1+\tau)d\log(2Bn+e)}{n}\biggr) \nonumber \\
    &\hspace{3cm}+\mathbb{P}\biggl(\frac{1}{n}\sum_{i=n_{\mathrm{L}}+1}^n \|Z_i\| \geq \sqrt{2(s^2\vee 1)(1+\tau)nd\log(2Bn+e)}\biggr)\nonumber\\
    &\leq \frac{|\mathcal{N}|}{e^{(1+\tau)d\log(2Bn+e)}} + \mathbb{P}\biggl(\frac{1}{n}\sum_{i=n_{\mathrm{L}}+1}^n  \|Z_i\|^2 \geq 2(s^2\vee 1)(1+\tau)nd\log(2Bn+e)\biggr) \leq \frac{2\delta}{3},\label{Eq:SecondThirdDelta}
\end{align}
where the penultimate bound uses Hoeffding's inequality and the Cauchy--Schwarz inequality and the final bound uses the fact that $\sum_{i=n_{\mathrm{L}}+1}^n \|Z_i\|^2 \sim \chi^2_{n_{\mathrm{U}}d}(n_{\mathrm{U}}s^2)$ and 
\citet[Lemma~8.1]{Birge2001}. 
Combining~\eqref{Eq:MeanLi}, \eqref{Eq:FirstDelta} and~\eqref{Eq:SecondThirdDelta}, we have with probability at least $1-\delta$ that 
\begin{align*}
\sup_{\mu:\|\mu\|\leq B} \frac{1}{n}\sum_{i=1}^n L_i(\mu) &\leq \frac{\sqrt{2n_{\mathrm{L}}\log(3/\delta)}}{n} + 2\sqrt\frac{2(s^2\vee1)(1+\tau)d\log(2Bn+e)}{n} \\
&\lesssim \sqrt\frac{(s^2\vee 1)\{d\log(2Bn+e)+\log(1/\delta)\}}{n},
\end{align*}
as desired.
\end{proof}

\begin{proof}[Proof of Theorem~\ref{Thm:LowDimEM}]
We write $\hat\mu_{[m]} \equiv \hat\mu^{(T)}_{[m]}$ for the $T$th (final) iterate of the EM update in Algorithm~\ref{Algo:EM} starting from the $m$th random initializer $\hat\mu^{(0)}_{[m]}$. Let $\omega=C_r\omega_0$, $\phi=C_r\phi_0$, $\Omega_1(\omega)$ and $\Omega_2(\phi)$ be defined as in the proof of Proposition~\ref{Prop:OneGoodInitialization}.  Further, let $\Sigma_{\mathrm{b}}(\mu)$ be defined as in Proposition~\ref{Prop:LossTransfer}.  By Proposition~\ref{Prop:ExceptionalEvents}, the first claim in the proof of Proposition~\ref{p1} and Proposition~\ref{Prop:LossTransfer}, we have for some $C>0$ depending only on $r$ that 
\begin{align*}
\mathbb{P}\biggl[\max_{m\in[M]}\sup_{T\in\mathbb{N}} \Bigl\{\|\Sigma_{\mathrm{b}}(\hat\mu_{[m]}^{(T)}) - \mu^*\mu^{*\top}\|_{\mathrm{op}} - (2r+3)L(\hat\mu_{[m]}^{(T)}, \mu^*) \Bigr\} \leq &\frac{C\{d\log(rn)+\log(1/\delta)\}}{n}\biggr] \\
&\geq 1-\delta.
\end{align*}
In this balanced two-cluster setup, for the $t$th EM iteration starting from the $m$th random initializer, we have $-\hat\mu_1 = \hat\mu_2 = \hat\mu$, where we suppress the dependence on $t$ and $m$ for convenience.  For $i\geq n_{\mathrm{L}}+1$, we have
$L_{i,1} = e^{z_i^\top \hat\mu_1} / (e^{z_i^\top \hat\mu_1} + e^{z_i^\top \hat\mu_2})$, $L_{i,2} = e^{z_i^\top \hat\mu_2} / (e^{z_i^\top \hat\mu_1} + e^{z_i^\top \hat\mu_2})$ and hence $L_{i,2}-L_{i,1} = \tanh\langle Z_i, \hat\mu\rangle$. Thus, $\hat\mu_{\mathrm{tot}} = n^{-1}\hat\mu\sum_{i=1}^n \{ \tanh\langle Z_i,\hat\mu\rangle \mathbbm{1}_{\{Y_i=0\}} + Y_i\mathbbm{1}_{\{Y_i\neq 0\}} \}$. Also, we note that 
\[
\hat\Sigma_{\mathrm{b}} = \frac{1}{n}\sum_{i=1}^n \sum_{k=1}^2 L_{i,k}(\hat{\mu}_k - \hat{\mu}_{\mathrm{tot}})(\hat{\mu}_k - \hat{\mu}_{\mathrm{tot}})^\top = \frac{1}{n}\sum_{i=1}^n\sum_{k=1}^2 L_{i,k}\hat\mu_k\hat\mu_k^\top - \hat\mu_{\mathrm{tot}}\hat\mu_{\mathrm{tot}}^\top = \hat\mu\hat\mu^\top - \hat\mu_{\mathrm{tot}}\hat\mu_{\mathrm{tot}}^\top.
\]
Consequently, using the notation of Proposition~\ref{Prop:LossTransfer} and Algorithm~\ref{Algo:EM}, we have $\hat Q \equiv \hat Q^{(T)} \equiv \hat Q^{[\hat m]}= \Sigma_{\mathrm{b}}\bigl(\hat\mu_{[\hat{m}]}^{(T)}\bigr)$. 

We consider two cases. If $\|\mu^*\|\leq \omega_0^{1/3}\wedge\zeta_0^{1/2}$, then by the proof of Proposition~\ref{Prop:OneGoodInitialization}\emph{(i)}, we have on the event $\Omega_1(\omega)\cap \Omega_2(\phi)$ that  $\limsup_{T\to\infty} L(\hat\mu_{[m]}^{(T)}, \mu^*)\lesssim_r \zeta_0 \vee \|\mu^*\| \lesssim_r \omega_0^{1/3} \wedge \zeta_0^{1/2}$ for every $m\in[M]$. Thus, by Proposition~\ref{Prop:LossTransfer}, with probability at least $\mathbb{P}\bigl(\Omega_1(\omega)\cap\Omega_2(\phi)\bigr) -\delta \geq 1-3\delta$, we have 
\begin{align}
\limsup_{T\to\infty} \|\hat Q - \mu^*\mu^{*\top}\|_{\mathrm{op}} &\lesssim_r  (\omega_0^{1/3}\wedge \zeta_0^{1/2}) \limsup_{T\to\infty}  \max_{m\in [M]}
 L(\hat\mu_{[m]}^{(T)},\mu^*) 
 \nonumber\\
&\quad+ (\omega_0^{2/3}\wedge \zeta_0)\biggl(\frac{d\log n \!+\! \log(1/\delta)}{n}\biggr) \nonumber\\
&\lesssim_r
\omega_0^{2/3}\wedge \zeta_0 = \frac{\omega_0}{\omega_0^{1/3}\wedge \zeta_0^{1/2}} \wedge \zeta_0 \leq \frac{\omega_0}{\|\mu^*\|}\wedge \zeta_0.\label{Eq:Qbound1}
\end{align}

We now turn to the case where $\|\mu^*\| > \omega_0^{1/3}\wedge \zeta_0^{1/2}$. Let $\mathcal{M}_0$ be the set of $m\in[M]$ such that $|\langle \hat\mu^{(0)}_{[m]}, \mu^*\rangle| / (\|\mu^*\|\|\hat\mu^{(0)}_{[m]}\|) \geq \sqrt{1/(d\log n_{\mathrm{U}})}$ and let $M_0 := |\mathcal{M}_0|$. By definition of the EM initializers, the random variables $\bigl\{\langle \hat\mu^{(0)}_{[m]}, \mu^*\rangle / (\|\mu^*\|\|\hat\mu^{(0)}_{[m]}\|):m \in [M]\bigr\}$ are independent, and moreover, by Lemma~\ref{Lemma:BetterThanWZLemma6} we have 
\[
\mathbb{P}\biggl(\frac{|\langle \hat\mu^{(0)}_{[m]}, \mu^*\rangle|}{\|\mu^*\|\|\hat\mu^{(0)}_{[m]}\|}\geq \sqrt\frac{1}{d\log n_{\mathrm{U}}}\biggr)\geq 1-\sqrt\frac{2}{\pi\log n_{\mathrm{U}}} > \frac{3}{5}.
\]
Defining $\Omega_3:=\{M_0 > M/2\}$, by Hoeffding's inequality, we have 
\[
\mathbb{P}(\Omega_3^{\mathrm{c}}) \leq e^{-M/50}.
\]
Let 
\[
\mathcal{M}_1 := \bigl\{m \in [M] \setminus \{\hat{m}\}: \|\hat Q^{[m]} - \hat Q^{[\hat m]}\|_{\mathrm{op}} \leq \mathrm{median}(\|\hat Q^{[m']} - \hat Q^{[\hat m]}\|_{\mathrm{op}}: m'\in [M]\setminus\{\hat m\})\bigr\}.
\]
Since $|\mathcal{M}_1\cup\{\hat m\}|\geq \lceil (M-1)/2\rceil + 1 > M/2$, we have on $\Omega_3$ that $\mathcal{M}_0\cap(\mathcal{M}_1 \cup \{\hat m\})\neq \emptyset$. Thus, on the event $\Omega_1(\omega)\cap\Omega_2(\phi)\cap \Omega_3$, we can let $\tilde m := \min\bigl(\mathcal{M}_0\cap(\mathcal{M}_1 \cup \{\hat m\})\bigr)$, so by definition of $\hat m$, we have 
\begin{align*}
\|\hat Q^{[\hat m]} - \mu^*\mu^{*\top}\|_{\mathrm{op}} &\leq \|\hat Q^{[\hat m]} - \hat Q^{[\tilde m]}\|_{\mathrm{op}} + \|\hat Q^{[\tilde m]} - \mu^*\mu^{*\top}\|_{\mathrm{op}}\\
&\leq \mathrm{median}(\|\hat Q^{[m']} - \hat Q^{[\hat m]}\|_{\mathrm{op}}: m'\in [M]\setminus\{\hat m\}) + \|\hat Q^{[\tilde m]} - \mu^*\mu^{*\top}\|_{\mathrm{op}}\\
&\leq  \mathrm{median}(\|\hat Q^{[m']} - \hat Q^{[\tilde m]}\|_{\mathrm{op}}: m'\in [M]\setminus\{\tilde m\}) + \|\hat Q^{[\tilde m]} - \mu^*\mu^{*\top}\|_{\mathrm{op}}\\
&\leq \max_{m, m'\in\mathcal{M}_0} \|\hat Q^{[m]} - \hat Q^{[m']}\|_{\mathrm{op}} + \|\hat Q^{[\tilde m]} - \mu^*\mu^{*\top}\|_{\mathrm{op}}\\
&\leq 3\max_{m\in\mathcal{M}_0} \|\hat Q^{[m]} - \mu^*\mu^{*\top}\|_{\mathrm{op}}.
\end{align*}
Since $\omega_0 \leq (d\log n)^{-3}$, by discussing cases of $\gamma < \omega$, $\omega\leq \gamma\leq \omega^{2/3}$ and $\gamma>\omega^{2/3}$, we see that $\omega_0^{1/3}\wedge \zeta_0^{1/2} \geq \zeta_0\sqrt{d\log n}$. 
From the proof of Proposition~\ref{Prop:OneGoodInitialization}\emph{(ii)}, we have on the event $\Omega_1(\omega)\cap \Omega_2(\phi)$ that $\limsup_{T\to\infty}\max_{m\in\mathcal{M}_0}L(\hat\mu_{[m]}^{(T)}, \mu^*)\lesssim_r \frac{\omega_0}{\|\mu^*\|}\wedge (\omega_0\gamma^{-1/2})$. Let $\Omega_4$ be the event on which the conclusion of  Proposition~\ref{Prop:LossTransfer} holds.   Then on $\Omega_1(\omega)\cap\Omega_2(\phi)\cap\Omega_3\cap \Omega_4$, we therefore have
\begin{align}
\limsup_{T\to\infty} \|\hat Q^{[\hat m]} - \mu^*\mu^{*\top}\|_{\mathrm{op}}&\lesssim_r \limsup_{T\to\infty} \max_{m\in\mathcal{M}_0} L(\hat\mu_{[m]}^{(T)}, \mu^*) + \frac{d\log n+\log(1/\delta)}{n}\nonumber\\
&\lesssim_r \biggl(\frac{\omega_0}{\|\mu^*\|} \wedge \frac{\omega_0}{\gamma^{1/2}}\biggr) + \frac{d\log n+\log(1/\delta)}{n}\nonumber\\
&\lesssim_r \frac{\omega_0}{\|\mu^*\|} \wedge \zeta_0.\label{Eq:Qbound2}
\end{align}
The desired result follows by combining~\eqref{Eq:Qbound1} and~\eqref{Eq:Qbound2}, and the fact that $\mathbb{P}\bigl(\Omega_1(\omega)\cap\Omega_2(\phi)\cap\Omega_3\cap\Omega_4\bigr) \geq 1 - 3\delta - e^{-M/50}$.
\end{proof}

\subsection{Proof of Corollary~\ref{Cor:MultipleInitialisaiton}}
\label{pfCor:MultipleInitialisaiton}
\begin{proof}[Proof of Corollary~\ref{Cor:MultipleInitialisaiton}]
Fix $P\in\mathcal{P}_d$, define $Z_i := PX_i$ for $i \in [n]$, $\mu^*:=P\nu^*$, $\delta := \epsilon/\bigl\{4\binom{p}{d}\bigr\}$, $\omega_0 := \sqrt\frac{d \log n + \log(1/\delta)}{n_{\mathrm{U}}}$ and $\zeta_0 :=\omega_0^{1/2}\wedge \omega_0\gamma^{-1/2}$.  Then, provided $C_1 > 2$, we have $\delta \geq 2e^{-n/2} / p^d > 2e^{-n}$, and $\|\mu^*\|\leq \|\nu^*\|\leq r$.
Let $c>0$ be chosen, depending only on $r$, to satisfy Theorem~\ref{Thm:LowDimEM}.  By increasing $C_1 > 0$, depending only on $r$, if necessary, we may assume that $\omega_0 \leq \min\{c,(d \log n)^{-3}\}$. Hence, since $(P\Sigma_{\mathrm{w}}P^\top)^{-1}P\Sigma_{\mathrm{b}}P^\top = \mu^*\mu^{*\top}$, we can apply Theorem~\ref{Thm:LowDimEM} to obtain that for some $C_2'>0$ depending only on $r$, with probability at least $1-3\delta-e^{-M/50}$ we have that 
\begin{align*}
\limsup_{T \rightarrow \infty}
&\bigl\|\psi^{(T)}\bigl((PX_i, Y_i)_{i\in[n]}\bigr) - (P\Sigma_{\mathrm{w}}P^\top)^{-1}P\Sigma_{\mathrm{b}}P^\top\bigr\|_{\mathrm{op}} \leq C_2' \zeta_0 \\
&\leq C_2\min\biggl[\biggl\{\frac{d\log (p\vee n)+\log(1/\epsilon)}{n}\biggr\}^{1/4}, \sqrt\frac{d\log (p\vee n)+\log(1/\epsilon)}{n_{\mathrm{L}}}\biggr]\leq \frac{(\nu^*_{\min})^2}{4}.
\end{align*}
Since $\psi^{(T)}$ is permutation equivariant for  each $T\ge 0$,
by Fatou's lemma and a union bound, we have that
\begin{align*}
\limsup_{T \rightarrow \infty} {}&{}\mathbb{P}\biggl( \max_{P \in \mathcal{P}_d} \bigl\|\psi^{(T)}\bigl((PX_i, Y_i)_{i\in[n]}\bigr) - (P\Sigma_{\mathrm{w}}P^\top)^{-1}P\Sigma_{\mathrm{b}}P^\top\bigr\|_{\mathrm{op}} > \frac{(\nu^*_{\min})^2}{4}\biggr) \\
&\leq \mathbb{P}\biggl( \limsup_{T \rightarrow \infty} \max_{P \in \mathcal{P}_d} \bigl\|\psi^{(T)}\bigl((PX_i, Y_i)_{i\in[n]}\bigr) - (P\Sigma_{\mathrm{w}}P^\top)^{-1}P\Sigma_{\mathrm{b}}P^\top\bigr\|_{\mathrm{op}} > \frac{(\nu^*_{\min})^2}{4}\biggr) \\
&\leq \sum_{P\in\mathcal{P}_d} \mathbb{P}\biggl( \limsup_{T \rightarrow \infty} \bigl\|\psi^{(T)}\bigl((PX_i, Y_i)_{i\in[n]}\bigr) - (P\Sigma_{\mathrm{w}}P^\top)^{-1}P\Sigma_{\mathrm{b}}P^\top\bigr\|_{\mathrm{op}} > \frac{(\nu^*_{\min})^2}{4}\biggr) \\
&\leq \binom{p}{d}(3\delta+e^{-M/50})\leq \frac{3}{4}\epsilon + e^{-M/50 + d\log p} \leq \epsilon.
\end{align*}
The result now follows from Theorem~\ref{Thm:Meta}, noting that $\gamma_{\min} = (\nu_{\min}^*)^2$ and $\gamma_{\max} = (\nu^*_{\max})^2$.
\end{proof}

\section{Auxiliary lemmas}

\begin{lemma}
\label{Lemma:EMderiv}
Suppose that $K = 2$ and $\mathcal{C}$ is defined as in~\eqref{Eq:ConstraintSet}. Let $(-\hat \mu^{(t)}, \hat \mu^{(t)}, I_d) \in\mathcal{C}$ be the $t$th iterate of the EM iteration described in~\eqref{Eq:EStep} and~\eqref{Eq:MStep} with data $(Z_1,Y_1),\ldots,(Z_n,Y_n)$, starting from $(-\hat\mu^{(0)}, \hat\mu^{(0)}, I_d)$. Then for all $t\geq 1$, we have 
\[
\hat\mu^{(t)} = \frac{1}{n}\biggl\{\sum_{i:Y_i\neq 0} (-1)^{Y_i} Z_i + \sum_{i:Y_i=0}Z_i\tanh\langle Z_i, \hat\mu^{(t-1)}\rangle\biggr\}.
\]
\end{lemma}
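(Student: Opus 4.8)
The plan is to proceed by induction on $t$, using the observation that the constraint set $\mathcal{C}$ in~\eqref{Eq:ConstraintSet} forces every output of the M step~\eqref{Eq:MStep} to have the form $(-\mu,\mu,I_d)$ for some $\mu\in\mathbb{R}^d$, so that the entire EM iterate is encoded by a single vector. The base case is $t=1$, where we start from the prescribed initializer $(-\hat\mu^{(0)},\hat\mu^{(0)},I_d)$; for the inductive step I would assume the $(t-1)$-th iterate equals $(-\hat\mu^{(t-1)},\hat\mu^{(t-1)},I_d)$ and compute the next one explicitly from~\eqref{Eq:EStep} and~\eqref{Eq:MStep}.

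For the E step, I would substitute $\hat\mu_1=-\hat\mu^{(t-1)}$, $\hat\mu_2=\hat\mu^{(t-1)}$ and $\hat\Sigma_{\mathrm{w}}=I_d$ into~\eqref{Eq:EStep}. Since $(z_i\mp\hat\mu^{(t-1)})^\top(z_i\mp\hat\mu^{(t-1)})=\|z_i\|^2\mp 2\langle z_i,\hat\mu^{(t-1)}\rangle+\|\hat\mu^{(t-1)}\|^2$, the common factor $e^{-\|z_i\|^2/2}e^{-\|\hat\mu^{(t-1)}\|^2/2}$ cancels in the softmax, so for an unlabeled index ($y_i=0$) one obtains $L_{i,1}+L_{i,2}=1$ and $L_{i,2}-L_{i,1}=\tanh\langle z_i,\hat\mu^{(t-1)}\rangle$, while for a labeled index $L_{i,k}=\mathbbm{1}_{\{y_i=k\}}$ gives $L_{i,1}+L_{i,2}=1$ and $L_{i,2}-L_{i,1}=\mathbbm{1}_{\{y_i=2\}}-\mathbbm{1}_{\{y_i=1\}}=(-1)^{y_i}$.

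For the M step, restricting~\eqref{Eq:MStep} to $\mathcal{C}$ means taking $\mu_1=-\mu$, $\mu_2=\mu$ and $\Sigma=I_d$, so that $\log\det\Sigma=0$ and, after expanding the squares and using $L_{i,1}+L_{i,2}=1$, the objective reduces to $\frac1n\sum_{i=1}^n\|z_i\|^2+\|\mu\|^2-\frac2n\sum_{i=1}^n(L_{i,2}-L_{i,1})\langle z_i,\mu\rangle$, a strictly convex quadratic in $\mu$ whose unique minimizer is $\mu=\frac1n\sum_{i=1}^n(L_{i,2}-L_{i,1})z_i$. Substituting the two E-step identities then yields exactly $\hat\mu^{(t)}=\frac1n\bigl\{\sum_{i:y_i\neq 0}(-1)^{y_i}z_i+\sum_{i:y_i=0}z_i\tanh\langle z_i,\hat\mu^{(t-1)}\rangle\bigr\}$, which closes the induction. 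There is no real obstacle here: the computation is elementary, and the only points needing care are the sign bookkeeping that turns $L_{i,2}-L_{i,1}$ into $(-1)^{y_i}$ on labeled data, and the fact that the constraint $\Sigma=I_d$ baked into $\mathcal{C}$ is precisely what makes the $\log\det$ penalty vanish and reduces the M step to an unconstrained least-squares problem in $\mu$.
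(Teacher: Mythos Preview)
Your proposal is correct and follows essentially the same approach as the paper: both compute the E-step soft labels to find $L_{i,2}-L_{i,1}=\tanh\langle z_i,\hat\mu^{(t-1)}\rangle$ for unlabeled data and $(-1)^{y_i}$ for labeled data, then reduce the constrained M-step to a quadratic in $\mu$ whose minimizer is $\frac{1}{n}\sum_i (L_{i,2}-L_{i,1})z_i$. The induction wrapper you add is harmless but not strictly needed, since the claim is really a one-step computation given that the previous iterate lies in $\mathcal{C}$.
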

\begin{proof}
    At step $t\geq 1$, in the E-step, by~\eqref{Eq:EStep}, we have for $k\in\{1,2\}$ that $L_{i,k} = \mathbbm{1}_{\{Y_i = k\}}$ if $Y_i\neq 0$ and 
    \[
    L_{i,k} = \frac{e^{-\|Z_{i}-(-1)^{k}\hat{\mu}^{(t-1)}\|^2/2} }{e^{-\|Z_{i}-\hat{\mu}^{(t-1)}\|^2/2} + e^{-\|Z_{i}+\hat{\mu}^{(t-1)}\|^2/2}}
    \]
    otherwise. In the M-step, defining
    \[
    Q(\mu\mid \hat\mu^{(t-1)}) := \frac{1}{n}\sum_{i=1}^n (L_{i,1}\|Z_i + \mu\|^2 + L_{i,2}\|Z_i-\mu\|^2),
    \]
    we have $\hat\mu^{(t)} = \argmin_{\mu\in\mathbb{R}^d} Q(\mu\mid\hat\mu^{(t-1)})$. Differentiating $Q(\mu\mid \hat\mu^{(t-1)})$ with respect to $\mu$, we obtain 
    \[
    \hat\mu^{(t)} = \frac{1}{n}\sum_{i=1}^n (L_{i,2}-L_{i,1})Z_i.
    \]
    The desired result follows since $L_{i,2}-L_{i,1} = (-1)^{Y_i}$ if $Y_i \in\{1,2\}$, and
    \[
    L_{i,2} - L_{i,1} = \frac{e^{-\|Z_{i}-\hat{\mu}^{(t-1)}\|^2/2} - e^{-\|Z_{i}+\hat{\mu}^{(t-1)}\|^2/2} }{e^{-\|Z_{i}-\hat{\mu}^{(t-1)}\|^2/2} + e^{-\|Z_{i}+\hat{\mu}^{(t-1)}\|^2/2}} = \frac{e^{\langle Z_i,\hat\mu^{(t-1)}\rangle} - e^{-\langle Z_i,\hat\mu^{(t-1)}\rangle}}{e^{\langle Z_i,\hat\mu^{(t-1)}\rangle}+e^{-\langle Z_i,\hat\mu^{(t-1)}\rangle}} = \tanh\langle Z_i, \hat\mu^{(t-1)}\rangle
    \]
    if $Y_i = 0$.
\end{proof}

\begin{lemma}
\label{Lemma:TestingRadius}
Let $X_1,\ldots,X_n\stackrel{\mathrm{iid}}{\sim} P$ for some distribution $P$ on $\mathbb{R}^d$. If $\|\mu^*\|\leq n^{-1/4}$, then for any Borel measurable function $\psi: (\mathbb{R}^d)^n \to \{0,1\}$ of the null hypothesis $H_0: P = \mathcal{N}_d(0,I_d)$ against the alternative $H_1: P = \frac{1}{2}\mathcal{N}_d(\mu^*,I_d) + \frac{1}{2}\mathcal{N}_d(-\mu^*,I_d)$, we have
\[
\mathbb{P}_{H_0}\bigl(\psi(X_1,\ldots,X_n) = 1\bigr) + \mathbb{P}_{H_1}\bigl(\psi(X_1,\ldots,X_n) = 0\bigr) > 1/2.
\]
\end{lemma}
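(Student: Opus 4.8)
The plan is to reduce the testing lower bound to an upper bound on a total variation distance, and then control that distance via the $\chi^2$-divergence. First I would invoke the standard two-point (Le Cam) inequality: for any Borel measurable test $\psi:(\mathbb{R}^d)^n \to \{0,1\}$,
\[
\mathbb{P}_{H_0}\bigl(\psi(X_1,\ldots,X_n) = 1\bigr) + \mathbb{P}_{H_1}\bigl(\psi(X_1,\ldots,X_n) = 0\bigr) \geq 1 - \mathrm{TV}\bigl(\mathbb{P}_{H_0}^{\otimes n}, \mathbb{P}_{H_1}^{\otimes n}\bigr),
\]
which follows because $\mathbb{P}_{H_1}(\psi=1) - \mathbb{P}_{H_0}(\psi=1) \leq \mathrm{TV}(\mathbb{P}_{H_1}^{\otimes n},\mathbb{P}_{H_0}^{\otimes n})$. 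Thus it suffices to show that this total variation distance is strictly smaller than $1/2$.

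Next I would compute the $\chi^2$-divergence for a single observation. Writing the likelihood ratio of $H_1$ to $H_0$ at $x\in\mathbb{R}^d$ as $L(x) = e^{-\|\mu^*\|^2/2}\cosh\langle x,\mu^*\rangle$, and using that $\langle X,\mu^*\rangle \sim \mathcal{N}(0,\|\mu^*\|^2)$ under $H_0$ together with the identity $\cosh^2 u = (\cosh 2u + 1)/2$ and the Gaussian moment generating function $\mathbb{E}[e^{tZ}] = e^{t^2\sigma^2/2}$ for $Z\sim\mathcal{N}(0,\sigma^2)$, one obtains
\[
1 + \chi^2\bigl(\mathbb{P}_{H_1} \,\|\, \mathbb{P}_{H_0}\bigr) = \mathbb{E}_{H_0}\bigl[L(X)^2\bigr] = \cosh\bigl(\|\mu^*\|^2\bigr).
\]
By tensorization of the $\chi^2$-divergence, $1 + \chi^2\bigl(\mathbb{P}_{H_1}^{\otimes n} \,\|\, \mathbb{P}_{H_0}^{\otimes n}\bigr) = \cosh(\|\mu^*\|^2)^n$.

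Then I would combine the Cauchy--Schwarz bound $\mathrm{TV}(P,Q)^2 \leq \tfrac14\chi^2(P\|Q)$ with the elementary inequality $\cosh u \leq e^{u^2/2}$ (proved term by term from $(2k)! \geq 2^k k!$) to get
\[
\mathrm{TV}\bigl(\mathbb{P}_{H_0}^{\otimes n}, \mathbb{P}_{H_1}^{\otimes n}\bigr)^2 \leq \frac14\bigl(\cosh(\|\mu^*\|^2)^n - 1\bigr) \leq \frac14\bigl(e^{n\|\mu^*\|^4/2} - 1\bigr).
\]
Under the hypothesis $\|\mu^*\| \leq n^{-1/4}$ we have $n\|\mu^*\|^4 \leq 1$, so the right-hand side is at most $(e^{1/2}-1)/4 < 1/4$, whence $\mathrm{TV}(\mathbb{P}_{H_0}^{\otimes n}, \mathbb{P}_{H_1}^{\otimes n}) < 1/2$; plugging this into the first display gives the strict inequality claimed.

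Every step here is a standard inequality, so there is no serious obstacle. The only points needing care are getting the $\chi^2$ computation exactly right (the $\cosh$-square identity and the Gaussian moment generating function conspire to give precisely $\cosh(\|\mu^*\|^2)$), and verifying that the resulting numerical constant $e^{1/2}-1$ is genuinely below $1$, so that the strict inequality in the conclusion is preserved.
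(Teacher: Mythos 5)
Your proof is correct and follows essentially the same route as the paper: reduce to a total variation bound via Le Cam's two-point inequality, control the total variation by the $\chi^2$-divergence, arrive at $\cosh^n(\|\mu^*\|^2)$, and finish with $\cosh u \leq e^{u^2/2}$ and the numerical check on $e^{1/2}-1$. The only (immaterial) difference is that you compute the single-observation $\chi^2$-divergence directly and tensorize, whereas the paper obtains $\cosh^n(\|\mu^*\|^2)$ in one step via the Ingster--Suslina device.
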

\begin{proof}
Write $X = (X_1,\ldots,X_n)^\top$.  Observe that, writing $d_{\mathrm{TV}}$ for the total variation distance between probability measures,
\begin{align}
\mathbb{P}_{H_0}\bigl(\psi(X) = 1\bigr) &+ \mathbb{P}_{H_1}\bigl(\psi(X) = 0\bigr) \geq 1 - d_{\mathrm{TV}}(\mathbb{P}_{H_0}, \mathbb{P}_{H_1}) \nonumber\\
& = 1 - \frac{1}{2}\int \biggl|\frac{d\mathbb{P}_{H_1}}{d\mathbb{P}_{H_0}} - 1\biggr|\,d\mathbb{P}_{H_0} \geq 1 - \frac{1}{2}\biggl\{\int \biggl(\frac{d\mathbb{P}_{H_1}}{d\mathbb{P}_{H_0}} - 1\biggr)^2\,d\mathbb{P}_{H_0}\biggr\}^{1/2}\nonumber\\
 & = 1 - \frac{1}{2}\biggl\{\int \biggl(\frac{d\mathbb{P}_{H_1}}{d\mathbb{P}_{H_0}}\biggr)^2 \,d\mathbb{P}_{H_0} - 1\biggr\}^{1/2}.\label{Eq:ChiSquaredDivergence}
\end{align}
To control the chi-squared divergence in the right-hand side of~\eqref{Eq:ChiSquaredDivergence} above, we let $\xi = (\xi_1,\ldots,\xi_n)^\top$ have independent Rademacher components and $W = (W_{i,j})_{i\in[n],j\in[d]}$ be a random matrix with independent $N(0,1)$ entries, independent of $\xi$. Then $X\stackrel{\mathrm{d}}{=} W$ under $H_0$ and $X\stackrel{\mathrm{d}}{=} \xi \mu^{*\top} + W$ under $H_1$. Let $\tilde\xi$ be an independent copy of $\xi$. Using the Ingster--Suslina device, see, e.g.,
\cite{ingster2012nonparametric}, 
\cite{liu2021minimax}, Lemma~21, we have that 
\[
\int \biggl(\frac{d\mathbb{P}_{H_1}}{d\mathbb{P}_{H_0}}\biggr)^2 \,d\mathbb{P}_{H_0} = \mathbb{E}\exp\langle \xi \mu^{*\top}, \tilde\xi \mu^{*\top}\rangle = \cosh^n(\|\mu^*\|^2) \leq e^{n\|\mu^*\|^4/2}\leq e^{1/2},
\]
where we used the fact that $\cosh x \leq e^{x^2/2}$ for all $x\in\mathbb{R}$ in the penultimate step. The desired result follows from substituting the above bound into~\eqref{Eq:ChiSquaredDivergence} and the fact that $1-(e^{1/2}-1)^{1/2}/2 > 1/2$.
\end{proof}
We prove a generalization of Cochran's theorem for quadratic forms of independent Gaussian random vectors with a common covariance matrix, which result in independent noncentral Wishart distributions. Recall that if $X$ is a matrix, then $\mathrm{vec}(X)$ is the vectorization of $X$, obtained by stacking its columns on top of each other. The Kronecker product between matrices $A = (A_{i,j})_{i\in[m],j\in[n]}$ and $B$ is defined as 
\[
A\otimes B :=\begin{pmatrix}
A_{1,1}B& \cdots& A_{1,n}B\\
\vdots & \ddots & \vdots\\
A_{m,1}B & \cdots & A_{m,n}B
\end{pmatrix}.
\]
Recall also that when $X_1,\ldots,X_n\stackrel{\mathrm{iid}}{\sim} \mathcal{N}_d(0,\Sigma)$, the matrix $\sum_{i=1}^n X_iX_i^\top$ has a $d$-dimensional Wishart distribution with $n$ degrees of freedom and covariance matrix $\Sigma \in \mathbb{S}^{d \times d}$, denoted $\mathcal{W}_d(n,\Sigma)$.  More generally, $\sum_{i=1}^n (X_i+\mu_i)(X_i+\mu_i)^\top$ has a non-central Wishart distribution with $n$ degrees of freedom, covariance matrix $\Sigma$ and non-centrality matrix $\Omega = \sum_{i=1}^n\mu_i\mu_i^\top$, written $\mathcal{W}_d(n,\Sigma;\Omega)$.  Thus $\mathcal{W}_d(n,\Sigma;0) \stackrel{d}{=} \mathcal{W}_d(n,\Sigma)$.
\begin{lemma}
\label{Lem:CochranVariant}
Let $Z_1,\ldots,Z_n$ be independent with $Z_i\sim \mathcal{N}_d(\mu_i, \Sigma)$ for $i \in [n]$, and write $Z := (Z_1,\ldots,Z_n)^\top \in \mathbb{R}^{n \times d}$ and $M:=\mathbb{E}(Z)$. If $P_1,\ldots,P_k \in \mathbb{R}^{n \times n}$ are positive semidefinite matrices such that $P_1+\cdots+P_k=I_n$ and $\mathrm{rank}(P_1)+\cdots+\mathrm{rank}(P_k)=n$, then $Z^\top P_1 Z,\ldots,Z^\top P_kZ$ are independent with $Z^\top P_r Z\sim \mathcal{W}_d\bigl(\mathrm{rank}(P_r), \Sigma; M^\top P_rM\bigr)$. 
\end{lemma}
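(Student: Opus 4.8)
The plan is to reduce the statement to the defining description of the (non-central) Wishart law via a single orthogonal change of coordinates, after a short linear-algebraic preprocessing step. First I would show that the hypotheses force each $P_r$ to be an orthogonal projection with ranges that are mutually orthogonal. To this end, factor each $P_r\succeq 0$ as $P_r=Q_r^\top Q_r$ with $Q_r\in\mathbb{R}^{\mathrm{rank}(P_r)\times n}$ of full row rank (e.g.\ from the eigendecomposition), and stack these into $Q:=(Q_1^\top,\ldots,Q_k^\top)^\top\in\mathbb{R}^{m\times n}$ where $m=\sum_r\mathrm{rank}(P_r)=n$. Then $Q$ is square and $Q^\top Q=\sum_r P_r=I_n$, so $Q$ is orthogonal; hence also $QQ^\top=I_n$, which says precisely that $Q_rQ_s^\top=\delta_{rs}I_{\mathrm{rank}(P_r)}$. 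In particular $P_r^2=Q_r^\top(Q_rQ_r^\top)Q_r=P_r$ and $P_rP_s=Q_r^\top(Q_rQ_s^\top)Q_s=0$ for $r\neq s$.

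Next I would pass to rotated coordinates. Set $U_r:=Q_r^\top\in\mathbb{O}^{n\times\mathrm{rank}(P_r)}$, so that $U:=(U_1,\ldots,U_k)=Q^\top$ is an orthogonal $n\times n$ matrix, and write $W_r:=U_r^\top Z\in\mathbb{R}^{\mathrm{rank}(P_r)\times d}$. Since the rows of $Z$ are independent with $Z_i\sim\mathcal{N}_d(\mu_i,\Sigma)$, the stacked matrix $U^\top Z$ is jointly Gaussian with $\mathbb{E}(U^\top Z)=U^\top M$, and a direct computation of the cross-covariance of its $j$th and $j'$th rows gives $(U^\top U)_{j,j'}\Sigma=\delta_{j,j'}\Sigma$; thus the rows of $U^\top Z$ are again independent $\mathcal{N}_d(\cdot,\Sigma)$ vectors. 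The blocks $W_1,\ldots,W_k$ of $U^\top Z$ are disjoint subcollections of these independent rows, so $W_1,\ldots,W_k$ are mutually independent, with each $W_r$ having independent rows distributed as $\mathcal{N}_d\bigl((U_r^\top M)_j,\Sigma\bigr)$.

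Finally I would identify the quadratic forms: using $P_r=U_rU_r^\top$, we have $Z^\top P_rZ=(U_r^\top Z)^\top(U_r^\top Z)=W_r^\top W_r=\sum_j (W_r)_j(W_r)_j^\top$, a sum of $\mathrm{rank}(P_r)$ independent rank-one terms built from independent $\mathcal{N}_d(\cdot,\Sigma)$ vectors, so by the definition of the non-central Wishart law recalled just above the lemma, $W_r^\top W_r\sim\mathcal{W}_d\bigl(\mathrm{rank}(P_r),\Sigma;\Omega_r\bigr)$ with $\Omega_r=(U_r^\top M)^\top(U_r^\top M)=M^\top U_rU_r^\top M=M^\top P_rM$; and mutual independence of $Z^\top P_1Z,\ldots,Z^\top P_kZ$ is inherited from that of $W_1,\ldots,W_k$ since $Z^\top P_rZ$ is a function of $W_r$ alone. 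I do not expect a genuine obstacle here; the only point requiring care is the bookkeeping with the matrix-normal/vectorization conventions in the covariance computation for $U^\top Z$, and the identification of the correct non-centrality matrix. If one prefers, the last paragraph can instead be phrased via the standard $\mathrm{vec}$ and Kronecker-product identities ($\mathrm{vec}(AZB)=(B^\top\otimes A)\mathrm{vec}(Z)$) foreshadowed in the surrounding text, which makes the independence and the law of each block completely mechanical.
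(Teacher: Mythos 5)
Your proof is correct and follows essentially the same route as the paper's: show that the $P_r$ are mutually orthogonal idempotents, rotate by an orthogonal matrix so that each quadratic form becomes a sum of outer products over a disjoint block of independent $\mathcal{N}_d(\cdot,\Sigma)$ rows, and read off the non-central Wishart law and the independence. The only cosmetic difference is that you derive the simultaneous orthogonal structure explicitly by stacking square-root factors of the $P_r$ (where the paper cites the classical Cochran argument) and you obtain independence from disjointness of the row blocks rather than from the vanishing Kronecker-product cross-covariance of $P_rZ$ and $P_{r'}Z$; both are sound.
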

\begin{proof}
As in the proof of the classical Cochran's theorem \citep{cochran1934distribution}, we first note that $P_1,\ldots,P_k$ can be simultaneously diagonalised such that
\[
P_r = Q D_r Q^{\top},
\]
for some $Q \in \mathbb{O}^{n \times n}$ and $D_r = \mathrm{diag}\bigl((\mathbbm{1}_{\{j \in S_r\}})_{j \in [n]}\bigr)$, where $S_r \subseteq [n]$, $|S_r|=\mathrm{rank}(P_r)$ and $S_r\cap S_{r'} = \emptyset$ for all $r\neq r'$. In particular, $P_1,\ldots,P_k$ satisfy $P_r^2=P_r$ for $r \in [k]$ and $P_r P_{r'}=0$  for all $r\neq r'$. Since $P_1Z,\ldots,P_kZ$ are jointly Gaussian, with
\begin{align*}
  \mathrm{Cov}\bigl(\mathrm{vec}(P_rZ), \mathrm{vec}(P_{r'}Z)\bigr) &= \mathrm{Cov}\bigl((I_d\otimes P_r) \mathrm{vec}(Z), (I_d\otimes P_{r'} ) \mathrm{vec}(Z)\bigr) \\
  &= (I_d \otimes P_r)(\Sigma \otimes I_n)(I_d \otimes P_{r'})^\top = 0,
\end{align*}
we have that $P_1Z,\ldots,P_kZ$ are independent.  But $Z^\top P_r Z = (P_rZ)^\top P_rZ$, so it follows that $Z^\top P_1 Z,\ldots,Z^\top P_kZ$ are independent. Moreover, writing $W = (W_1,\ldots,W_n)^\top :=Q^{\top}Z$, we have $\mathrm{vec}(Z) \sim \mathcal{N}_{nd}\bigl(\mathrm{vec}(M),\Sigma \otimes I_n\bigr)$, so
\[
\mathrm{vec}(W) = (I_d \otimes Q^{\top})\mathrm{vec}(Z) \sim \mathcal{N}_{nd}\bigl((I_d \otimes Q^{\top})\mathrm{vec}(M),  \Sigma \otimes I_n\bigr) \stackrel{d}{=} \mathcal{N}_{nd}\bigl(\mathrm{vec}(Q^{\top}M), \Sigma \otimes I_n\bigr).
\]
Therefore,
\begin{align*}
  Z^\top P_r Z = W^\top  D_r W = \sum_{i\in S_r} W_iW_i^\top &\sim \mathcal{W}_d\biggl(|S_r|, \Sigma; \sum_{i\in S_r}\mathbb{E}(W_i)\mathbb{E}(W_i)^\top\biggr) \\
  &\stackrel{d}{=} \mathcal{W}_d\bigl(\mathrm{rank}(P_r), \Sigma; M^\top P_r M\bigr),
\end{align*}
as desired.
\end{proof}
\begin{lemma}\label{tanh-lem}
For any $a, b \in \mathbb{R}$, we have
\[
\frac{1}{2} \{\tanh(a+b) - \tanh(a-b)\} \leq |b|
\]
and
\[
\frac{a}{2} \{\tanh(a+b) + \tanh(a-b)\} \geq a^2 - \frac{a^4}{3} - a^2b^2
\]
\end{lemma}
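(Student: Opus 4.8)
\textbf{Proof proposal for Lemma~\ref{tanh-lem}.}

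The plan is to reduce both inequalities to elementary one‑variable facts about $\tanh$ by first deriving exact identities for $\tanh(a+b)\pm\tanh(a-b)$. Starting from the addition formula $\tanh(a\pm b)=(\tanh a\pm\tanh b)/(1\pm\tanh a\tanh b)$, a short computation over the common denominator $1-\tanh^2 a\tanh^2 b$ gives
\[
\tanh(a+b)-\tanh(a-b)=\frac{2\tanh b\,(1-\tanh^2 a)}{1-\tanh^2 a\tanh^2 b},\qquad
\tanh(a+b)+\tanh(a-b)=\frac{2\tanh a\,(1-\tanh^2 b)}{1-\tanh^2 a\tanh^2 b}.
\]
For the first inequality, note that since $\tanh^2 b\le 1$ the denominator satisfies $1-\tanh^2 a\tanh^2 b\ge 1-\tanh^2 a\ge 0$, so $(1-\tanh^2 a)/(1-\tanh^2 a\tanh^2 b)\le 1$; combining this with $|\tanh b|\le|b|$ yields $\frac{1}{2}\{\tanh(a+b)-\tanh(a-b)\}=\tanh b\,(1-\tanh^2 a)/(1-\tanh^2 a\tanh^2 b)\le|\tanh b|\le|b|$. (Equivalently, one can bypass the identity: by the mean value theorem the left‑hand side equals $\frac{1}{2}\int_{a-b}^{a+b}(1-\tanh^2 t)\,dt$, whose absolute value is at most $|b|$ since $0\le 1-\tanh^2 t\le 1$.)

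For the second inequality, the same denominator bound used in the other direction, $1-\tanh^2 a\tanh^2 b\le 1$, together with $a\tanh a\ge 0$ and $1-\tanh^2 b\ge 0$, gives
\[
\frac{a}{2}\{\tanh(a+b)+\tanh(a-b)\}=\frac{a\tanh a\,(1-\tanh^2 b)}{1-\tanh^2 a\tanh^2 b}\ge a\tanh a\,(1-\tanh^2 b)=a\tanh a-a\tanh a\,\tanh^2 b.
\]
It then suffices to establish three elementary facts, valid for all real $x$: (a) $x\tanh x\le x^2$, immediate from $|\tanh x|\le|x|$; (b) $\tanh^2 x\le x^2$; and (c) $x\tanh x\ge x^2-x^4/3$. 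For (c) I would check that $h(x):=\tanh x-x+x^3/3$ has $h(0)=0$ and $h'(x)=x^2-\tanh^2 x\ge 0$, so $h\ge 0$ on $[0,\infty)$, giving $\tanh x\ge x-x^3/3$ there; multiplying by $x\ge 0$ and using that both $x\tanh x$ and $x^2-x^4/3$ are even extends (c) to all $x$. Applying (c) to $a\tanh a$ and (a),(b) to bound $a\tanh a\,\tanh^2 b\le a^2 b^2$ (a product of nonnegative quantities), we get $a\tanh a-a\tanh a\,\tanh^2 b\ge(a^2-a^4/3)-a^2 b^2$, which is the claim.

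The computations here are routine; the only place requiring care is the sign bookkeeping in the second inequality, where $a$ need not be nonnegative and no absolute values appear in the target bound. One must use $a\tanh a\ge 0$ (not merely a bound on $\tanh a$) both to drop the denominator and to apply (c), and apply (a) and (b) in the correct directions so that $-a\tanh a\,\tanh^2 b\ge-a^2 b^2$. Everything else is a direct chain of inequalities.
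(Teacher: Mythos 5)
Your proof is correct and follows essentially the same route as the paper's: both rest on the $\tanh$ addition formula, dropping the denominator via $1-\tanh^2 a\tanh^2 b\leq 1$, and the elementary bounds $1-\tanh^2 b\geq 1-b^2$ (the paper writes this as $\cosh^{-2}b\geq 1-b^2$) and $\tanh a\geq a-a^3/3$ for $a\geq 0$. The only cosmetic difference is in the first inequality, where the paper argues by monotonicity in $a$ to reduce to $a=0$ while you read the bound off the explicit identity (or the integral of $\mathrm{sech}^2$); your sign bookkeeping via $a\tanh a\geq 0$ in place of the paper's without-loss-of-generality reductions is equally valid.
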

\begin{proof}
For the first inequality, since the left-hand side is an increasing function of $b$, and an even function of $a$, we may assume that $a \geq 0$ and $b\geq 0$. Notice that $\frac{\partial}{\partial a}\bigl(\tanh(a+b) - \tanh(a-b)\bigr)= 1/\cosh^2(a+b) - 1/\cosh^2(|a-b|) \leq 0$, since $x\mapsto\cosh(x)$ is an increasing function on $[0,\infty)$. Hence
\[
\frac{1}{2} \{\tanh(a+b) - \tanh(a-b)\} \leq \tanh b \leq b.
\]
as desired. 

For the second inequality, since both sides are even functions of both $a$ and $b$, we may again assume without loss of generality that $a > 0$ and $b\geq 0$. We may also assume that $b\leq 1$ since otherwise, the right-hand side is negative and the inequality holds trivially.  But then 
\begin{align*}
\frac{1}{2} \{\tanh(a+b) + \tanh(a-b)\} &= \frac{1}{2}\biggl(\frac{\tanh a + \tanh b}{1 +\tanh a \tanh b} + \frac{\tanh a - \tanh b}{1 -\tanh a \tanh b}\biggr) \\
&= \frac{\tanh a}{(1 - \tanh^2 a \tanh^2 b)\cosh^2 b} \geq \frac{\tanh a}{\cosh^2 b} \\
&\geq (1-b^2)\tanh a \geq (1-b^2)\biggl(a - \frac{a^3}{3}\biggr) \geq a - \frac{a^3}{3} - ab^2,
\end{align*}
as desired. Here, the second inequality holds because $(1-b^2)\cosh^2 b \leq (1 - b^2)e^{b^2} \leq 1$.
\end{proof}

\begin{lemma}
\label{Lemma:H}
Let $H:[0,\infty) \rightarrow [0,\infty)$ be an increasing, concave function with $H'(x_0) < 1$ for some $x_0 \geq 0$ and either $H(0) > 0$ or both $H(0)=0$ and $H'(0) > 1$.  Then there exists a unique $\alpha^*>0$ such that
\[
H(\alpha) - \alpha \begin{cases} >0 & \alpha \in (0, \alpha^*)\\ =0 &\alpha = \alpha^* \\ < 0 & \alpha \in (\alpha^*, \infty).\end{cases}
\] 
Moreover, if $\alpha_0 > 0$, then the sequence $(\alpha_t)_{t \geq 0}$ given by $\alpha_t := H(\alpha_{t-1})$ monotonically converges to $\alpha^*$. 
\end{lemma}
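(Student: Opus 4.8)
The plan is to analyse the auxiliary function $\phi(\alpha) := H(\alpha) - \alpha$ on $[0,\infty)$, which is concave (being the sum of the concave $H$ and the affine $-\mathrm{id}$) and therefore continuous on $(0,\infty)$, and to show that it has a single sign change. First I would record that $\phi$ is strictly positive on some interval $(0,\epsilon)$: if $H(0) > 0$ this holds by continuity together with $\phi(0) = H(0) > 0$, and if instead $H(0) = 0$ and $H'(0) > 1$ it follows from $\frac{\phi(\alpha) - \phi(0)}{\alpha} = \frac{H(\alpha)}{\alpha} - 1 \to H'(0) - 1 > 0$ as $\alpha \downarrow 0$, so $\phi(\alpha) > 0$ for all small $\alpha > 0$. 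Next I would show $\phi(\alpha) \to -\infty$ as $\alpha \to \infty$: since $H$ is concave and differentiable at $x_0$, the supergradient inequality gives $H(\alpha) \le H(x_0) + H'(x_0)(\alpha - x_0)$ for every $\alpha \ge 0$, whence $\phi(\alpha) \le \bigl(H(x_0) - H'(x_0)x_0\bigr) + \bigl(H'(x_0) - 1\bigr)\alpha \to -\infty$ because $H'(x_0) - 1 < 0$.

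From these two facts the intermediate value theorem (applied to the continuous $\phi$ on $(0,\infty)$) produces a zero $\alpha^* \in (0,\infty)$. For uniqueness I would argue by concavity: if $\phi$ vanished at two points $0 < \alpha_1 < \alpha_2$, pick $\alpha_0 \in (0,\alpha_1)$ with $\phi(\alpha_0) > 0$ (possible since $\phi > 0$ on a right-neighbourhood of $0$ and $\alpha_1 \ge \epsilon$, as $\phi(\alpha_1)=0$ rules out $\alpha_1 \in (0,\epsilon)$); writing $\alpha_1$ as a strict convex combination of $\alpha_0$ and $\alpha_2$, concavity gives $0 = \phi(\alpha_1) \ge \lambda\phi(\alpha_0) + (1-\lambda)\phi(\alpha_2) = \lambda\phi(\alpha_0) > 0$, a contradiction. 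Thus $\alpha^*$ is the unique zero of $\phi$ on $(0,\infty)$; since $\phi$ then has constant sign on each of $(0,\alpha^*)$ and $(\alpha^*,\infty)$, it is positive on the former (it is positive near $0$) and negative on the latter (it is negative for large $\alpha$), which is precisely the claimed sign pattern. In particular $\alpha^*$ is the unique strictly positive fixed point of $H$, since $H(\alpha) = \alpha \iff \phi(\alpha) = 0$.

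For the iteration, fix $\alpha_0 > 0$ and split into the cases $\alpha_0 < \alpha^*$, $\alpha_0 = \alpha^*$, $\alpha_0 > \alpha^*$. The middle case is trivial. If $\alpha_0 < \alpha^*$, then $\phi(\alpha_0) > 0$ gives $\alpha_1 = H(\alpha_0) > \alpha_0 > 0$, and monotonicity of $H$ together with $\alpha_0 < \alpha^*$ gives $\alpha_1 = H(\alpha_0) < H(\alpha^*) = \alpha^*$; iterating, $(\alpha_t)$ is increasing and bounded above by $\alpha^*$, hence converges to some $L \in (0,\alpha^*]$, and letting $t\to\infty$ in $\alpha_{t+1} = H(\alpha_t)$ (using continuity of $H$ at the interior point $L$) yields $H(L) = L$, forcing $L = \alpha^*$. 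The case $\alpha_0 > \alpha^*$ is symmetric: $\phi(\alpha_0) < 0$ and monotonicity give $\alpha^* < \alpha_1 < \alpha_0$, so $(\alpha_t)$ is decreasing and bounded below by $\alpha^*$, and the same limiting argument gives convergence to $\alpha^*$.

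The main obstacle I anticipate is the careful handling of the regularity of the concave function $H$: one must justify the tangent-line bound from differentiability at the single point $x_0$ alone, and one must be alert to the possibility that $H$ is discontinuous at the left endpoint $0$. This is why all quantitative work is performed on the open interval $(0,\infty)$, where concavity guarantees continuity, and why the near-origin positivity of $\phi$ in the case $H(0)=0$ is extracted from the existence of $H'(0)$ rather than from continuity at $0$. Once these points are settled, the remainder is elementary convexity and the monotone convergence theorem.
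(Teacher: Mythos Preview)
Your proof is correct and takes essentially the same approach as the paper: both define the concave function $\phi(\alpha)=H(\alpha)-\alpha$, show it is positive near $0$ and tends to $-\infty$ (via the supergradient/tangent bound at $x_0$), invoke the intermediate value theorem and concavity for existence/uniqueness of $\alpha^*$, and then run the standard monotone-convergence argument for the iterates. Your treatment is slightly more careful than the paper's about continuity at the boundary and about the uniqueness argument, but the overall strategy is identical.
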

\begin{proof}
For the first claim, consider the concave function $\tilde{H}(x) := H(x) - x$, which satisfies $\tilde{H}(x) > 0$ for sufficiently small $x > 0$, and for $x \geq x_0$, we have that any supergradient $v_x \in \mathbb{R}$ of $\tilde{H}$ at $x$ satisfies $v_x \leq -\bigl\{1 - H'(x_0)\bigr\} < 0$.  It follows that $\tilde{H}(x) \rightarrow - \infty$ as $x \rightarrow \infty$, so by the intermediate value theorem, there exists $\alpha^* \in (0,\infty)$ such that $\tilde{H}(\alpha^*) = 0$, i.e.~$H(\alpha^*) = \alpha^*$.  Again using the facts that $\tilde{H}(x) > 0$ for sufficiently small $x > 0$, and $\tilde{H}(x) \rightarrow - \infty$ as $x \rightarrow \infty$, we see that the concave function $\tilde{H}$ can only cross the $x$-axis at one positive value $\alpha^*$, and $\tilde H(\alpha) >0$ for $\alpha \in (0,\alpha^*)$ and $\tilde H(\alpha) < 0$ for $\alpha\in (\alpha^*,\infty)$. 

Next, note that if $\alpha \in (0, \alpha^*)$, then $\alpha < H(\alpha) < H(\alpha^*) = \alpha^*$.  Thus, if $\alpha_0 < \alpha^*$, then $(\alpha_t)_{t \geq 0}$ is an increasing sequence, bounded above by $\alpha^*$, so it converges to a limit.  But then, taking limits on both sides of the recursion $\alpha_t := H(\alpha_{t-1})$, we deduce that this limit must be $\alpha^*$.  A similar argument can be used to show that if $\alpha_0 \in (\alpha^*,\infty)$ then $(\alpha_t)_{t \geq 0}$ decreases down to the limit $\alpha^*$, while if $\alpha_0 = \alpha^*$, then $\alpha_t = \alpha^*$ for all $t$.
\end{proof}

\begin{lemma}
\label{Lemma:q}
Let $\mu^*$ be a non-zero vector in $\mathbb{R}^d$, let $Z \sim \frac{1}{2}\mathcal{N}_d(-\mu^*,I_d) + \frac{1}{2}\mathcal{N}_d(\mu^*,I_d)$, let $\eta := \mu^*/\|\mu^*\|$, and define $q:[0,\infty) \rightarrow [0,\infty)$ by
\[
q(\alpha) := \left\{ \begin{array}{ll} \alpha^{-1}\eta^\top \mathbb{E}(Z\tanh \langle \alpha \eta,Z\rangle) & \mbox{if $\alpha > 0$} \\
1+\|\mu^*\|^2 & \mbox{if $\alpha = 0$.} \end{array} \right.
\]
Then $q$ is a differentiable function with $q(\|\mu^*\|) = 1$ and for any $h \geq \|\mu^*\|$, we have
\[
\sup_{\alpha \in [0,h]} \frac{q'(\alpha)}{\alpha} \leq -\frac{e^{-h^2/2}}{3 \cdot 2^{11}\sqrt{2\pi}(h^5 \vee 1)}. 
\]
\end{lemma}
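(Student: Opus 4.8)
The plan is to reduce the lemma to a one-dimensional problem and then run an elementary analysis. By rotational invariance of the mixture, set $s := \|\mu^*\|$ and $Z_1 := \langle\eta, Z\rangle \sim \tfrac12\mathcal{N}(-s,1) + \tfrac12\mathcal{N}(s,1)$; then $\eta^\top\mathbb{E}\bigl(Z\tanh\langle\alpha\eta,Z\rangle\bigr) = \mathbb{E}\bigl(Z_1\tanh(\alpha Z_1)\bigr) =: \psi(\alpha)$, so $q(\alpha) = \psi(\alpha)/\alpha$ for $\alpha > 0$. Since $\tanh$ is odd, $\psi$ is odd, and differentiation under the expectation (justified because each $\alpha$-derivative of $z \mapsto z\tanh(\alpha z)$ is bounded, uniformly in $\alpha$, by a polynomial in $z$ of finite expectation) shows $\psi \in C^\infty(\mathbb{R})$ with $\psi(0) = 0$ and $\psi'(0) = \mathbb{E}(Z_1^2) = 1 + s^2$. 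Hence $q$ extends to a smooth even function on $\mathbb{R}$ with $q(0) = 1 + \|\mu^*\|^2$ and $q'(0) = 0$, so $q$ is differentiable and $q'(\alpha)/\alpha$ is continuous at $0$; it therefore suffices to prove the displayed bound for $\alpha \in (0, h]$.

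For the fixed-point identity, let $\varepsilon \in \{-1,1\}$ be the latent mixture label, so $Z_1 = \varepsilon s + W$ with $W \sim \mathcal{N}(0,1)$ independent of $\varepsilon$. A one-line posterior computation gives $\mathbb{E}(\varepsilon \mid Z_1 = z) = (e^{zs} - e^{-zs})/(e^{zs} + e^{-zs}) = \tanh(sz)$, whence, by the tower property, $\psi(s) = \mathbb{E}\bigl(Z_1\tanh(sZ_1)\bigr) = \mathbb{E}\bigl(Z_1\,\mathbb{E}(\varepsilon\mid Z_1)\bigr) = \mathbb{E}(\varepsilon Z_1) = s$, i.e.\ $q(s) = 1$.

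Next, differentiating $q = \psi/\alpha$ gives $q'(\alpha) = \alpha^{-2}\bigl(\alpha\psi'(\alpha) - \psi(\alpha)\bigr) = \alpha^{-2}\mathbb{E}\bigl(Z_1\,\rho(\alpha Z_1)\bigr)$, where $\rho(t) := t\,\mathrm{sech}^2 t - \tanh t$ is odd and satisfies $\rho'(t) = -2t\,\mathrm{sech}^2 t\,\tanh t \le 0$ for $t \ge 0$; thus $\rho(t) \le 0$ for $t \ge 0$, $t\rho(t) \le 0$ for all $t$, and so $q'(\alpha) < 0$. I would also record the cubic lower bound $-\rho(t) = \int_0^t 2u\,\mathrm{sech}^2 u\,\tanh u\,\D u \ge t^3/12$ for $t \in [0,1]$, obtained from $\mathrm{sech}^2 u \ge \mathrm{sech}^2 1 \ge 1/4$ and (by concavity of $\tanh$ on $[0,\infty)$) $\tanh u \ge u\tanh 1 \ge u/2$ on $[0,1]$.

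Finally, I would write $q'(\alpha)/\alpha = -\alpha^{-3}\mathbb{E}\bigl(|Z_1|\,|\rho(\alpha Z_1)|\bigr)$ and bound the expectation from below by restricting to an event on which $\alpha|Z_1| \le 1$: take $Z_1 \in I_h := [\tfrac{1}{2h}, \tfrac1h]$ when $h \ge 1$ and $Z_1 \in I_h := [\tfrac12, 1]$ when $h < 1$. On $\{Z_1 \in I_h\}$ one has $\alpha Z_1 \in (0,1]$, so $|\rho(\alpha Z_1)| \ge (\alpha Z_1)^3/12 \ge \alpha^3/\bigl(96(h\vee1)^3\bigr)$, together with $|Z_1| \ge 1/\bigl(2(h\vee1)\bigr)$; moreover $|z - s| \le h \vee 1$ for $z \in I_h$ (using $s \le h$), so $\mathbb{P}(Z_1 \in I_h) \ge \tfrac12|I_h|\tfrac{1}{\sqrt{2\pi}}e^{-(h\vee1)^2/2}$ with $|I_h| = 1/\bigl(2(h\vee1)\bigr)$. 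Multiplying these bounds yields $\alpha^{-3}\mathbb{E}\bigl(|Z_1|\,|\rho(\alpha Z_1)|\bigr) \ge e^{-(h\vee1)^2/2}\big/\bigl(768\sqrt{2\pi}(h\vee1)^5\bigr)$, and since $(h\vee1)^5 = h^5\vee1$ and $e^{-(h\vee1)^2/2} \ge e^{-h^2/2}/8$, this is at least $e^{-h^2/2}\big/\bigl(3\cdot2^{11}\sqrt{2\pi}(h^5\vee1)\bigr)$; the bound extends to $\alpha = 0$ by continuity. I expect the main obstacle to be precisely this last bookkeeping step --- choosing the good interval $I_h$ and tracking constants so that both the $h \ge 1$ and $h < 1$ regimes fit under the single stated constant --- while the analytic ingredients (the monotonicity of $\rho$, the cubic bound near $0$, and the Gaussian density bound) are routine.
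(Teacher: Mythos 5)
Your proof is correct, and it takes a genuinely different and more self-contained route than the paper. The paper imports both key facts from the literature: $q(\|\mu^*\|)=1$ is cited from Xu et al.\ (2016, Theorem~1), and differentiability together with the inequality $q'(\alpha)\leq -(2\alpha/3)\,\mathbb{E}\bigl(\tilde{Z}^4/\cosh^2(\alpha\tilde{Z})\bigr)$ (where $\tilde Z$ is the one-dimensional projected mixture) is cited from Wu and Zhou (2019, Lemma~3(4)); the remaining work is a direct Gaussian integral lower bound on $\mathbb{E}\bigl(\tilde Z^4 e^{-2\alpha|\tilde Z|}\bigr)$ with a case split on whether $s+1\leq 2\alpha$. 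You instead derive everything from scratch: the posterior-mean identity $\mathbb{E}(\varepsilon\mid Z_1)=\tanh(sZ_1)$ gives the fixed point in one line, and the exact formula $q'(\alpha)=\alpha^{-2}\mathbb{E}\bigl(Z_1\rho(\alpha Z_1)\bigr)$ with $\rho(t)=t\,\mathrm{sech}^2t-\tanh t$, combined with the cubic bound $-\rho(t)\geq t^3/12$ on $[0,1]$ and localization of the expectation to an interval on which $\alpha|Z_1|\leq 1$, yields the quantitative bound. I checked your constants: $\tfrac{1}{2(h\vee1)}\cdot\tfrac{1}{96(h\vee1)^3}\cdot\tfrac{1}{4(h\vee1)\sqrt{2\pi}}=\tfrac{1}{768\sqrt{2\pi}(h\vee1)^5}$, and with $e^{-(h\vee1)^2/2}\geq e^{-h^2/2}/8$ this gives exactly $e^{-h^2/2}/\bigl(3\cdot2^{11}\sqrt{2\pi}(h^5\vee1)\bigr)$, matching the statement (in fact the paper's own computation, after multiplying its $2^{12}\cdot5$ denominator by $3/2$, delivers a constant weaker by a factor of $5$ than what the lemma asserts, so your version is the tighter of the two). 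What the paper's route buys is brevity by leaning on known lemmas; what yours buys is independence from those references and a cleaner mechanism (the sign and monotonicity of $\rho$ make the negativity of $q'$ transparent before any quantification). The only point needing care, which you handle correctly, is the interpretation of $q'(\alpha)/\alpha$ at $\alpha=0$ via the continuous extension, consistent with how the lemma is invoked on $(0,r]$ in Propositions~\ref{p2} and~\ref{p3}.
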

\begin{proof}
Write $s:=\|\mu^*\|$. The fact that $q(s) = 1$ follows from \citet[Theorem~1]{xu2016global}. By \citet[Lemma~3(4)]{WuZhou2019}, $q$ is differentiable with  $q'(\alpha)\leq -(2\alpha/3) \cdot \mathbb{E}\bigl(\tilde{Z}^4/\cosh^2(\alpha \tilde{Z})\bigr)$ for $\alpha \in [0,\infty)$, where $\tilde Z \sim \frac{1}{2}\mathcal{N}(-s,1)+ \frac{1}{2}\mathcal{N}(s,1)$. We can now compute that for $\alpha,s \in [0,h]$,
\begin{align*}
\mathbb{E}\biggl(\frac{\tilde{Z}^4}{\cosh^2(\alpha \tilde{Z})}\biggr) &\geq \mathbb{E}(\tilde{Z}^4 e^{-2\alpha |\tilde{Z}|}) \geq \frac{1}{2\sqrt{2\pi}} \int_0^\infty y^4e^{-2\alpha y}e^{-(y-s)^2/2} \, dy \\
&= \frac{1}{2\sqrt{2\pi}} \int_0^\infty y^4e^{-(y-s + 2\alpha)^2/2 -2\alpha s + 2\alpha^2} \, dy \\
&\geq \left\{ \begin{array}{ll} \frac{e^{-2\alpha s + 2\alpha^2}}{2\sqrt{2\pi}} \int_0^{\frac{1}{2(2\alpha-s)}} \frac{1}{2}y^4 e^{-(2\alpha-s)^2/2} \, dy & \mbox{if $s + 1 \leq 2\alpha$} \\
 \frac{e^{-2\alpha s + 2\alpha^2}}{2\sqrt{2\pi}} \int_{s - 2\alpha + 1}^{s - 2\alpha + 2} y^4 e^{-2} \, dy & \mbox{if $s + 1 > 2\alpha$} \end{array} \right. \\
&\geq \left\{ \begin{array}{ll} \frac{e^{-s^2/2}}{2^7\cdot 5\sqrt{2\pi}}\Bigl(\frac{1}{2\alpha-s}\Bigr)^5  & \mbox{if $s + 1 \leq 2\alpha$} \\
 \frac{e^{-2\alpha s + 2\alpha^2-2}}{10\sqrt{2\pi}} & \mbox{if $s + 1 > 2\alpha$} \end{array} \right. \\
&\geq \frac{e^{-h^2/2}}{2^{12}\cdot 5\sqrt{2\pi}(h^5\vee 1)},
\end{align*}
which establishes the desired bound. 
\end{proof}

\begin{lemma}
\label{Lemma:BetterThanWZLemma6}
Let $d \geq 2$, and let $\eta = (\eta_1,\ldots,\eta_d)^\top \sim \mathrm{Unif}(\mathbb{S}^{d-1})$. Then for any $a>0$, we have
\[
\mathbb{P}\biggl(|\eta_1| \leq \frac{a}{\sqrt{d}}\biggr) \leq \sqrt\frac{2}{\pi}a.
\]
\end{lemma}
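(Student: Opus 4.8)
The plan is to reduce everything to the one-dimensional marginal density of $\eta_1$. Starting from the representation in which $\eta$ has the same law as $g/\|g\|$ for $g \sim \mathcal{N}_d(0, I_d)$, one reads off that $\eta_1^2 \sim \mathrm{Beta}(1/2, (d-1)/2)$ and that $\eta_1$ is symmetric about $0$, so for $d \geq 2$ the Lebesgue density of $\eta_1$ is
\[
f_d(t) = \frac{\Gamma(d/2)}{\sqrt\pi\,\Gamma((d-1)/2)}\,(1-t^2)^{(d-3)/2}\,\mathbbm{1}_{\{|t|\leq 1\}}.
\]
Writing $s := a/\sqrt d$, the assertion becomes $\int_{-s}^{s} f_d(t)\,dt \leq \sqrt{2/\pi}\,a$. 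First I would dispose of the trivial range $s \geq 1$, where the left-hand side is at most $1$ while $\sqrt{2/\pi}\,a \geq \sqrt{2/\pi}\,\sqrt d \geq 2/\sqrt\pi \geq 1$ (using $d \geq 2$ and $\pi \leq 4$); so from now on I assume $s < 1$.

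The quantitative core is the Gamma-ratio inequality $\Gamma(d/2) \leq \sqrt{(d-1)/2}\;\Gamma((d-1)/2)$, equivalently $\Gamma(d/2)/\{\sqrt\pi\,\Gamma((d-1)/2)\} \leq \sqrt{(d-1)/(2\pi)}$. I would prove this from the log-convexity of $\Gamma$ (Bohr--Mollerup), applied at $d/2 = \tfrac12\cdot\tfrac{d-1}{2} + \tfrac12\cdot\tfrac{d+1}{2}$, together with $\Gamma((d+1)/2) = \tfrac{d-1}{2}\Gamma((d-1)/2)$. This is exactly the estimate that delivers the sharp leading constant $\sqrt{2/\pi}$, which is optimal (it is approached as $d \to \infty$ and $a \to 0$, where $\sqrt d\,\eta_1$ is asymptotically standard normal); a cruder bound on the ratio would not suffice.

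The two remaining cases are then short. If $d \geq 3$, then $(1-t^2)^{(d-3)/2} \leq 1$ on $[-1,1]$, so $\int_{-s}^{s} f_d(t)\,dt \leq 2s\cdot\Gamma(d/2)/\{\sqrt\pi\,\Gamma((d-1)/2)\} \leq 2s\sqrt{(d-1)/(2\pi)} = a\sqrt{2(d-1)/(\pi d)} \leq a\sqrt{2/\pi}$. If $d = 2$, the density reduces to $f_2(t) = \tfrac1\pi(1-t^2)^{-1/2}$ and the integral equals exactly $\tfrac2\pi\arcsin s$; since $\arcsin$ is convex on $[0,1]$ with $\arcsin 0 = 0$ and $\arcsin 1 = \pi/2$, we have $\arcsin s \leq \tfrac\pi2 s$, whence the integral is at most $s = a/\sqrt2 = a\sqrt{1/2} \leq a\sqrt{2/\pi}$, the last step again being $\pi \leq 4$.

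I do not anticipate a genuine obstacle: the argument is essentially two elementary integral estimates plus one Gamma-function inequality. The only points deserving a little care are (i) the case $d = 2$, where the density is unbounded, so the ``$(1-t^2)^{(d-3)/2}\leq 1$'' step fails and one instead exploits the closed form $\tfrac2\pi\arcsin s$ and the convexity of $\arcsin$; and (ii) being economical with the Gamma ratio, since the target constant is sharp enough that a log-convexity (or Gautschi/Wendel-type) estimate is needed rather than anything cruder.
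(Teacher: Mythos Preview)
Your proof is correct and follows essentially the same route as the paper: identify $\eta_1^2\sim\mathrm{Beta}(1/2,(d-1)/2)$, bound the $(1-\cdot)^{(d-3)/2}$ factor by $1$, and finish with the Gamma-ratio inequality $\Gamma(d/2)\leq\sqrt{(d-1)/2}\,\Gamma((d-1)/2)$ (which the paper cites from \citet{dumbgen2021bounding} and you derive directly from log-convexity).

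Your treatment is in fact slightly more careful: the paper's single inequality $\int_0^{a^2/d} t^{-1/2}(1-t)^{(d-3)/2}\,dt\leq 2a/\sqrt d$ tacitly requires $(1-t)^{(d-3)/2}\leq 1$, which fails when $d=2$. You correctly isolate $d=2$, evaluate the integral exactly as $\tfrac{2}{\pi}\arcsin s$, and close with $\arcsin s\leq\tfrac{\pi}{2}s$ from convexity together with $\pi\leq 4$. So your write-up not only matches the paper's approach but patches a small gap in it.
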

\begin{proof}
Leting $Z = (Z_1,\ldots,Z_d)^\top \sim N_d(0, I_d)$, we have $\eta \stackrel{\mathrm{d}}{=} Z / \|Z\|$ and in particular $\eta_1^2 \stackrel{\mathrm{d}}{=} Z_1^2 / (Z_1^2 + \cdots + Z_d^2) \sim \mathrm{Beta}\bigl(1/2, (d-1)/2\bigr)$. Thus, 
\begin{align*}
    \mathbb{P}\biggl(|\eta_1| \leq \frac{a}{\sqrt{d}}\biggr) &= \mathbb{P}\biggl(\frac{Z_1^2}{\|Z\|^2} \leq \frac{a^2}{d}\biggr) = \frac{\Gamma(d/2)}{\Gamma(1/2)\Gamma\bigl((d-1)/2\bigr)} \int_0^{a^2/d} t^{-1/2}(1-t)^{(d-3)/2}\,dt\\
    &\leq \frac{2\Gamma(d/2)a}{\sqrt{d}\Gamma(1/2)\Gamma((d-1)/2)} \leq \sqrt\frac{2}{\pi}a,
\end{align*}
where the final bound uses, e.g., \citet[][Corollary~11]{dumbgen2021bounding}.
\end{proof}
{
\bibliographystyle{custom}
\bibliography{references}
}

\end{document}